\pdfoutput=1
\documentclass[11pt]{article}

\usepackage[T1]{fontenc}
\usepackage{lmodern}
\usepackage[margin=1in]{geometry}
\usepackage{microtype}

\usepackage{amsthm,amsmath,amsfonts,amssymb,mathtools}
\usepackage[authoryear,round]{natbib}
\bibpunct{(}{)}{;}{a}{ }{,}
\usepackage{graphicx}
\usepackage[font=small,labelfont=bf]{caption}
\usepackage{booktabs}
\usepackage{bm}
\usepackage{array}
\usepackage{enumitem}
\usepackage{placeins}
\usepackage[hidelinks]{hyperref}
\allowdisplaybreaks
\hypersetup{
  pdftitle={Adaptive Ridge-Regularized Hotelling Change-Point Tests for Functional Data},
  pdfauthor={Ping Zhao and Long Feng}
}
\numberwithin{equation}{section}

\theoremstyle{plain}
\newtheorem{theorem}{Theorem}[section]

\newtheorem{corollary}{Corollary}[section]
\newtheorem{lemma}{Lemma}[section]

\theoremstyle{definition}
\newtheorem{assumption}{Assumption}[section]

\newtheorem{algorithm}{Algorithm}[section]

\newcommand{\R}{\mathbb R}
\newcommand{\E}{\mathbb E}
\newcommand{\Pp}{\mathbb P}
\newcommand{\cH}{\mathcal H}
\newcommand{\Hs}{\mathcal H_{\varsigma}}
\newcommand{\cG}{\mathcal G}
\newcommand{\cK}{\mathcal K}
\newcommand{\cS}{\mathcal S}
\newcommand{\cL}{\mathcal L}

\newcommand{\tr}{\operatorname{tr}}
\newcommand{\Cov}{\operatorname{Cov}}
\newcommand{\Var}{\operatorname{Var}}
\newcommand{\diag}{\operatorname{diag}}

\newcommand{\op}{\mathrm{op}}
\newcommand{\HS}{\mathrm{HS}}
\newcommand{\argmax}{\operatorname*{arg\,max}}
\newcommand{\ind}{\mathbf 1}
\newcommand{\dto}{\mathrel{\Rightarrow}}
\newcommand{\pto}{\mathrel{\to_{\Pp}}}
\newcommand{\bxi}{\bm\xi}
\newcommand{\beps}{\bm\varepsilon}
\newcommand{\bdelta}{\bm\delta}
\newcommand{\bDelta}{\bm\Delta}
\newcommand{\bmu}{\bm\mu}
\newcommand{\bS}{\bm S}
\newcommand{\bY}{\bm Y}
\newcommand{\bOmega}{\bm\Omega}
\newcommand{\bGamma}{\bm\Gamma}
\newcommand{\bI}{\mathbf I}
\newcommand{\Id}{\operatorname{Id}}
\newcommand{\norm}[1]{\left\lVert #1\right\rVert}

\newcommand{\ip}[2]{\left\langle #1,#2\right\rangle}

\begin{document}

\title{Adaptive Ridge-Regularized Hotelling Change-Point Tests for Functional Data}
\author{%
\begin{tabular}{c}
Ping Zhao$^{1}$ \qquad Long Feng$^{2}$\\[0.6em]
{\small $^{1}$School of Mathematical Science, Tiangong University, Tianjin, China}\\
{\small $^{2}$School of Statistics and Data Science, Nankai University, Tianjin, China}
\end{tabular}%
}
\date{}

\maketitle

\begin{abstract}
We propose a unified ridge-regularized Hotelling framework for detecting and locating mean changes in functional time series. A growing basis expansion converts the functional observations into high-dimensional score vectors. Their long-run covariance is estimated by an edge-corrected difference-based procedure. Ridge regularization stabilizes inference under spectral decay. An explicit local-power formula shows that the power-maximizing ridge depends on the unknown spectral orientation of the change. We therefore combine a family of ridge CUSUM statistics by a Cauchy transform and calibrate the aggregate directly from their joint weighted-bridge limit. For multiple changes, we embed local maximum-ridge statistics in a wild binary segmentation procedure, followed by local refinement. Under mild conditions, we establish the validity, local power, consistency, and localization properties of the proposed tests. In the multiple-change setting, the procedure consistently recovers the number of changes and uniformly estimates their locations. The framework accommodates weak dependence and non-Gaussian functional errors. Simulations and two empirical applications demonstrate the favorable finite-sample performance of the proposed methods.
\end{abstract}

\begin{center}
\small
\textbf{Keywords:} change point; difference-based estimator; functional time series; long-run covariance; multiple-change estimation; ridge regularization; wild binary segmentation; weighted Brownian bridge.\\[0.3em]
\textbf{MSC 2020:} Primary 62M10, 62H15; secondary 60F17, 62G10.
\end{center}
\medskip

\section{Introduction}

Functional data analysis provides a statistical framework for observations whose natural units are curves, trajectories, images, or other function-valued objects. Such data arise when modern instruments record a process densely over time, wavelength, space, or another continuum, and treating the resulting measurements as unrelated coordinates can obscure smoothness, geometry, and scientifically meaningful modes of variation. The subject now has a mature methodological and theoretical foundation, with applications ranging from climatology and neuroscience to finance, biomechanics, and transportation; see the monographs of \citet{RamsaySilverman2005}, \citet{FerratyVieu2006}, and \citet{HorvathKokoszka2012}. Foundational contributions to this development include functional principal component recovery from sparse longitudinal measurements \citep{YaoMullerWang2005Sparse}, functional linear regression for longitudinal trajectories \citep{YaoMullerWang2005Regression}, and functional variance processes \citep{MullerStadtmullerYao2006}. These methods established score reconstruction, covariance estimation, and regression tools that remain central when functional observations are discretized, noisy, or incompletely observed.

Structural stability is a central issue for ordered functional observations. A change in the mean curve may indicate a climatic regime shift, a change in a physiological response, a modification of an industrial system, or an alteration in market dynamics. Early work focused on independent curves and developed tests and break-date estimators for a single mean change; representative contributions are \citet{BerkesEtAl2009} and \citet{AueEtAl2009}. The theory was subsequently extended to weakly dependent functional time series. \citet{ZhangEtAl2011} used self-normalization to avoid direct long-run covariance estimation, whereas \citet{AstonKirch2012} treated both at-most-one-change and epidemic alternatives under dependence. Projection-based procedures were refined by allowing the number of functional principal components to increase with sample size in \citet{FremdtEtAl2014}, and by constructing repeated or change-aligned principal components in \citet{Torgovitski2015}.

A complementary strand avoids an initial fixed-dimensional reduction. \citet{SharipovEtAl2016} developed a sequential block bootstrap for Hilbert-space CUSUM processes, and \citet{AueRiceSonmez2018} established fully functional detection and dating theory under a broad weak-dependence condition. More recent work has enlarged the class of alternatives, improved robustness, and clarified how the geometry of the function space affects detection. \citet{WegnerWendler2024} proposed a robust procedure based on Hilbert-space-valued U-statistics and a dependent wild bootstrap, and \citet{Bastian2025} studied the consequences of the norm used to aggregate a functional CUSUM. \citet{BonieceHorvathTrapani2025} used empirical energy distance to detect changes extending beyond the mean, while \citet{HorvathRiceVanderDoes2026} developed tests based on empirical characteristic functionals. Recent work has also addressed nonstationary functional time series observed with partial measurement error; see \citet{BaiHuWu2026}.

The literature has also moved beyond a single break.  Wild binary
segmentation (WBS), introduced by \citet{Fryzlewicz2014}, uses randomly drawn
subintervals to isolate individual changes before recursive segmentation.
For weakly dependent functional sequences, \citet{RiceZhang2022} established
consistency of binary segmentation, while \citet{HarrisLiTucker2022} and
\citet{ChenChiouHuang2023} proposed scalable and greedy segmentation
procedures.  \citet{BastianBasuDette2024} developed multiple-change
methodology motivated by biomechanical fatigue data, and earlier functional
applications include traffic segmentation \citep{ChiouEtAl2019}.  Related
change-point problems
concern structural monitoring in functional linear models
\citep{AueEtAl2014}, changes in spatiotemporal mean functions
\citep{GromenkoEtAl2017}, and changes in covariance operators or eigensystems
\citep{StoehrEtAl2021,DetteKutta2021,HorvathRiceZhao2022,JiaoFrostigOmbao2023}.
These developments collectively show that functional change-point analysis is
now a broad area, but they also reveal a persistent tension between dimension
reduction, spectral estimation, and the use of the full functional
information.

This paper takes a different route. We expand each observed curve in a deterministic basis and retain a growing number of numerical coefficients, thereby converting the functional sequence into a high-dimensional vector sequence. This construction preserves the computational advantages of multivariate CUSUM methods while avoiding the need to estimate a small collection of leading eigenfunctions before testing. It also permits a transparent treatment of discretization: the implemented procedure is built from numerical basis coefficients computed on the observation grid, and the discrepancy from the ideal functional projections is quantified separately.

The resulting problem is connected to the rapidly developing literature on changes in high-dimensional means. Coordinatewise maximum procedures are effective for sparse changes and have been studied by \citet{Jirak2015}; sparse projection and minimax detection theory were developed by \citet{WangSamworth2018}, \citet{EnikeevaHarchaoui2019}, and \citet{LiuGaoSamworth2021}. Finite-sample and resampling-based inference was studied by \citet{YuChen2021}, while \citet{WangEtAl2022} used self-normalization to accommodate nuisance dependence. Adaptive procedures combining dense and sparse evidence were proposed by \citet{ZhangWangShao2022} and \citet{WangFeng2023}; extensions to temporally dependent high-dimensional time series were considered by \citet{WangLiuFeng2025}. Multiple-change segmentation in high dimension has also been developed through sparsified binary segmentation and related methods; see, for example, \citet{ChoFryzlewicz2015} and \citet{Cho2016}.

Much of the existing theory for dense quadratic and adaptive high-dimensional mean-change procedures is developed under bounded-spectrum or trace-dispersion conditions that prevent a fixed number of directions from dominating the statistic; representative formulations appear in \citet{Jirak2015}, \citet{WangFeng2023}, and \citet{LiXu2026}. This regime is natural for generic vector data, but it does not describe a growing basis expansion of a square-integrable random function. Finite second moment in the function space makes the coefficient variances summable in every orthonormal basis and the covariance-operator eigenvalues summable, so both sequences necessarily tend to zero. Consequently, a few low-frequency directions can carry a nonvanishing proportion of the total variation, whereas many high-frequency directions have much smaller variances. Direct inversion is unstable, coordinatewise standardization can amplify estimation noise in the tail of the expansion, and an unregularized quadratic statistic can be dominated by a small number of directions. This spectral heterogeneity is not a secondary technical detail; it is the defining feature that separates the present problem from conventional high-dimensional change-point models.

Ridge-regularized Hotelling methods provide a natural response. For high-dimensional mean testing, ridge regularization was introduced by \citet{ChenEtAl2011} and developed into an adaptable Hotelling procedure by \citet{LiEtAl2020}. In change-point analysis, \citet{LiXu2026} constructed ridge-regularized CUSUM statistics for independent high-dimensional observations, and \citet{ZhaoZhouFeng2026} studied aggregation over a deterministic ridge grid. Those results operate in a proportional random-matrix regime with a diffuse spectral bulk. Their fixed-ridge null limit is a Gaussian process because many eigendirections contribute at a comparable scale. The functional regime considered here is fundamentally different. The long-run covariance operator is trace class and its spectrum decays, so the contribution of the leading directions does not vanish. The limiting process is therefore not Gaussian in general; it is a centered and standardized weighted sum of independent squared Brownian bridges. Moreover, under heterogeneous spectra no fixed ridge is guaranteed to be uniformly preferable over unknown mean-change directions. Small ridge values accentuate low-variance eigendirections, whereas large values move the statistic toward an unweighted $L^2$ aggregation; the local-power maximizing value therefore depends on the unknown spectral coordinates of the change. We handle this tuning problem by combining the marginal ridge $p$-values through a Cauchy transform and by using common bridge paths to retain cross-ridge dependence. Establishing the weighted-bridge limit, its joint version over a ridge grid, the alternative-specific power formula, and a feasible calibration based on estimated spectral weights requires an operator-level argument rather than a direct application of the existing random-matrix central limit theory.

Temporal dependence creates a second difficulty. The lag-zero covariance or the ordinary sample covariance used for independent observations does not normalize a functional CUSUM under serial dependence; the relevant object is the long-run covariance operator. Classical heteroskedasticity-and-autocorrelation-consistent estimators were developed by \citet{NeweyWest1987} and \citet{Andrews1991}, and functional long-run covariance estimation was studied by \citet{HormannKokoszka2010}, \citet{PanaretosTavakoli2013}, and \citet{RiceShang2017}. Conventional lag-window estimators, however, center the data by a global mean and may be substantially contaminated when that mean changes. Difference-based estimators were designed precisely to separate serial dependence from nonconstant mean structure; see \citet{Chan2022MAC} and \citet{Chan2022DB}. Their high-dimensional extension was developed by \citet{LiuSongFeng2026}. We adapt this principle to functional basis scores and use the actual number of available lagged products in every denominator. This edge correction removes a deterministic finite-sample attenuation that otherwise propagates through the estimated spectrum, the ridge inverse, and the weighted-bridge calibration. Because the compact kernel used in the implementation need not produce a positive semidefinite matrix in finite samples, we additionally analyze the positive spectral part before ridge inversion.

The paper makes several contributions.  First, it develops a basis-expansion
formulation of weakly dependent functional mean-change inference that retains
a growing number of coefficients and uses a trace-scaled ridge to accommodate
the intrinsic spectral decay.  Second, it establishes consistency rates for
an edge-corrected difference-based estimator of the functional long-run
covariance under an Aue--Rice--S\"onmez type Bernoulli-shift condition.  The
errors need not be Gaussian and no functional autoregressive, moving-average,
or linear-process representation is imposed; the required covariance and
fourth-order bounds are derived from the primitive approximation condition.
Third, it derives the non-Gaussian weighted Brownian-bridge-square null limit
and its joint finite-grid extension, and it provides direct spectral critical
values obtained from the explicit limit process without bootstrapping or
permuting the original curves.  Fourth, it establishes local-alternative
limits, derives an explicit power formula and an alternative-specific oracle
ridge, and uses this characterization to justify Cauchy aggregation over a
deterministic grid.  It also proves consistency and single-change localization
while accounting for numerical basis integration and for contamination of the
long-run covariance estimator under a mean shift.  Fifth, it develops a
multiple-change procedure
that reuses the same global difference-based covariance estimate in local
max-ridge CUSUMs within WBS.  Random intervals isolate individual changes,
and recursive selection followed by local refinement consistently recovers
the number and locations of well-separated changes under the same
finite-moment weak-dependence framework.  These results combine weak
approximation for functional partial sums, non-Gaussian control of a quadratic
difference-based covariance estimator, trace-class spectral perturbation
theory, and uniform ridge-resolvent arguments.

The remainder of the paper is organized as follows.
Section~\ref{sec:method} introduces the functional model, numerical scores, the
edge-corrected difference-based long-run covariance estimator, and the ridge
scan.  Section~\ref{sec:theory} presents the single-change null, calibration,
local-power, consistency, and localization results.
Section~\ref{sec:multiple} develops multiple-change estimation and its theory.
Section~\ref{sec:implementation} reports implementation details and the
simulation results.  Section~\ref{sec:applications} presents two empirical
applications, and Section~\ref{sec:conclusion} concludes.  Auxiliary results
and all proofs are collected in the Appendix.

\section{Model and methodology}
\label{sec:method}

Throughout, $\norm{\cdot}$ denotes the Euclidean norm for vectors and the
norm of the ambient Hilbert space for functions.  For matrices and operators,
$\norm{\cdot}_{\op}$, $\norm{\cdot}_{\HS}$, and $\norm{\cdot}_1$ denote the
operator, Hilbert--Schmidt, and trace norms, respectively.  We write
$\ip{\cdot}{\cdot}$ for an inner product, $\tr(A)$ for the trace of $A$,
$\Id$ for an identity operator, with a subscript when needed, and
$\ind(\cdot)$ for the indicator function.  The symbols $\dto$ and $\pto$ denote convergence in
distribution and in probability.  The notation $O_p(\cdot)$, $o_p(\cdot)$,
and $\asymp$ has its usual meaning.  Bold letters generally denote finite-dimensional
vectors and matrices.  The letter $C$ denotes a finite positive constant that
may change from line to line.

Let $\cH=L^2[0,1]$ with inner product
$\ip{f}{g}=\int_0^1f(t)g(t)\,dt$ and norm
$\norm{f}=\ip{f}{f}^{1/2}$.  We observe
$X_1,\ldots,X_n\in\cH$ from the single-change model
\begin{equation}
 X_i=\mu+\delta_n\ind(i>\tau_n)+e_i,
 \qquad i=1,\ldots,n,
 \label{eq:functional-change-model}
\end{equation}
where $\mu\in\cH$ is the baseline mean, $\delta_n\in\cH$ is a
possibly $n$-dependent jump, $\tau_n\in\{1,\ldots,n-1\}$ is the
unknown change location, and $\{e_i:i\in\mathbb Z\}$ is a centered
strictly stationary functional error process.  Its primitive construction and
regularity are specified in Assumption~\ref{ass:ARS-dependence}.  We test
\begin{equation*}
 H_0:\delta_n=0
 \qquad\text{against}\qquad
 H_1:\delta_n\ne0\text{ for an unknown }\tau_n.
\end{equation*}
Under $H_0$, $\tau_n$ is immaterial.  The dependence of $\delta_n$ on $n$
allows both local and fixed alternatives.

\subsection{Discrete functional observations and numerical Fourier scores}

Let $\{\phi_j:j\ge1\}$ be the real Fourier basis. The curves are observed without an additional measurement-error layer on the common regular grid
\begin{equation*}
 0=t_1<t_2<\cdots<t_N=1,
 \qquad t_l=\frac{l-1}{N-1}.
\end{equation*}
Let $\omega_{1,N}=\omega_{N,N}=\frac{1}{2(N-1)}$ and $\omega_{l,N}=(N-1)^{-1}$ for $2\le l\le N-1$.  Define the canonical Fourier isometry
\begin{equation*}
 \iota_p:\R^p\longrightarrow\cH,
 \qquad
 \iota_p x=\sum_{j=1}^p x_j\phi_j,
 \qquad
 \iota_p^*f=\{\ip{f}{\phi_j}\}_{j=1}^p.
\end{equation*}
The trapezoidal operator and its first-$p$ Fourier reconstruction are
\begin{equation*}
 Q_N(f)=\sum_{l=1}^N\omega_{l,N}f(t_l),
 \qquad
 \mathcal J_{p,N}f=\iota_p\{Q_N(f\phi_1),\ldots,Q_N(f\phi_p)\}^{\top}.
\end{equation*}
The implemented score is
\begin{equation*}
 \widehat\xi_{ij}=Q_N(X_i\phi_j),
 \qquad 1\le j\le p,
\end{equation*}
and $\widehat\bxi_i=(\widehat\xi_{i1},\ldots,\widehat\xi_{ip})^{\top}$. Define
\begin{equation*}
 \bmu_{p,N}=\{Q_N(\mu\phi_j)\}_{j=1}^p,
 \qquad
 \bdelta_{p,N}=\{Q_N(\delta_n\phi_j)\}_{j=1}^p,
 \qquad
 \beps_{i,p,N}=\{Q_N(e_i\phi_j)\}_{j=1}^p.
\end{equation*}
Then the numerical score vectors satisfy the exact model
\begin{equation*}
 \widehat\bxi_i
 =\bmu_{p,N}+\bdelta_{p,N}\ind(i>\tau_n)+\beps_{i,p,N}.
\end{equation*}
For a $p\times p$ matrix $A$, write $A^{\cH}=\iota_pA\iota_p^*$ for its extension by zero outside $\operatorname{span}(\phi_1,\ldots,\phi_p)$.  For every self-adjoint nonnegative $p\times p$ matrix $A$, every $c>0$, and every $x\in\R^p$,
\begin{equation}
 x^{\top}(A+c\bI_p)^{-1}x
 =\ip{\iota_px}{(A^{\cH}+c\Id_{\cH})^{-1}\iota_px}.
 \label{eq:matrix-operator-identification}
\end{equation}
All operator expressions below use this canonical embedding.  Whenever a finite score vector appears in a Hilbert-space inner product, it denotes its image under $\iota_p$.  In particular, convergence statements involving $\bdelta_{p,N}$ mean convergence of $\iota_p\bdelta_{p,N}$ in $\cH$.  No statistic below uses the unobserved integrals $\int_0^1X_i(t)\phi_j(t)\,dt$.

For $u\in\mathbb Z$, define the projected lag covariance matrix, its
canonical operator embedding, and their long-run sums by
\begin{equation*}
 \bGamma_{p,N}(u)=\Cov(\beps_{u,p,N},\beps_{0,p,N}),
 \qquad
 \Gamma_{p,N}^{\cH}(u)=\iota_p\bGamma_{p,N}(u)\iota_p^*,
\end{equation*}
\begin{equation*}
 \bOmega_{p,N}=\sum_{u=-\infty}^{\infty}\bGamma_{p,N}(u),
 \qquad
 \Omega_{p,N}^{\cH}=\sum_{u=-\infty}^{\infty}\Gamma_{p,N}^{\cH}(u)
 =\iota_p\bOmega_{p,N}\iota_p^*.
\end{equation*}
The nonzero eigenvalues, trace, Hilbert--Schmidt norm and trace norm of
$\Omega_{p,N}^{\cH}$ are exactly those of $\bOmega_{p,N}$; the same
convention applies to every canonically embedded finite matrix below.
Our lag convention places the later observation in the first argument, so
$\bGamma_{p,N}(u)=\E(\beps_{u,p,N}\beps_{0,p,N}^{\top})$ and
$\bGamma_{p,N}(-u)=\bGamma_{p,N}(u)^{\top}$.  This convention agrees
with the sample product in \eqref{eq:edge-autocov}.  The matrix $\bOmega_{p,N}$ governs normalized score partial sums and CUSUM
contrasts, while $\Omega_{p,N}^{\cH}$ governs their canonical embeddings in
$\cH$.  Neither object is, in general, equal to its lag-zero counterpart.

\subsection{Edge-corrected difference-based long-run covariance}

Fix a difference order $m$ and coefficients $c_0,\ldots,c_m$ satisfying
\begin{equation}
 \sum_{q=0}^{m}c_q=0,
 \qquad
 \sum_{q=0}^{m}c_q^2=1.
 \label{eq:difference-constraints}
\end{equation}
Let $h=h_n$ be the difference spacing and put $N_D=n-mh$. Reindex the available differences as
\begin{equation}
 \bY_{i_{\rm D}}=\sum_{q=0}^{m}c_q\widehat\bxi_{mh+i_{\rm D}-qh},
 \qquad i_{\rm D}=1,\ldots,N_D.
 \label{eq:DB-difference}
\end{equation}
For $0\le r<\ell=\ell_n$, define
\begin{equation}
 \widehat\bGamma_{\rm DB}(r)
 =\frac{1}{N_D-r}
 \sum_{i_{\rm D}=r+1}^{N_D}\bY_{i_{\rm D}}\bY_{i_{\rm D}-r}^{\top},
 \qquad
 \widehat\bGamma_{\rm DB}(-r)=\widehat\bGamma_{\rm DB}(r)^{\top}.
 \label{eq:edge-autocov}
\end{equation}
Let $K$ be an even compactly supported kernel with $K(0)=1$. The raw difference-based long-run covariance estimator is
\begin{equation}
 \widehat\bOmega_n^{\rm DB}
 =\sum_{|r|<\ell}K(r/\ell)\widehat\bGamma_{\rm DB}(r).
 \label{eq:DB-raw}
\end{equation}
For the quadratic compact kernel $K(x)=(1-|x|^2)_+$, the finite-sample matrix in \eqref{eq:DB-raw} need not be positive semidefinite. Write
\begin{equation*}
 \widehat\bOmega_n^{\rm DB}
 =\widehat{\bm O}_n\diag(\widehat\lambda_{1,n}^{\rm raw},\ldots,
 \widehat\lambda_{p,n}^{\rm raw})\widehat{\bm O}_n^{\top}
\end{equation*}
and define its positive spectral part
\begin{equation}
 \widehat\bOmega_n^{+}
 =\widehat{\bm O}_n\diag\{(\widehat\lambda_{1,n}^{\rm raw})_+,
 \ldots,(\widehat\lambda_{p,n}^{\rm raw})_+\}\widehat{\bm O}_n^{\top}.
 \label{eq:positive-part}
\end{equation}
Let $\widehat\lambda_{1,n}\ge\cdots\ge\widehat\lambda_{p,n}\ge0$
denote the ordered eigenvalues of $\widehat\bOmega_n^{+}$, and define
\begin{equation*}
 \widehat\Omega_{n,+}^{\cH}
 =\iota_p\widehat\bOmega_n^+\iota_p^*,
 \qquad
 \widehat{\mathfrak t}_n=\tr(\widehat\bOmega_n^{+})
 =\tr(\widehat\Omega_{n,+}^{\cH})
 =\sum_{j=1}^p\widehat\lambda_{j,n}.
\end{equation*}
The statistics and conditional bridge laws below are evaluated on
$\mathcal E_n=\{\widehat{\mathfrak t}_n>0\}$.  On $\mathcal E_n^c$ no matrix
inverse is formed.  We set $D_{n,a}(k)=T_{n,a}=0$ and every associated
marginal and combined $p$-value equal to one.  Once the local quantities are
introduced in Section~\ref{sec:multiple}, we also set
$D_{I,a}(k)=\cS_I(k)=\cS_I^{\max}=0$ for every admissible interval and
candidate, take the smallest candidate as a formal local maximizer, and let
Algorithm~\ref{alg:DB-RWBS} return no changes.  The conditional reference law
may be defined arbitrarily.  These conventions make both the single- and
multiple-change procedures measurable for every sample and have no asymptotic
effect, because the trace-consistency conclusions of
Lemma~\ref{prop:weight-consistency} imply $\Pp(\mathcal E_n)\to1$ under the
conditions of the corresponding theorems.

\subsection{Full-CUSUM ridge statistic}

For $1\le k\le n-1$, let
\begin{equation*}
 \widehat\bDelta_k
 =\frac{1}{n-k}\sum_{i=k+1}^n\widehat\bxi_i
 -\frac{1}{k}\sum_{i=1}^k\widehat\bxi_i,
 \qquad
 N_k=\frac{k(n-k)}{n}.
\end{equation*}
For $0\le t\le1$, define the tied-down score partial-sum process
\begin{equation*}
 \widehat\bS_n(t)
 =\frac{1}{\sqrt n}
 \left\{\sum_{i=1}^{\lfloor nt\rfloor}\widehat\bxi_i
 -\frac{\lfloor nt\rfloor}{n}\sum_{i=1}^n\widehat\bxi_i\right\}.
\end{equation*}
At $t_k=k/n$,
\begin{equation*}
 \sqrt{N_k}\widehat\bDelta_k
 =-\frac{\widehat\bS_n(t_k)}{\sqrt{t_k(1-t_k)}}.
\end{equation*}
The scan always uses all $n$ observations; differences are used only to estimate the long-run covariance.

Let $\cG=\{a_1,\ldots,a_{L_{\cG}}\}\subset(0,\infty)$ be a fixed deterministic ridge grid.  On $\mathcal E_n$, for $a\in\cG$, set
\begin{equation}
 \begin{aligned}
 \widehat\rho_{n,a}&=a\widehat{\mathfrak t}_n,\\
 \widehat{\bm R}_{n,a}
 &=(\widehat\bOmega_n^{+}+a\widehat{\mathfrak t}_n\bI_p)^{-1},\\
 \widehat{\mathcal R}_{n,a}
 &=(\widehat\Omega_{n,+}^{\cH}
     +a\widehat{\mathfrak t}_n \Id_{\cH})^{-1}.
 \end{aligned}
 \label{eq:ridge-inverse}
\end{equation}
The first inverse in \eqref{eq:ridge-inverse} is the implemented $p\times p$
matrix, whereas the second is the full Hilbert-space operator used in the
proofs.  Equation~\eqref{eq:matrix-operator-identification} gives, for every
$x\in\R^p$,
\begin{equation*}
 x^\top\widehat{\bm R}_{n,a}x
 =\ip{\iota_px}{\widehat{\mathcal R}_{n,a}\iota_px}.
\end{equation*}
Thus the two objects yield exactly the same statistic on the numerical Fourier
subspace, while $\widehat{\mathcal R}_{n,a}$ acts as
$(a\widehat{\mathfrak t}_n)^{-1}\Id_{\cH}$ on its orthogonal complement.

The unstandardized path is
\begin{equation*}
 V_{n,a}(k)
 =N_k\widehat\bDelta_k^{\top}\widehat{\bm R}_{n,a}\widehat\bDelta_k
 =\frac{\widehat\bS_n(t_k)^{\top}\widehat{\bm R}_{n,a}
 \widehat\bS_n(t_k)}{t_k(1-t_k)}.
\end{equation*}
Define the empirical ridge weights and spectral moments
\begin{equation}
 \widehat w_{j,n}(a)
 =\frac{\widehat\lambda_{j,n}}
 {\widehat\lambda_{j,n}+a\widehat{\mathfrak t}_n},
 \qquad
 \widehat A_{1,n}(a)=\sum_{j=1}^p\widehat w_{j,n}(a),
 \qquad
 \widehat A_{2,n}(a)=\sum_{j=1}^p\widehat w_{j,n}(a)^2.
 \label{eq:empirical-weights}
\end{equation}
The centered and standardized path and scan statistic are
\begin{equation}
 D_{n,a}(k)
 =\frac{V_{n,a}(k)-\widehat A_{1,n}(a)}
 {\{2\widehat A_{2,n}(a)\}^{1/2}},
 \qquad
 T_{n,a}=\max_{k\in\cK_{n,\epsilon}}D_{n,a}(k),
 \label{eq:D-statistic}
\end{equation}
where
\begin{equation*}
 \cK_{n,\epsilon}
 =\{\lceil n\epsilon\rceil,\ldots,\lfloor n(1-\epsilon)\rfloor\},
 \qquad 0<\epsilon<1/2.
\end{equation*}
The analytic pointwise centering and variance estimators are
\begin{equation*}
 \widehat A_{1,n}(a)
 \qquad\text{and}\qquad
 2\widehat A_{2,n}(a),
\end{equation*}
respectively. Theorem~\ref{thm:null-limit} shows that they consistently estimate the limiting null mean and variance of $V_{n,a}(k)$; they are not asserted to be exact finite-sample conditional moments.

\subsection{Direct calibration from the limiting bridge process}

Let $B_1,B_2,\ldots$ be independent standard Brownian bridges and put
\begin{equation*}
 U_j(t)=\frac{B_j(t)}{\sqrt{t(1-t)}}.
\end{equation*}
Conditional on the estimated weights, define
\begin{equation}
 \widehat{\mathcal D}_{n,a}^{*}(t)
 =\frac{\sum_{j=1}^p\widehat w_{j,n}(a)\{U_j(t)^2-1\}}
 {\{2\widehat A_{2,n}(a)\}^{1/2}},
 \qquad
 \widehat{\mathcal T}_{n,a}^{*}
 =\max_{k\in\cK_{n,\epsilon}}
 \widehat{\mathcal D}_{n,a}^{*}(t_k).
 \label{eq:conditional-limit}
\end{equation}
The conditional $(1-\alpha)$ quantile of $\widehat{\mathcal T}_{n,a}^{*}$ is the fixed-ridge critical value. This calculation generates only standard Gaussian variables from an explicit limiting law. It does not resample the observed curves and is therefore neither a bootstrap nor a permutation procedure.

Different ridge values apply different spectral filters, and the power-optimal
filter depends on the unknown change direction.  Selecting one ridge from the
same CUSUM path would therefore require an additional selection-and-calibration
step.  We instead retain the entire finite grid and combine its marginal
evidence.

The same bridge paths are shared across all $a\in\cG$, thereby preserving
cross-ridge dependence.  Let
\begin{equation*}
 \widehat{\bm w}_n(a)=\{\widehat w_{j,n}(a):j\ge1\},
 \qquad
 \widehat{\mathcal W}_n=\{\widehat{\bm w}_n(a_v):1\le v\le L_{\cG}\},
\end{equation*}
where $\widehat w_{j,n}(a)=0$ for $j>p$.  The collection
$\widehat{\mathcal W}_n$ contains all estimated ridge weights.  Writing
$\bm B=(B_1,B_2,\ldots)$, throughout the calibration arguments
$\Pp_{\bm B}$ and $\E_{\bm B}$ denote probability and expectation with
respect to all simulated bridge variables, conditionally on the observed data
and hence on $\widehat{\mathcal W}_n$.
On $\mathcal E_n$, define
\begin{equation*}
 \widehat F_{n,a_v}(x)
 =\Pp_{\bm B}\{\widehat{\mathcal T}_{n,a_v}^{*}\le x
 \mid\widehat{\mathcal W}_n\},
 \qquad
 P_{n,v}=1-\widehat F_{n,a_v}(T_{n,a_v}).
\end{equation*}
The conditional reference distribution has unbounded upper support, so
$P_{n,v}>0$.  It may nevertheless equal one when the observed statistic is at
its lower support boundary.  We therefore use the extended Cauchy transform
\begin{equation*}
 \mathfrak c_{\rm C}(z)=
 \begin{cases}
  \cot(\pi z),&0<z<1,\\
  -\infty,&z=1,
 \end{cases}
\end{equation*}
and set
\begin{equation*}
 \mathfrak C_n=\sum_{v=1}^{L_{\cG}}\varpi_v\mathfrak c_{\rm C}(P_{n,v}),
 \qquad \varpi_v>0,\qquad \sum_{v=1}^{L_{\cG}}\varpi_v=1.
\end{equation*}
If one summand equals $-\infty$, then $\mathfrak C_n=-\infty$.  The analytic Cauchy-tail
value is
\begin{equation*}
 P_{n,\mathrm{AC}}=\frac12-\frac1\pi\arctan(\mathfrak C_n),
\end{equation*}
with $\arctan(-\infty)=-\pi/2$.

To define fixed-level joint calibration, generate a fresh common bridge
collection, compute
$(\widehat{\mathcal T}_{n,a_1}^{\circ},\ldots,
\widehat{\mathcal T}_{n,a_{L_{\cG}}}^{\circ})$, and set
\begin{equation*}
 P_{n,v}^{\circ}
 =1-\widehat F_{n,a_v}(\widehat{\mathcal T}_{n,a_v}^{\circ}),
 \qquad
 \mathfrak C_n^{\circ}=\sum_{v=1}^{L_{\cG}}\varpi_v\mathfrak c_{\rm C}(P_{n,v}^{\circ}).
\end{equation*}
Lemma~\ref{lem:finite-reference-regularity} shows that, on $\mathcal E_n$,
$P_{n,v}^{\circ}\in(0,1)$ almost surely and the conditional law of
$\mathfrak C_n^{\circ}$ is continuous.  Its conditional $(1-\alpha)$ quantile is the
joint bridge critical value for $\mathfrak C_n$.  The analytic Cauchy tail may also be
reported, but its generic guarantee is a small-tail statement rather than
exactness at every fixed conventional level; see \citet{LiuXie2020}.

For a Monte Carlo sample of $N_{\rm sim}$ common-bridge replicates, indexed
by $r_{\rm sim}$, each upper-tail value is kept away from both singular
endpoints by the mid-rank construction
\begin{equation}
 \widetilde P_{n,v,N_{\rm sim}}(x)
 =\frac{\frac12+\sum_{r_{\rm sim}=1}^{N_{\rm sim}}
 \ind\{\widehat{\mathcal T}_{n,a_v,r_{\rm sim}}^{*}\ge x\}}{N_{\rm sim}+1}
 \in(0,1).
 \label{eq:midrank-bridge-pvalue}
\end{equation}
The ordinary empirical conditional cdf associated with the same batch is
\begin{equation*}
 \widehat F_{n,a_v,N_{\rm sim}}(x)
 =\frac1{N_{\rm sim}}\sum_{r_{\rm sim}=1}^{N_{\rm sim}}
 \ind\{\widehat{\mathcal T}_{n,a_v,r_{\rm sim}}^{*}\le x\}.
\end{equation*}
For a fully simulated two-batch implementation, use a first independent batch
of size $N_{{\rm sim},1}$ to construct all functions
$\widetilde P_{n,v,N_{{\rm sim},1}}(\cdot)$ and the observed statistic
\begin{equation*}
 \mathfrak C_{n,N_{{\rm sim},1}}=\sum_{v=1}^{L_{\cG}}\varpi_v
 \mathfrak c_{\rm C}\{\widetilde P_{n,v,N_{{\rm sim},1}}(T_{n,a_v})\}.
\end{equation*}
For one fresh common-bridge draw, independent of the first batch, put
\begin{equation*}
 \mathfrak C_{n,N_{{\rm sim},1}}^{\circ}
 =\sum_{v=1}^{L_{\cG}}\varpi_v\mathfrak c_{\rm C}
 \{\widetilde P_{n,v,N_{{\rm sim},1}}(\widehat{\mathcal T}_{n,a_v}^{\circ,2})\}.
\end{equation*}
Use a second independent common-bridge batch of size $N_{{\rm sim},2}$ to generate
conditionally independent copies
\begin{equation*}
 \mathfrak C_{n,r_{\rm sim}}^{\circ,(N_{{\rm sim},1})}
 =\sum_{v=1}^{L_{\cG}}\varpi_v\mathfrak c_{\rm C}
 \{\widetilde P_{n,v,N_{{\rm sim},1}}(\widehat{\mathcal T}_{n,a_v,r_{\rm sim}}^{\circ,2})\},
 \qquad 1\le r_{\rm sim}\le N_{{\rm sim},2},
\end{equation*}
Denote their generalized empirical $(1-\alpha)$ quantile by
\begin{equation}
 \widetilde c_{n,C,1-\alpha}^{(N_{{\rm sim},1},N_{{\rm sim},2})}
 =\inf\left\{x:\frac1{N_{{\rm sim},2}}\sum_{r_{\rm sim}=1}^{N_{{\rm sim},2}}
 \ind\{\mathfrak C_{n,r_{\rm sim}}^{\circ,(N_{{\rm sim},1})}\le x\}
 \ge1-\alpha\right\}.
 \label{eq:two-batch-critical-value}
\end{equation}
Theorem~\ref{thm:critical-values} and Lemma~\ref{lem:Monte-Carlo-reference}
prove consistency of this two-batch calibration when $N_{{\rm sim},1},N_{{\rm sim},2}\to\infty$.

\section{Main theoretical results}
\label{sec:theory}

We begin by stating the following assumptions.

\begin{assumption}[ARS-type weak dependence of the original errors]
\label{ass:ARS-dependence}
Fix $\varsigma>5/2$ and let $\Hs=H^{\varsigma}[0,1]$, equipped with its usual
separable Hilbert norm.  There exist constants $\nu>4$ and $\beta_{\rm dep}\ge2$ for
which the following conditions hold.
\begin{enumerate}[label=(\alph*),leftmargin=2.2em,itemsep=3pt]
\item There are a measurable space $\mathsf S$, independent and
identically distributed innovations $\{\eta_i:i\in\mathbb Z\}$ taking
values in $\mathsf S$, and a measurable map
$\mathfrak G:\mathsf S^\infty\to\Hs$ such that
\begin{equation*}
 e_i=\mathfrak G(\eta_i,\eta_{i-1},\ldots),
 \qquad \E e_i=0,
 \qquad \E\norm{e_0}_{\Hs}^{\nu}<\infty.
\end{equation*}
\item For every integer $M\ge1$, define
\begin{equation}
 e_{i,M}=\mathfrak G(\eta_i,\ldots,\eta_{i-M+1},
 \eta^{*}_{i,M,i-M},\eta^{*}_{i,M,i-M-1},\ldots),
 \label{eq:ARS-M-approximation}
\end{equation}
where the starred variables are independent copies of $\eta_0$, mutually
independent over all their indices and independent of the original
innovation sequence.  Then $\{e_{i,M}:i\in\mathbb Z\}$ is $M$-dependent,
every $e_{i,M}$ has the same marginal distribution as $e_i$, and
\begin{equation}
 \sum_{M=1}^{\infty}(1+M)^{\beta_{\rm dep}}\mathfrak d_\nu(M)<\infty,
 \qquad
 \mathfrak d_\nu(M)=
 \left\{\E\norm{e_0-e_{0,M}}_{\Hs}^{\nu}\right\}^{1/\nu}.
 \label{eq:quantitative-ARS-condition}
\end{equation}
\end{enumerate}
\end{assumption}

The structural part of Assumption~\ref{ass:ARS-dependence} is the
Bernoulli-shift and independent-copy $M$-approximation used in
Assumption~2.1 of \citet{AueRiceSonmez2018}; it is the
$L^\nu$--$M$-approximability framework introduced by
\citet{HormannKokoszka2010}.  The polynomial weight in
\eqref{eq:quantitative-ARS-condition} is a primitive quantitative version
of the same approximation condition.  The requirement $\nu>4$ supplies the
fourth moments needed by the quadratic DB estimator, while $\beta_{\rm dep}\ge2$
is sufficient for the largest-gap summation of the fourth-order
contractions.  The kernel order below is required only to satisfy
$\beta_K\le\beta_{\rm dep}$.  No Gaussianity, autoregressive or moving-average
representation, functional linear filter, cumulant condition, projected-score
condition, or estimator-consistency condition is imposed.  The choice
$\Hs=H^{\varsigma}[0,1]$ is used only to obtain the primitive path smoothness needed
for numerical Fourier integration.  Since $\varsigma>5/2$, the one-dimensional
Sobolev embedding gives
$\norm{f}_{C^2}\le C_{\varsigma}\norm{f}_{\Hs}$; see
\citet[Chapter~5]{AdamsFournier2003}.

\begin{assumption}[Grid and Fourier approximation]
\label{ass:grid}
For the single-change model, the functions $\mu$ and $\delta_n$ belong to $C^2[0,1]$, and
$\sup_n\{\norm{\mu}_{C^2}+\norm{\delta_n}_{C^2}\}<\infty$.
For the multiple-change model, this smoothness requirement is replaced by the
uniform condition in Assumption~\ref{ass:multiple-configuration}.  For derivative orders $r_{\rm der}=0,1,2$,
\begin{equation*}
 \max_{1\le j\le p}\norm{\phi_j^{(r_{\rm der})}}_\infty\le C_\phi p^{r_{\rm der}}.
\end{equation*}
The projection dimension $p=p_n$ and grid size $N=N_n$ satisfy
\begin{equation*}
 p\longrightarrow\infty,
 \qquad N\longrightarrow\infty,
 \qquad
 \chi_{p,N}:=\frac{p^{5/2}}{(N-1)^2}\longrightarrow0.
\end{equation*}
\end{assumption}

\begin{assumption}[Difference sequence, kernel and bandwidth]
\label{ass:DB-tuning}
The difference order $m$ is fixed, and \eqref{eq:difference-constraints} holds with
$C_c=\sum_{q=0}^m|c_q|<\infty$. The kernel $K$ is even, bounded, supported on $[-1,1]$, satisfies $K(0)=1$, and for constants $0<\beta_K\le\beta_{\rm dep}$ and $C_K<\infty$,
\begin{equation*}
 |1-K(x)|\le C_K|x|^{\beta_K},
 \qquad |x|\le1.
\end{equation*}
The positive integer sequences $\ell=\ell_n$ and $h=h_n$ satisfy
\begin{equation*}
 \ell\longrightarrow\infty,
 \qquad 2\ell\le h\le C_h\ell,
 \qquad n_{\rm loss}^{\rm DB}:=\ell+mh=o(n),
 \qquad N_D=n-mh.
\end{equation*}
Consequently $N_D-\ell=n-n_{\rm loss}^{\rm DB}\to\infty$, so every denominator
$N_D-r$, $0\le r<\ell$, is positive for all sufficiently large $n$.
\end{assumption}

\begin{assumption}[Nondegeneracy and growth]
\label{ass:spectrum-growth}
Let $\Omega$ be the long-run covariance operator of $\{e_i\}$. Lemma~\ref{prop:lrc-origin} derives from Assumption~\ref{ass:ARS-dependence} that $\Omega$ is nonnegative and trace class. We assume only the nondegeneracy condition
\begin{equation*}
 \mathfrak t_\Omega=\tr(\Omega)>0.
\end{equation*}
Let $\Pi_p$ be the exact orthogonal projection onto $\operatorname{span}(\phi_1,\ldots,\phi_p)$ and define
\begin{equation*}
 \mathfrak r_n=\left(\frac{\ell}{N_D}\right)^{1/2}+\ell^{-\beta_K},
 \qquad
 \zeta_{p,N}=\norm{\Omega-\Pi_p\Omega\Pi_p}_{1}+\chi_{p,N},
 \qquad
 \mathfrak e_n=\sqrt p\,\mathfrak r_n+\zeta_{p,N}.
\end{equation*}
The only additional growth restriction is
\begin{equation}
 \sqrt p\,\mathfrak r_n\longrightarrow0.
 \label{eq:main-growth}
\end{equation}
 The ridge grid $\cG=\{a_1,\ldots,a_{L_{\cG}}\}$ is fixed and obeys
\begin{equation*}
 0<a_-\le\min_v a_v\le\max_v a_v\le a_+<\infty.
\end{equation*}
\end{assumption}

Because $\Omega$ is trace class, $\Pi_p\to\Id_{\cH}$ strongly, and
$\chi_{p,N}\to0$ by Assumption~\ref{ass:grid}, Lemma~\ref{prop:quadrature}
implies $\zeta_{p,N}\to0$.  Hence \eqref{eq:main-growth} also implies
$\mathfrak e_n\to0$.  The term $\sqrt p\,\mathfrak r_n$ is needed only to convert the Hilbert--Schmidt error of an indefinite raw estimator into trace-norm control after positive-part projection. For the implementation $\beta_K=2$, $h=2\ell$ and $\ell\asymp n^{1/4}$, one has $\mathfrak r_n\asymp n^{-3/8}$ and \eqref{eq:main-growth} permits $p=o(n^{3/4})$, apart from the deterministic Fourier-approximation term.

All technical consequences of the primitive assumptions are stated in
Appendix~\ref{app:auxiliary-results}.  In particular, the Appendix derives
trace-class existence of the long-run covariance operator, the functional
invariance principle, numerical-projection error bounds, the rate of the
edge-corrected DB estimator, and consistency of the estimated ridge weights.
The present section contains only the principal distributional, calibration,
power, and localization results.

For $u\in\mathbb Z$, let
\begin{equation*}
 \mathcal C_u=\E(e_u\otimes e_0),
 \qquad
 \Omega=\sum_{u\in\mathbb Z}\mathcal C_u,
 \qquad
 \mathfrak t_\Omega=\tr(\Omega)>0.
\end{equation*}
Let $\lambda_1\ge\lambda_2\ge\cdots>0$ be the positive eigenvalues of
$\Omega$, repeated according to multiplicity; if $\Omega$ has finite rank, all
series below terminate at that rank.  For $a\in\cG$, define
\begin{equation}
 w_j(a)=\frac{\lambda_j}{\lambda_j+a\mathfrak t_\Omega},
 \qquad
 A_1(a)=\sum_{j=1}^{\infty}w_j(a),
 \qquad
 A_2(a)=\sum_{j=1}^{\infty}w_j(a)^2.
 \label{eq:population-ridge-weights}
\end{equation}
The series in \eqref{eq:population-ridge-weights} are finite, and
$A_2(a)>0$.

\subsection{Null distribution and direct spectral calibration}

Let $B_1,B_2,\ldots$ be independent standard Brownian bridges and define
\begin{equation}
 \mathcal D_a(t)
 =\frac{\sum_{j=1}^{\infty}w_j(a)
 \{B_j^2(t)/[t(1-t)]-1\}}
 {\{2A_2(a)\}^{1/2}},
 \qquad \epsilon\le t\le1-\epsilon.
 \label{eq:weighted-bridge-process}
\end{equation}
Lemma~\ref{lem:bridge-series-regularity} shows that the series in
\eqref{eq:weighted-bridge-process} converges uniformly almost surely on the
trimmed interval and defines a continuous process.

\begin{theorem}[Joint weighted-bridge null limit]
\label{thm:null-limit}
Suppose $H_0$ and Assumptions~\ref{ass:ARS-dependence}--
\ref{ass:spectrum-growth} hold. Then
\begin{equation}
 \left\{D_{n,a}(\lfloor nt\rfloor):
 a\in\cG,\ \epsilon\le t\le1-\epsilon\right\}
 \dto
 \left\{\mathcal D_a(t):
 a\in\cG,\ \epsilon\le t\le1-\epsilon\right\}
 \label{eq:joint-null-limit}
\end{equation}
in $\ell^\infty([\epsilon,1-\epsilon])^{L_{\cG}}$. Consequently,
\begin{equation}
 (T_{n,a_1},\ldots,T_{n,a_{L_{\cG}}})
 \dto
 (\mathcal T_{a_1},\ldots,\mathcal T_{a_{L_{\cG}}}),
 \qquad
 \mathcal T_a=\sup_{\epsilon\le t\le1-\epsilon}\mathcal D_a(t).
 \label{eq:joint-scan-limit}
\end{equation}
For $a,a'\in\cG$,
\begin{align}
 \Cov\{\mathcal D_a(t),\mathcal D_{a'}(t')\}
 &=\frac{\{(t\wedge t')-tt'\}^2}
 {t(1-t)t'(1-t')}
 \frac{\sum_{j\ge1}w_j(a)w_j(a')}
 {\{A_2(a)A_2(a')\}^{1/2}}.
 \label{eq:cross-ridge-covariance}
\end{align}
\end{theorem}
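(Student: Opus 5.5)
The plan is to reduce the statistic to a quadratic functional of a Hilbert-space-valued tied-down partial-sum process, pass to the limit by the continuous mapping theorem, and control the spectral and numerical errors separately.

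\textbf{Step 1: operator form.} Under $H_0$ the mean cancels in $\widehat\bS_n$, so $\widehat\bS_n(t)=n^{-1/2}\{\sum_{i\le\lfloor nt\rfloor}\beps_{i,p,N}-(\lfloor nt\rfloor/n)\sum_{i\le n}\beps_{i,p,N}\}$. By \eqref{eq:matrix-operator-identification}, on $\mathcal E_n$,
\begin{equation*}
 V_{n,a}(\lfloor nt\rfloor)=\frac{\ip{\iota_p\widehat\bS_n(t)}{\widehat{\mathcal R}_{n,a}\iota_p\widehat\bS_n(t)}}{t_k(1-t_k)}.
\end{equation*}
I compare this with the ideal $\ip{S_n(t)}{\mathcal R_a S_n(t)}/\{t(1-t)\}$, where $S_n$ is the tied-down partial sum of the original $e_i$ in $\cH$ and $\mathcal R_a=(\Omega+a\mathfrak t_\Omega\Id)^{-1}$. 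The error has three sources, each to be shown $o_p(1)$ uniformly over $t\in[\epsilon,1-\epsilon]$ after division by $\{2\widehat A_{2,n}(a)\}^{1/2}$; this denominator is bounded away from zero because $A_2(a)>0$.
\begin{enumerate}[label=(\roman*),leftmargin=2.2em,itemsep=3pt]
\item The numerical-projection error $\iota_p\beps_{i,p,N}-e_i$. This splits into the quadrature part (Lemma~\ref{prop:quadrature}, of order $\chi_{p,N}$ times a Sobolev norm) and the truncation $(\Id-\Pi_p)e_i$. Its partial sums have long-run trace $\tr\{(\Id-\Pi_p)\Omega(\Id-\Pi_p)\}\le\zeta_{p,N}\to0$. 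The ridge inverse has operator norm at most $(a\widehat{\mathfrak t}_n)^{-1}=O_p(1)$, so this contribution vanishes after applying a maximal inequality for partial sums under Assumption~\ref{ass:ARS-dependence}.
\item Replacing $\widehat{\mathcal R}_{n,a}$ by $\mathcal R_a$. Use the resolvent identity together with $\norm{\widehat\Omega_{n,+}^{\cH}-\Omega}_{1}=O_p(\mathfrak e_n)$ and $\widehat{\mathfrak t}_n\to\mathfrak t_\Omega$ from Lemma~\ref{prop:weight-consistency}. These give $\norm{\widehat{\mathcal R}_{n,a}-\mathcal R_a}_{\op}\pto0$, while $\sup_t\norm{S_n(t)}^2=O_p(1)$.
\item Replacing the centering and scale: $|\widehat A_{1,n}(a)-A_1(a)|+|\widehat A_{2,n}(a)-A_2(a)|\pto0$.
\end{enumerate}
Item (iii) is the delicate point. $A_1$ involves an unweighted sum of $p\to\infty$ terms, so it requires trace-norm, not Hilbert--Schmidt, control. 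Using $|w(x)-w(y)|\le|x-y|/(a\mathfrak t)$ and Lidskii/Mirsky-type inequalities,
\begin{equation*}
 \sum_j|\widehat\lambda_{j,n}-\lambda_j|\le\norm{\widehat\Omega_{n,+}^{\cH}-\Omega}_1,
\end{equation*}
which is exactly where the factor $\sqrt p\,\mathfrak r_n$ enters. I take this bound from the appendix lemma.

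\textbf{Step 2: limit and centering.} The functional invariance principle in $\cH$ (appendix, Aue--Rice--S\"onmez type) gives $S_n\dto\Omega^{1/2}W^0$, a $\cH$-valued Brownian bridge with covariance operator $\Omega$. Expanding in the eigenbasis of $\Omega$,
\begin{equation*}
 \ip{S}{\mathcal R_aS}=\sum_j\frac{\lambda_j}{\lambda_j+a\mathfrak t_\Omega}B_j(t)^2,
\end{equation*}
with independent standard bridges $B_j$. The map $f\mapsto\{\ip{f(t)}{\mathcal R_af(t)}/[t(1-t)]\}_{a\in\cG}$ is continuous from $\ell^\infty([\epsilon,1-\epsilon];\cH)$ into $\ell^\infty([\epsilon,1-\epsilon])^{L_\cG}$, since each $\mathcal R_a$ is bounded and $t(1-t)\ge\epsilon(1-\epsilon)$. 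Continuous mapping, applied jointly over $a$ because all ridges use the same $S_n$, together with Slutsky for the centering/scaling yields \eqref{eq:joint-null-limit}. The replacement of $\lfloor nt\rfloor/n$ by $t$ is handled by uniform continuity of the limit. Subtracting $A_1(a)$ inside the series is justified because $\sum_j w_j(a)<\infty$, as in Lemma~\ref{lem:bridge-series-regularity}.

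\textbf{Step 3: suprema and covariance.} Then \eqref{eq:joint-scan-limit} follows by continuity of the sup functional; the grid $\cK_{n,\epsilon}$ differs from $[\epsilon,1-\epsilon]$ only by $O(1/n)$ endpoints. For \eqref{eq:cross-ridge-covariance}, use independence across $j$ and the Gaussian identity
\begin{equation*}
 \Cov\{B(t)^2,B(t')^2\}=2\{\Cov(B(t),B(t'))\}^2=2\{(t\wedge t')-tt'\}^2,
\end{equation*}
then sum $w_j(a)w_j(a')$ and divide by the normalizations. Interchanging the sum and covariance is justified by $L^2$ convergence of the series.

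I expect the main obstacle to be step (ii)/(iii): obtaining trace-norm consistency of the positive-part DB estimator uniformly enough that the centering $\widehat A_{1,n}(a)$, a sum over $p\to\infty$ weights, has error $o_p(1)$. This relies on the appendix rate $\sqrt p\,\mathfrak r_n+\zeta_{p,N}\to0$, which I invoke rather than re-derive.
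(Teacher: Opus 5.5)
Your proposal is correct and follows essentially the same route as the paper: a functional limit for the embedded tied-down score CUSUM via the numerical-projection equivalence, operator-norm resolvent replacement, trace-norm (Lidskii-type) control of the estimated eigenvalues for $\widehat{\mathfrak t}_n$, $\widehat A_{1,n}(a)$ and $\widehat A_{2,n}(a)$, continuous mapping with the Karhunen--Lo\`eve expansion of $\mathbb B_\Omega$, and the Gaussian identity for the covariance. One small correction in step (i): the Bernoulli-shift maximal inequality (Lemma~\ref{lem:Bernoulli-maximal}) is controlled not by the long-run trace $\tr\{(\Id_{\cH}-\Pi_p)\Omega(\Id_{\cH}-\Pi_p)\}$ but by $\varrho_p$ of Lemma~\ref{lem:projected-dependence}, which tends to zero by dominated convergence, so that is the quantity to invoke there.
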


At a fixed $t$, the skewness of $\mathcal D_a(t)$ equals
\begin{equation*}
 2\sqrt2\,
 \frac{\sum_jw_j(a)^3}{\{\sum_jw_j(a)^2\}^{3/2}}.
\end{equation*}
It generally does not vanish when a finite number of low-frequency directions
carry non-negligible ridge weight.  A Gaussian-process limit is recovered only
under an additional diffuse-spectrum condition, which is not imposed here.

Let $\widehat F_{n,a}$ be the conditional distribution function of
$\widehat{\mathcal T}_{n,a}^{*}$ in \eqref{eq:conditional-limit}, and let
$F_a$ be the distribution function of $\mathcal T_a$.  Let
$\widehat F_n^{\rm joint}$ and $F^{\rm joint}$ be the corresponding joint
laws generated with common bridge paths.  For probability laws on a Polish
space, write $d_{\mathrm{BL}}$ for the bounded--Lipschitz metric.  Throughout
the calibration results, $0<\alpha<1$.  Finally, let $G_C$ be the
distribution function of
\begin{equation*}
 \mathfrak C_\infty=\sum_{v=1}^{L_{\cG}}\varpi_v\,
 \mathfrak c_{\rm C}\{1-F_{a_v}(\mathcal T_{a_v})\}.
\end{equation*}
Let
\begin{equation*}
 c_{C,1-\alpha}=G_C^{-1}(1-\alpha)
 =\inf\{x:G_C(x)\ge1-\alpha\}.
\end{equation*}
For the chosen finite ridge grid and level $\alpha$, write
$\mathsf{CR}_{\alpha}$ for the following one-dimensional calibration
regularity condition:
\begin{equation*}
 \begin{gathered}
 G_C\text{ is continuous at }c_{C,1-\alpha},\\
 G_C(c_{C,1-\alpha}-\eta)<1-\alpha<
 G_C(c_{C,1-\alpha}+\eta)
 \quad\text{for every }\eta>0.
 \end{gathered}
\end{equation*}
This condition concerns the joint Cauchy limit, not its continuous marginal
$p$-values, and is therefore stated explicitly rather than inferred from
marginal continuity.

\begin{theorem}[Direct spectral calibration]
\label{thm:critical-values}
Under the assumptions of Theorem~\ref{thm:null-limit},
\begin{equation*}
 d_{\mathrm{BL}}(\widehat F_n^{\rm joint},F^{\rm joint})\pto0.
\end{equation*}
Moreover,
\begin{equation}
 \max_{a\in\cG}\sup_x
 |\widehat F_{n,a}(x)-F_a(x)|\pto0.
 \label{eq:conditional-cdf-consistency}
\end{equation}
Every $F_a$ is continuous and strictly increasing on its support.  Hence, for
$c_{a,1-\alpha}=F_a^{-1}(1-\alpha)$ and its conditional plug-in version
$\widehat c_{n,a,1-\alpha}$,
\begin{equation*}
 \widehat c_{n,a,1-\alpha}\pto c_{a,1-\alpha},
 \qquad
 \Pp\{T_{n,a}>\widehat c_{n,a,1-\alpha}\}\longrightarrow\alpha.
\end{equation*}
If $\mathsf{CR}_{\alpha}$ holds, then the conditional joint-bridge critical
value $\widehat c_{n,C,1-\alpha}$ satisfies
\begin{equation*}
 \widehat c_{n,C,1-\alpha}\pto c_{C,1-\alpha},
 \qquad
 \Pp\{\mathfrak C_n>\widehat c_{n,C,1-\alpha}\}\longrightarrow\alpha.
\end{equation*}
\end{theorem}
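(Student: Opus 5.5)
The plan is to transfer the weight consistency of Lemma~\ref{prop:weight-consistency} to the bridge functionals through a coupling that is continuous in the weights. Put both the conditional and the limiting joint laws on one probability space with the same bridges $B_1,B_2,\ldots$, so that $\widehat{\mathcal T}^*_{n,a}$ and $\mathcal T_a$ are functionals of a common $\bm B$. Without loss of generality we may work on $\mathcal E_n$, since $\Pp(\mathcal E_n)\to1$.

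\textbf{Step 1: sup-norm bound on the difference of processes.} For fixed $a$, bound $\sup_{\epsilon\le t\le1-\epsilon}|\widehat{\mathcal D}^*_{n,a}(t)-\mathcal D_a(t)|$, conditionally on the data. Set $Z_j(t)=U_j(t)^2-1$. The numerator difference is $\sum_j\{\widehat w_{j,n}(a)-w_j(a)\}Z_j(t)$, with $\widehat w_{j,n}=0$ for $j>p$. Since $\sup_t|Z_j(t)|\le C_\epsilon(\sup_t B_j^2+1)$ has uniformly bounded mean, its conditional expectation is at most $C_\epsilon\sum_j|\widehat w_{j,n}(a)-w_j(a)|$. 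I would bound this $\ell^1$ weight error by $C\{\|\widehat\Omega^{\cH}_{n,+}-\Omega\|_1+|\widehat{\mathfrak t}_n-\mathfrak t_\Omega|\}/(a_-\mathfrak t_\Omega)$, in three moves:
\begin{itemize}
\item the map $x\mapsto x/(x+c)$ is $c^{-1}$-Lipschitz;
\item the ridge level $c=a\widehat{\mathfrak t}_n$ is itself perturbed, which accounts for the trace term;
\item Lidskii's inequality gives $\sum_j|\widehat\lambda_{j,n}-\lambda_j|\le\|\widehat\Omega^{\cH}_{n,+}-\Omega\|_1$.
\end{itemize}
Lemma~\ref{prop:weight-consistency} makes this $O_p(\mathfrak e_n)=o_p(1)$. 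The same argument gives $\widehat A_{2,n}(a)\pto A_2(a)>0$, which handles the denominators. Combining these yields $\E_{\bm B}\max_{a\in\cG}\sup_t|\widehat{\mathcal D}^*_{n,a}-\mathcal D_a|\pto0$.

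\textbf{Step 2: bounded--Lipschitz convergence.} Replacing the grid maximum over $t_k$ by the supremum over $[\epsilon,1-\epsilon]$ costs $o(1)$, because $\mathcal D_a$ is almost surely uniformly continuous (Lemma~\ref{lem:bridge-series-regularity}). Step~1 then gives $\E_{\bm B}\max_a|\widehat{\mathcal T}^*_{n,a}-\mathcal T_a|\pto0$. For any bounded Lipschitz $f$ on $\R^{L_\cG}$ with $\|f\|_{\mathrm{BL}}\le1$, $|\E_{\bm B} f(\widehat{\bm{\mathcal T}}^*)-\E f(\bm{\mathcal T})|$ is bounded by this quantity, which proves $d_{\mathrm{BL}}(\widehat F^{\rm joint}_n,F^{\rm joint})\pto0$.

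\textbf{Step 3: continuity and strict monotonicity of each $F_a$.} This is the main obstacle. For continuity I would use that $\mathcal T_a$ is the supremum of a non-degenerate process that is a convex-type functional of a Gaussian sequence. Conditioning on $B_j$ for $j\ge2$, $\mathcal T_a$ is a supremum over $t$ of $w_1(a)U_1(t)^2/\sqrt{2A_2}+(\text{independent continuous process})$. Writing $B_1=\xi\,g+\tilde B_1$ along a fixed direction $g$ (for instance $g(t)=\min(t,1-t)$ suitably normalized, with $\xi\sim N(0,1)$ independent of $\tilde B_1$), the supremum is monotone in $\xi^2$ wherever $g>0$. I would argue that it is strictly increasing in $|\xi|$, so atoms are excluded. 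Alternatively, Tsirelson's theorem on suprema of Gaussian processes could be applied after the square-root reparametrization. For strict increase on the support, I would use full support of the Gaussian measure of $(B_1,\ldots,B_J)$ in $C[0,1]^J$: shifting the bridges continuously moves the supremum through every value between its infimum and $+\infty$. Uniform convergence~\eqref{eq:conditional-cdf-consistency} then follows from Step~2 by P\'olya's theorem, since weak convergence in probability to a continuous cdf gives sup-norm convergence. Consistency of $\widehat c_{n,a,1-\alpha}$ follows from strict monotonicity at the quantile.

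\textbf{Step 4: size statements.} Combining Theorem~\ref{thm:null-limit} with Slutsky's lemma gives $\Pp\{T_{n,a}>\widehat c_{n,a,1-\alpha}\}\to\Pp\{\mathcal T_a>c_{a,1-\alpha}\}=\alpha$. For the Cauchy combination, note that $P_{n,v}=1-\widehat F_{n,a_v}(T_{n,a_v})$. By~\eqref{eq:conditional-cdf-consistency} and the joint limit~\eqref{eq:joint-scan-limit}, $(P_{n,v})_v\dto(1-F_{a_v}(\mathcal T_{a_v}))_v$. The extended map $\mathfrak c_{\rm C}$ is continuous into $[-\infty,\infty)$, which gives $\mathfrak C_n\dto\mathfrak C_\infty$. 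The same reasoning under the conditional law gives $\mathfrak C^\circ_n\dto\mathfrak C_\infty$ in probability in the $d_{\mathrm{BL}}$ sense. Under $\mathsf{CR}_\alpha$, quantile convergence $\widehat c_{n,C,1-\alpha}\pto c_{C,1-\alpha}$ follows by the standard sandwich argument, and continuity of $G_C$ at $c_{C,1-\alpha}$ yields the limiting level $\alpha$.
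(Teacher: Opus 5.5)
Your architecture matches the paper's proof: common bridges, the $\ell^1$ weight bound, grid-to-supremum discretization, $d_{\mathrm{BL}}$ through $\E_{\bm B}\min\{2,\max_a|\widehat{\mathcal T}^*_{n,a}-\mathcal T_a|\}$, P\'olya, the quantile sandwich, and continuous mapping for the Cauchy map. Your extended continuous-mapping step is a compact version of the paper's truncation at $c_{\rm cut}$.

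The step that fails as written is your primary atomlessness argument in Step~3. With $B_1=\xi g+\widetilde B_1$,
\[
 U_1(t)^2=\frac{\xi^2g(t)^2+2\xi g(t)\widetilde B_1(t)+\widetilde B_1(t)^2}{t(1-t)}.
\]
For each $t$, the map $\xi\mapsto U_1(t)^2$ is a parabola with vertex at $-\widetilde B_1(t)/g(t)$, not at $0$. Because of the cross term, neither these maps nor their supremum over $t$ is monotone in $|\xi|$ or in $\xi^2$. The claim ``strictly increasing in $|\xi|$'' is therefore false, and it does not exclude atoms.

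The repair is convexity, not monotonicity. Your decomposition itself is legitimate: $g(t)=\min(t,1-t)$ gives $\xi=2B_1(1/2)\sim N(0,1)$ with $\E\{B_1(t)\xi\}=g(t)$. Moreover $w_1(a)>0$ because $\mathfrak t_\Omega>0$, and $\inf_{\epsilon\le t\le1-\epsilon}w_1(a)g(t)^2/\{t(1-t)\}>0$. Hence, conditionally on $\widetilde B_1$ and $\{B_j:j\ge2\}$, the map $\xi\mapsto\sup_t\sum_jw_j(a)U_j(t)^2$ is a supremum of uniformly strongly convex functions. It is therefore strongly convex, each of its level sets has at most two points, and the conditional density of $\xi$ rules out atoms. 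This is exactly the device the paper uses, with the Karhunen--Lo\`eve direction $\varphi_1$, for $\mathcal T_{a,d}$ in the proof of Theorem~\ref{thm:local-power}.

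Your Tsirelson fallback is also valid, and it is essentially the paper's own route for this theorem. The paper writes $\mathcal T_a=\{\norm{\mathbb Z_a}_{\mathbb X_a}^2-A_1(a)\}/\{2A_2(a)\}^{1/2}$, where $\mathbb Z_a(t)=\{w_j(a)^{1/2}U_j(t)\}_{j\ge1}$ is a Gaussian element of $\mathbb X_a=C([\epsilon,1-\epsilon],\ell^2)$. It then uses Rhee's density theorem for Gaussian norms away from zero, together with $\Pp(\norm{\mathbb Z_a}_{\mathbb X_a}=0)=0$. Strict increase follows because the support is a nonzero closed linear subspace.

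That last device also handles the infinite series directly. Your finite-$J$ full-support argument additionally needs the tail $\sum_{j>J}w_j(a)\sup_tU_j(t)^2$ to be small with positive probability, independently of $B_1,\ldots,B_J$. This follows from Markov's inequality, but you should state it.
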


Lemma~\ref{lem:finite-reference-regularity} establishes the finite conditional
reference-law regularity used above.  Lemma~\ref{lem:Monte-Carlo-reference}
shows that direct simulation of the explicit bridge law consistently
approximates both the fixed-ridge and, under $\mathsf{CR}_{\alpha}$, the
two-batch joint critical values.  The condition $\mathsf{CR}_{\alpha}$ is
needed only to identify a unique nonrandomized limiting critical value; the
joint weak convergence itself does not use it.  It holds automatically for a
singleton ridge grid and, more generally, when all normalized ridge-weight
sequences coincide, in which case $\mathfrak C_\infty$ is standard Cauchy.
The analytic Cauchy-tail value may also be reported, but its generic guarantee
is a small-tail approximation rather than exactness at every fixed conventional
level.

\subsection{Local alternatives, consistency and localization}

Let $\vartheta_n=\tau_n/n\to\vartheta\in(\epsilon,1-\epsilon)$ and put
\begin{equation*}
 d_n=\mathcal J_{p,N}\delta_n=\iota_p\bdelta_{p,N}.
\end{equation*}
For $t,\vartheta\in[0,1]$, define
\begin{equation*}
 g(t,\vartheta)
 =(t\wedge\vartheta)-t\vartheta
 =\begin{cases}
 t(1-\vartheta),&t\le\vartheta,\\
 \vartheta(1-t),&t>\vartheta.
 \end{cases}
\end{equation*}
For $t,\vartheta\in(0,1)$, define the common change-point shape
\begin{equation}
 \mathfrak h_{\rm cp}(t,\vartheta)
 =\frac{g(t,\vartheta)^2}{t(1-t)}.
 \label{eq:common-change-shape}
\end{equation}
Let $\{\psi_j\}$ be eigenfunctions corresponding to the positive eigenvalues
$\lambda_j$ of $\Omega$, and write
$d=d_0+\sum_{j\ge1}d_j\psi_j$, where $d_0\in\ker(\Omega)$.  Put
\begin{equation*}
 \mathfrak r_{\delta,n}
 =\frac{\ell mh}{N_D}\norm{d_n}^2
 +\frac{\ell\sqrt{mh}}{N_D}\norm{d_n}.
\end{equation*}
For $U_j(t)=B_j(t)/\sqrt{t(1-t)}$, define
\begin{align}
 \mathcal D_{a,d}(t)
 &=\{2A_2(a)\}^{-1/2}
 \Bigg[
 \sum_{j\ge1}w_j(a)\{U_j(t)^2-1\}
 \notag\\
 &\quad-\frac{2g(t,\vartheta)}{\sqrt{t(1-t)}}
 \sum_{j\ge1}
 \frac{\sqrt{\lambda_j}d_j}{\lambda_j+a\mathfrak t_\Omega}U_j(t)
 \notag\\
 &\quad+\frac{g(t,\vartheta)^2}{t(1-t)}
 \left\{
 \sum_{j\ge1}\frac{d_j^2}{\lambda_j+a\mathfrak t_\Omega}
 +\frac{\norm{d_0}^2}{a\mathfrak t_\Omega}
 \right\}
 \Bigg].
 \label{eq:noncentral-weighted-bridge}
\end{align}

\begin{theorem}[Local alternative]
\label{thm:local-alternative}
Suppose model~\eqref{eq:functional-change-model} holds and
Assumptions~\ref{ass:ARS-dependence}--
\ref{ass:spectrum-growth} are satisfied,
$\sqrt n\,d_n\to d$ in $\cH$, and
\begin{equation*}
 \sqrt p\{\mathfrak r_n+\mathfrak r_{\delta,n}\}\longrightarrow0.
\end{equation*}
Then, jointly over $a\in\cG$,
\begin{equation}
 D_{n,a}(\lfloor nt\rfloor)\dto\mathcal D_{a,d}(t)
 \label{eq:local-process-limit}
\end{equation}
in $\ell^\infty([\epsilon,1-\epsilon])^{L_{\cG}}$.
\end{theorem}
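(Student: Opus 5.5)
Under the local alternative we decompose the tied-down score process into a noise part and a deterministic drift, and reuse the null machinery behind Theorem~\ref{thm:null-limit} for the noise. Writing $\widehat\bS_n(t)=\widehat\bS_n^{e}(t)+\widehat\bS_n^{\delta}(t)$, where $\widehat\bS_n^{e}$ is built from the $\beps_{i,p,N}$, a direct count gives
\begin{equation*}
 \widehat\bS_n^{\delta}(t_k)=-\sqrt n\,g_n(t_k,\vartheta_n)\,\bdelta_{p,N},
 \qquad
 \sup_{t}|g_n(t,\vartheta_n)-g(t,\vartheta)|\to0 .
\end{equation*}
Hence $\iota_p\widehat\bS_n^{\delta}(t)\to-g(t,\vartheta)d$ in $\cH$, uniformly on $[\epsilon,1-\epsilon]$. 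Expanding $V_{n,a}(k)$ then gives three pieces:
\begin{itemize}
 \item the null quadratic form $\ip{\widehat\bS^e_n}{\widehat{\mathcal R}_{n,a}\widehat\bS^e_n}/\{t(1-t)\}$;
 \item a cross term $2\ip{\widehat\bS^e_n}{\widehat{\mathcal R}_{n,a}\widehat\bS^\delta_n}/\{t(1-t)\}$;
 \item a drift term $\ip{\widehat\bS^\delta_n}{\widehat{\mathcal R}_{n,a}\widehat\bS^\delta_n}/\{t(1-t)\}$.
\end{itemize}
These are to be matched with the three brackets of \eqref{eq:noncentral-weighted-bridge}.

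The crucial step is controlling $\widehat{\mathcal R}_{n,a}$ under the alternative. The DB estimator is computed from differences $\bY_{i_{\rm D}}$. These cancel the mean except for the $O(mh)$ differences whose window straddles $\tau_n$, and those contribute a contamination to $\widehat\bOmega_n^{\rm DB}$. I would bound its Hilbert--Schmidt norm by $\mathfrak r_{\delta,n}$: the term $\frac{\ell mh}{N_D}\norm{d_n}^2$ comes from the deterministic-times-deterministic products, summed over $\ell$ lags and $mh$ straddling indices, and the term $\frac{\ell\sqrt{mh}}{N_D}\norm{d_n}$ from the noise-times-drift products, using the second-moment bounds of the appendix. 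Adding this to the null rate $\mathfrak r_n$ and applying the positive-part argument converts HS error into trace error at the cost of a factor $\sqrt p$. So under the hypothesis $\sqrt p(\mathfrak r_n+\mathfrak r_{\delta,n})\to0$, the conclusions of Lemma~\ref{prop:weight-consistency} continue to hold:
\begin{equation*}
 \widehat{\mathfrak t}_n\pto\mathfrak t_\Omega,
 \qquad
 \norm{\widehat\Omega^{\cH}_{n,+}-\Omega}_1\pto0 .
\end{equation*}
Consequently $\widehat A_{j,n}(a)\pto A_j(a)$, and $\norm{\widehat{\mathcal R}_{n,a}-\mathcal R_a}_{\op}\pto0$ with $\mathcal R_a=(\Omega+a\mathfrak t_\Omega\Id)^{-1}$, by the resolvent identity and $a\mathfrak t_\Omega\ge a_-\mathfrak t_\Omega>0$. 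This step is the main obstacle: the contamination must be shown not to spoil the trace-norm control that the null calibration relies on.

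Given operator-norm resolvent consistency, the three pieces are handled as follows.
\begin{itemize}
 \item The null quadratic form, centered by $\widehat A_{1,n}(a)$ and scaled by $\{2\widehat A_{2,n}(a)\}^{1/2}$, converges exactly as in Theorem~\ref{thm:null-limit}. This uses the functional invariance principle $\iota_p\widehat\bS^e_n\dto\sum_j\sqrt{\lambda_j}B_j\psi_j$ in $\ell^\infty([\epsilon,1-\epsilon];\cH)$, with the quadrature error controlled by $\chi_{p,N}$ and $\zeta_{p,N}$.
 \item The drift term converges uniformly in $t$ to $\frac{g^2}{t(1-t)}\ip{d}{\mathcal R_ad}$. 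Splitting $d$ into its $\ker\Omega$ part and the $\psi_j$ coordinates gives
 \begin{equation*}
 \ip{d}{\mathcal R_ad}=\sum_j\frac{d_j^2}{\lambda_j+a\mathfrak t_\Omega}+\frac{\norm{d_0}^2}{a\mathfrak t_\Omega}.
 \end{equation*}
 \item The cross term equals $-2g_n\ip{\widehat\bS^e_n}{\widehat{\mathcal R}_{n,a}\sqrt n\,d_n}/\{t(1-t)\}$. Because $\widehat{\mathcal R}_{n,a}\sqrt n\,d_n\to\mathcal R_ad$ in norm, this is a continuous linear functional of the noise process plus $o_p(1)$, uniformly in $t$. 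Its limit is
 \begin{equation*}
 -\frac{2g}{t(1-t)}\sum_j\frac{\sqrt{\lambda_j}B_j(t)d_j}{\lambda_j+a\mathfrak t_\Omega},
 \end{equation*}
 which matches the $U_j$ term after dividing by $\sqrt{t(1-t)}$. The component $d_0$ contributes nothing, since the limiting noise has no component in $\ker\Omega$.
\end{itemize}

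Joint convergence over $a\in\cG$ follows because all three pieces are continuous functionals of the same limiting bridge process, with nonrandom coefficients depending on $a$. The continuous mapping theorem applied to the null invariance principle, together with Slutsky's lemma for the $o_p(1)$ remainders, therefore yields \eqref{eq:local-process-limit} in $\ell^\infty([\epsilon,1-\epsilon])^{L_{\cG}}$.
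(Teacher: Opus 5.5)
Your proposal is correct and follows essentially the same route as the paper. Both arguments use the noise-plus-drift decomposition of the CUSUM and the contamination bound of Lemma~\ref{prop:alternative-DB}, converted to trace norm through the $\sqrt p$ positive-part argument, which yields resolvent and spectral-moment consistency under the alternative; both then apply the null invariance principle, continuous mapping and Slutsky. The only difference is the order of operations: the paper first proves $\iota_p\widehat\bS_n\dto\mathbb B_\Omega-g(\cdot,\vartheta)d$ and expands the limiting quadratic form in the eigenbasis afterwards, whereas you split it into null, cross and drift pieces before passing to the limit.
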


For $a>0$, define
\begin{equation*}
 \mathcal R_a=(\Omega+a\mathfrak t_\Omega \Id_{\cH})^{-1}.
\end{equation*}
For $f\in\cH$, the corresponding ridge signal energy is
\begin{equation*}
 \mathfrak q_a(f)=\ip{f}{\mathcal R_a f}.
\end{equation*}
For $x\in\R^p$, $\mathfrak q_a(x)$ means $\mathfrak q_a(\iota_px)$.

\begin{theorem}[Consistency]
\label{thm:consistency}
Suppose model \eqref{eq:functional-change-model} has one change with
$\vartheta_n\to\vartheta\in(\epsilon,1-\epsilon)$ and
Assumptions~\ref{ass:ARS-dependence}--\ref{ass:spectrum-growth} hold.  If
\begin{equation}
 \sqrt p\{\mathfrak r_n+\mathfrak r_{\delta,n}\}\longrightarrow0
 \label{eq:alt-estimation-condition}
\end{equation}
and $n\mathfrak q_a(d_n)\to\infty$ for at least one $a\in\cG$, then
\begin{equation*}
 T_{n,a'}\pto\infty
 \qquad\text{for every }a'\in\cG.
\end{equation*}
\end{theorem}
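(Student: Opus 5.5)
The plan is to lower-bound $T_{n,a'}$ by its value at the single index $k_n=\tau_n$, which lies in $\cK_{n,\epsilon}$ for large $n$ because $\vartheta\in(\epsilon,1-\epsilon)$. There the CUSUM contrast splits exactly into signal plus noise:
\begin{equation*}
 \sqrt{N_{\tau_n}}\widehat\bDelta_{\tau_n}=\sqrt{N_{\tau_n}}\bdelta_{p,N}+\bm Z_n ,
\end{equation*}
where $\bm Z_n=-\widehat\bS_n^{0}(\vartheta_n)/\sqrt{\vartheta_n(1-\vartheta_n)}$ is built from the centered errors $\beps_{i,p,N}$ only. Using the elementary inequality $(x+y)^\top R(x+y)\ge \tfrac12 x^\top Rx-y^\top Ry$ for $R\succeq0$, we obtain
\begin{equation*}
 V_{n,a'}(\tau_n)\ge \tfrac12 N_{\tau_n}\,\bdelta_{p,N}^\top\widehat{\bm R}_{n,a'}\bdelta_{p,N}-\bm Z_n^\top\widehat{\bm R}_{n,a'}\bm Z_n .
\end{equation*}

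\emph{Step 1 (noise term).} The noise term is $O_p(1)$. Indeed $\bm Z_n^\top\widehat{\bm R}_{n,a'}\bm Z_n\le \norm{\iota_p\bm Z_n}^2/(a'\widehat{\mathfrak t}_n)$. Since $\E\norm{\iota_p\bm Z_n}^2$ is bounded by a constant times $\tr\bOmega_{p,N}+o(1)$, which is $O(1)$ by the Appendix lemmas on numerical projection and long-run covariance, Markov's inequality gives $\norm{\iota_p\bm Z_n}^2=O_p(1)$. It then remains to show $\widehat{\mathfrak t}_n\ge c>0$ with probability tending to one.

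\emph{Step 2 (estimated covariance under the alternative).} This is the main obstacle: under the alternative the DB estimator is contaminated. Differences $\bm Y_{i_{\rm D}}$ straddling $\tau_n$ pick up a term $\bdelta_{p,N}\sum_{q}c_q\ind(\cdot)$, and this happens for at most $mh$ indices. Expanding $\widehat\bGamma_{\rm DB}(r)$ and summing over $|r|<\ell$ with bounded kernel gives a signal-squared contribution of order $\ell mh\norm{d_n}^2/N_D$ and a cross term of order $\ell\sqrt{mh}\norm{d_n}/N_D$ times $O_p(1)$, both in Hilbert--Schmidt norm. That is, the contamination is $O_p(\mathfrak r_{\delta,n})$. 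Combined with the null-rate lemma for $\widehat\bOmega_n^{\rm DB}$, the positive-part conversion (costing a factor $\sqrt p$) and the Fourier error $\zeta_{p,N}$, we obtain
\begin{equation*}
 \norm{\widehat\Omega^{\cH}_{n,+}-\Omega}_1=O_p(\sqrt p(\mathfrak r_n+\mathfrak r_{\delta,n})+\zeta_{p,N})=o_p(1)
\end{equation*}
by \eqref{eq:alt-estimation-condition}. I expect to follow the proof of Lemma~\ref{prop:weight-consistency} with this extra term tracked. The consequences are $\widehat{\mathfrak t}_n\pto\mathfrak t_\Omega>0$, $\widehat A_{1,n}(a')\pto A_1(a')$, $\widehat A_{2,n}(a')\pto A_2(a')>0$, and $\norm{\widehat{\mathcal R}_{n,a'}-\mathcal R_{a'}}_{\op}\pto0$ via the resolvent identity $\widehat{\mathcal R}-\mathcal R=\widehat{\mathcal R}(\Omega-\widehat\Omega)\mathcal R$, with both resolvents bounded by $(a'\mathfrak t_\Omega/2)^{-1}$. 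So the centering and scaling in $D_{n,a'}$ are $O_p(1)$ and bounded away from zero.

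\emph{Step 3 (transfer between ridges).} Because $\Omega\succeq0$ and $\mathcal R_a$, $\mathcal R_{a'}$ share eigenfunctions, we have, eigenvalue by eigenvalue,
\begin{equation*}
 \frac{\lambda+a\mathfrak t_\Omega}{\lambda+a'\mathfrak t_\Omega}\ge\min(1,a/a')\ge a_-/a_+ ,
\end{equation*}
with the same bound on $\ker\Omega$. Hence $\mathfrak q_{a'}(f)\ge(a_-/a_+)\mathfrak q_a(f)$ for all $f$, so $n\mathfrak q_{a'}(d_n)\to\infty$ for every $a'\in\cG$. Since $\norm{\mathcal R_{a'}}_{\op}$ is bounded, Step~2 gives
\begin{equation*}
 |\bdelta_{p,N}^\top\widehat{\bm R}_{n,a'}\bdelta_{p,N}-\mathfrak q_{a'}(d_n)|\le o_p(1)\norm{d_n}^2\le o_p(1)\,C\,\mathfrak q_{a'}(d_n),
\end{equation*}
using $\mathfrak q_{a'}(f)\ge\norm f^2/(\lambda_1+a'\mathfrak t_\Omega)$. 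Therefore the quadratic form is at least $\tfrac12\mathfrak q_{a'}(d_n)$ with probability tending to one. Since $N_{\tau_n}\asymp n$, $V_{n,a'}(\tau_n)\ge c\,n\,\mathfrak q_{a'}(d_n)-O_p(1)\pto\infty$, and hence $T_{n,a'}\ge D_{n,a'}(\tau_n)\pto\infty$.

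One technical caveat concerns fixed or large $\norm{d_n}$. The ratio argument in Step~3 is scale-free, but Step~2 requires $\sqrt p\,\mathfrak r_{\delta,n}\to0$, which is exactly the hypothesis, so no separate case split is needed.
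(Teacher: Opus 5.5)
Your proposal is correct and follows essentially the same route as the paper. Both use the contamination bound of Lemma~\ref{prop:alternative-DB} to obtain trace, weight and resolvent consistency under the alternative, the ridge-transfer inequality $\mathfrak q_{a'}\ge(a_-/a_+)\mathfrak q_a$, and evaluation of the scan at $k=\tau_n$. The only difference is minor: you absorb the signal--noise cross term at the single index $\tau_n$ via $(x+y)^\top R(x+y)\ge\tfrac12x^\top Rx-y^\top Ry$, whereas the paper bounds it by Cauchy--Schwarz uniformly over $k\in\cK_{n,\epsilon}$ to get the uniform drift \eqref{eq:uniform-drift-main}, which it reuses for Theorem~\ref{thm:consistency-location}.
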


Consequently, every fixed-ridge bridge-calibrated test is consistent, all
marginal bridge-limit $p$-values converge to zero, and both the analytic and
the joint-limit calibrated Cauchy procedures are consistent.  The same conclusion holds for direct Monte Carlo bridge calibration when
both Monte Carlo batch sizes diverge.

The consistency statement concerns fixed alternatives.  Under the contiguous
alternatives of Theorem~\ref{thm:local-alternative}, the limiting rejection
probability can be characterized more sharply.  For $a\in\cG$, define
\begin{equation*}
 \mathcal T_{a,d}=\sup_{\epsilon\le t\le1-\epsilon}\mathcal D_{a,d}(t),
 \qquad
 \beta_a(d,\vartheta;\alpha)
 =\Pp\{\mathcal T_{a,d}>c_{a,1-\alpha}\}.
\end{equation*}
For $a_v\in\cG$, put, using the extended Cauchy transform defined in
Section~2,
\begin{equation*}
 P_{a_v,d}=1-F_{a_v}(\mathcal T_{a_v,d}),
 \qquad
 \mathfrak C_d=\sum_{v=1}^{L_{\cG}}\varpi_v
 \mathfrak c_{\rm C}(P_{a_v,d}),
\end{equation*}
and let
\begin{equation*}
 \beta_C(d,\vartheta;\alpha)
 =\Pp\{\mathfrak C_d>c_{C,1-\alpha}\}.
\end{equation*}

\begin{theorem}[Local asymptotic power]
\label{thm:local-power}
Suppose the conditions of Theorem~\ref{thm:local-alternative} hold.  Then the
distribution of $\mathcal T_{a,d}$ is continuous for every $a\in\cG$, and
\begin{equation}
 \Pp\{T_{n,a}>\widehat c_{n,a,1-\alpha}\}
 \longrightarrow \beta_a(d,\vartheta;\alpha).
 \label{eq:fixed-ridge-local-power}
\end{equation}
If $\mathsf{CR}_{\alpha}$ holds, then
\begin{align}
 \beta_C(d,\vartheta;\alpha)
 &\le\liminf_{n\to\infty}
 \Pp\{\mathfrak C_n>\widehat c_{n,C,1-\alpha}\}\notag\\
 &\le\limsup_{n\to\infty}
 \Pp\{\mathfrak C_n>\widehat c_{n,C,1-\alpha}\}
 \le\Pp\{\mathfrak C_d\ge c_{C,1-\alpha}\}.
 \label{eq:Cauchy-local-power-bounds}
\end{align}
The same bounds hold for the two-batch Monte Carlo implementation when both
batch sizes diverge.  In particular, if
$\Pp\{\mathfrak C_d=c_{C,1-\alpha}\}=0$, then both implementations satisfy
\begin{equation}
 \Pp\{\mathfrak C_n>\widehat c_{n,C,1-\alpha}\}
 \longrightarrow \beta_C(d,\vartheta;\alpha),
 \label{eq:Cauchy-local-power}
\end{equation}
and the corresponding two-batch rejection probability has the same limit.
\end{theorem}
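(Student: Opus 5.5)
The plan is to combine the joint local-alternative limit of Theorem~\ref{thm:local-alternative} with the calibration consistency of Theorem~\ref{thm:critical-values}, using Slutsky-type and continuous-mapping arguments. Continuity of the law of $\mathcal T_{a,d}$ is needed to turn weak convergence into convergence of rejection probabilities, so we establish it first. By \eqref{eq:noncentral-weighted-bridge}, $\mathcal D_{a,d}(t)$ is a continuous functional of the Gaussian sequence $(B_j)$. After completing squares, it equals
\[
\{2A_2(a)\}^{-1/2}\sum_j w_j(a)\{(U_j(t)-m_j(t))^2-1\}
\]
plus a deterministic function of $t$. Here
\[
m_j(t)=\frac{g(t,\vartheta)\,d_j}{\sqrt{\lambda_j}\,\sqrt{t(1-t)}},
\]
and the $\ker(\Omega)$ term is deterministic.

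Thus $\mathcal T_{a,d}$ is the supremum over $t$ of a convex, lower semicontinuous functional of the Gaussian process $\{B_j\}$, shifted by a continuous deterministic curve. We apply Tsirelson's theorem, or equivalently the standard Gaussian-measure argument that the law of the supremum of a convex functional of a Gaussian element has atoms only at the left endpoint of its support. In our setting this supremum is convex in the Gaussian coordinate $B_1$ along the direction $\psi_1$, since $w_1(a)>0$. To exclude the endpoint atom, we condition on $(B_j)_{j\ge2}$ and on the shape of $B_1$ orthogonal to a fixed direction $e$ in the Cameron--Martin space. Along this direction, $s\mapsto\sup_t\mathcal D_{a,d}(t)\big|_{B_1+se}$ is a supremum of strictly convex quadratics in $s$ with leading coefficient bounded below uniformly in $t\in[\epsilon,1-\epsilon]$. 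Consequently, each level set in $s$ has at most two points, so the conditional law has no atoms, and integrating gives continuity. We would follow the same route as in the proof that $F_a$ is continuous in Theorem~\ref{thm:critical-values}, with this as the main adaptation.

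For \eqref{eq:fixed-ridge-local-power}, Theorem~\ref{thm:local-alternative} and the continuous mapping theorem applied to the supremum over the finite grid approximation $t_k=k/n$ (handled by asymptotic equicontinuity) give $T_{n,a}\dto\mathcal T_{a,d}$. We must also check that $\widehat c_{n,a,1-\alpha}\pto c_{a,1-\alpha}$ continues to hold under the local alternative. This rests only on consistency of the estimated weights $\widehat{\mathcal W}_n$, which Lemma~\ref{prop:weight-consistency} provides under $\sqrt p\{\mathfrak r_n+\mathfrak r_{\delta,n}\}\to0$: the mean-shift contamination of the DB estimator is of order $\mathfrak r_{\delta,n}$ in Hilbert--Schmidt norm, which becomes trace-norm control after the $\sqrt p$ factor. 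Slutsky's lemma and continuity of the limit law at $c_{a,1-\alpha}$ then yield the limit.

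For the Cauchy statistic, the main obstacle is that $\mathfrak C_n$ involves $P_{n,v}=1-\widehat F_{n,a_v}(T_{n,a_v})$ and the extended map $\mathfrak c_{\rm C}$, which is discontinuous at $1$. The uniform convergence \eqref{eq:conditional-cdf-consistency}, still valid under the alternative by the argument above, together with the joint convergence $(T_{n,a_v})_v\dto(\mathcal T_{a_v,d})_v$ and continuity of each $F_{a_v}$, gives $(P_{n,v})_v\dto(P_{a_v,d})_v$. We then use the fact that $\mathfrak c_{\rm C}$ is continuous from $(0,1]$ into $[-\infty,\infty)$ with the extended topology, and that $P_{a_v,d}>0$ almost surely. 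This yields $\mathfrak C_n\dto\mathfrak C_d$ in the extended line.

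Because $\mathfrak C_d$ may have an atom at $c_{C,1-\alpha}$, we only obtain portmanteau bounds. Combining this with $\widehat c_{n,C,1-\alpha}\pto c_{C,1-\alpha}$, which holds under $\mathsf{CR}_\alpha$ by the argument of Theorem~\ref{thm:critical-values} applied with the alternative-consistent weights, Slutsky and the portmanteau theorem give \eqref{eq:Cauchy-local-power-bounds}: the open set $\{x>c\}$ gives the liminf bound, and the closed set $\{x\ge c\}$ gives the limsup bound. Under $\Pp\{\mathfrak C_d=c_{C,1-\alpha}\}=0$ the two bounds coincide, giving \eqref{eq:Cauchy-local-power}.

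For the two-batch implementation, Lemma~\ref{lem:Monte-Carlo-reference} lets us replace the conditional quantities by their Monte Carlo versions. The mid-rank $p$-values converge uniformly to $1-\widehat F_{n,a_v}$ on compact sets away from the endpoints, and the empirical quantile \eqref{eq:two-batch-critical-value} converges to $c_{C,1-\alpha}$ under $\mathsf{CR}_\alpha$. The same portmanteau argument then applies.
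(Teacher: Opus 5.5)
Your proposal is correct and follows essentially the paper's proof. Weight consistency under the alternative carries over the reference-law and critical-value convergence, and the local limit with continuous mapping handles $T_{n,a}$ and the $p$-value vector. Continuity of $\mathcal T_{a,d}$ comes, as in the paper, from conditioning on a single Gaussian coordinate of a bridge with positive ridge weight and using strong convexity; the paper takes the first Karhunen--Lo\`eve direction $\sqrt2\sin(\pi t)/\pi$, which supplies the nonvanishing leading coefficient you implicitly require of $e$. The Cauchy bounds then follow from the portmanteau theorem.

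The differences are minor. The Tsirelson/log-concavity step is superfluous, since your two-point level-set argument already excludes every atom. Your continuous-mapping argument on $[-\infty,\infty)$ replaces the paper's truncation-and-converging-together step, so you never need to show $P_{a,d}<1$ almost surely.
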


The formulas in Theorem~\ref{thm:local-power} can be made explicit at the true
change fraction.  Let $Z_1,Z_2,\ldots$ be independent standard normal variables
and set
\begin{equation*}
 \eta_j=\frac{\{\vartheta(1-\vartheta)\}^{1/2}d_j}{\sqrt{\lambda_j}},
 \qquad j\ge1.
\end{equation*}
The sequence $\{\eta_j\}$ need not be square summable.  Nevertheless, for
every fixed $a>0$,
\begin{align*}
 \sum_{j\ge1}w_j(a)\{1+\eta_j^2\}
 &=A_1(a)+\vartheta(1-\vartheta)
 \sum_{j\ge1}\frac{d_j^2}{\lambda_j+a\mathfrak t_\Omega}\\
 &\le A_1(a)+
 \frac{\vartheta(1-\vartheta)\norm{d}^2}{a\mathfrak t_\Omega}<\infty.
\end{align*}
Hence the ridge-weighted noncentral chi-square series below converges almost
surely and in $L^1$.  Completing the square in
\eqref{eq:noncentral-weighted-bridge} gives
\begin{equation}
 \mathcal D_{a,d}(\vartheta)
 \stackrel{d}{=}
 \frac{
 \sum_{j\ge1}w_j(a)\{(Z_j-\eta_j)^2-1\}
 +\dfrac{\vartheta(1-\vartheta)\norm{d_0}^2}
 {a\mathfrak t_\Omega}}
 {\{2A_2(a)\}^{1/2}}.
 \label{eq:true-change-noncentral-form}
\end{equation}
Thus the pointwise statistic is a null-centered and null-standardized weighted
sum of independent noncentral chi-square variables.  Its mean at a general scan
fraction and its variance at the true fraction are
\begin{align}
 \E\{\mathcal D_{a,d}(t)\}
 &=\frac{\mathfrak h_{\rm cp}(t,\vartheta)\mathfrak q_a(d)}
 {\{2A_2(a)\}^{1/2}},
 \label{eq:local-power-mean-drift}\\
 \Var\{\mathcal D_{a,d}(\vartheta)\}
 &=1+\frac{2\vartheta(1-\vartheta)}{A_2(a)}
 \sum_{j\ge1}\frac{\lambda_jd_j^2}
 {(\lambda_j+a\mathfrak t_\Omega)^2}.
 \label{eq:local-power-pointwise-variance}
\end{align}
The function $\mathfrak h_{\rm cp}(t,\vartheta)$ is uniquely maximized at
$t=\vartheta$, and the resulting standardized ridge signal is
\begin{equation}
 \Lambda_a(d,\vartheta)
 =\frac{\vartheta(1-\vartheta)\mathfrak q_a(d)}
 {\{2A_2(a)\}^{1/2}}.
 \label{eq:ridge-standardized-signal}
\end{equation}
Because $\mathcal T_{a,d}\ge\mathcal D_{a,d}(\vartheta)$, the exact scan power
in \eqref{eq:fixed-ridge-local-power} obeys the following explicit oracle lower
bound, which is numerically evaluable once $(d,\Omega,\vartheta)$ are specified:
\begin{align}
 \beta_a(d,\vartheta;\alpha)
 &\ge
 \Pp\Bigg\{
 \sum_{j\ge1}w_j(a)\chi_{1,j}^2(\eta_j^2)
 >A_1(a)+\{2A_2(a)\}^{1/2}c_{a,1-\alpha}
 \notag\\
 &\hspace{45mm}
 -\frac{\vartheta(1-\vartheta)\norm{d_0}^2}
 {a\mathfrak t_\Omega}
 \Bigg\},
 \label{eq:pointwise-power-lower-bound}
\end{align}
where the $\chi_{1,j}^2(\eta_j^2)$ are independent noncentral chi-square
variables with one degree of freedom.

The exact oracle ridge and the ridge that maximizes the mean drift at the true
change are, respectively,
\begin{equation}
 \begin{aligned}
 a_{\rm opt}(d,\vartheta;\alpha)
 &=\min\left\{a\in\cG:
 \beta_a(d,\vartheta;\alpha)
 =\max_{b\in\cG}\beta_b(d,\vartheta;\alpha)\right\},\\
 a_{\rm drift}(d)
 &=\min\left\{a\in\cG:
 \frac{\mathfrak q_a(d)}{\{A_2(a)\}^{1/2}}
 =\max_{b\in\cG}\frac{\mathfrak q_b(d)}{\{A_2(b)\}^{1/2}}
 \right\}.
 \end{aligned}
 \label{eq:oracle-ridge-definitions}
\end{equation}
The first criterion uses the complete noncentral bridge process and the
ridge-specific critical value, whereas the second isolates the deterministic
signal-to-null-noise ratio in \eqref{eq:ridge-standardized-signal}.  Both depend
on the unknown spectral coordinates $\{d_j\}$ of the change.  To illustrate
the resulting tradeoff, if $\Omega$ has rank $r<\infty$ and $d_0=0$, then
\begin{equation}
 \lim_{a\downarrow0}
 \frac{\mathfrak q_a(d)}{\{A_2(a)\}^{1/2}}
 =\frac{\sum_{j=1}^{r}d_j^2/\lambda_j}{\sqrt r},
 \qquad
 \lim_{a\to\infty}
 \frac{\mathfrak q_a(d)}{\{A_2(a)\}^{1/2}}
 =\frac{\norm{d}^2}{\norm{\Omega}_{\rm HS}}.
 \label{eq:ridge-signal-limits}
\end{equation}
If $d_0\ne0$, the first limit is infinite.  These are population-level ridge
regime comparisons only: all preceding asymptotic results use a fixed finite
grid bounded away from zero and infinity and do not cover sequences
$a_n\downarrow0$ or $a_n\to\infty$.  A small ridge approximates
inverse-covariance weighting and can be particularly effective for changes in
low-variance or null-space directions, whereas a large ridge approaches an
$L^2$-type detector.  Under a heterogeneous spectrum, different unknown change
directions can therefore favor different ridge regimes, so no direction-free
ridge value is guaranteed to maximize power over the admissible alternatives.
Degenerate spectra can produce ties; for example, in the rank-one case with
$d_0=0$, the drift ratio in \eqref{eq:ridge-signal-limits} is independent
of $a$.

This is the reason for the Cauchy aggregation.  The alternative direction is
not known under $H_0$, so neither $a_{\rm opt}$ nor $a_{\rm drift}$ is available
when the test is constructed.  The Cauchy statistic retains the full ridge
grid, allows a small marginal $p$-value from a well-aligned ridge to contribute
strongly to the aggregate, and uses the common-bridge joint calibration to
account for cross-ridge dependence.  Equations~\eqref{eq:Cauchy-local-power-bounds}--
\eqref{eq:Cauchy-local-power} provide its local asymptotic power bounds and,
when the alternative limit has no atom at the calibrated boundary, its exact
local asymptotic power.  The combined test is not asserted
to equal the infeasible oracle for every local alternative; rather, it removes
the need to select an alternative-specific ridge while preserving valid
fixed-level inference.

\begin{theorem}[Location consistency]
\label{thm:consistency-location}
Under the conditions of Theorem~\ref{thm:consistency}, let
\begin{equation*}
 \widehat\tau_{n,a}
 =\argmax_{k\in\cK_{n,\epsilon}}V_{n,a}(k),
\end{equation*}
with the smallest maximizer used to break ties.  Then
\begin{equation}
 \max_{a\in\cG}\max_{k\in\cK_{n,\epsilon}}
 \left|
 \frac{V_{n,a}(k)}{n\mathfrak q_a(d_n)}
 -\frac{g(k/n,\vartheta_n)^2}{(k/n)(1-k/n)}
 \right|\pto0
 \label{eq:uniform-drift-main}
\end{equation}
and
\begin{equation}
 \max_{a\in\cG}
 \left|\frac{\widehat\tau_{n,a}}{n}-\vartheta\right|\pto0.
 \label{eq:location-consistency}
\end{equation}
\end{theorem}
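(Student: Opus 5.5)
The plan is to decompose the tied-down score partial sum into a deterministic drift and a noise part, and then compare the ridge quadratic form against $n\mathfrak q_a(d_n)$ uniformly in $k$ and $a$. Under model~\eqref{eq:functional-change-model}, for $t_k=k/n$,
\begin{equation*}
 \widehat\bS_n(t_k)=\bS_n^{e}(t_k)-\sqrt n\,g_n(t_k)\bdelta_{p,N},
 \qquad g_n(t_k)=\frac{\min(k,\tau_n)}{n}-\frac{k}{n}\cdot\frac{\tau_n}{n}+O(n^{-1}),
\end{equation*}
where $\bS_n^e$ is the tied-down partial sum of $\beps_{i,p,N}$. Hence
\begin{equation*}
 V_{n,a}(k)=\frac{n g_n(t_k)^2\,\bdelta_{p,N}^\top\widehat{\bm R}_{n,a}\bdelta_{p,N}
 -2\sqrt n g_n(t_k)\,\bdelta_{p,N}^\top\widehat{\bm R}_{n,a}\bS^e_n(t_k)
 +\bS_n^e(t_k)^\top\widehat{\bm R}_{n,a}\bS_n^e(t_k)}{t_k(1-t_k)}.
\end{equation*}
Since $g_n(t,\vartheta_n)\to g(t,\vartheta_n)$ uniformly and $t_k(1-t_k)\ge\epsilon(1-\epsilon)/2$ on $\cK_{n,\epsilon}$, \eqref{eq:uniform-drift-main} reduces to three facts. \textbf{(i)} $\bdelta_{p,N}^\top\widehat{\bm R}_{n,a}\bdelta_{p,N}/\mathfrak q_a(d_n)\pto1$ uniformly in $a\in\cG$. \textbf{(ii)} $\max_k\bS_n^e(t_k)^\top\widehat{\bm R}_{n,a}\bS_n^e(t_k)=O_p(1)$. \textbf{(iii)} The cross term, bounded by Cauchy--Schwarz in the $\widehat{\bm R}_{n,a}$ inner product by $\{n\,\bdelta^\top\widehat{\bm R}\bdelta\}^{1/2}O_p(1)$, is $o_p(n\mathfrak q_a(d_n))$ because $n\mathfrak q_a(d_n)\to\infty$.

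Fact (ii) follows because $\widehat{\bm R}_{n,a}\le(a\widehat{\mathfrak t}_n)^{-1}\bI_p$ and $\widehat{\mathfrak t}_n\pto\mathfrak t_\Omega>0$. The latter comes from the trace consistency in Lemma~\ref{prop:weight-consistency}, valid under the alternative via condition~\eqref{eq:alt-estimation-condition}, which controls the mean-shift contamination $\mathfrak r_{\delta,n}$. It then remains to bound $\max_k\norm{\bS_n^e(t_k)}^2$, which is $O_p(\tr\bOmega_{p,N})=O_p(1)$ by the functional invariance principle for the numerical scores in the Appendix, or by a maximal inequality from Assumption~\ref{ass:ARS-dependence}.

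Fact (i) is where I expect the main obstacle. The difficulty is that $\norm{d_n}$ may be fixed or diverging relative to $n^{-1/2}$, so the estimator $\widehat\Omega_{n,+}^{\cH}$ contains contamination from the shift, and only the relative error matters. I would use the resolvent identity
\begin{equation*}
 \widehat{\mathcal R}_{n,a}-\mathcal R_a=\widehat{\mathcal R}_{n,a}\{(\Omega-\widehat\Omega_{n,+}^{\cH})+a(\mathfrak t_\Omega-\widehat{\mathfrak t}_n)\Id\}\mathcal R_a,
\end{equation*}
which gives $|\ip{d_n}{(\widehat{\mathcal R}_{n,a}-\mathcal R_a)d_n}|\le\norm{\widehat{\mathcal R}_{n,a}^{1/2}\,\cdot\,\mathcal R_a^{1/2}}_{\op}\,\Delta_n\,\{\ip{d_n}{\widehat{\mathcal R}d_n}\ip{d_n}{\mathcal R_ad_n}\}^{1/2}$. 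Here $\Delta_n$ is the operator-norm error of $\widehat\Omega_{n,+}^{\cH}+a\widehat{\mathfrak t}_n\Id$. Both resolvents are bounded by $(a_-\mathfrak t_\Omega/2)^{-1}$ with probability tending to one. Consequently the ratio of $\ip{d_n}{\widehat{\mathcal R}d_n}$ to $\mathfrak q_a(d_n)$ tends to one as soon as $\Delta_n\pto0$.

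The DB estimator's error splits into the null part, of order $\mathfrak e_n\to0$ from the Appendix rate, and the shift part. Differences straddling $\tau_n$ number at most $mh$, each carrying a deterministic $O(\norm{d_n})$ term, which produces exactly $\mathfrak r_{\delta,n}$ in Hilbert--Schmidt norm. The factor $\sqrt p$ in~\eqref{eq:alt-estimation-condition} converts this to trace and operator control after the positive-part projection. One further point needs care: the numerical versus exact projection difference enters only through $\iota_p\bdelta_{p,N}=d_n$, which is exactly the object appearing in $\mathfrak q_a(d_n)$, so no extra quadrature error arises. Uniformity over the finite grid $\cG$ is immediate.

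For the argmax statement \eqref{eq:location-consistency}, I would note that $t\mapsto\mathfrak h_{\rm cp}(t,\vartheta)$ is continuous on $[\epsilon,1-\epsilon]$ with a unique maximum at $\vartheta$. It is strictly increasing for $t<\vartheta$ and strictly decreasing for $t>\vartheta$, since for $t\le\vartheta$ it equals $t(1-\vartheta)^2/(1-t)$. Together with $\vartheta_n\to\vartheta$ and the uniform convergence \eqref{eq:uniform-drift-main}, the standard argmax argument applies: for any $\eta>0$, the maximum of $\mathfrak h_{\rm cp}$ outside $|t-\vartheta|>\eta$ is strictly below $\mathfrak h_{\rm cp}(\vartheta,\vartheta)$. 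Any maximizer of $V_{n,a}/\{n\mathfrak q_a(d_n)\}$ must therefore lie within $\eta$ of $\vartheta$ with probability tending to one, jointly over the finite grid. The tie-breaking convention does not affect this argument.
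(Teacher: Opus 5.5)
Your proposal is correct and follows essentially the same route as the paper: the exact three-term CUSUM decomposition, a resolvent bound giving $\max_{a\in\cG}|\widehat{\mathfrak q}_{n,a}(d_n)/\mathfrak q_a(d_n)-1|\pto0$, Cauchy--Schwarz in the $\widehat{\mathcal R}_{n,a}$ inner product for the cross term, tightness of the noise quadratic form, and the unique-maximum argmax argument for $\mathfrak h_{\rm cp}$. One step needs to be made explicit: the hypothesis gives $n\mathfrak q_a(d_n)\to\infty$ for only one $a$, but your noise and cross-term bounds need it for every ridge. This follows from $\norm{d_n}^2/\{(1+a_+)\mathfrak t_\Omega\}\le\mathfrak q_{a'}(d_n)\le\norm{d_n}^2/(a_-\mathfrak t_\Omega)$ for all $a'\in\cG$, which is what the paper uses through $\mathfrak q_{a'}(d_n)\ge(a_-/a_+)\mathfrak q_a(d_n)$.
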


Because the ridge grid is finite, the uniform conclusion in
\eqref{eq:location-consistency} also permits a data-dependent ridge choice.
For reporting one location from the ridge family, define
\begin{equation*}
 \widehat v_n
 =\min\left\{v:P_{n,v}=\min_{1\le u\le L_{\cG}}P_{n,u}\right\},
 \qquad
 \widehat a_n=a_{\widehat v_n},
 \qquad
 \widehat\tau_n^{\rm RHT}=\widehat\tau_{n,\widehat a_n}.
\end{equation*}
Then $|\widehat\tau_n^{\rm RHT}/n-\vartheta|\pto0$.  In computation,
$P_{n,v}$ is replaced by its marginal mid-rank Monte Carlo version in
\eqref{eq:midrank-bridge-pvalue}.  This adaptive location estimator is used
in the simulation study.

\section{Multiple-change estimation by wild binary segmentation}
\label{sec:multiple}

\subsection{Piecewise-constant mean model and local ridge paths}

We now extend the preceding construction from one unknown mean change to an
unknown finite collection of changes.  Let $J_{\rm cp}\ge1$ be fixed and
consider
\begin{equation}
 X_i=\mu_0+\sum_{b=1}^{J_{\rm cp}}\delta_{b,n}
 \ind(i>\tau_{b,n})+e_i,
 \qquad i=1,\ldots,n,
 \label{eq:multiple-change-model}
\end{equation}
where
\begin{equation*}
 0=\tau_{0,n}<\tau_{1,n}<\cdots<\tau_{J_{\rm cp},n}
 <\tau_{J_{\rm cp}+1,n}=n,
 \qquad
 \delta_{b,n}\ne0,\quad 1\le b\le J_{\rm cp}.
\end{equation*}
The nonzero-jump convention makes every listed location identifiable.  The
error process is the same centered stationary process as in
\eqref{eq:functional-change-model}; only the deterministic mean is allowed to
change.  Put
\begin{equation*}
 \Delta_n=\min_{0\le b\le J_{\rm cp}}
 (\tau_{b+1,n}-\tau_{b,n}),
 \qquad
 d_{b,n}=\mathcal J_{p,N}\delta_{b,n},
 \qquad
 \delta_{\min,n}=\min_{1\le b\le J_{\rm cp}}\norm{\delta_{b,n}}.
\end{equation*}
Here $\delta_{\min,n}$ is the smallest jump in the original functional model,
whereas $d_{b,n}$ is the deterministic jump retained by the implemented
numerical basis expansion.  For $B<\infty$, define
\begin{equation}
 \mathfrak a_{p,N}(B)
 =\sup_{\substack{f\in C^2[0,1]\\ \norm f_{C^2}\le B}}
 \norm{\mathcal J_{p,N}f-f}.
 \label{eq:uniform-functional-resolution}
\end{equation}
Lemma~\ref{prop:quadrature} gives $\mathfrak a_{p,N}(B)\to0$.

\begin{assumption}[Multiple-change configuration]
\label{ass:multiple-configuration}
The number $J_{\rm cp}$ is fixed, and there is a constant $c_{\Delta}>0$ such
that $\Delta_n\ge c_{\Delta}n$.  The functions
$\mu_0,\delta_{1,n},\ldots,\delta_{J_{\rm cp},n}$ belong to $C^2[0,1]$ and
satisfy
\begin{equation*}
 \sup_n\norm{\mu_0}_{C^2}<\infty,
 \qquad
 B_\delta:=\sup_n\max_{1\le b\le J_{\rm cp}}
 \norm{\delta_{b,n}}_{C^2}<\infty,
 \qquad
 \inf_n\delta_{\min,n}>0.
\end{equation*}
\end{assumption}

The fixed-$J_{\rm cp}$ and proportional-spacing formulation isolates the new
issue of recursive localization with one globally estimated long-run
covariance.  Since $n_{\rm loss}^{\rm DB}=mh+\ell=o(n)$ under
Assumption~\ref{ass:DB-tuning}, it also gives $mh+\ell=o(\Delta_n)$.

For an interval $I=(s,e]$ with $n_I=e-s\ge2$, define
\begin{equation*}
 \cK_I=\{s+1,\ldots,e-1\}.
\end{equation*}
A change at an endpoint is not an interior change of $I$.  For
$k\in\cK_I$, set
\begin{equation*}
 \widehat\bDelta_I(k)
 =\frac1{e-k}\sum_{i=k+1}^{e}\widehat\bxi_i
 -\frac1{k-s}\sum_{i=s+1}^{k}\widehat\bxi_i,
 \qquad
 N_I(k)=\frac{(k-s)(e-k)}{n_I},
\end{equation*}
and define
\begin{align*}
 V_{I,a}(k)
 &=N_I(k)\widehat\bDelta_I(k)^{\top}
 \widehat{\bm R}_{n,a}\widehat\bDelta_I(k),\\
 D_{I,a}(k)
 &=\frac{V_{I,a}(k)-\widehat A_{1,n}(a)}
 {\{2\widehat A_{2,n}(a)\}^{1/2}},\\
 \cS_I(k)&=\max_{a\in\cG}D_{I,a}(k),
 \qquad
 \cS_I^{\max}=\max_{k\in\cK_I}\cS_I(k),\\
 \widetilde\tau_I
 &=\min\argmax_{k\in\cK_I}\cS_I(k).
\end{align*}
Every interval uses the same full-sample matrices
$\widehat{\bm R}_{n,a}$ and the same spectral centering and scaling.  Hence the
long-run covariance is estimated only once.  The maximum over the finite ridge
grid is used for localization.  On a one-change interval, all ridge-specific
deterministic drifts have the same unique maximizer.

\subsection{Wild binary segmentation and local refinement}

Let $M_n^{\rm WBS}$ be a positive integer.  Independently of the observations,
draw endpoint pairs
\begin{equation*}
 (S_{r_{\rm W}}^{\rm WBS},E_{r_{\rm W}}^{\rm WBS}),
 \qquad 1\le r_{\rm W}\le M_n^{\rm WBS},
\end{equation*}
independently and uniformly with replacement from
\begin{equation*}
 \mathbb I_n^{\rm WBS}
 =\{(s,e):0\le s<e\le n,\ e-s\ge2\}.
\end{equation*}
Write $I_{r_{\rm W}}^{\rm WBS}=(S_{r_{\rm W}}^{\rm WBS},E_{r_{\rm W}}^{\rm WBS}]$ and
$\mathcal F_n^{\rm WBS}=\{I_{r_{\rm W}}^{\rm WBS}:1\le r_{\rm W}\le M_n^{\rm WBS}\}$, where
repeated draws retain their indices.  The random intervals are generated once
and reused throughout the recursion, as in the WBS construction of
\citet{Fryzlewicz2014}.  For a recursive segment $(s,e]$, put
\begin{equation*}
 \mathcal M_n^{\rm WBS}(s,e)
 =\{r_{\rm W}:s\le S_{r_{\rm W}}^{\rm WBS}<E_{r_{\rm W}}^{\rm WBS}\le e\}.
\end{equation*}

\begin{algorithm}[Difference-based ridge wild binary segmentation]
\label{alg:DB-RWBS}
The inputs are $\Lambda_n>0$, the score sequence, the global DB ridge
quantities, and $\mathcal F_n^{\rm WBS}$.  Apply the following operations.
\begin{enumerate}[label=\arabic*.,leftmargin=2.2em,itemsep=4pt]
\item If $\mathcal E_n^c$ occurs, return $\widehat J_{\rm cp}=0$ and the empty
location vector.  Otherwise call $\operatorname{WBS}(0,n)$.
\item For a call $\operatorname{WBS}(s,e)$, stop if $e-s<2$ or
$\mathcal M_n^{\rm WBS}(s,e)$ is empty.  Otherwise choose
$(r_{\rm W}^\star,k^\star)$ to maximize
\begin{equation*}
 \cS_{I_{r_{\rm W}}^{\rm WBS}}(k),
 \qquad
 r_{\rm W}\in\mathcal M_n^{\rm WBS}(s,e),\quad
 k\in\cK_{I_{r_{\rm W}}^{\rm WBS}}.
\end{equation*}
Break ties by the smallest random-interval index and then the smallest split point.
If the maximum does not exceed $\Lambda_n$, stop.  If it exceeds
$\Lambda_n$, record $k^\star$ and call
$\operatorname{WBS}(s,k^\star)$ and
$\operatorname{WBS}(k^\star,e)$.
\item Let $\widetilde J_{\rm cp}$ be the number of recorded points and order
them as
$\widetilde\tau_{1,n}<\cdots<\widetilde\tau_{\widetilde J_{\rm cp},n}$.
Set $\widetilde\tau_{0,n}=0$ and
$\widetilde\tau_{\widetilde J_{\rm cp}+1,n}=n$.  For
$1\le b\le\widetilde J_{\rm cp}$, define
\begin{align*}
 s_{b,n}^{\rm ref}
 &=\left\lfloor\frac{\widetilde\tau_{b-1,n}
 +\widetilde\tau_{b,n}}2\right\rfloor,\\
 e_{b,n}^{\rm ref}
 &=\left\lfloor\frac{\widetilde\tau_{b,n}
 +\widetilde\tau_{b+1,n}}2\right\rfloor,
 \qquad
 I_{b,n}^{\rm ref}=(s_{b,n}^{\rm ref},e_{b,n}^{\rm ref}],
\end{align*}
and put
\begin{equation*}
 \widehat\tau_{b,n}
 =\min\argmax_{k\in\cK_{I_{b,n}^{\rm ref}}}
 \cS_{I_{b,n}^{\rm ref}}(k).
\end{equation*}
If the finite-sample candidate set is empty, retain
$\widehat\tau_{b,n}=\widetilde\tau_{b,n}$.  Return
$\widehat J_{\rm cp}=\widetilde J_{\rm cp}$ and the labelled refined
locations.
\end{enumerate}
\end{algorithm}

The random localization mechanism distinguishes WBS from ordinary binary
segmentation.  With high probability, each true change is covered by a drawn
interval whose endpoints lie well inside the two adjacent constant-mean
segments.  Such an interval contains only that change and produces a local
ridge drift of order $\Delta_n\norm{d_{b,n}}^2$.  The midpoint refinement is
applied only after the recursive candidate set has been ordered.  It creates a
single-change interval around each preliminary estimate and does not require a
second long-run covariance estimate.

\subsection{Consistency and localization}

For model~\eqref{eq:multiple-change-model}, define
\begin{equation*}
 \mathfrak r_{\delta,{\rm cp},n}
 =\frac{\ell mh}{N_D}
 \sum_{b=1}^{J_{\rm cp}}\norm{d_{b,n}}^2
 +\frac{\ell\sqrt{mh}}{N_D}
 \left\{\sum_{b=1}^{J_{\rm cp}}\norm{d_{b,n}}^2\right\}^{1/2},
\end{equation*}
and
\begin{equation*}
 \mathfrak e_{{\rm cp},n}
 =\sqrt p\{\mathfrak r_n+\mathfrak r_{\delta,{\rm cp},n}\}
 +\zeta_{p,N}.
\end{equation*}
Lemma~\ref{prop:alternative-DB} shows that
$\mathfrak r_{\delta,{\rm cp},n}$ bounds the additional contribution of all
mean changes to the global difference-based estimator.  Put
\begin{equation*}
 \mathfrak v_n^{\rm WBS}=(n\log n)^{1/\nu}.
\end{equation*}
Lemma~\ref{lem:wbs-coverage} supplies a constant $c_{\rm W}>0$.  Define the
random-interval coverage bound
\begin{equation*}
 \pi_n^{\rm WBS}
 =J_{\rm cp}\left\{1-c_{\rm W}
 \left(\frac{\Delta_n}{n}\right)^2\right\}^{M_n^{\rm WBS}}.
\end{equation*}

\begin{theorem}[Exact recovery by WBS and refined localization]
\label{thm:multiple-consistency}
Suppose Assumptions~\ref{ass:ARS-dependence}--
\ref{ass:spectrum-growth} and~\ref{ass:multiple-configuration} hold.  Apply
Algorithm~\ref{alg:DB-RWBS} with a deterministic threshold $\Lambda_n$.  If
\begin{equation}
 \begin{gathered}
 \mathfrak e_{{\rm cp},n}\longrightarrow0,
 \qquad
 \pi_n^{\rm WBS}\longrightarrow0,\\
 \mathfrak v_n^{\rm WBS}\sqrt n=o(\Lambda_n),
 \qquad
 \Lambda_n=o(\Delta_n\delta_{\min,n}^2),
 \end{gathered}
 \label{eq:multiple-threshold-condition}
\end{equation}
then
\begin{equation}
 \Pp\{\widehat J_{\rm cp}=J_{\rm cp}\}\longrightarrow1,
 \qquad
 \max_{1\le b\le J_{\rm cp}}
 |\widetilde\tau_{b,n}-\tau_{b,n}|
 =O_{\Pp}\left(
 \frac{(\mathfrak v_n^{\rm WBS})^2}{\delta_{\min,n}^2}
 +\frac{\mathfrak v_n^{\rm WBS}\sqrt n}{\delta_{\min,n}}
 \right)
 =o_{\Pp}(\Delta_n).
 \label{eq:preliminary-exact-recovery}
\end{equation}
The refined locations satisfy
\begin{equation}
 \max_{1\le b\le J_{\rm cp}}
 |\widehat\tau_{b,n}-\tau_{b,n}|
 =O_{\Pp}\left(
 \frac{(\mathfrak v_n^{\rm WBS})^2}{\delta_{\min,n}^2}
 +\frac{\mathfrak v_n^{\rm WBS}\sqrt{\Delta_n}}
 {\delta_{\min,n}}
 \right)
 =o_{\Pp}(\Delta_n).
 \label{eq:multiple-localization-rate}
\end{equation}
Moreover,
\begin{equation*}
 \Pp\{\widehat\tau_{1,n}<\cdots<\widehat\tau_{J_{\rm cp},n}\}
 \longrightarrow1.
\end{equation*}
The probability is taken jointly over the functional observations and the
independent WBS interval draw.
\end{theorem}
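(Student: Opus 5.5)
The plan is to reduce the recursion to a deterministic statement on a single high-probability event $\mathcal A_n$, which collects four facts.

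First, spectral control. By Lemma~\ref{prop:alternative-DB} and the weight-consistency lemma (Lemma~\ref{prop:weight-consistency}), the condition $\mathfrak e_{{\rm cp},n}\to0$ gives that $\widehat{\mathfrak t}_n\pto\mathfrak t_\Omega$. It also gives $\max_{a\in\cG}\|\widehat{\mathcal R}_{n,a}-\mathcal R_a\|_{\op}\pto0$ and $\widehat A_{\iota,n}(a)\pto A_\iota(a)$ for $\iota=1,2$. Consequently $\Pp(\mathcal E_n)\to1$, and uniformly over $a\in\cG$ the ridge operators are bounded above by $(a_-\mathfrak t_\Omega/2)^{-1}$ and bounded below by $(\|\Omega\|_{\op}+2a_+\mathfrak t_\Omega)^{-1}>0$ in the quadratic-form sense.

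Second, a uniform noise bound. We need
\[
 \max_{0\le s<k<e\le n}\Bigl\|\textstyle\sum_{i=s+1}^{k}\beps_{i,p,N}\Bigr\|=O_\Pp\bigl(\mathfrak v_n^{\rm WBS}\sqrt n\bigr),
\]
together with the one-sided version of order $\mathfrak v_n^{\rm WBS}\sqrt{m}$ for sums of length $m$ anchored at a fixed point. I would obtain this from the $M$-approximation in Assumption~\ref{ass:ARS-dependence}. A Rosenthal/Fuk--Nagaev moment bound of order $\nu>4$ for the Hilbert-valued partial sums, combined with a union bound over $O(n)$ dyadic anchors, produces the factor $(n\log n)^{1/\nu}$. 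This is the step I expect to be the main technical obstacle, since we must control maxima over all $O(n^2)$ intervals using only $\nu$ moments. The numerical-projection error is uniformly negligible by Lemma~\ref{prop:quadrature}.

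Third, WBS coverage. By Lemma~\ref{lem:wbs-coverage}, with probability at least $1-\pi_n^{\rm WBS}\to1$, each $\tau_{b,n}$ lies in some drawn interval $(s,e]$ with $s\in(\tau_{b-1,n},\tau_{b-1,n}+\Delta_n/3]$ and $e\in[\tau_{b+1,n}-\Delta_n/3,\tau_{b+1,n})$. Fourth, signal preservation: $\mathfrak a_{p,N}(B_\delta)\to0$ gives $\mathfrak q_a(d_{b,n})\ge c\,\delta_{\min,n}^2$ uniformly.

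On $\mathcal A_n$, expand $V_{I,a}(k)$ into signal, cross and noise parts. The noise part $N_I(k)\widehat\bDelta^{\epsilon\top}\widehat{\bm R}\widehat\bDelta^{\epsilon}$ is $O((\mathfrak v_n^{\rm WBS})^2)$ uniformly, and the centering $\widehat A_{1,n}$ is $O(1)$. The cross term is bounded by $\mathfrak v_n^{\rm WBS}$ times the square root of the signal. Hence for an interval containing no change in its interior, $\cS_I^{\max}=O_\Pp((\mathfrak v_n^{\rm WBS})^2)=o(\Lambda_n)$, because $\mathfrak v_n^{\rm WBS}\le\mathfrak v_n^{\rm WBS}\sqrt n$. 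For a covering interval, evaluating at $k=\tau_{b,n}$ gives a signal of at least $c\Delta_n\delta_{\min,n}^2$, which exceeds $\Lambda_n$.

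I then argue by induction over the recursion, as in Fryzlewicz's proof of WBS consistency. The inductive claim is that every active segment $(s,e]$ either has no interior change, or has endpoints within $\varepsilon_n=o(\Delta_n)$ of true changes (or of $0,n$) and contains an untouched covering interval. In the latter case the maximal $\cS$ exceeds $\Lambda_n$. The selected $k^\star$ must lie within $\varepsilon_n$ of some true change, where $\varepsilon_n$ is the preliminary rate in \eqref{eq:preliminary-exact-recovery}. This follows from the deterministic fact that, on any interval, the piecewise-quadratic ridge CUSUM signal decreases at least linearly away from its nearest change, with slope of order $\delta_{\min,n}^2$ when the interval is long in both directions. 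The mean-shift contamination is already absorbed in the spectral control. Comparing the signal loss with the noise fluctuations $O((\mathfrak v_n^{\rm WBS})^2+\mathfrak v_n^{\rm WBS}\sqrt n\,\delta_{\min,n})$ yields
\[
 |k^\star-\tau_{b,n}|=O\!\left(\frac{(\mathfrak v_n^{\rm WBS})^2}{\delta_{\min,n}^2}+\frac{\mathfrak v_n^{\rm WBS}\sqrt n}{\delta_{\min,n}}\right).
\]
This rate is $o(\Delta_n)$ by $\mathfrak v_n^{\rm WBS}\sqrt n=o(\Lambda_n)$ and $\Lambda_n=o(\Delta_n\delta_{\min,n}^2)$. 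The maximum over the finite ridge grid does not affect the argument, because each ridge has the same drift maximizer and comparable constants. Segments left without interior changes stop, since all their subintervals have $\cS^{\max}<\Lambda_n$. This gives $\widehat J_{\rm cp}=J_{\rm cp}$ with probability tending to one.

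For the refinement step, on this event each $I_{b,n}^{\rm ref}$ contains only $\tau_{b,n}$, at distance at least roughly $\Delta_n/2-o(\Delta_n)$ from both of its endpoints. I would repeat the single-change localization argument of Theorem~\ref{thm:consistency-location}, but now with noise bounded using the one-sided sums of length at most $n_I\asymp\Delta_n$ anchored at $\tau_{b,n}$. This replaces $\sqrt n$ by $\sqrt{\Delta_n}$ and gives \eqref{eq:multiple-localization-rate}. The ordering of the refined estimates then follows because the refinement intervals are disjoint and ordered, and each refined estimate is within $o(\Delta_n)$ of $\tau_{b,n}$.
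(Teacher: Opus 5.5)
Your overall architecture matches the paper's: one high-probability event combining spectral control, the $(n\log n)^{1/\nu}$ envelope, coverage and jump bounds; threshold separation; a Fryzlewicz-type recursion; and midpoint refinement. The recursion step, however, has two genuine gaps.

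\textbf{First gap: the inductive dichotomy is not preserved.} Your claim is that every active segment either has no interior change or contains an untouched covering interval. Suppose a split satisfies $0<|k^\star-\tau_{b,n}|\le\varepsilon_n$. Then one child contains $\tau_{b,n}$ in its interior, within $\varepsilon_n$ of its endpoint, while the covering interval of $\tau_{b,n}$ sticks out of that child. Your stopping argument treats only change-free segments, and your localization only places $k^\star$ near \emph{some} change. So nothing rules out detecting $\tau_{b,n}$ a second time, and $\Pp\{\widehat J_{\rm cp}=J_{\rm cp}\}\to1$ does not follow.

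The missing ingredient is the paper's invariant: previously selected changes lie within $\varepsilon_n$ (the paper's $\mathfrak R_n^{\rm WBS}$) of a recursive endpoint. This is combined with two facts: a positive signal maximum is attained at a change, and the boundary bound \eqref{eq:wbs-boundary-score} holds. Together they give, for every drawn $I$ inside a node without unselected changes, $\cS_I^{\rm sig,max}+\sup_k\norm{\mathsf M_I(k)}^2\le C\mathfrak R_n^{\rm WBS}$. Then \eqref{eq:wbs-uniform-score-exact} yields $\cS_I^{\max}\le C\{\mathfrak R_n^{\rm WBS}+1+(\mathfrak Z_n^{\rm WBS})^2+\mathfrak Z_n^{\rm WBS}(\mathfrak R_n^{\rm WBS})^{1/2}\}<\Lambda_n$.

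The same bound shows that a split in a node that also contains unselected changes cannot be matched to an already selected one. This is where $\mathfrak v_n^{\rm WBS}\sqrt n=o(\Lambda_n)$ is used essentially, since it gives $\mathfrak R_n^{\rm WBS}=o_{\Pp}(\Lambda_n)$; it is not needed merely to get $\varepsilon_n=o(\Delta_n)$.

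Relatedly, you misquote Lemma~\ref{lem:wbs-coverage}. If the covering endpoints may sit next to $\tau_{b\pm1,n}$, a split near $\tau_{b-1,n}$ can cut the covering interval of $\tau_{b,n}$. The lemma instead places the endpoints in the middle thirds, at distance at least $\Delta_n/3-1\gg\varepsilon_n$ from the neighbouring changes. That margin is what keeps every unselected isolating interval inside the correct child.

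\textbf{Second gap: localization on multi-change intervals.} Your localization of $k^\star$ rests on ``each ridge has the same drift maximizer'', which holds only on one-change intervals. The maximizing interval $I^\star$ can contain several changes (e.g.\ at the first stage). Their retained jumps may load on different spectral directions, and then small and large ridges can favour different changes.

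What is needed is a gap for the ridge-maximum signal relative to its global maximum:
\[
\cS_I^{\rm sig,max}-\cS_I^{\rm sig}(k)\ge c_1(d_I^{\max})^2\{\operatorname{dist}(k,\mathcal T_I^{\rm cp})\wedge c_2\Delta_n\}
\]
whenever $\cS_I^{\rm sig,max}\ge c_0\Delta_n(d_I^{\max})^2$. It must hold uniformly over intervals, even though the jump directions live in growing subspaces. The paper obtains this in Lemma~\ref{lem:wbs-deterministic-geometry} from three ingredients:
\begin{enumerate}
\item convexity of each ridge criterion between changes;
\item exclusion of positive flat pieces;
\item a compactness argument over the finite Gram matrices of the normalized jumps.
\end{enumerate}
Your statement that the signal ``decreases at least linearly away from its nearest change'' is not the right formulation, because near a non-maximizing change the comparison must be with the global maximum. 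It is also not proved.

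\textbf{Refinement.} Your refinement step is fine in substance. It does need the uniform envelope $\mathfrak Z_n^{\rm WBS}$, because the refinement endpoints are random, together with the linear gap \eqref{eq:local-shape-linear-gap}. Theorem~\ref{thm:consistency-location} itself supplies no rate.
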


The lower threshold condition controls three effects simultaneously: the
maximum local quadratic noise, the signal--noise cross term on random
intervals, and the residual effect of a previously localized change near a
recursive boundary.  The upper condition separates the threshold from the
smallest isolating-interval drift.  The lower bound on the original jumps is a
primitive identifiability condition for exact multiple-change recovery.  No
sub-Gaussian concentration condition is used.

\begin{corollary}[A convenient WBS specification]
\label{cor:multiple-fixed-jumps}
Suppose Assumptions~\ref{ass:ARS-dependence}--
\ref{ass:spectrum-growth} and~\ref{ass:multiple-configuration} hold, and
$\mathfrak e_{{\rm cp},n}\to0$.  Let
\begin{equation*}
 M_n^{\rm WBS}=\lceil C_M\log n\rceil,
 \qquad
 \Lambda_n=n^{\gamma_{\rm W}},
 \qquad
 C_M>0,\quad
 \gamma_{\rm W}\in(1/2+1/\nu,1).
\end{equation*}
Then \eqref{eq:preliminary-exact-recovery} and
\eqref{eq:multiple-localization-rate} hold.
\end{corollary}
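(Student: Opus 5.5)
The corollary follows from Theorem~\ref{thm:multiple-consistency} once the four conditions in \eqref{eq:multiple-threshold-condition} are checked for the given $M_n^{\rm WBS}$ and $\Lambda_n$. The condition $\mathfrak e_{{\rm cp},n}\to0$ is assumed, so only the coverage condition and the two threshold conditions remain. Each reduces to elementary rate comparisons using Assumption~\ref{ass:multiple-configuration}, which gives $\Delta_n\ge c_\Delta n$ and $\delta_{\min,n}\ge\underline\delta>0$. Here $\underline\delta=\inf_n\delta_{\min,n}$, which is positive by that assumption. It also gives $\delta_{\min,n}\le B_\delta$, because $\norm{f}\le\norm f_\infty\le\norm f_{C^2}$ on $[0,1]$.

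\textbf{Coverage.} Since $\Delta_n/n\ge c_\Delta$, we have
$\pi_n^{\rm WBS}\le J_{\rm cp}\{1-c_{\rm W}c_\Delta^2\}^{M_n^{\rm WBS}}\le J_{\rm cp}\exp(-c_{\rm W}c_\Delta^2 C_M\log n)=J_{\rm cp}n^{-c_{\rm W}c_\Delta^2C_M}\to0$.
This holds for every $C_M>0$, because $J_{\rm cp}$ is fixed and the exponent is a fixed positive constant. The only check needed is that $c_{\rm W}c_\Delta^2\le1$, so that the base lies in $[0,1)$. This holds because $c_{\rm W}(\Delta_n/n)^2$ is a coverage probability bound in Lemma~\ref{lem:wbs-coverage}. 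If the lemma does not give this directly, one can use $1-x\le e^{-x}$ for $x\in[0,1]$ after shrinking $c_{\rm W}$.

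\textbf{Lower threshold.} We have $\mathfrak v_n^{\rm WBS}\sqrt n=(n\log n)^{1/\nu}n^{1/2}=n^{1/2+1/\nu}(\log n)^{1/\nu}$. Since $\gamma_{\rm W}>1/2+1/\nu$, the ratio to $n^{\gamma_{\rm W}}$ equals $n^{-(\gamma_{\rm W}-1/2-1/\nu)}(\log n)^{1/\nu}\to0$, because a polynomial decay dominates any power of $\log n$.

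\textbf{Upper threshold.} $\Delta_n\delta_{\min,n}^2\ge c_\Delta\underline\delta^2 n$, and $n^{\gamma_{\rm W}}/n\to0$ because $\gamma_{\rm W}<1$.

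Theorem~\ref{thm:multiple-consistency} then yields \eqref{eq:preliminary-exact-recovery} and \eqref{eq:multiple-localization-rate}. Because $\delta_{\min,n}$ is bounded away from zero and $\Delta_n\asymp n$, the rates reduce to $O_\Pp((n\log n)^{2/\nu}+n^{1/2+1/\nu}(\log n)^{1/\nu})$. These are $o(n)=o(\Delta_n)$ since $\nu>4$.

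The argument has no real obstacle. The one point that needs care is confirming that the base in $\pi_n^{\rm WBS}$ lies in $[0,1)$, which lets the logarithmic number of intervals suffice.
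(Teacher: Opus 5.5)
Your proposal is correct and follows the paper's own proof. Like the paper, you bound $\pi_n^{\rm WBS}\le J_{\rm cp}\exp\{-c_{\rm W}c_\Delta^2M_n^{\rm WBS}\}\to0$ using $\Delta_n/n\ge c_\Delta$, and you check $n^{1/2+1/\nu}(\log n)^{1/\nu}=o(n^{\gamma_{\rm W}})$ and $n^{\gamma_{\rm W}}=o(\Delta_n\delta_{\min,n}^2)$ before invoking Theorem~\ref{thm:multiple-consistency}. Your concern about the base is settled directly by the proof of Lemma~\ref{lem:wbs-coverage}, which gives $c_{\rm W}=1/8$.
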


For finite-sample threshold selection, the Gaussian construction used for the
single-change test can be applied conditionally on the realized WBS intervals.
Generate common standard Gaussian increments $Z_{ij}^{*}$ and put
$G_j^{*}(k)=\sum_{i=1}^{k}Z_{ij}^{*}$.  For $I=(s,e]$ and $k\in\cK_I$, define
\begin{equation*}
 U_{j,I}^{*}(k)
 =\left\{\frac{n_I}{(k-s)(e-k)}\right\}^{1/2}
 \left[G_j^{*}(k)-G_j^{*}(s)
 -\frac{k-s}{n_I}\{G_j^{*}(e)-G_j^{*}(s)\}\right]
\end{equation*}
and
\begin{equation*}
 D_{I,a}^{*}(k)
 =\frac{\sum_{j=1}^{p}\widehat w_{j,n}(a)
 \{U_{j,I}^{*}(k)^2-1\}}
 {\{2\widehat A_{2,n}(a)\}^{1/2}}.
\end{equation*}
A practical working threshold is a high conditional quantile of
\begin{equation*}
 \max_{1\le r_{\rm W}\le M_n^{\rm WBS}}
 \max_{k\in\cK_{I_{r_{\rm W}}^{\rm WBS}}}
 \max_{a\in\cG}D_{I_{r_{\rm W}}^{\rm WBS},a}^{*}(k).
\end{equation*}
The same Gaussian increments are shared across all overlapping intervals and
all ridge values, and the realized WBS intervals are held fixed across
reference draws.  This working calibration is not used in
Theorem~\ref{thm:multiple-consistency}, which is stated for a deterministic
diverging threshold.

\section{Simulation studies}
\label{sec:implementation}

\subsection{Simulation design and competing methods}

We compare the proposed single-change test with three established procedures. We refer to the proposed Cauchy combination of ridge-regularized Hotelling statistics as RHT. The competitors are the fully functional test of \citet{AueRiceSonmez2018}, denoted by ARS, the functional principal component test of \citet{BerkesEtAl2009}, denoted by BGHK, and the self-normalized functional principal component test of \citet{ZhangEtAl2011}, denoted by SN.

Let $\{\phi_j:j\ge1\}$ be the Fourier basis introduced in Section~\ref{sec:method}. For the power study, we take $(n,p)=(200,21)$ and generate
\begin{equation*}
 X_i(t)=\sum_{j=1}^{p}\varepsilon_{ij}\phi_j(t)
 +\delta(t)\ind(i>n/2),
 \qquad i=1,\ldots,n.
\end{equation*}
Thus the change fraction is $\vartheta=1/2$. We consider three innovation standard-deviation sequences:
\begin{enumerate}[label=(\roman*),leftmargin=2em]
\item finite rank: $\sigma_j=1$ for $j\le3$ and $\sigma_j=0$ for $j>3$;
\item fast decay: $\sigma_j=3^{-j}$;
\item slow decay: $\sigma_j=j^{-1}$.
\end{enumerate}
These designs are referred to as Settings 1--3, respectively.

Under independent errors,
\begin{equation*}
 \beps_i=(\varepsilon_{i1},\ldots,\varepsilon_{ip})^{\top}
 \stackrel{\mathrm{iid}}{\sim}
 N_p(\bm 0,\bm\Sigma_z),
 \qquad
 \bm\Sigma_z=\diag(\sigma_1^2,\ldots,\sigma_p^2).
\end{equation*}
Under the functional autoregressive design,
\begin{equation*}
 \beps_i=\bm\Psi\beps_{i-1}+\bm z_i,
 \qquad
 \bm z_i\stackrel{\mathrm{iid}}{\sim}
 N_p(\bm 0,\bm\Sigma_z).
\end{equation*}
For each replication, a matrix $\bm A$ is generated with entries $A_{jr}=\sigma_j\sigma_rG_{jr}$, where the $G_{jr}$ are independent standard normal variables, and
\begin{equation*}
 \bm\Psi=0.5\,\frac{\bm A}{\norm{\bm A}_{\op}}.
\end{equation*}
Hence $\norm{\bm\Psi}_{\op}=0.5$. We discard 100 burn-in observations. The long-run covariance matrix of the coefficient sequence is
\begin{equation*}
 \bOmega=
 \begin{cases}
 \bm\Sigma_z, & \text{independent errors},\\[2pt]
 (\bI_p-\bm\Psi)^{-1}\bm\Sigma_z
 (\bI_p-\bm\Psi)^{-\top}, & \text{FAR errors}.
 \end{cases}
\end{equation*}

The mean change is distributed over a random collection of Fourier directions. For $s_{\rm sig}\in\{5,20\}$, let $\cS_{s_{\rm sig}}$ be drawn uniformly from all subsets of $\{1,\ldots,p\}$ having cardinality $s_{\rm sig}$, independently in every replication, and set
\begin{equation*}
 \delta(t)=b_{\rm sig}\sum_{j\in\cS_{s_{\rm sig}}}\phi_j(t).
\end{equation*}
The innovation-variance ordering remains fixed and only the signal support is randomized. The amplitude is chosen to attain the prescribed signal-to-noise ratio
\begin{equation*}
 \mathrm{SNR}
 =\frac{\vartheta(1-\vartheta)\norm{\delta}^2}
 {\tr(\bOmega)},
 \qquad
 b_{\rm sig}=\left\{
 \frac{\mathrm{SNR}\tr(\bOmega)}
 {s_{\rm sig}\vartheta(1-\vartheta)}
 \right\}^{1/2}.
\end{equation*}
For FAR errors, this normalization is applied conditionally on the matrix $\bm\Psi$ generated in that replication. We use
\begin{equation*}
 \mathrm{SNR}\in\{0,0.003,0.005,0.006,0.0075,0.009,0.01\}.
\end{equation*}

RHT uses the ridge grid $\cG=\{0.1,0.2,0.4,0.8\}$. The grid spans moderately different spectral filters, and no single value is privileged because the local-power criterion in \eqref{eq:oracle-ridge-definitions} is alternative-specific. All single-change scans use $\epsilon=0.1$. The difference filter has order $m=3$ and normalized coefficients
\begin{equation*}
 (c_0,c_1,c_2,c_3)=(0.1942,0.2809,0.3832,-0.8582).
\end{equation*}
We use the kernel $K(x)=(1-x^2)_+$. The bandwidth and spacing are $(\ell,h)=(3,6)$ for independent errors and $(\ell,h)=(5,10)$ for FAR errors. The long-run covariance estimate is replaced by its positive spectral part before ridge inversion. RHT is calibrated with 1,999 common Gaussian-bridge paths. The observed and simulated Cauchy statistics are ranked symmetrically within the same batch, providing a finite-simulation implementation of the joint bridge calibration in Section~\ref{sec:method}.

ARS uses its fully functional maximum-CUSUM statistic and the long-run covariance bandwidth $\lfloor n^{1/4}\rfloor$. BGHK uses its functional principal component Cram\'er--von Mises statistic, whereas SN uses its self-normalized maximum-CUSUM statistic. The two principal component methods retain the smallest number of components explaining more than 85\% of the sample variation. SN retains at most 10 components. The reference distributions for ARS, BGHK, and SN are also approximated with 1,999 simulations.

Every power estimate is based on 100 paired replications, so all four methods are applied to the same data in each replication. No method is size-corrected under independent errors. Since BGHK is designed for independent observations and substantially over-rejects under FAR errors, its FAR power is evaluated with a setting-specific critical value obtained from 1,000 independent null replications. The other three procedures retain their original nominal 5\% calibrations.

\subsection{Empirical size}

For the size experiment, we set $\delta\equiv0$ and consider all combinations of $n\in\{200,400\}$ and $p\in\{21,41\}$. When $p=41$, the Fourier expansion and each innovation spectrum are extended through coordinate 41. All other tuning parameters remain unchanged. Each entry in Table~\ref{tab:four-method-size} is based on 1,000 replications and uses the original nominal calibration of the corresponding method. In particular, the FAR entries for BGHK are not size-corrected.

\begin{table}[!htbp]
\caption{Empirical size at the nominal 5\% level. Each entry is based on 1,000 replications and uses the original nominal calibration of the corresponding method.}
\label{tab:four-method-size}
\centering
\scriptsize
\setlength{\tabcolsep}{4.2pt}
\begin{tabular}{ccccrrrr}
\toprule
$n$ & $p$ & Errors & Setting & RHT & ARS & BGHK & SN\\
\midrule
200 & 21 & IID & 1 & 0.068 & 0.055 & 0.055 & 0.036\\
200 & 21 & IID & 2 & 0.063 & 0.050 & 0.058 & 0.050\\
200 & 21 & IID & 3 & 0.067 & 0.050 & 0.057 & 0.043\\
200 & 21 & FAR & 1 & 0.062 & 0.066 & 0.085 & 0.035\\
200 & 21 & FAR & 2 & 0.066 & 0.080 & 0.143 & 0.035\\
200 & 21 & FAR & 3 & 0.052 & 0.053 & 0.084 & 0.035\\
\midrule
200 & 41 & IID & 1 & 0.062 & 0.041 & 0.043 & 0.034\\
200 & 41 & IID & 2 & 0.058 & 0.040 & 0.048 & 0.049\\
200 & 41 & IID & 3 & 0.070 & 0.054 & 0.047 & 0.043\\
200 & 41 & FAR & 1 & 0.064 & 0.069 & 0.086 & 0.040\\
200 & 41 & FAR & 2 & 0.083 & 0.082 & 0.138 & 0.050\\
200 & 41 & FAR & 3 & 0.066 & 0.060 & 0.077 & 0.027\\
\midrule
400 & 21 & IID & 1 & 0.059 & 0.040 & 0.056 & 0.036\\
400 & 21 & IID & 2 & 0.052 & 0.051 & 0.048 & 0.033\\
400 & 21 & IID & 3 & 0.052 & 0.040 & 0.054 & 0.045\\
400 & 21 & FAR & 1 & 0.059 & 0.071 & 0.093 & 0.052\\
400 & 21 & FAR & 2 & 0.070 & 0.061 & 0.147 & 0.048\\
400 & 21 & FAR & 3 & 0.074 & 0.061 & 0.089 & 0.041\\
\midrule
400 & 41 & IID & 1 & 0.067 & 0.056 & 0.056 & 0.048\\
400 & 41 & IID & 2 & 0.059 & 0.038 & 0.048 & 0.040\\
400 & 41 & IID & 3 & 0.062 & 0.038 & 0.052 & 0.043\\
400 & 41 & FAR & 1 & 0.058 & 0.056 & 0.077 & 0.035\\
400 & 41 & FAR & 2 & 0.074 & 0.063 & 0.135 & 0.042\\
400 & 41 & FAR & 3 & 0.065 & 0.071 & 0.108 & 0.041\\
\bottomrule
\end{tabular}
\end{table}

Table~\ref{tab:four-method-size} shows that RHT and ARS maintain broadly satisfactory size across the covariance spectra and dimensions. RHT is mildly liberal in some designs, whereas SN is generally conservative. The original BGHK calibration is adequate under independence but becomes liberal under FAR dependence, especially under fast spectral decay. This pattern motivates the setting-specific size correction applied only to BGHK in the FAR power experiment.

\subsection{Empirical power}

Figures~\ref{fig:four-method-power-iid} and~\ref{fig:four-method-power-far} report the power curves over the full signal grid. The BGHK curves under FAR errors use the size-corrected critical values described above. All other curves use the original nominal 5\% calibrations.

Across both dependence structures, RHT gains power earlier and approaches high power more rapidly than the competing procedures. Its advantage is most pronounced under the finite-rank and fast-decay spectra, where relevant changes can load on directions that are poorly represented by a leading-principal-component truncation. The competitors become more effective under slow spectral decay, but RHT remains consistently strong for both signal-support sizes. The similarity of the IID and FAR patterns indicates that the difference-based long-run covariance estimator preserves the benefit of ridge adaptation under serial dependence.

\begin{figure}[!htbp]
\centering
\includegraphics[width=0.84\textwidth]{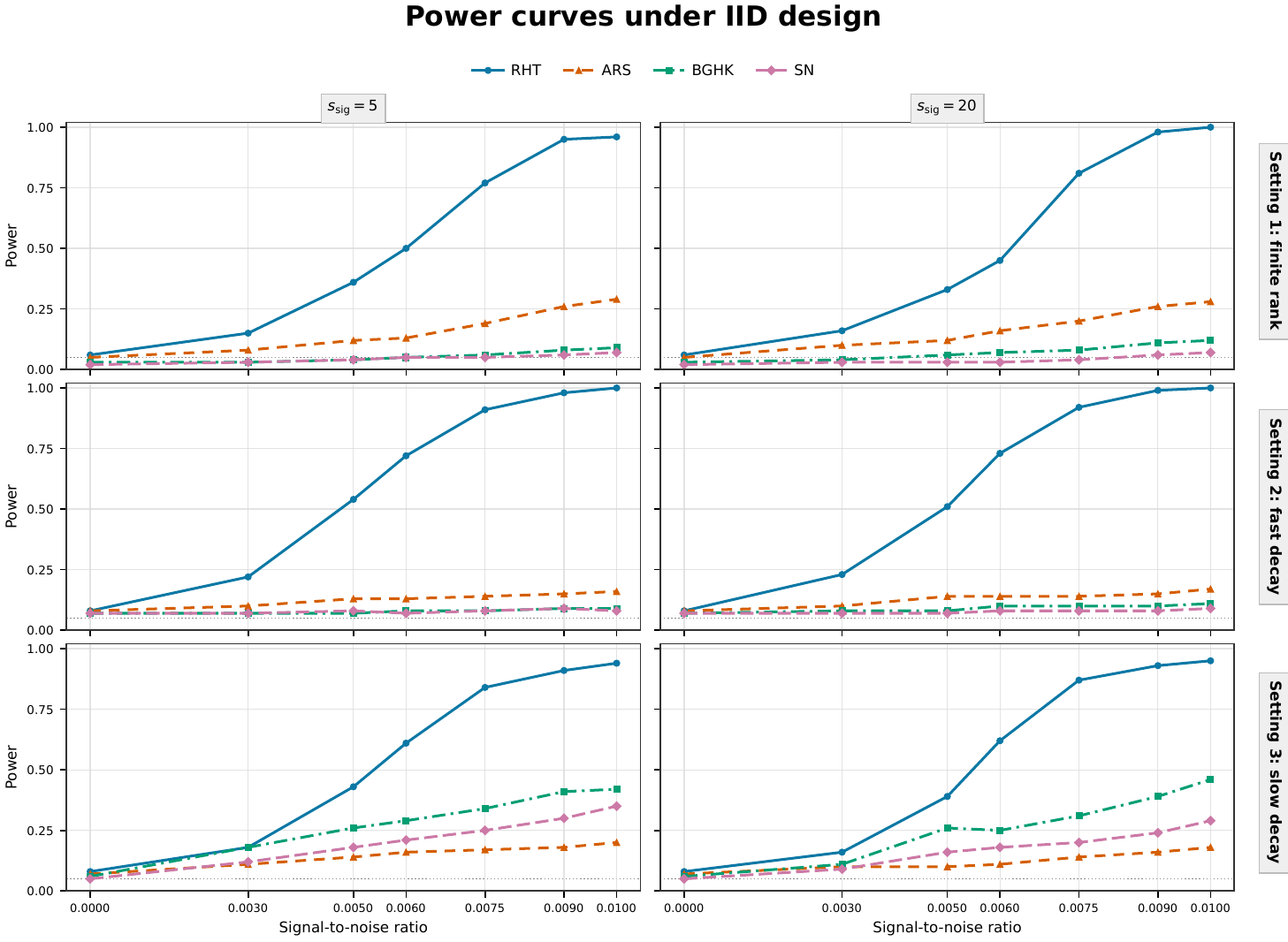}
\caption{Empirical power under IID errors. The rows correspond to the finite-rank, fast-decay, and slow-decay innovation spectra, and the columns correspond to $s_{\rm sig}=5$ and $s_{\rm sig}=20$ randomly selected signal directions. Each point is based on 100 paired replications.}
\label{fig:four-method-power-iid}
\end{figure}

\begin{figure}[!htbp]
\centering
\includegraphics[width=0.84\textwidth]{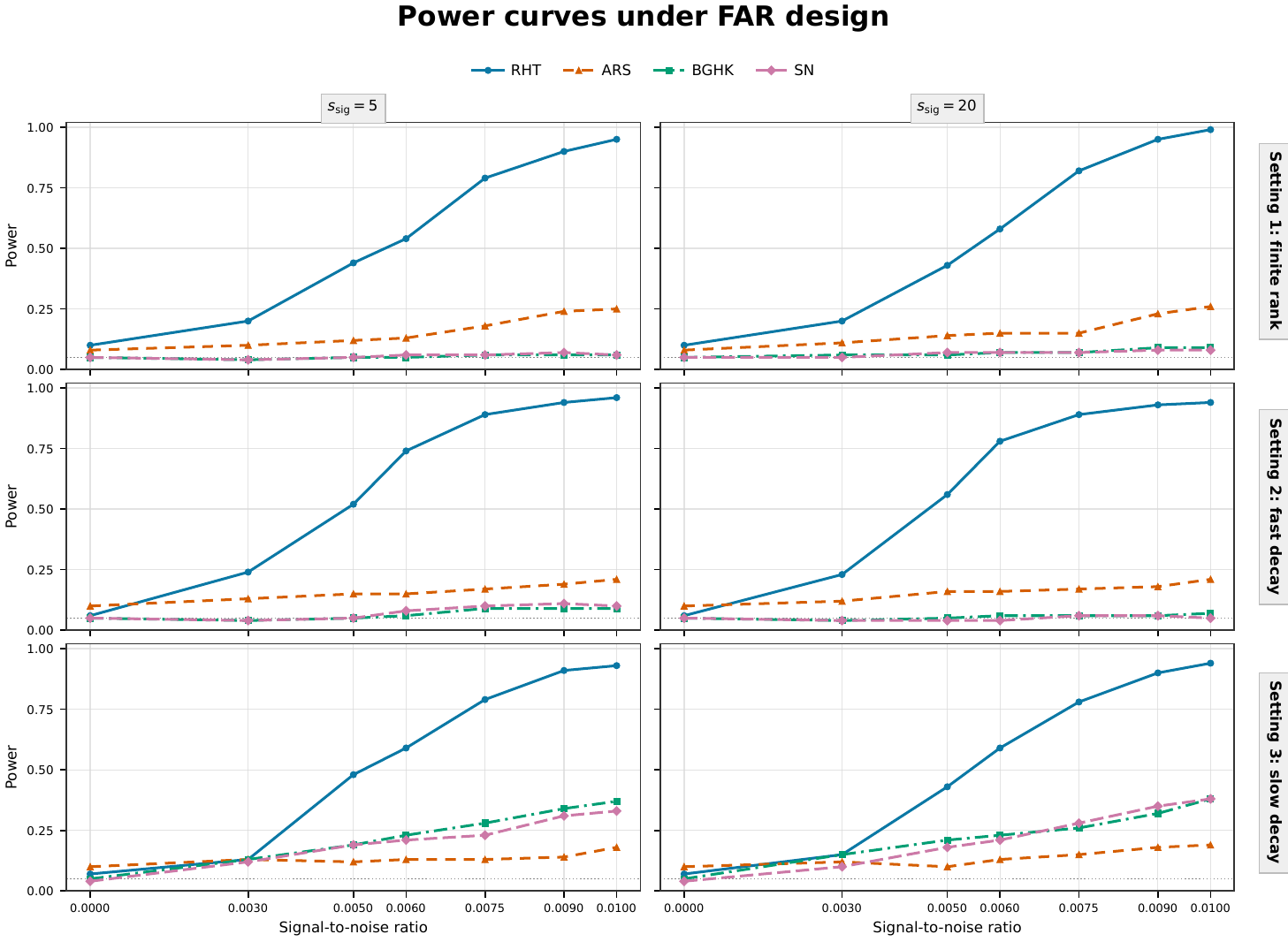}
\caption{Empirical power under FAR errors with $\|\bm\Psi\|_{\rm op}=0.5$. The panel arrangement is the same as in Figure~\ref{fig:four-method-power-iid}. BGHK uses setting-specific size-corrected critical values, whereas the other methods retain their original calibrations. Each point is based on 100 paired replications.}
\label{fig:four-method-power-far}
\end{figure}

\subsection{Single-change localization}

We next examine the finite-sample accuracy of the single-change location
estimators under the same independent and FAR designs.  We set $n=200$,
$p=21$, $\vartheta=1/2$, and $\mathrm{SNR}=0.1$.  Hence the true change
location is $\tau_n=100$.  The three covariance settings and
$s_{\rm sig}\in\{5,20\}$ are retained.  Each design cell is based on 100
paired replications.

For RHT, we use the adaptive estimator $\widehat\tau_n^{\rm RHT}$ defined
after Theorem~\ref{thm:consistency-location}.  The marginal bridge values are
computed by the same finite Monte Carlo batch used for testing.  The selected
location is the maximizer from the ridge scan having the smallest marginal
$p$-value.  ARS and BGHK use the locations returned by their corresponding
scan procedures, whereas SN uses the maximizer of its self-normalized FPCA
scan.  The location estimates are evaluated unconditionally under the
alternative.  The FAR size correction for BGHK is irrelevant because it
changes only the rejection threshold.

Let $N_{\rm rep}=100$.  For a method $\mathsf M$, write
$\widehat\tau_{\mathsf M}^{(r_{\rm rep})}$ for the estimate in replication
$r_{\rm rep}$.  We report
\begin{equation*}
 \mathrm{MAE}_{\mathsf M}
 =\frac{1}{N_{\rm rep}}\sum_{r_{\rm rep}=1}^{N_{\rm rep}}
 \left|\widehat\tau_{\mathsf M}^{(r_{\rm rep})}-\tau_n\right|,
 \qquad
 \mathrm{RMSE}_{\mathsf M}
 =\left\{\frac{1}{N_{\rm rep}}\sum_{r_{\rm rep}=1}^{N_{\rm rep}}
 \left(\widehat\tau_{\mathsf M}^{(r_{\rm rep})}-\tau_n\right)^2
 \right\}^{1/2}.
\end{equation*}

\begin{table}[!htbp]
\caption{Single-change localization accuracy at $\mathrm{SNR}=0.1$ with
$(n,p)=(200,21)$ and $\tau_n=100$. Entries are MAE (RMSE), measured in
observation indices. Each entry is based on 100 paired replications.}
\label{tab:four-method-location}
\centering
\scriptsize
\setlength{\tabcolsep}{3.8pt}
\begin{tabular}{lccrrrr}
\toprule
Errors & Setting & $s_{\rm sig}$ & RHT & ARS & BGHK & SN\\
\midrule
IID & 1 & 5  & 0.01 (0.10) & 0.28 (0.72) & 12.24 (19.60) & 16.37 (25.48)\\
IID & 1 & 20 & 0.00 (0.00) & 0.42 (0.84) & 7.93 (13.95)  & 11.78 (17.73)\\
IID & 2 & 5  & 0.00 (0.00) & 0.54 (1.38) & 3.88 (10.98)  & 5.51 (13.37)\\
IID & 2 & 20 & 0.00 (0.00) & 0.54 (1.30) & 2.32 (8.30)   & 3.52 (8.56)\\
IID & 3 & 5  & 0.00 (0.00) & 0.42 (1.10) & 0.03 (0.17)   & 0.53 (0.89)\\
IID & 3 & 20 & 0.02 (0.14) & 0.29 (0.78) & 0.01 (0.10)   & 0.44 (0.72)\\
\midrule
FAR & 1 & 5  & 0.02 (0.14) & 0.51 (1.50) & 9.67 (16.01)  & 14.25 (23.73)\\
FAR & 1 & 20 & 0.04 (0.20) & 0.27 (0.79) & 9.50 (16.01)  & 11.91 (18.81)\\
FAR & 2 & 5  & 0.01 (0.10) & 0.44 (1.33) & 13.52 (24.44) & 14.18 (25.44)\\
FAR & 2 & 20 & 0.03 (0.22) & 0.52 (1.52) & 10.19 (20.36) & 11.84 (22.84)\\
FAR & 3 & 5  & 0.09 (0.41) & 0.44 (1.14) & 0.22 (0.73)   & 0.54 (1.19)\\
FAR & 3 & 20 & 0.04 (0.20) & 0.34 (0.92) & 0.11 (0.52)   & 0.62 (1.16)\\
\bottomrule
\end{tabular}
\end{table}

Table~\ref{tab:four-method-location} shows that RHT provides the most accurate and stable localization across the designs. ARS is consistently competitive but less precise. BGHK and SN perform well under slow spectral decay, whereas their errors increase under the finite-rank and fast-decay spectra, especially under FAR dependence. This pattern is consistent with information loss from principal-component truncation when a change loads on low-variance directions. Thus the testing gains of RHT are accompanied by accurate change-point dating.

\subsection{Multiple-change localization}

We finally examine the recovery of two changes under the same independent and
FAR designs.  We retain $(n,p)=(200,21)$, the three innovation spectra, and
$s_{\rm sig}\in\{5,20\}$.  The true locations are
\begin{equation*}
 \mathfrak T_0=\{\tau_{1,n},\tau_{2,n}\}=\{70,130\},
\end{equation*}
and the mean follows the temporary-change pattern
\begin{equation*}
 \mu_i(t)=
 \begin{cases}
  0, & 1\le i\le70,\\
  \delta_{\rm mc}(t), & 71\le i\le130,\\
  0, & 131\le i\le200,
 \end{cases}
 \qquad
 \delta_{\rm mc}(t)
 =b_{\rm sig}\sum_{j\in\cS_{s_{\rm sig}}}\phi_j(t).
\end{equation*}
Equivalently, the two jumps in model~\eqref{eq:multiple-change-model} are
$\delta_{1,n}=\delta_{\rm mc}$ and
$\delta_{2,n}=-\delta_{\rm mc}$.  The support
\(\cS_{s_{\rm sig}}\) is redrawn independently in every replication,
while the ordering of the innovation variances remains fixed.  The amplitude
is chosen as
\begin{equation*}
 b_{\rm sig}
 =\left\{
 \frac{0.1\tr(\bOmega)}
 {s_{\rm sig}(0.35)(0.65)}
 \right\}^{1/2}.
\end{equation*}
Thus the positive jump at \(\tau_{1,n}\) and the negative jump at
\(\tau_{2,n}\) have the same trace-based signal-to-noise ratio of 0.1.

For RHT, Algorithm~\ref{alg:DB-RWBS} is implemented with
\(M_n^{\rm WBS}=500\) random intervals.  In the finite-sample scans, we use
\begin{equation*}
 \cK_{I,\epsilon_{\rm W}}
 =\left\{k\in\cK_I:
 k-s\ge\lceil\epsilon_{\rm W}n_I\rceil,\quad
 e-k\ge\lceil\epsilon_{\rm W}n_I\rceil\right\},
 \qquad \epsilon_{\rm W}=0.1,
\end{equation*}
in place of \(\cK_I\).  The difference-based tuning parameters remain
$(\ell,h)=(3,6)$ for independent errors and $(\ell,h)=(5,10)$ for FAR
errors.  The 5\% stopping threshold is computed from 1,999 shared Gaussian
reference paths conditional on the realized WBS intervals.  ARS and BGHK use
the binary-segmentation routines in \texttt{fChange}.  BGHK is applied with
its original stopping rule and without the FAR size correction used in the
single-change power experiment.  SN recursively recomputes its FPCA
self-normalized scan on each candidate segment.  The comparator reference
distributions also use 1,999 draws.  Each design cell is based on
$N_{\rm rep}=100$ paired replications.

For method \(\mathsf M\) in replication \(r_{\rm rep}\), let
\(\widehat{\mathfrak T}_{\mathsf M}^{(r_{\rm rep})}\) denote the estimated
change-point set.  We report the two directed Hausdorff components
\begin{align}
 \operatorname{HD}_{\rm T\to E}
 \left(\mathfrak T_0,
 \widehat{\mathfrak T}_{\mathsf M}^{(r_{\rm rep})}\right)
 &=\max_{\tau\in\mathfrak T_0}
 \min_{\widehat\tau\in
 \widehat{\mathfrak T}_{\mathsf M}^{(r_{\rm rep})}}
 |\tau-\widehat\tau|,
 \label{eq:multiple-HD-true-estimated}\\
 \operatorname{HD}_{\rm E\to T}
 \left(\widehat{\mathfrak T}_{\mathsf M}^{(r_{\rm rep})},
 \mathfrak T_0\right)
 &=\max_{\widehat\tau\in
 \widehat{\mathfrak T}_{\mathsf M}^{(r_{\rm rep})}}
 \min_{\tau\in\mathfrak T_0}
 |\widehat\tau-\tau|.
 \label{eq:multiple-HD-estimated-true}
\end{align}
The first measure reflects missed or poorly localized true changes.  The
second reflects estimates that are not close to a true change.  When a
procedure returns the empty set, both quantities are assigned the maximal
finite penalty \(n=200\).  The tables report their Monte Carlo means and
standard deviations.

\begin{table}[!htbp]
\caption{Multiple-change localization measured by the true-to-estimated
distance in \eqref{eq:multiple-HD-true-estimated}. Entries are mean
(standard deviation) over 100 paired replications.}
\label{tab:multiple-HD-true-estimated}
\centering
\scriptsize
\setlength{\tabcolsep}{4pt}
\begin{tabular}{lccrrrr}
\toprule
Errors & Setting & $s_{\rm sig}$ & RHT & ARS & BGHK & SN\\
\midrule
IID & 1 & 5  & 0.01 (0.10) & 16.87 (54.30) & 180.18 (57.13) & 190.11 (39.67)\\
IID & 1 & 20 & 0.00 (0.00) & 39.58 (78.13) & 162.39 (74.00) & 194.35 (28.26)\\
IID & 2 & 5  & 0.00 (0.00) & 154.91 (82.94) & 72.07 (94.73) & 195.99 (23.53)\\
IID & 2 & 20 & 0.00 (0.00) & 164.69 (75.77) & 31.99 (71.20) & 198.21 (17.90)\\
IID & 3 & 5  & 0.04 (0.24) & 88.82 (99.06) & 0.16 (0.44) & 191.62 (37.60)\\
IID & 3 & 20 & 0.02 (0.14) & 83.08 (97.97) & 0.10 (0.39) & 194.86 (29.85)\\
\midrule
FAR & 1 & 5  & 0.16 (0.49) & 41.42 (79.72) & 156.98 (77.96) & 190.96 (36.54)\\
FAR & 1 & 20 & 0.17 (0.47) & 47.50 (83.80) & 150.02 (83.33) & 195.86 (23.78)\\
FAR & 2 & 5  & 0.41 (1.50) & 127.12 (95.62) & 96.93 (99.74) & 193.53 (32.73)\\
FAR & 2 & 20 & 0.63 (3.34) & 126.55 (96.33) & 81.12 (97.75) & 197.05 (20.81)\\
FAR & 3 & 5  & 0.50 (2.50) & 83.04 (98.02) & 0.40 (1.08) & 190.48 (39.06)\\
FAR & 3 & 20 & 0.16 (0.47) & 87.07 (98.61) & 0.21 (0.67) & 195.78 (25.16)\\
\bottomrule
\end{tabular}
\end{table}

\begin{table}[!htbp]
\caption{Multiple-change localization measured by the estimated-to-true
distance in \eqref{eq:multiple-HD-estimated-true}. Entries are mean
(standard deviation) over 100 paired replications.}
\label{tab:multiple-HD-estimated-true}
\centering
\scriptsize
\setlength{\tabcolsep}{4pt}
\begin{tabular}{lccrrrr}
\toprule
Errors & Setting & $s_{\rm sig}$ & RHT & ARS & BGHK & SN\\
\midrule
IID & 1 & 5  & 0.01 (0.10) & 19.70 (54.27) & 179.21 (59.56) & 188.59 (45.42)\\
IID & 1 & 20 & 0.28 (2.80) & 41.80 (77.46) & 160.86 (76.50) & 192.55 (36.69)\\
IID & 2 & 5  & 0.00 (0.00) & 155.53 (81.89) & 73.88 (93.62) & 194.79 (29.80)\\
IID & 2 & 20 & 1.27 (8.31) & 165.40 (74.31) & 35.33 (70.30) & 198.21 (17.90)\\
IID & 3 & 5  & 0.63 (5.24) & 90.48 (97.82) & 3.46 (10.12) & 190.64 (41.01)\\
IID & 3 & 20 & 0.02 (0.14) & 84.83 (96.78) & 1.39 (6.77) & 194.26 (32.81)\\
\midrule
FAR & 1 & 5  & 1.27 (6.99) & 46.11 (78.20) & 156.81 (77.71) & 188.78 (44.68)\\
FAR & 1 & 20 & 2.79 (11.75) & 51.10 (82.40) & 150.07 (82.89) & 194.40 (32.01)\\
FAR & 2 & 5  & 12.21 (20.74) & 129.38 (92.96) & 108.12 (89.95) & 192.59 (36.53)\\
FAR & 2 & 20 & 9.29 (17.87) & 129.27 (93.23) & 91.26 (90.58) & 196.21 (26.68)\\
FAR & 3 & 5  & 6.64 (14.79) & 85.62 (96.25) & 8.18 (15.87) & 188.80 (44.61)\\
FAR & 3 & 20 & 3.51 (11.37) & 89.10 (97.12) & 8.96 (15.70) & 194.74 (30.08)\\
\bottomrule
\end{tabular}
\end{table}

Tables~\ref{tab:multiple-HD-true-estimated} and
\ref{tab:multiple-HD-estimated-true} show that RHT recovers the two changes
accurately and consistently across the covariance spectra and dependence
structures.  Its true-to-estimated distance remains uniformly small.  The
estimated-to-true distance is also well controlled, although FAR errors under
fast spectral decay occasionally produce additional estimates.  BGHK is
competitive under slow spectral decay but becomes substantially less stable
in the other spectral regimes.  ARS and SN frequently fail to recover both
changes under their stated stopping rules.  Since empty estimates receive the
maximal penalty, the very large competitor distances mainly reflect
nondetection rather than extreme displacement of nonempty estimates.  Overall,
the proposed WBS procedure provides the most stable balance between recovering
all true changes and avoiding spurious locations.

\FloatBarrier

\section{Empirical applications}
\label{sec:applications}

We illustrate the proposed procedures with two temperature data sets that have
markedly different within-curve resolutions.  In each application, all four
methods use the same deterministic Fourier representation.  The empirical
comparisons are descriptive: the estimated changes summarize structural
features of the observed series and are not interpreted as causal climate
attribution.

\subsection{Sydney annual minimum-temperature curves}
\label{subsec:sydney-application}

The first data set contains daily minimum air temperatures recorded at Sydney
(Observatory Hill) from the Australian Bureau of Meteorology Climate Data
Online archive, product \texttt{IDCJAC0011} \citep{BoMCDO2026}.  We use station
066062 through 31 August 2020 and its successor station 066214 thereafter.
Station 066062 is located at approximately $33.86^{\circ}$S,
$151.21^{\circ}$E and 39~m elevation, and commenced operation in 1858
\citep{BoMSydneyStation}.  The two stations overlap before the splice, with a
correlation of 0.998 between their observed daily minima.  The complete record
contains the 167 calendar years from 1859 through 2025.  Leap days are removed,
and 122 of the $167\times365$ observations are linearly interpolated.  Each
year is therefore treated as one functional observation with $N=365$ grid
points.  We retain $p=21$ real Fourier coefficients for every method.

The data are raw station observations rather than a homogenized climate
product.  Station moves, instrumentation changes, and local exposure can
produce nonclimatic discontinuities in long records.  The Australian Climate
Observations Reference Network--Surface Air Temperature data set is designed
to address such effects \citep{BoMACORNSAT}.  The analysis below should thus
be viewed as a statistical study of the observed station series.

The complete-sample tests and the sliding-window stability analysis are
summarized in Table~\ref{tab:sydney-tests}.  The whole-sample calculations use
1,999 reference draws.  For the window analysis, windows of lengths 90, 100,
110, and 120 years are advanced one year at a time, and the reported entries
are rejection frequencies at level 0.05.  All methods consistently detect a change point over the complete observation period. The moving-window analysis further shows that RHT attains higher rejection frequencies than the competing procedures across a broad range of window lengths. This pattern indicates that RHT has greater sensitivity to the underlying structural change and delivers more stable detection performance across different subsamples. The high BGHK frequencies should be interpreted cautiously because
the implementation uses its original independence calibration, while the
residual analysis below reveals dependence in a low-frequency direction.

\begin{table}[!htbp]
\caption{Whole-sample $p$-values and sliding-window rejection frequencies for
the Sydney annual minimum-temperature curves.  The complete sample contains
$n=167$ curves and uses $p=21$ Fourier coefficients.  Window entries are
percentages.}
\label{tab:sydney-tests}
\centering
\begin{tabular}{lrrrr}
\toprule
Summary & RHT & ARS & BGHK & SN\\
\midrule
Whole-sample $p$-value & 0.0005 & $<0.0005$ & $<0.0005$ & 0.0020\\
Window length 90  & 79.49 & 74.36 & 80.77 & 66.67\\
Window length 100 & 95.59 & 83.82 & 89.71 & 72.06\\
Window length 110 & 98.28 & 94.83 & 100.00 & 84.48\\
Window length 120 & 100.00 & 100.00 & 100.00 & 89.58\\
\bottomrule
\end{tabular}
\end{table}

The full-sample RHT scan places the single-change boundary between 1970 and
1971.  Algorithm~\ref{alg:DB-RWBS} retains one change between 1971 and 1972 and
selects no additional break.  The two analyses therefore give the same
substantive localization.  For interpretation, we use the WBS split and compare
1859--1971 with 1972--2025.  The corresponding average daily minimum
temperatures are $13.4893^{\circ}\mathrm C$ and $14.5799^{\circ}\mathrm C$.
Figure~\ref{fig:sydney-mean-curves} shows that the later mean curve lies above
the earlier one through almost the entire calendar.  Its 21-term Fourier
representation remains positive and indicates a broad level shift rather than
a change confined to one season.

\begin{figure}[!htbp]
\centering
\includegraphics[width=0.88\textwidth]{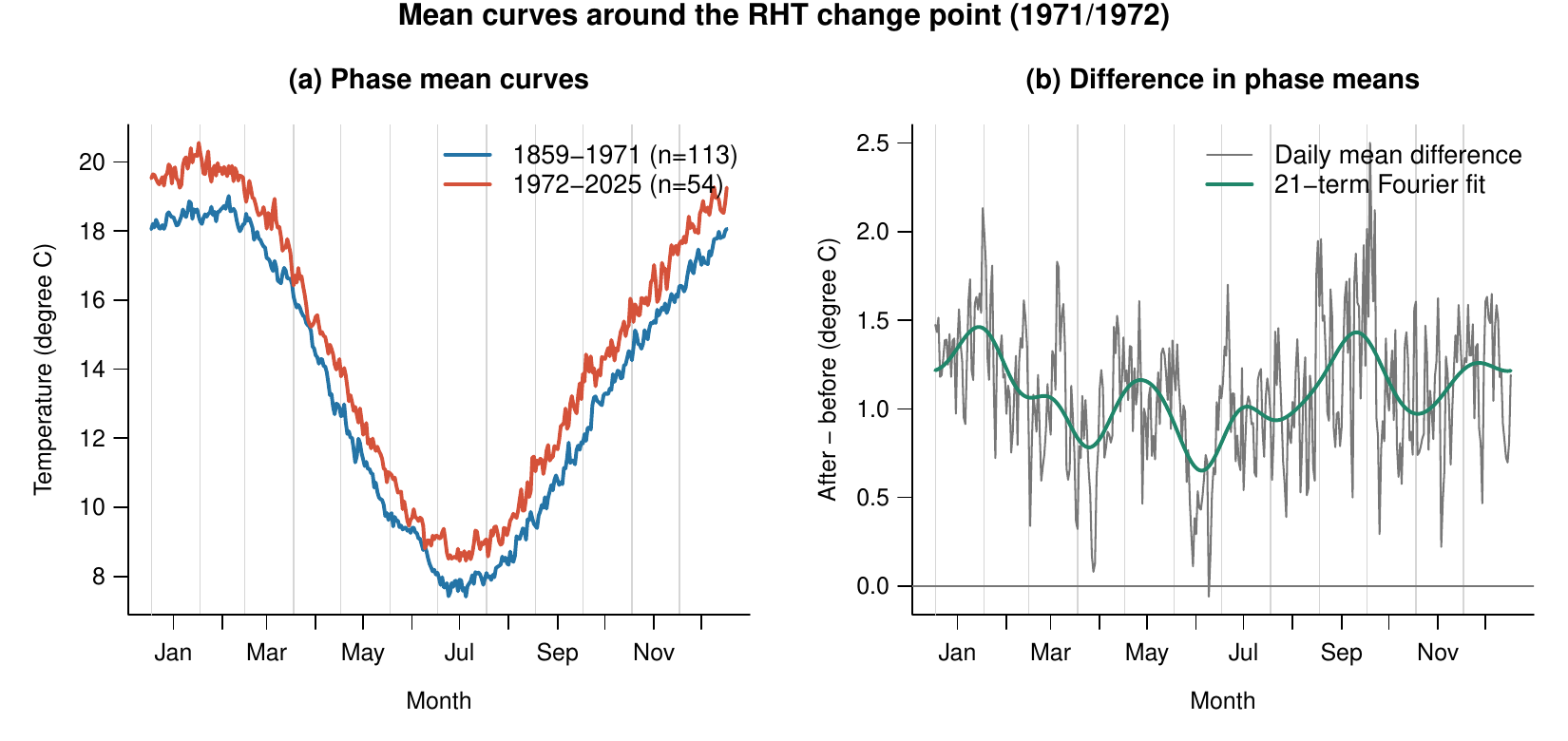}
\caption{Sydney annual minimum-temperature curves around the RHT change point.
The left panel shows the phase mean curves before and after the retained WBS
change.  The right panel shows their daywise difference and its 21-term Fourier
representation.}
\label{fig:sydney-mean-curves}
\end{figure}

The Fourier decomposition in Figure~\ref{fig:sydney-fourier} makes this
interpretation explicit.  The constant coefficient of the fitted contrast is
$1.0906$ and accounts for 96.85\% of its squared $L^2$ energy.  All
nonconstant harmonics together account for only 3.15\%.  The dominant
structural difference is therefore an approximately uniform upward displacement
of the annual curve, with comparatively small changes in seasonal shape.

\begin{figure}[!htbp]
\centering
\includegraphics[width=0.82\textwidth]{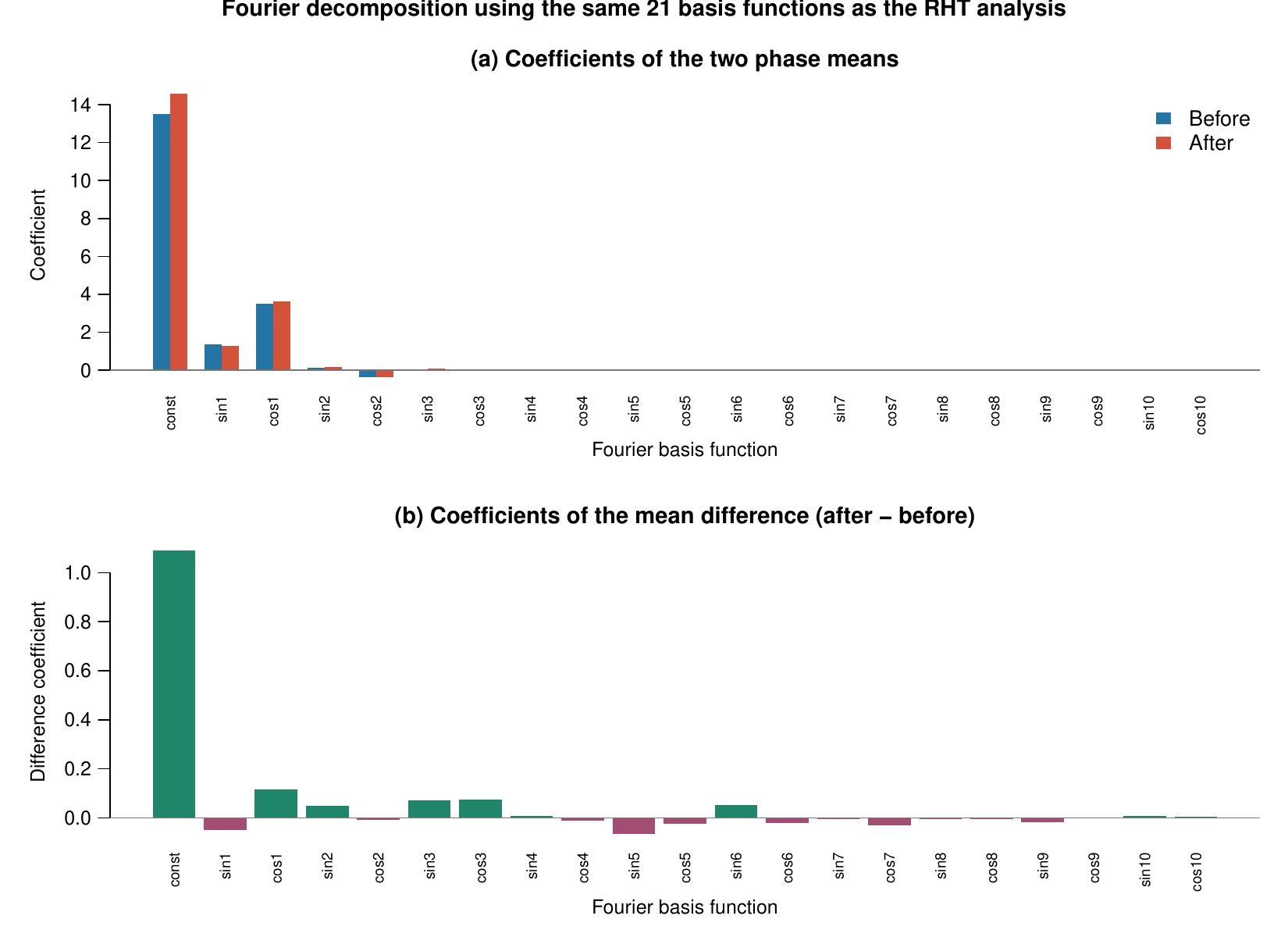}
\caption{Fourier decomposition of the Sydney phase means.  The upper panel
shows the coefficients of the two fitted phase means.  The lower panel shows
the signed coefficients of their difference.}
\label{fig:sydney-fourier}
\end{figure}

To assess temporal dependence after removing the estimated mean shift, we
subtract the phase-specific sample means, exclude lagged pairs crossing the
boundary, and use 4,999 within-phase permutations over lags 1--10.  Table
\ref{tab:sydney-dependence} shows little aggregate autocorrelation for the full
365-day curve.  The annual level direction, however, retains substantial
lag-one dependence.  This concentration of persistence in a low-frequency
component supports the use of a long-run covariance estimator rather than an
independence-only calibration.

\begin{table}[!htbp]
\caption{Residual serial-dependence diagnostics for the Sydney data after
removing the two RHT phase means.  The $p$-values use 4,999 within-phase
permutations over lags 1--10.}
\label{tab:sydney-dependence}
\centering
\begin{tabular}{lrrrr}
\toprule
Representation & Lag-1 ACF & Max. $|\mathrm{ACF}|$ & Lag & $p$-value\\
\midrule
Full 365-day curve & 0.0031 & 0.0124 & 8 & 0.9986\\
Annual level direction & 0.4009 & 0.4009 & 1 & 0.0002\\
\bottomrule
\end{tabular}
\end{table}

The estimated transition is consistent with a broad warming of the observed
station record.  Australia has warmed substantially since national records
began, with most warming occurring after 1950 and with increases evident
throughout the calendar \citep{BoMStateClimate2024}.  A step-change model
summarizes this gradual low-frequency evolution by the boundary that best
separates two mean regimes.  The estimated year should therefore not be read
as the onset of a single physical event.  The raw-station caveat also remains
important, although the Bureau reports no anomalous post-1910 urbanization
trend at Observatory Hill relative to nonurban sites in eastern New South
Wales \citep{BoMACORNSAT}.

\FloatBarrier

\subsection{NOAA ERSSTv6 Ni\~no 1+2 sea-surface-temperature curves}
\label{subsec:nino-application}

The second data set consists of absolute monthly Ni\~no 1+2 sea-surface
temperatures from NOAA's Extended Reconstructed Sea Surface Temperature,
version 6 \citep{NOAANino12,NOAAERSSTv6}.  We retain the 76 complete calendar
years from 1950 through 2025 and omit the incomplete 2026 curve.  Each year is
one functional observation on the common grid of $N=12$ months.  The basis
contains the constant and four sine--cosine pairs, so $p=9$.  These coefficients
reconstruct the monthly curves with relative residual root-sum-of-squares
0.065.  Their sample standard deviations range from 0.075 to 0.893, providing
a direct illustration of heterogeneous score scales.

All candidate boundaries satisfy $k/n\in[0.1,0.9]$.  RHT uses difference order
$m=3$, bandwidth $\ell=5$, spacing $h=10$, the kernel exponent $q=2$, and the
ridge grid $\cG=\{0.1,0.2,0.4,0.8\}$.  The raw long-run covariance estimate has
trace 0.8423.  Its positive spectral part has trace 0.8448; only one small
negative eigenvalue is truncated.  The joint calibration uses 1,999 common
weighted Brownian-bridge paths.  Localization follows the marginal ridge
$p$-value rule described after Theorem~\ref{thm:consistency-location}.

Table~\ref{tab:nino-tests} reports the complete-sample results.  RHT, ARS, and
SN reject stability at level 0.05, whereas BGHK does not.  RHT and ARS agree on
the boundary between 1981 and 1982.  The ridge-specific RHT scans are closely
aligned, and the smallest marginal $p$-value occurs at $a=0.8$.  The selected
path is displayed in the middle panel of Figure~\ref{fig:nino-three-panel}.
The figure also shows that the retained Fourier coordinates have markedly
different scales.

\begin{table}[!htbp]
\caption{Whole-sample single-change results for the 76 annual Ni\~no 1+2
curves.  Locations are reported as the last pre-change year and the first
post-change year.}
\label{tab:nino-tests}
\centering
\begin{tabular}{lcc}
\toprule
Method & $p$-value & Estimated boundary\\
\midrule
RHT  & 0.0365 & 1981/1982\\
ARS  & 0.0380 & 1981/1982\\
BGHK & 0.0925 & 1982/1983\\
SN   & 0.0095 & 1975/1976\\
\bottomrule
\end{tabular}
\end{table}

The fitted segment means differ primarily by a vertical displacement.  Their
calendar averages are $22.690^{\circ}\mathrm C$ for 1950--1981 and
$23.286^{\circ}\mathrm C$ for 1982--2025.  The month-specific increases are
similar across the year, so the estimated contrast is not concentrated in one
season.  As in the Sydney example, this change is a descriptive feature of the
observed regional index rather than a causal attribution statement.

\begin{figure}[!htbp]
\centering
\includegraphics[width=0.88\textwidth]{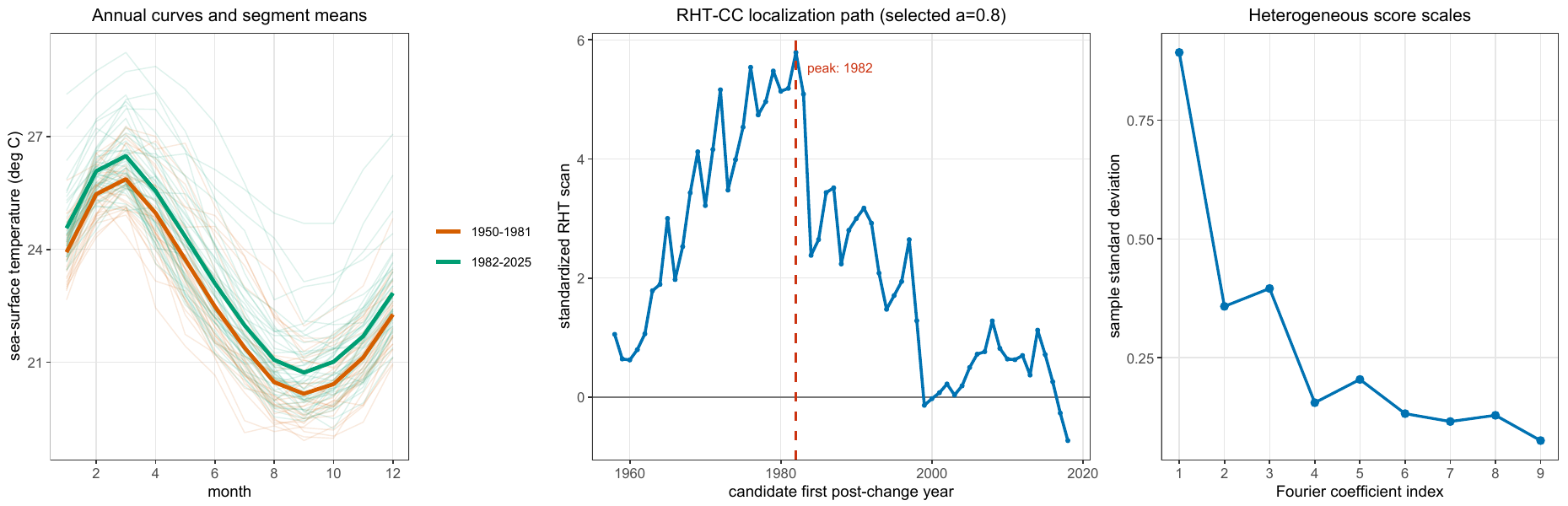}
\caption{NOAA ERSSTv6 Ni\~no 1+2 analysis.  The left panel shows the annual
monthly curves and the two fitted segment means.  The middle panel is the RHT
localization path selected by the smallest marginal ridge $p$-value.  The
right panel shows the sample standard deviations of the retained Fourier
scores.}
\label{fig:nino-three-panel}
\end{figure}

Residual dependence is assessed after subtracting the two fitted segment
means.  We exclude lagged pairs crossing the estimated boundary and use 4,999
within-segment reorderings over lags 1--10.  Table
\ref{tab:nino-dependence} shows no broad evidence of persistence after the
mean change is removed.  The diagnostics nevertheless reveal a borderline
lag-one cross-frequency association.  The evidence is therefore more
consistent with weak and localized dependence than with strict temporal
independence.

\begin{table}[!htbp]
\caption{Residual serial-dependence diagnostics for the Ni\~no 1+2 data after
removing the two fitted segment means.  The $p$-values use 4,999 within-segment
reorderings over lags 1--10.}
\label{tab:nino-dependence}
\centering
\scriptsize
\setlength{\tabcolsep}{4pt}
\begin{tabular}{p{0.39\textwidth}rrp{0.28\textwidth}}
\toprule
Diagnostic & Statistic & $p$-value & Maximizer or detail\\
\midrule
Maximum normalized score lag product
 & 0.4791 & 0.0564 & Lag 1; cross-frequency correlation $-0.4791$\\
Score $L^2$ portmanteau
 & 968.9564 & 0.0692 & Jointly over score pairs and lags\\
Maximum functional autocorrelation
 & 0.1951 & 0.2376 & Maximized at lag 2\\
Annual-mean portmanteau
 & 11.7048 & 0.3060 & Lag-one correlation $-0.0437$\\
\bottomrule
\end{tabular}
\end{table}

Taken together, the two applications illustrate complementary features of the
proposed framework.  The Sydney data contain a strong shift with residual
persistence concentrated in the annual level.  The Ni\~no 1+2 series produces
a more moderate change and only localized evidence of dependence.  In both
cases, ridge regularization accommodates heterogeneous Fourier scales, while
the difference-based long-run covariance estimate avoids treating the ordered
curves as unambiguously independent.

\FloatBarrier

\section{Conclusion}
\label{sec:conclusion}

This paper develops covariance-adaptive inference for one or several mean
changes in weakly dependent functional data.  A growing basis representation,
ridge-regularized Hotelling CUSUMs, and an edge-corrected difference-based
long-run covariance estimator jointly address the spectral decay and serial
dependence intrinsic to functional observations.  For a single change, the
method admits an explicit weighted squared Brownian-bridge calibration.  Its
local-power formula identifies an alternative-specific oracle ridge and gives
a direct theoretical rationale for the Cauchy aggregation over a ridge grid.
For multiple changes, a wild binary segmentation algorithm reuses the global covariance estimate, isolates individual changes through random
intervals, and attains consistent recovery after local refinement.  
The simulation results show satisfactory size, substantial power gains,
accurate single-change localization, and reliable multiple-change recovery
under both independent and functionally autoregressive errors.  The two
temperature applications identify interpretable shifts at different signal
strengths and show why heterogeneous score scales and residual temporal
dependence should be incorporated in functional change-point analysis.

Several extensions merit further investigation.  The multiple-change theory
in this paper treats a fixed number of proportionally separated changes; a
natural next step is to allow a diverging number of changes and shrinking
minimal spacing, which would require a multiscale strong approximation with
explicit tail control.  Inference for sparsely, irregularly, or partially
observed curves would require joint treatment of score reconstruction,
measurement error, and long-run covariance estimation
\citep{BaiHuWu2026}.  The same spectral regularization
principle could also be adapted to changes in covariance operators and
eigensystems
\citep{StoehrEtAl2021,DetteKutta2021,JiaoFrostigOmbao2023}.  Finally,
robustification against heavy tails and outliers, together with growing or
data-adaptive ridge grids, would connect the present framework to robust
functional change-point methods and ridge aggregation
\citep{WegnerWendler2024,ZhaoZhouFeng2026}.

\appendix

\section{Auxiliary results for the main theorems}
\label{app:auxiliary-results}

This section records the intermediate conclusions used in
Sections~\ref{sec:theory} and~\ref{sec:multiple}.  The statements are separated by purpose; their
proofs are given in Appendices~\ref{app:functional-proofs}--
\ref{app:multiple-proofs}.

\subsection{Weak dependence and numerical projection}

For centered random elements $x_1,x_2,x_3,x_4\in\cH$ with fourth moments,
define
\begin{align}
 \kappa_4(x_1,x_2,x_3,x_4)
 &=\E\{\ip{x_1}{x_3}\ip{x_2}{x_4}\}
 -\tr(\mathsf C_{13})\tr(\mathsf C_{24})
 \notag\\
 &\quad-\ip{\mathsf C_{12}}{\mathsf C_{34}}_{\HS}
 -\tr(\mathsf C_{14}\mathsf C_{23}),
 \label{eq:contracted-cumulant-definition}
\end{align}
where $\mathsf C_{ab}=\E(x_a\otimes x_b)$, and put
$\kappa_e(u_1,u_2,u_3)=\kappa_4(e_0,e_{u_1},e_{u_2},e_{u_3})$.

\begin{lemma}[Long-run covariance and fourth-order summability]
\label{prop:lrc-origin}
Under Assumption~\ref{ass:ARS-dependence},
\begin{equation}
 \sum_{u\in\mathbb Z}(1+|u|)^{\beta_{\rm dep}}\norm{\mathcal C_u}_{1}<\infty.
 \label{eq:weighted-cov-summability}
\end{equation}
Moreover,
\begin{equation}
 \sum_{u_1,u_2,u_3\in\mathbb Z}|\kappa_e(u_1,u_2,u_3)|<\infty.
 \label{eq:contracted-cumulant-summability}
\end{equation}
Consequently,
\begin{equation*}
 \Omega=\sum_{u\in\mathbb Z}\mathcal C_u
\end{equation*}
exists in trace norm and is a nonnegative trace-class operator.
\end{lemma}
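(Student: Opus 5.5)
The plan is a coupling argument based on the $M$-approximations of Assumption~\ref{ass:ARS-dependence}. We work in $\Hs$-norm moments, which dominate $\cH$-norms since $\norm{f}\le C\norm{f}_{\Hs}$. For $u\ge1$, write $e_0=\mathfrak G(\eta_0,\eta_{-1},\ldots)$ and let $e_{u,u}$ be the $u$-approximation of $e_u$. This variable depends only on $\eta_u,\ldots,\eta_1$ and fresh copies, so it is independent of $e_0$ and $\E(e_{u,u}\otimes e_0)=0$. Hence
\begin{equation*}
 \mathcal C_u=\E\{(e_u-e_{u,u})\otimes e_0\}.
\end{equation*}
The trace norm of a rank-one operator $x\otimes y$ equals $\norm{x}\norm{y}$. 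Convexity of the trace norm (Bochner integral) and H\"older's inequality then give
\begin{equation*}
 \norm{\mathcal C_u}_1\le \E\bigl(\norm{e_u-e_{u,u}}\,\norm{e_0}\bigr)\le C\,\mathfrak d_\nu(u)\{\E\norm{e_0}_{\Hs}^\nu\}^{1/\nu}.
\end{equation*}
Stationarity gives $\E\norm{e_u-e_{u,u}}^\nu=\mathfrak d_\nu(u)^\nu$. The case $u\le-1$ follows from $\mathcal C_{-u}=\mathcal C_u^*$, which has the same trace norm, and $\norm{\mathcal C_0}_1=\E\norm{e_0}^2<\infty$. Summing against $(1+|u|)^{\beta_{\rm dep}}$ and using \eqref{eq:quantitative-ARS-condition} proves \eqref{eq:weighted-cov-summability}.

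For \eqref{eq:contracted-cumulant-summability}, first use the symmetries of $\kappa_4$ to reduce to ordered indices. Up to permutation of arguments, the symmetries of $\kappa_4$ under relabelling give $\kappa_e(u_1,u_2,u_3)=\kappa_4(e_{s_1},e_{s_2},e_{s_3},e_{s_4})$ with $s_1\le s_2\le s_3\le s_4$. By stationarity we may take $s_1=0$. Let $g=\max\{s_2-s_1,s_3-s_2,s_4-s_3\}$ be the largest gap; it lies between $s_j$ and $s_{j+1}$. Replace each variable to the right of the gap by its $\lfloor g/2\rfloor$-approximation; the variables then split into two mutually independent blocks. For independent blocks $(x_1,\dots,x_j)$ and $(x_{j+1},\dots,x_4)$ the contracted cumulant \eqref{eq:contracted-cumulant-definition} vanishes. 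This holds because $\kappa_4$ is the fourth joint cumulant of the scalar-valued contracted structure: expanding in an orthonormal basis, $\kappa_4=\sum_{k,l}\mathrm{cum}(x_{1k},x_{2l},x_{3k},x_{4l})$, and scalar joint cumulants vanish when the arguments split into independent groups. This needs to be checked for each way the block cut meets the pairing $(1,3),(2,4)$.

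Next bound the replacement error. $\kappa_4$ is multilinear, and each of its terms is bounded by products of four $L^4$-norms. With $\nu>4$, H\"older's inequality gives
\begin{equation*}
 |\kappa_e|\le C\,\mathfrak d_4(\lfloor g/2\rfloor)\{\E\norm{e_0}_{\Hs}^\nu\}^{3/\nu}\le C\,\mathfrak d_\nu(\lfloor g/2\rfloor).
\end{equation*}
Here $\mathfrak d_4\le\mathfrak d_\nu$, and the approximations have the same marginal law as the originals, so their norms are controlled too.

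Finally, count configurations: the number of ordered triples of gaps with maximum $g$ is at most $3(g+1)^2$, and the permutation reduction contributes only a constant factor. Then
\begin{equation*}
 \sum_{u_1,u_2,u_3}|\kappa_e|\le C\sum_{g\ge0}(1+g)^2\mathfrak d_\nu(\lfloor g/2\rfloor)<\infty,
\end{equation*}
because $\beta_{\rm dep}\ge2$. The terms with $g=0$ or $\lfloor g/2\rfloor=0$ are finitely many and bounded by moments.

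The last claim uses \eqref{eq:weighted-cov-summability}: the series $\sum_u\mathcal C_u$ converges absolutely in trace norm, so $\Omega$ is trace class. Self-adjointness follows from $\mathcal C_{-u}=\mathcal C_u^*$. Nonnegativity is shown by computing, for $f\in\cH$,
\begin{equation*}
 \ip{f}{\Omega f}=\lim_n n^{-1}\E\Bigl\langle f,\sum_{i\le n}e_i\Bigr\rangle^2\ge0.
\end{equation*}
The limit is obtained from the Ces\`aro identity $n^{-1}\E\ip{f}{\sum_i e_i}^2=\sum_{|u|<n}(1-|u|/n)\ip{f}{\mathcal C_u f}$ and dominated convergence.

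The main obstacle is the block-independence step, namely verifying that the specific contracted form vanishes for every split point against the pairing. I would handle it via the basis-expansion cumulant identity. That identity needs fourth-moment justification for interchanging the sum and the expectation, and Fubini with $\nu>4$ supplies it.
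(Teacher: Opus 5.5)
Your proposal is correct and follows essentially the same route as the paper: the identity $\mathcal C_u=\E\{(e_u-e_{u,u})\otimes e_0\}$ with a trace-norm bound, a largest-gap coupling combined with the multilinear replacement bound and the vanishing of $\kappa_4$ across independent blocks, a $(1+M)^2$ configuration count using $\beta_{\rm dep}\ge2$, and a Bartlett/Ces\`aro argument for $\Omega\succeq0$. Two of your choices are harmless variants: you use the assumption's own $\lfloor g/2\rfloor$-approximations where the paper uses a common independent copy of the pre-gap innovations, and you use the scalar-cumulant basis identity where the paper checks each partition separately; note only that $\kappa_4$ is invariant just under relabellings that preserve the pairing $\{1,3\},\{2,4\}$, so your reduction should be read as ``some labelling of the sorted times,'' which is all your argument actually uses.
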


The fourth-order summability in Lemma~\ref{prop:lrc-origin} is derived from
the same independent-copy approximation as the covariance summability; it is
not an additional cumulant assumption.

\begin{lemma}[Functional invariance principle]
\label{lem:ARS-main-FCLT}
Under Assumption~\ref{ass:ARS-dependence},
\begin{equation}
 \frac1{\sqrt n}\sum_{i=1}^{\lfloor nt\rfloor}e_i
 \dto W_\Omega(t)
 \label{eq:ARS-partial-sum-FCLT}
\end{equation}
in $D([0,1],\cH)$, where
$\E\{W_\Omega(t)\otimes W_\Omega(u)\}=(t\wedge u)\Omega$.  Hence
\begin{equation}
 \frac1{\sqrt n}\left\{
 \sum_{i=1}^{\lfloor nt\rfloor}e_i
 -\frac{\lfloor nt\rfloor}{n}\sum_{i=1}^ne_i
 \right\}
 \dto\mathbb B_\Omega(t)
 \label{eq:ARS-bridge-FCLT}
\end{equation}
in $D([0,1],\cH)$.
\end{lemma}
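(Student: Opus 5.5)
The plan is to prove the Hilbert-space invariance principle~\eqref{eq:ARS-partial-sum-FCLT} by $M$-dependent approximation and then obtain~\eqref{eq:ARS-bridge-FCLT} from the continuous mapping theorem. The map $x(\cdot)\mapsto x(\cdot)-\lfloor n\cdot\rfloor n^{-1}x(1)$ differs from $x(\cdot)-(\cdot)\,x(1)$ by at most $n^{-1}\norm{x(1)}$ uniformly in $t$. The map $x\mapsto x(\cdot)-(\cdot)x(1)$ is continuous from $D([0,1],\cH)$ into itself at continuous limits, and $W_\Omega$ has continuous paths. Hence the bridge statement follows immediately once the partial-sum limit is established, with $\mathbb B_\Omega(t)=W_\Omega(t)-tW_\Omega(1)$.

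For~\eqref{eq:ARS-partial-sum-FCLT}, fix $M$ and consider the $M$-dependent sequence $e_{i,M}$ of~\eqref{eq:ARS-M-approximation}. Its long-run covariance is
\begin{equation*}
 \Omega_M=\sum_{|u|\le M}\E(e_{u,M}\otimes e_{0,M}),
\end{equation*}
which is trace class. We then proceed in two parts.

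\emph{Finite-dimensional distributions.} For finitely many times, apply the Cram\'er--Wold device with finitely many test functions $f\in\cH$. Each projection $\ip{f}{e_{i,M}}$ is a real $M$-dependent stationary sequence with finite $\nu>4$ moments. Blocking (big/small blocks) with the Lindeberg CLT gives Gaussian limits with covariance $(t\wedge u)\ip{f}{\Omega_M g}$.

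\emph{Tightness in $D([0,1],\cH)$.} Use the standard two-part criterion: tightness of the finite-dimensional projections $\Pi_q$, plus uniform smallness of the tail $(\Id-\Pi_q)$ components. For the projected part, a Rosenthal/Móricz maximal inequality for $M$-dependent sums with moment $\nu>2$ gives
\begin{equation*}
 \E\norm{\textstyle\sum_{i=a+1}^{b}e_{i,M}}^{\nu}\le C\{(b-a)^{\nu/2}+(b-a)\}.
\end{equation*}
This yields the usual modulus bound. For the tail, the relevant quantity is
\begin{equation*}
 \E\max_{k\le n}\norm{(\Id-\Pi_q)\textstyle\sum_{i\le k}e_{i,M}}^2/n\le C\tr\{(\Id-\Pi_q)\Omega_M^{\rm abs}\},
\end{equation*}
controlled through trace-class summability, where $\Omega_M^{\rm abs}$ denotes $\sum_{|u|\le M}$ of the absolute covariance contributions, bounded in trace norm by $\sum_u\norm{\mathcal C_u}_1$. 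This gives $n^{-1/2}\sum_{i\le\lfloor n\cdot\rfloor}e_{i,M}\dto W_{\Omega_M}$.

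Next, remove the approximation by verifying the Billingsley approximation theorem (Theorem~3.2 of Billingsley):
\begin{equation*}
 \lim_{M\to\infty}\limsup_n\Pp\Bigl\{\max_{k\le n}n^{-1/2}\norm{\textstyle\sum_{i\le k}(e_i-e_{i,M})}>\varepsilon\Bigr\}=0,
\end{equation*}
together with $\Omega_M\to\Omega$ in trace norm, so that $W_{\Omega_M}\dto W_\Omega$. The trace convergence follows from Lemma~\ref{prop:lrc-origin} and the fact that
\begin{equation*}
 \norm{\E(e_{u,M}\otimes e_{0,M})-\mathcal C_u}_1\le C\,\mathfrak d_2(M)
\end{equation*}
for $|u|\le M$, with the tail $\sum_{|u|>M}\norm{\mathcal C_u}_1\to0$.

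The main obstacle is the maximal bound for the difference process $e_i-e_{i,M}$, which is not itself $M$-dependent. I would first decompose $e_i-e_{i,M}=\sum_{L\ge M}(e_{i,L+1}-e_{i,L})$, where the $L$-th increment is $(L+1)$-dependent with $L^\nu$ norm at most $2\mathfrak d_\nu(L)$. For each $L$ I would bound $\E\max_k\norm{\sum_{i\le k}(\cdot)}^2\le C(L+1)\,n\,\mathfrak d_\nu(L)^2\log^2n$ by splitting the sum into $L+1$ interleaved independent subsequences and applying Doob/Móricz with the Hilbert-space Pythagorean identity. To avoid the log factor, I would use the Móricz maximal inequality with superadditive bounds. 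Summing the square roots gives $n^{1/2}\sum_{L\ge M}(L+1)^{1/2}\mathfrak d_\nu(L)$, which tends to zero relative to $n^{1/2}$ as $M\to\infty$ by~\eqref{eq:quantitative-ARS-condition}, since $\beta_{\rm dep}\ge2$. If the interleaving introduces spurious factors, I would instead use a Burkholder-type projection (martingale-difference) decomposition via the coupling, as in Aue--Rice--S\"onmez.
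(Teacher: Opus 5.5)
Your proposal is correct, but it follows a different route from the paper. The paper proves \eqref{eq:ARS-partial-sum-FCLT} by citation. It notes that $\sum_{M}\{\E\norm{e_0-e_{0,M}}^{\nu}\}^{1/\nu}\le C\sum_M\mathfrak d_\nu(M)<\infty$ and then applies Lemma~\ref{lem:ARS-FCLT}, which is Jirak's Gaussian coupling as restated in Theorem~A.1 of the Aue--Rice--S\"onmez supplement. That result supplies Brownian motions $\mathbb G_n$ with covariance $(t\wedge t')\Omega$ and $\sup_t\norm{n^{-1/2}\sum_{i\le\lfloor nt\rfloor}e_i-\mathbb G_n(t)}=o_{\Pp}(1)$. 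The bridge step is then exactly your $n^{-1}\norm{S_n(1)}$ plus continuous-mapping argument. You instead give a self-contained proof in the Berkes--Horv\'ath--Rice style. It combines an FCLT for the $M$-dependent approximants (finite-dimensional projections plus tail control), a uniform-in-$n$ maximal bound for $e_i-e_{i,M}$ via the telescoping series over $L\ge M$, and Billingsley's converging-together theorem. The citation is shorter, gives a stronger coupling statement under only unweighted summability, and is reused for general Bernoulli shifts. Your argument is elementary but uses slightly more of Assumption~\ref{ass:ARS-dependence}. It needs $\sum_L(L+1)^{1/2}\mathfrak d_\nu(L)<\infty$ for the difference process and $M\mathfrak d_\nu(M)\to0$ for $\norm{\Omega_M-\Omega}_1\to0$. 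Your per-lag bound is summed over $2M+1$ lags, a factor you should state. Both requirements follow from $\beta_{\rm dep}\ge2$.

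Two small simplifications tighten the sketch. First, for an $(L+1)$-dependent centered stationary sequence $X_i$, each residue class modulo $L+1$ consists of independent terms. Doob's $L^2$ inequality on each class and Cauchy--Schwarz over the classes then give $\E\max_{k\le n}\norm{\sum_{i\le k}X_i}^2\le4(L+1)n\,\E\norm{X_0}^2$ with no logarithm, so the M\'oricz detour is unnecessary. Second, applying the same bound to $X_i=(\Id-\Pi_q)e_{i,M}$, and using that $e_{0,M}$ has the law of $e_0$, gives $\E\max_k\norm{(\Id-\Pi_q)\sum_{i\le k}e_{i,M}}^2/n\le4(M+1)\E\norm{(\Id-\Pi_q)e_0}^2$. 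This replaces your loosely defined absolute-covariance operator. Finally, $\norm{e_{0,L+1}-e_{0,L}}_{L^\nu(\Hs)}\le\mathfrak d_\nu(L)+\mathfrak d_\nu(L+1)$ rather than $2\mathfrak d_\nu(L)$ unless $\mathfrak d_\nu$ is monotone; this is harmless for the summation.
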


\begin{lemma}[Numerical Fourier approximation]
\label{prop:quadrature}
Under Assumption~\ref{ass:grid}, for every $f\in C^2[0,1]$,
\begin{equation}
 \norm{(\mathcal J_{p,N}-\Pi_p)f}
 \le C\chi_{p,N}\norm{f}_{C^2}.
 \label{eq:numerical-projection-operator}
\end{equation}
Moreover, for every fixed $B<\infty$,
\begin{equation}
 \mathfrak a_{p,N}(B)
 =\sup_{\substack{f\in C^2[0,1]\\ \norm f_{C^2}\le B}}
 \norm{\mathcal J_{p,N}f-f}\longrightarrow0.
 \label{eq:uniform-functional-resolution-limit}
\end{equation}
If Assumption~\ref{ass:ARS-dependence} also holds, then
\begin{equation}
 \norm{\Omega_{p,N}^{\cH}-\Omega}_{1}
 \le \norm{\Omega-\Pi_p\Omega\Pi_p}_{1}+C\chi_{p,N}.
 \label{eq:lrc-projection-error}
\end{equation}
\end{lemma}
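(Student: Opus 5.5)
Three estimates are needed: a quadrature error bound for a single $C^2$ function (giving \eqref{eq:numerical-projection-operator}), a uniform resolution statement over a $C^2$ ball (giving \eqref{eq:uniform-functional-resolution-limit}), and a trace-norm perturbation bound for the projected long-run covariance (giving \eqref{eq:lrc-projection-error}).

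\textbf{Quadrature bound.} Write $(\mathcal J_{p,N}-\Pi_p)f=\iota_p\{Q_N(f\phi_j)-\ip{f}{\phi_j}\}_{j\le p}$. By Parseval, its norm is the Euclidean norm of the vector of trapezoidal errors. For each $j$, the composite trapezoidal rule with spacing $(N-1)^{-1}$ has error at most $(12(N-1)^2)^{-1}\norm{(f\phi_j)''}_\infty$. The Leibniz rule and Assumption~\ref{ass:grid} bound this second derivative by $C\norm f_{C^2}p^2$. This gives a per-coordinate error $Cp^2(N-1)^{-2}\norm f_{C^2}$ and hence an $\ell^2$ error of order $p^{5/2}(N-1)^{-2}$, which is $\chi_{p,N}$.

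\textbf{Uniform resolution.} Decompose
\begin{equation*}
 \mathcal J_{p,N}f-f=(\mathcal J_{p,N}-\Pi_p)f+(\Pi_p f-f).
\end{equation*}
The first term is at most $CB\chi_{p,N}\to0$ by the previous step. For the second term, use that the $C^2$ ball is relatively compact in $\cH$ (Arzel\`a--Ascoli). Since $\Pi_p\to\Id$ strongly and $\norm{\Pi_p}_{\op}\le1$, the projections converge uniformly on compact sets, so $\sup_f\norm{\Pi_pf-f}\to0$. This gives \eqref{eq:uniform-functional-resolution-limit}.

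\textbf{Trace-norm bound.} Define the random operator $\mathcal J_{p,N}e$, interpreted pathwise; this is legitimate because $\varsigma>5/2$ gives $\norm{e}_{C^2}\le C\norm e_{\Hs}$. Then
\begin{equation*}
 \Omega^{\cH}_{p,N}=\sum_u\E(\mathcal J_{p,N}e_u\otimes\mathcal J_{p,N}e_0).
\end{equation*}
The step I expect to be the main obstacle is exchanging $\mathcal J_{p,N}$ with the long-run sum while keeping trace-norm control. The plan is to write $\mathcal J_{p,N}=\Pi_p+E_{p,N}$ and expand:
\begin{equation*}
 \Omega^{\cH}_{p,N}=\Pi_p\Omega\Pi_p+\sum_u\bigl\{\E(E e_u\otimes\Pi_pe_0)+\E(\Pi_pe_u\otimes Ee_0)+\E(Ee_u\otimes Ee_0)\bigr\},
\end{equation*}
with $E=E_{p,N}$. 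The first term yields the $\norm{\Omega-\Pi_p\Omega\Pi_p}_1$ piece of \eqref{eq:lrc-projection-error}. For each cross term, bound the trace norm via $\norm{\E(x\otimes y)}_1\le \E\norm x\norm y$. A crude such bound is not summable in $u$, so the remedy is to use the $M$-approximation.

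Concretely, for $u\ge1$ replace $e_u$ by $e_{u,u}$, which is independent of $e_0$ and has mean zero, so the corresponding expectation vanishes. Then
\begin{equation*}
 \norm{\E(Ee_u\otimes\Pi_pe_0)}_1\le \E\bigl\{\norm{E(e_u-e_{u,u})}\norm{e_0}\bigr\}\le C\chi_{p,N}\,\mathfrak d_\nu(u)\,\bigl(\E\norm{e_0}^{2}\bigr)^{1/2},
\end{equation*}
where the first factor is controlled by the quadrature bound applied to $e_u-e_{u,u}$ together with Sobolev embedding. Summability of $\mathfrak d_\nu(u)$ follows from \eqref{eq:quantitative-ARS-condition}. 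The other cross term and negative lags are handled symmetrically, and the $u=0$ term is bounded directly. The term $\E(Ee_u\otimes Ee_0)$ is bounded the same way and is $O(\chi_{p,N}^2)$. Summing these contributions gives \eqref{eq:lrc-projection-error}.
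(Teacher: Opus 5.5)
Your proposal is correct and follows the paper's proof essentially step for step: the same trapezoidal-plus-Leibniz bound summed over the $p$ coordinates, the same Arzel\`a--Ascoli and strong-convergence argument for $\mathfrak a_{p,N}(B)$, and the same coupling $e_u\mapsto e_{u,u}$, which yields per-lag trace-norm bounds of order $\chi_{p,N}\mathfrak d_\nu(u)$ for $\Gamma_{p,N}^{\cH}(u)-\Pi_p\mathcal C_u\Pi_p$. The only difference is cosmetic: the paper writes the error as the two terms $(\mathcal J_{p,N}-\Pi_p)(e_u-e_{u,u})\otimes\mathcal J_{p,N}e_0$ and $\Pi_p(e_u-e_{u,u})\otimes(\mathcal J_{p,N}-\Pi_p)e_0$, using $\norm{\mathcal J_{p,N}f}\le C\norm f_{\Hs}$, whereas you split off the $O(\chi_{p,N}^2)$ term separately.
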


\begin{lemma}[Stability of dependence under projection]
\label{lem:projected-dependence}
Under Assumptions~\ref{ass:ARS-dependence} and \ref{ass:grid}, let
$\mathcal J_n=\mathcal J_{p,N}$ and define
\begin{equation*}
 \varrho_p=
\norm{(\Id_{\cH}-\Pi_p)e_0}_{L^2(\cH)}
 +\sum_{M=1}^{\infty}
 \norm{(\Id_{\cH}-\Pi_p)(e_0-e_{0,M})}_{L^2(\cH)}.
\end{equation*}
Then $\varrho_p\to0$ and
\begin{equation}
 \sup_{0\le t\le1}
 \left\|\frac1{\sqrt n}\sum_{i=1}^{\lfloor nt\rfloor}(\mathcal J_ne_i-e_i)\right\|
 =O_{\Pp}(\varrho_p+\chi_{p,N})=o_{\Pp}(1).
 \label{eq:numerical-partial-sum-equivalence}
\end{equation}
Moreover,
\begin{equation}
 \sup_n\sum_{u\in\mathbb Z}(1+|u|)^{\beta_{\rm dep}}
 \norm{\bGamma_{p,N}(u)}_1<\infty,
 \label{eq:projected-weighted-cov-sum}
\end{equation}
$\bOmega_{p,N}$ is positive semidefinite, and, with
\begin{equation*}
 \kappa_{p,N}(u_1,u_2,u_3)=
 \kappa_4(\beps_{0,p,N},\beps_{u_1,p,N},
          \beps_{u_2,p,N},\beps_{u_3,p,N}),
\end{equation*}
\begin{equation}
 \sup_n\sum_{u_1,u_2,u_3\in\mathbb Z}|\kappa_{p,N}(u_1,u_2,u_3)|<\infty.
 \label{eq:projected-cumulant-summability}
\end{equation}
\end{lemma}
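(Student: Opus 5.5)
The plan has three parts. First, $\varrho_p\to0$. Then the partial-sum bound \eqref{eq:numerical-partial-sum-equivalence}, obtained by splitting $\mathcal J_n-\Id_{\cH}=(\mathcal J_n-\Pi_p)+(\Pi_p-\Id_{\cH})$. Finally, the projected covariance and cumulant bounds, obtained by transferring Lemma~\ref{prop:lrc-origin} through a uniformly bounded linear map.

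For $\varrho_p\to0$, each summand $\norm{(\Id_{\cH}-\Pi_p)(e_0-e_{0,M})}_{L^2(\cH)}$ is at most $\mathfrak d_2(M)\le\mathfrak d_\nu(M)$. This is summable by \eqref{eq:quantitative-ARS-condition}, and each summand tends to zero by dominated convergence, since $\Pi_p\to\Id_{\cH}$ strongly. Dominated convergence over $M$ therefore gives the claim, and the first term is handled the same way.

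For the partial sums, $\mathcal J_n-\Pi_p$ applied to $e_i$ is bounded pathwise by Lemma~\ref{prop:quadrature}. Since $\norm{e_i}_{C^2}\le C\norm{e_i}_{\Hs}$, we get $\norm{(\mathcal J_n-\Pi_p)e_i}\le C\chi_{p,N}\norm{e_i}_{\Hs}$. This crude triangle bound gives $n^{-1/2}\sum_i$ of order $\sqrt n\,\chi_{p,N}$, which is too large. So I would instead treat $L_n:=\mathcal J_n-\Pi_p$ as a deterministic linear operator $\Hs\to\cH$ with $\norm{L_n}_{\Hs\to\cH}\le C\chi_{p,N}$. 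I would then bound the maximal partial sum of the stationary $\cH$-valued sequence $L_ne_i$ by a Hilbert-space maximal inequality for $L^2$--$M$-approximable sequences. Concretely, I would decompose $e_i=e_{i,M}$-telescoping (Hörmann--Kokoszka), or use a martingale approximation plus Doob/Menshov, to get
\begin{equation*}
\E\max_{k\le n}\norm{\textstyle\sum_{i\le k}L_ne_i}^2\le Cn\Bigl(\norm{L_ne_0}_{L^2}+\sum_M\norm{L_n(e_0-e_{0,M})}_{L^2}\Bigr)^2.
\end{equation*}
This yields $O_\Pp(\chi_{p,N})$ for the first piece, and the same inequality with $\Id_{\cH}-\Pi_p$ in place of $L_n$ yields $O_\Pp(\varrho_p)$. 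The key tool is a maximal moment inequality of the form $\norm{\max_k\norm{\sum_{i\le k}Ae_i}}_{L^2}\le C\sqrt n\,\Theta(A)$, valid for any bounded linear $A$. Here $\Theta(A)$ is the projective dependence measure, bounded by $\norm{Ae_0}_{L^2}+\sum_M\norm{A(e_0-e_{0,M})}_{L^2}$ (coupling $e_{i,M}$ with the $M$-dependent construction). I expect this to be the main obstacle: getting the maximal version with a logarithm-free constant. If necessary I would fall back on the Wu-type bound via physical dependence, which the independent-copy coupling controls.

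For \eqref{eq:projected-weighted-cov-sum} and \eqref{eq:projected-cumulant-summability}, write $\beps_{i,p,N}=\iota_p^*\mathcal J_ne_i$ with $\mathcal J_n:\Hs\to\cH$ uniformly bounded, since $\norm{\mathcal J_n}\le\norm{\Pi_p}+C\chi_{p,N}$. Then $\Gamma^{\cH}_{p,N}(u)=\mathcal J_n\mathcal C_u\mathcal J_n^*$. However, the trace-norm bound must be derived at the $\Hs$ level. So I would rerun the proof of Lemma~\ref{prop:lrc-origin}, with covariances in $\Hs$, giving summable $\norm{\mathcal C_u^{\Hs}}_1$ weighted by $(1+|u|)^{\beta_{\rm dep}}$. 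Ideal-operator inequalities $\norm{ATB}_1\le\norm{A}\norm{T}_1\norm{B}$ then give uniformity in $n$. The contracted fourth cumulant is multilinear in the four variables, so the coupling/largest-gap argument behind \eqref{eq:contracted-cumulant-summability} applies verbatim to $\mathcal J_ne_i$, with constants depending only on $\sup_n\norm{\mathcal J_n}_{\Hs\to\cH}$.

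Finally, $\bOmega_{p,N}$ is the limit of $\Var(n^{-1/2}\sum_{i\le n}\beps_{i,p,N})\succeq0$, which converges by absolute summability, so it is positive semidefinite.
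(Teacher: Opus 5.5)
Your proposal is correct and follows essentially the same route as the paper. You split $\mathcal J_n-\Id_{\cH}=(\mathcal J_n-\Pi_p)+(\Pi_p-\Id_{\cH})$ and feed each piece into a martingale-projection maximal inequality, which is the paper's Lemma~\ref{lem:Bernoulli-maximal}; at the $L^2$ level Doob's inequality already gives the logarithm-free constant you were worried about. You then transfer the independent-copy and largest-gap arguments through the uniformly bounded map $\mathcal J_n:\Hs\to\cH$, as the paper does. The only cosmetic difference is in the covariance step: the paper bounds $\Gamma^{\cH}_{p,N}(u)=\E\{\mathcal J_n(e_u-e_{u,u})\otimes\mathcal J_ne_0\}$ directly instead of passing through $\Hs$-level covariance operators and adjoints, and it reaches the same estimate $\norm{\bGamma_{p,N}(u)}_1\le C\mathfrak d_\nu(u)$.
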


\subsection{Edge-corrected difference-based long-run covariance}

For $i_{\rm D}\in\mathbb Z$, define the stationary noise-only DB sequence
\begin{equation*}
 \beps_{i_{\rm D}}^{\rm DB}
 =\sum_{q=0}^{m}c_q\beps_{mh+i_{\rm D}-qh,p,N},
 \qquad
 \bGamma_{\rm DB}(r)
 =\E\{\beps_{i_{\rm D}+r}^{\rm DB}
       (\beps_{i_{\rm D}}^{\rm DB})^{\top}\}.
\end{equation*}
By stationarity, the covariance on the right does not depend on $i_{\rm D}$.  Define
\begin{equation*}
 \bOmega_{{\rm DB},\ell}
 =\sum_{|r|<\ell}K(r/\ell)\bGamma_{\rm DB}(r).
\end{equation*}
For comparison with the common-divisor implementation, define
\begin{equation*}
 \widehat\bGamma_{\rm DB}^{\rm com}(r)
 =\frac1n\sum_{i_{\rm D}=r+1}^{N_D}
 \bY_{i_{\rm D}}\bY_{i_{\rm D}-r}^{\top},
 \qquad
 \widehat\bGamma_{\rm DB}^{\rm com}(-r)
 =\widehat\bGamma_{\rm DB}^{\rm com}(r)^{\top},
\end{equation*}
and
\begin{equation*}
 \widehat\bOmega_n^{\rm com}
 =\sum_{|r|<\ell}K(r/\ell)
  \widehat\bGamma_{\rm DB}^{\rm com}(r).
\end{equation*}

\begin{lemma}[Endpoint correction]
\label{prop:edge-correction}
Under $H_0$ and Assumptions~\ref{ass:ARS-dependence}--
\ref{ass:DB-tuning}, for $0\le r<\ell$,
\begin{equation}
 \E\widehat\bGamma_{\rm DB}(r)=\bGamma_{\rm DB}(r).
 \label{eq:edge-unbiased}
\end{equation}
If every lag product is divided by $n$, then
\begin{equation}
 \E\widehat\bGamma_{\rm DB}^{\rm com}(r)
 =\frac{N_D-r}{n}\bGamma_{\rm DB}(r),
 \label{eq:paper-denominator-bias}
\end{equation}
and
\begin{equation}
 \norm{\E\widehat\bOmega_n^{\rm com}-\bOmega_{{\rm DB},\ell}}_1
 \le C\frac{mh+\ell}{n}.
 \label{eq:endpoint-bias-bound}
\end{equation}
\end{lemma}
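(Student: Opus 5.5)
Under $H_0$ the scores are $\widehat\bxi_i=\bmu_{p,N}+\beps_{i,p,N}$. The plan is to show first that the mean drops out of the differences, then read off the expectations term by term. Since $\sum_q c_q=0$,
\begin{equation*}
 \bY_{i_{\rm D}}=\sum_{q=0}^m c_q\bmu_{p,N}+\sum_{q=0}^m c_q\beps_{mh+i_{\rm D}-qh,p,N}
 =\beps_{i_{\rm D}}^{\rm DB}.
\end{equation*}
So every $\bY_{i_{\rm D}}$, $1\le i_{\rm D}\le N_D$, coincides with the stationary noise-only DB sequence. All indices $mh+i_{\rm D}-qh$ lie in $\{1,\ldots,n\}$ because $i_{\rm D}\ge1$ and $i_{\rm D}\le n-mh$.

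For the edge-corrected estimator I take expectations term by term. Each product $\bY_{i_{\rm D}}\bY_{i_{\rm D}-r}^\top$ has mean $\E\{\beps^{\rm DB}_{i_{\rm D}}(\beps^{\rm DB}_{i_{\rm D}-r})^\top\}=\bGamma_{\rm DB}(r)$. This holds by stationarity of $\{\beps_{i,p,N}\}$, which is inherited from the Bernoulli shift in Assumption~\ref{ass:ARS-dependence} through the fixed linear map $Q_N(\cdot\,\phi_j)$. Finite second moments follow from $\E\norm{e_0}_{\Hs}^\nu<\infty$ and the Sobolev embedding into $C^2$, so $Q_N$ is bounded on $\Hs$ for fixed $p,N$. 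The sum has exactly $N_D-r$ terms, and $N_D-r>0$ by Assumption~\ref{ass:DB-tuning}. Hence \eqref{eq:edge-unbiased} holds, and the negative lags follow by transposition.

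The same computation with divisor $n$ gives \eqref{eq:paper-denominator-bias}. For \eqref{eq:endpoint-bias-bound}, linearity yields
\begin{equation*}
 \E\widehat\bOmega_n^{\rm com}-\bOmega_{{\rm DB},\ell}
 =-\sum_{|r|<\ell}K(r/\ell)\Big(1-\frac{N_D-|r|}{n}\Big)\bGamma_{\rm DB}(r).
\end{equation*}
Here $1-(N_D-|r|)/n=(mh+|r|)/n\le(mh+\ell)/n$, and $K$ is bounded. It therefore remains to bound $\sum_{r}\norm{\bGamma_{\rm DB}(r)}_1$ uniformly in $n$. Expanding the differences gives
\begin{equation*}
 \bGamma_{\rm DB}(r)=\sum_{q,q'}c_qc_{q'}\bGamma_{p,N}\big(r-(q-q')h\big),
\end{equation*}
so $\sum_r\norm{\bGamma_{\rm DB}(r)}_1\le C_c^2\sum_u\norm{\bGamma_{p,N}(u)}_1$. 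The right-hand side is bounded uniformly in $n$ by \eqref{eq:projected-weighted-cov-sum} of Lemma~\ref{prop:lrc-origin}'s projected analogue, Lemma~\ref{lem:projected-dependence}.

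The only point needing care, and the main (mild) obstacle, is this uniformity. The trace-norm bound must not grow with $p$, and this is exactly what Lemma~\ref{lem:projected-dependence} supplies. The rest is bookkeeping of index ranges and the lag sign convention $\bGamma(-u)=\bGamma(u)^\top$, which makes the $r<0$ terms mirror the $r>0$ ones.
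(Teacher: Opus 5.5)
Your proposal is correct and follows the paper's proof essentially step by step. Both arguments use the same ingredients: the zero-sum cancellation of $\bmu_{p,N}$, stationarity for unbiasedness under the $N_D-r$ divisor, and the factor $(N_D-|r|)/n=1-(mh+|r|)/n$ for the common divisor. The trace-norm bound then comes from the filtered-covariance expansion of $\bGamma_{\rm DB}(r)$ together with the uniform summability in Lemma~\ref{lem:projected-dependence}.
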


Thus the divisor $N_D-r$ removes the finite-sample endpoint factor exactly;
it does not alter the stochastic order of the estimator.

\begin{lemma}[Population DB bias]
\label{prop:population-DB-bias}
Under Assumptions~\ref{ass:ARS-dependence}--\ref{ass:DB-tuning},
\begin{equation}
 \norm{\bOmega_{{\rm DB},\ell}-\bOmega_{p,N}}_1
 \le C\ell^{-\beta_K}.
 \label{eq:population-DB-rate}
\end{equation}
\end{lemma}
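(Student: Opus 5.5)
We prove the population DB bias bound by writing $\bOmega_{{\rm DB},\ell}$ explicitly in terms of the projected lag covariances $\bGamma_{p,N}(u)$ and comparing it with $\bOmega_{p,N}=\sum_u\bGamma_{p,N}(u)$.

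\textbf{Step 1 (expand $\bGamma_{\rm DB}$).} By bilinearity and stationarity,
\begin{equation*}
 \bGamma_{\rm DB}(r)=\sum_{q,q'=0}^m c_qc_{q'}\,\bGamma_{p,N}\bigl(r+(q'-q)h\bigr).
\end{equation*}
The diagonal terms $q=q'$ give exactly $\bGamma_{p,N}(r)$, because $\sum_qc_q^2=1$. The off-diagonal terms involve lags $r+sh$ with $s\in\{-m,\dots,m\}\setminus\{0\}$. For $|r|<\ell$ and $h\ge2\ell$, every such lag satisfies $|r+sh|\ge |s|h-\ell\ge h/2\ge\ell$. Hence
\begin{equation*}
 \bOmega_{{\rm DB},\ell}-\bOmega_{p,N}
 =\Bigl[\sum_{|r|<\ell}K(r/\ell)\bGamma_{p,N}(r)-\bOmega_{p,N}\Bigr]
 +\sum_{q\ne q'}c_qc_{q'}\sum_{|r|<\ell}K(r/\ell)\bGamma_{p,N}\bigl(r+(q'-q)h\bigr).
\end{equation*}

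\textbf{Step 2 (kernel bias).} The first bracket equals
\begin{equation*}
 -\sum_{|r|<\ell}\{1-K(r/\ell)\}\bGamma_{p,N}(r)-\sum_{|r|\ge\ell}\bGamma_{p,N}(r).
\end{equation*}
Take trace norms and use $|1-K(x)|\le C_K|x|^{\beta_K}$ on $|x|\le1$. This bounds the bracket by
\begin{equation*}
 C_K\ell^{-\beta_K}\sum_r|r|^{\beta_K}\norm{\bGamma_{p,N}(r)}_1
 +\ell^{-\beta_{\rm dep}}\sum_{|r|\ge\ell}|r|^{\beta_{\rm dep}}\norm{\bGamma_{p,N}(r)}_1.
\end{equation*}
Since $\beta_K\le\beta_{\rm dep}$, both sums are bounded uniformly in $n$ by \eqref{eq:projected-weighted-cov-sum} of Lemma~\ref{lem:projected-dependence}. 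The bracket is therefore $O(\ell^{-\beta_K})$.

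\textbf{Step 3 (cross-difference leakage).} With $K$ bounded, the second term has trace norm at most
\begin{equation*}
 \norm K_\infty C_c^2\sum_{0<|s|\le m}\ \sum_{|r|<\ell}\norm{\bGamma_{p,N}(r+sh)}_1 .
\end{equation*}
For each fixed $s$ the lags $r+sh$ are distinct and all have absolute value at least $h/2\ge\ell$. The inner sum is thus at most
\begin{equation*}
 \ell^{-\beta_{\rm dep}}\sum_{|u|\ge\ell}|u|^{\beta_{\rm dep}}\norm{\bGamma_{p,N}(u)}_1=O(\ell^{-\beta_{\rm dep}}),
\end{equation*}
again uniformly by \eqref{eq:projected-weighted-cov-sum}. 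With $m$ fixed, this whole term is $O(\ell^{-\beta_{\rm dep}})\le O(\ell^{-\beta_K})$.

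Adding Steps 2 and 3 gives \eqref{eq:population-DB-rate}.

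The only delicate point is checking that the spacing condition $2\ell\le h$ pushes every cross lag outside the kernel window. This must hold with a margin that lets the weighted summability absorb the count of roughly $2m\ell$ leaked terms, which is why we use the weighted tail bound rather than a plain sum. The uniformity in $p$ and $N$ comes entirely from Lemma~\ref{lem:projected-dependence}.
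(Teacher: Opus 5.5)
Your proof is correct and follows the paper's argument essentially step for step. It uses the same diagonal/off-diagonal split of $\bGamma_{\rm DB}(r)$ via $\sum_q c_q^2=1$, the same kernel-order bias bound through \eqref{eq:projected-weighted-cov-sum} with $\beta_K\le\beta_{\rm dep}$, and the same observation that $h\ge2\ell$ and injectivity of $r\mapsto r+sh$ push every off-diagonal lag into the tail $|u|\ge\ell$. One small correction to your closing remark: because those lags are distinct, there is no count of leaked terms to absorb. A plain tail sum $\sum_{|u|\ge\ell}\norm{\bGamma_{p,N}(u)}_1$ already bounds them, and the polynomial weight serves only to convert that tail into the rate $\ell^{-\beta_K}$.
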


\begin{lemma}[Stochastic DB rate]
\label{thm:DB-rate}
Under $H_0$ and Assumptions~\ref{ass:ARS-dependence}--
\ref{ass:DB-tuning},
\begin{align}
 \E\norm{\widehat\bOmega_n^{\rm DB}-
 \E\widehat\bOmega_n^{\rm DB}}_{\HS}^2
 &\le C\frac{\ell}{N_D},
 \label{eq:DB-variance-rate}\\
 \norm{\widehat\bOmega_n^{\rm DB}-\bOmega_{p,N}}_{\HS}
 &=O_{\Pp}(\mathfrak r_n).
 \label{eq:DB-HS-rate}
\end{align}
\end{lemma}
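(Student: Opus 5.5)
The plan is to split the second claim by the triangle inequality,
\begin{equation*}
 \norm{\widehat\bOmega_n^{\rm DB}-\bOmega_{p,N}}_{\HS}
 \le\norm{\widehat\bOmega_n^{\rm DB}-\E\widehat\bOmega_n^{\rm DB}}_{\HS}
 +\norm{\E\widehat\bOmega_n^{\rm DB}-\bOmega_{{\rm DB},\ell}}_{\HS}
 +\norm{\bOmega_{{\rm DB},\ell}-\bOmega_{p,N}}_{\HS}.
\end{equation*}
By Lemma~\ref{prop:edge-correction} the middle term vanishes under $H_0$, since each $\widehat\bGamma_{\rm DB}(r)$ is unbiased for $\bGamma_{\rm DB}(r)$. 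Lemma~\ref{prop:population-DB-bias} bounds the last term by $C\ell^{-\beta_K}$, because $\norm{\cdot}_{\HS}\le\norm{\cdot}_1$. Once \eqref{eq:DB-variance-rate} is proved, Markov's inequality gives the first term as $O_{\Pp}((\ell/N_D)^{1/2})$, and \eqref{eq:DB-HS-rate} follows. Everything therefore reduces to the variance bound.

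For the variance bound I would argue as follows.
\begin{itemize}
\item Under $H_0$ we have $\bY_{i_{\rm D}}=\beps_{i_{\rm D}}^{\rm DB}$, since the constraint $\sum_q c_q=0$ annihilates the constant mean $\bmu_{p,N}$.
\item Minkowski's inequality together with the boundedness of $K$ gives
\begin{equation*}
 \Bigl(\E\norm{\widehat\bOmega_n^{\rm DB}-\E\widehat\bOmega_n^{\rm DB}}_{\HS}^2\Bigr)^{1/2}
 \le C\sum_{|r|<\ell}\Bigl(\E\norm{\widehat\bGamma_{\rm DB}(r)-\bGamma_{\rm DB}(r)}_{\HS}^2\Bigr)^{1/2}.
\end{equation*}
This route gives a crude factor $\ell$ and would only yield order $\ell^2/N_D$. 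Instead I would expand the squared Hilbert--Schmidt norm of the whole kernel-weighted sum directly:
\begin{equation*}
 \E\norm{\cdot}_{\HS}^2=\sum_{|r|,|r'|<\ell}K(r/\ell)K(r'/\ell)\frac{1}{(N_D-r)(N_D-r')}\sum_{i,i'}\Cov\text{-terms}.
\end{equation*}
\item Each covariance of $\ip{\bY_i\bY_{i-r}^\top}{\bY_{i'}\bY_{i'-r'}^\top}_{\HS}$ decomposes, via definition \eqref{eq:contracted-cumulant-definition}, into two products of second-order covariances plus a contracted fourth cumulant.
\item The first product has the form $\tr(\bGamma^{\rm DB}(i-i')\bGamma^{\rm DB}(i-r-i'+r')^{\top})$. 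Its sum over $i'$ is bounded by $\sum_u\norm{\bGamma^{\rm DB}(u)}_{\HS}\sup_u\norm{\bGamma^{\rm DB}(u)}_{\HS}$, uniformly in $r$ and $r'$. For this to give the factor $\ell$ rather than $\ell^2$, summing over $r'$ as well must use summability in the second index. Concretely, $\sum_{i',r'}|\tr(\Gamma(u)\Gamma(v))|$ with $u=i-i'$ and $v=u-r+r'$ is at most $(\sum_u\norm{\Gamma(u)}_{\HS})^2$. The remaining sums over $i$ and $r$ then contribute $N_D\ell$, and dividing by $N_D^2$ gives $\ell/N_D$.
\item The second product has the form $\tr(\Gamma(i-i'+r')\Gamma(i-r-i')^{\top})$ and is handled in the same way.
\item The cumulant term is summed over the three relative lags using \eqref{eq:projected-cumulant-summability}, which gives $O(N_D)/N_D^2$ per pair $(r,r')$. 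That is too crude when summed over all pairs, so I would absorb one lag index into the cumulant sum instead, obtaining $O(N_D\ell)/N_D^2$.
\end{itemize}

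All of these summabilities must be transferred from $\beps_{\cdot,p,N}$ to the differenced sequence $\beps^{\rm DB}$. Its lag-$u$ covariance is $\sum_{q,q'}c_qc_{q'}\bGamma_{p,N}(u-(q-q')h)$, a finite combination with coefficient sum at most $C_c^2$. Summability of its covariances and cumulants therefore follows, uniformly in $n$ and $p$, from \eqref{eq:projected-weighted-cov-sum} and \eqref{eq:projected-cumulant-summability}, at the cost of a constant depending only on $m$ and $C_c$. The denominators satisfy $N_D-r\ge N_D/2$ eventually, by Assumption~\ref{ass:DB-tuning}.

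The main obstacle is the index bookkeeping in the fourth-order term, which has to achieve the $\ell$ rather than the $\ell^2$ factor: one must check that summing over $r'$ can always be absorbed into a free lag of the cumulant or covariance sums. I would first fix $i$ and $r$, and treat $(i',r')$ as two free lags entering the three cumulant arguments as $i'-i$, $i'-r'-i$ and $-r$, so that the sum over $(i',r')$ is dominated by $\sum_{u_1,u_2,u_3}|\kappa^{\rm DB}|$.
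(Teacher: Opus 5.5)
Your overall route is the paper's. You use the triangle inequality with Lemma~\ref{prop:edge-correction} and Lemma~\ref{prop:population-DB-bias} for the bias. You then expand the variance via Lemma~\ref{lem:rank-one-cumulant} into two covariance pairings and a contracted cumulant, and push all summabilities through the finite difference filter. The paper packages the pairing sums as $\tr(\mathsf W_{\rm DB}^{\top}\mathsf C_{\rm DB}\mathsf W_{\rm DB}\mathsf C_{\rm DB}^{\top})\le\norm{\mathsf C_{\rm DB}}_{\op}^2\norm{\mathsf W_{\rm DB}}_{\HS}^2$, while your injective change of variables $(i',r')\mapsto(u,v)$ does the same job. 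The paper also absorbs $r$ into the cumulant sum, getting $O(1/N_D)$ for that term rather than your $O(\ell/N_D)$, but yours suffices.

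One step, however, fails as written. Since $\ip{\bY_i\bY_{i-r}^{\top}}{\bY_{i'}\bY_{i'-r'}^{\top}}_{\HS}=\ip{\bY_i}{\bY_{i'}}\ip{\bY_{i-r}}{\bY_{i'-r'}}$, your first pairing is a product of traces, $\tr\bGamma_{\rm DB}(i-i')\,\tr\bGamma_{\rm DB}(i-r-i'+r')$. It is not $\tr\{\bGamma_{\rm DB}(i-i')\bGamma_{\rm DB}(i-r-i'+r')^{\top}\}$; only the second pairing has the trace-of-product form.

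A trace is not controlled by the Hilbert--Schmidt norm uniformly in dimension, because $|\tr A|\le\sqrt p\,\norm{A}_{\HS}$ is sharp. Bounding this term by $(\sum_u\norm{\bGamma_{\rm DB}(u)}_{\HS})^2$ is therefore invalid and can lose a factor of order $p$. This is in fact the leading term: under independence it is of order $(\tr\bGamma_{p,N}(0))^2\ell/N_D$.

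The fix is to use $|\tr A|\le\norm{A}_1$ together with the trace-norm summability $\sum_u\norm{\bGamma_{\rm DB}(u)}_1\le C_c^2\sup_n\sum_u\norm{\bGamma_{p,N}(u)}_1<\infty$ from \eqref{eq:projected-weighted-cov-sum}. This is exactly why the paper works with $\mathfrak b_r=\norm{\bGamma_{\rm DB}(r)}_1$ and with the bound \eqref{eq:rank-one-cumulant-bound}. With trace norms in place of Hilbert--Schmidt norms, your bookkeeping gives $C\ell N_D/(N_D-\ell)^2\le C\ell/N_D$, and the rest of your argument goes through.
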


\begin{lemma}[Positive-part projection]
\label{lem:positive-part-rate}
Under the conditions of Lemma~\ref{thm:DB-rate},
\begin{align}
 \norm{\widehat\bOmega_n^+-\bOmega_{p,N}}_{\HS}
 &=O_{\Pp}(\mathfrak r_n),
 \label{eq:positive-HS-rate}\\
 \norm{\widehat\bOmega_n^+-\bOmega_{p,N}}_{1}
 &=O_{\Pp}(\sqrt p\,\mathfrak r_n).
 \label{eq:positive-trace-rate}
\end{align}
\end{lemma}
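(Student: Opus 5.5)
The plan is to compare the positive part with the raw estimator and then pass from Hilbert--Schmidt to trace norm through the dimension $p$. Write $\widehat{\bm E}_n=\widehat\bOmega_n^{\rm DB}-\bOmega_{p,N}$. By Lemma~\ref{thm:DB-rate}, $\norm{\widehat{\bm E}_n}_{\HS}=O_{\Pp}(\mathfrak r_n)$. The key deterministic fact is that the positive-part map $A\mapsto A_+$ is the metric projection, in Hilbert--Schmidt (Frobenius) norm, of the symmetric matrices onto the closed convex cone of positive semidefinite matrices. Lemma~\ref{lem:projected-dependence} shows that $\bOmega_{p,N}$ lies in this cone.

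For \eqref{eq:positive-HS-rate}, I would first record the projection characterization. Write $A=A_+-A_-$ using the spectral decomposition in \eqref{eq:positive-part}. For any $P\succeq0$,
\[
\norm{A-P}_{\HS}^2=\norm{A_+-P}_{\HS}^2+2\tr(A_-P)+\norm{A_-}_{\HS}^2,
\]
because $\tr(A_+A_-)=0$. Since $\tr(A_-P)\ge0$, this gives $\norm{A_+-P}_{\HS}\le\norm{A-P}_{\HS}$, which is nonexpansiveness toward points of the cone. Applying it with $A=\widehat\bOmega_n^{\rm DB}$ and $P=\bOmega_{p,N}$ yields
\[
\norm{\widehat\bOmega_n^+-\bOmega_{p,N}}_{\HS}\le\norm{\widehat\bOmega_n^{\rm DB}-\bOmega_{p,N}}_{\HS}=O_{\Pp}(\mathfrak r_n).
\]
The symmetry of $\widehat\bOmega_n^{\rm DB}$ follows from $\widehat\bGamma_{\rm DB}(-r)=\widehat\bGamma_{\rm DB}(r)^\top$ and the evenness of $K$.

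For \eqref{eq:positive-trace-rate}, I would use that $\widehat\bOmega_n^+-\bOmega_{p,N}$ is a $p\times p$ matrix. Cauchy--Schwarz on its singular values gives $\norm{M}_1\le\sqrt{\operatorname{rank}M}\,\norm{M}_{\HS}\le\sqrt p\,\norm{M}_{\HS}$. Combined with \eqref{eq:positive-HS-rate}, this gives the rate $O_{\Pp}(\sqrt p\,\mathfrak r_n)$. The same numbers hold for the canonical embeddings, because $\iota_p$ is an isometry, so this bound transfers directly to the operator level used later.

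The only point needing care is the projection inequality, namely verifying $\tr(A_-P)\ge0$ and $\tr(A_+A_-)=0$. Both follow from the shared eigenbasis $\widehat{\bm O}_n$ and from $P^{1/2}A_-P^{1/2}\succeq0$. I do not expect a genuine obstacle. The loss of $\sqrt p$ is intrinsic to the conversion from Hilbert--Schmidt to trace norm, which is exactly why the paper imposes \eqref{eq:main-growth}.
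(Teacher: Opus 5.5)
Your proposal is correct and follows the paper's proof. Both arguments use that $A\mapsto A_+$ is nonexpansive in Hilbert--Schmidt norm toward the positive semidefinite target $\bOmega_{p,N}$, as in \eqref{eq:PSD-projection-audited}, and then apply the rank-$p$ bound $\norm{M}_1\le\sqrt p\,\norm{M}_{\HS}$; the only difference is that you verify the projection inequality directly through $\tr(A_+A_-)=0$ and $\tr(A_-P)\ge0$, whereas the paper cites \citet{Higham1988}.
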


\begin{lemma}[Contamination by one or several mean changes]
\label{prop:alternative-DB}
Under the single-change model~\eqref{eq:functional-change-model} and
Assumptions~\ref{ass:ARS-dependence}--\ref{ass:DB-tuning},
\begin{align}
 \norm{\widehat\bOmega_n^{\rm DB}-\bOmega_{p,N}}_{\HS}
 &=O_{\Pp}(\mathfrak r_n+\mathfrak r_{\delta,n}),
 \notag\\
 \norm{\widehat\bOmega_n^+-\bOmega_{p,N}}_1
 &=O_{\Pp}\left[\sqrt p\{\mathfrak r_n+\mathfrak r_{\delta,n}\}\right].
 \label{eq:alternative-positive-trace-rate}
\end{align}
Under the multiple-change model~\eqref{eq:multiple-change-model} and
Assumptions~\ref{ass:ARS-dependence}--\ref{ass:DB-tuning} and
\ref{ass:multiple-configuration}, the corresponding conclusions are
\begin{align}
 \norm{\widehat\bOmega_n^{\rm DB}-\bOmega_{p,N}}_{\HS}
 &=O_{\Pp}(\mathfrak r_n+\mathfrak r_{\delta,{\rm cp},n}),
 \label{eq:multiple-DB-HS-rate}\\
 \norm{\widehat\bOmega_n^+-\bOmega_{p,N}}_1
 &=O_{\Pp}\left[
 \sqrt p\{\mathfrak r_n+\mathfrak r_{\delta,{\rm cp},n}\}\right].
 \label{eq:multiple-positive-trace-rate}
\end{align}
For a fixed difference order, $h\asymp\ell$, and bounded jumps, the
contamination terms vanish whenever $\ell^2/n\to0$.
\end{lemma}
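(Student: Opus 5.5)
We prove Lemma~\ref{prop:alternative-DB} by splitting the difference sequence into a noise part and a deterministic part. Under model~\eqref{eq:functional-change-model},
\[
 \bY_{i_{\rm D}}=\beps_{i_{\rm D}}^{\rm DB}+\bm m_{i_{\rm D}},
 \qquad
 \bm m_{i_{\rm D}}=\sum_{q=0}^m c_q\bdelta_{p,N}\ind(mh+i_{\rm D}-qh>\tau_n).
\]
The first constraint in \eqref{eq:difference-constraints} gives $\sum_q c_q=0$. Hence $\bm m_{i_{\rm D}}=0$ unless the window $\{mh+i_{\rm D}-qh:0\le q\le m\}$ straddles $\tau_n$. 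This happens for at most $mh$ indices $i_{\rm D}$, and then $\norm{\bm m_{i_{\rm D}}}\le C_c\norm{d_n}$. For several changes, the same holds for each $b$; since $\Delta_n\ge c_\Delta n\gg mh+\ell$, the contaminated index sets are disjoint, which gives at most $J_{\rm cp}mh$ contaminated indices with squared norms summing to $\sum_b\norm{d_{b,n}}^2$ up to constants.

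Expanding $\bY_{i_{\rm D}}\bY_{i_{\rm D}-r}^\top$ yields four pieces. The noise--noise piece is exactly the null estimator, so Lemma~\ref{thm:DB-rate} gives the $O_{\Pp}(\mathfrak r_n)$ term. The mean--mean piece at lag $r$ has HS norm at most $(N_D-r)^{-1}\,mh\,C_c^2\norm{d_n}^2$. Summing over $|r|<\ell$ with bounded $K$ gives $C\ell mh\norm{d_n}^2/N_D$, the first term of $\mathfrak r_{\delta,n}$. For each cross piece $(N_D-r)^{-1}\sum_{i_{\rm D}}\beps^{\rm DB}_{i_{\rm D}}\bm m_{i_{\rm D}-r}^\top$, the sum runs over at most $mh$ contaminated indices. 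Its HS norm equals
\[
 \norm{\textstyle\sum_{i_{\rm D}}\beps^{\rm DB}_{i_{\rm D}}\otimes \bm m_{i_{\rm D}-r}}_{\HS}.
\]
We bound its second moment using the weighted covariance summability \eqref{eq:projected-weighted-cov-sum} of Lemma~\ref{lem:projected-dependence}, which carries over to the DB sequence with constants depending only on $m$ and $C_c$. This gives
\[
 \E\norm{\cdot}_{\HS}^2\le C\,mh\,\norm{d_n}^2\sup_n\sum_u\norm{\bGamma_{p,N}(u)}_1 .
\]
The cross piece is therefore $O_{\Pp}(\sqrt{mh}\norm{d_n}/N_D)$ per lag, and summing over $2\ell-1$ lags gives the second term $\ell\sqrt{mh}\norm{d_n}/N_D$. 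The triangle inequality then yields the HS bounds in both the single- and multiple-change statements.

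For the trace-norm bounds we repeat the argument of Lemma~\ref{lem:positive-part-rate} with the enlarged HS rate. The positive part is the metric projection onto the PSD cone in HS norm, and $\bOmega_{p,N}$ is positive semidefinite by Lemma~\ref{lem:projected-dependence}. Hence
\[
 \norm{\widehat\bOmega_n^+-\bOmega_{p,N}}_{\HS}\le 2\norm{\widehat\bOmega_n^{\rm DB}-\bOmega_{p,N}}_{\HS},
\]
and $\norm{A}_1\le\sqrt p\norm{A}_{\HS}$ for $p\times p$ matrices. This gives \eqref{eq:alternative-positive-trace-rate} and \eqref{eq:multiple-positive-trace-rate}.

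For the final remark, take $h\asymp\ell$ and bounded $\norm{d_n}$ (Lemma~\ref{prop:quadrature}). Then $\mathfrak r_{\delta,n}\lesssim \ell^2/n+\ell^{3/2}/n$, which vanishes when $\ell^2/n\to0$.

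The main obstacle is the cross-term second moment under dependence. The noise factors at contaminated indices are not independent, so the bound must come from summability of $\norm{\bGamma_{\rm DB}(u)}_1$ uniformly in $p$, not from orthogonality. A careless bound via operator norms would instead cost a factor of $p$.
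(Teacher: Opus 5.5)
Your proposal is correct and follows essentially the paper's own proof: the same mean/noise split of $\bY_{i_{\rm D}}$ with support of at most $mh$ indices from the zero-sum constraint, the same mean-only bound, and the same mean--noise second-moment bound via uniform summability of $\norm{\bGamma_{\rm DB}(u)}_1$ followed by Minkowski over the $2\ell-1$ lags, then the PSD-projection and $\sqrt p$ rank conversion. The only difference is cosmetic: you use a factor $2$ in the projection step, whereas the paper uses nonexpansiveness of the metric projection onto the PSD cone, with $\bOmega_{p,N}$ as a fixed point, to get factor $1$; the rate is unaffected.
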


\subsection{Spectral weights and reference laws}

\begin{lemma}[Trace and ridge-weight consistency]
\label{prop:weight-consistency}
Under $H_0$ and Assumptions~\ref{ass:ARS-dependence}--
\ref{ass:spectrum-growth}, uniformly over $a\in\cG$,
\begin{equation}
 |\widehat{\mathfrak t}_n-\mathfrak t_\Omega|
 +\sum_{j\ge1}|\widehat w_{j,n}(a)-w_j(a)|
 =O_{\Pp}\{\sqrt p\,\mathfrak r_n+\zeta_{p,N}\},
 \label{eq:weight-l1-rate}
\end{equation}
where $\widehat w_{j,n}(a)=0$ for $j>p$. Consequently,
\begin{equation}
 \max_{a\in\cG}
 \left\{|\widehat A_{1,n}(a)-A_1(a)|
       +|\widehat A_{2,n}(a)-A_2(a)|\right\}
 =O_{\Pp}\{\sqrt p\,\mathfrak r_n+\zeta_{p,N}\}.
 \label{eq:Ar-consistency}
\end{equation}
Under one change, replace $\mathfrak r_n$ by
$\mathfrak r_n+\mathfrak r_{\delta,n}$.  Under
model~\eqref{eq:multiple-change-model}, replace it by
$\mathfrak r_n+\mathfrak r_{\delta,{\rm cp},n}$, so the right-hand sides of
\eqref{eq:weight-l1-rate}--\eqref{eq:Ar-consistency} are
$O_{\Pp}(\mathfrak e_{{\rm cp},n})$.
\end{lemma}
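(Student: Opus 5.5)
The plan is to reduce everything to a trace-norm perturbation bound for the embedded operators and then to Lipschitz properties of the map $(\lambda,\mathfrak t)\mapsto\lambda/(\lambda+a\mathfrak t)$.

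\emph{Step 1: total error of the embedded estimator.} By the triangle inequality,
\begin{equation*}
 \norm{\widehat\Omega_{n,+}^{\cH}-\Omega}_1
 \le\norm{\widehat\bOmega_n^+-\bOmega_{p,N}}_1
 +\norm{\Omega_{p,N}^{\cH}-\Omega}_1 .
\end{equation*}
The first term is $O_{\Pp}(\sqrt p\,\mathfrak r_n)$ by \eqref{eq:positive-trace-rate} of Lemma~\ref{lem:positive-part-rate}. Under one change or under several changes, \eqref{eq:alternative-positive-trace-rate} or \eqref{eq:multiple-positive-trace-rate} of Lemma~\ref{prop:alternative-DB} is used instead. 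The second term is at most $\zeta_{p,N}$ by \eqref{eq:lrc-projection-error} of Lemma~\ref{prop:quadrature}. Write $\varepsilon_n$ for the resulting bound $O_{\Pp}(\sqrt p\,\mathfrak r_n+\zeta_{p,N})$. Since $|\tr A-\tr B|\le\norm{A-B}_1$, we get $|\widehat{\mathfrak t}_n-\mathfrak t_\Omega|\le\varepsilon_n$. This controls the trace part of \eqref{eq:weight-l1-rate}. It also shows $\Pp(\mathcal E_n)\to1$, because $\mathfrak t_\Omega>0$ and $\varepsilon_n\to0$ by \eqref{eq:main-growth}.

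\emph{Step 2: eigenvalue $\ell^1$ perturbation.} Both $\widehat\Omega_{n,+}^{\cH}$ and $\Omega$ are nonnegative, self-adjoint and trace class. Order their eigenvalues decreasingly and pad with zeros, writing $\widehat\lambda_{j,n}=0$ for $j>p$ and $\lambda_j=0$ beyond the rank. The Lidskii--Mirsky inequality for compact self-adjoint operators, in its trace-norm (unitarily invariant) form, then gives
\begin{equation*}
 \sum_{j\ge1}|\widehat\lambda_{j,n}-\lambda_j|\le\norm{\widehat\Omega_{n,+}^{\cH}-\Omega}_1 .
\end{equation*}
This is the one genuinely nontrivial ingredient. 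To be careful in infinite dimensions, I would first apply it to finite-rank truncations and pass to the limit, or else invoke the Ky Fan characterization directly. The kernel of $\Omega$ poses no problem: those indices contribute zero population weight, and the padded ordering absorbs them.

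\emph{Step 3: weights.} For $x,y\ge0$ and $s,t>0$,
\begin{equation*}
 \frac{x}{x+as}-\frac{y}{y+at}
 =\frac{a(xt-ys)}{(x+as)(y+at)}
 =\frac{a\{t(x-y)+y(t-s)\}}{(x+as)(y+at)} .
\end{equation*}
Its absolute value is bounded by
\begin{equation*}
 \frac{|x-y|}{s}+\frac{y\,|t-s|}{st}.
\end{equation*}
Apply this with $s=\widehat{\mathfrak t}_n$, $t=\mathfrak t_\Omega$, $x=\widehat\lambda_{j,n}$ and $y=\lambda_j$, and sum over $j$, using $\sum_j\lambda_j=\mathfrak t_\Omega$. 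On the event $\widehat{\mathfrak t}_n\ge\mathfrak t_\Omega/2$, which has probability tending to one, this yields
\begin{equation*}
 \sum_j|\widehat w_{j,n}(a)-w_j(a)|\le C\bigl(\textstyle\sum_j|\widehat\lambda_{j,n}-\lambda_j|+|\widehat{\mathfrak t}_n-\mathfrak t_\Omega|\bigr)\le C\varepsilon_n .
\end{equation*}
The constant is uniform over $a\in\cG$; the bound above is in fact free of $a$. This proves \eqref{eq:weight-l1-rate}.

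\emph{Step 4: spectral moments.} For $\widehat A_{1,n}$, the difference $|\widehat A_{1,n}(a)-A_1(a)|$ is directly bounded by the $\ell^1$ sum of Step 3. For $\widehat A_{2,n}$, all weights lie in $[0,1]$, so
\begin{equation*}
 |\widehat w^2-w^2|=|\widehat w-w|\,|\widehat w+w|\le2|\widehat w-w|,
\end{equation*}
and \eqref{eq:Ar-consistency} follows.

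The alternative-model statements follow verbatim once the substitutions of Step 1 are made. The main obstacle is Step 2, specifically justifying the Lidskii-type $\ell^1$ eigenvalue bound for the infinite-dimensional embedded operators with zero padding. Everything else is elementary.
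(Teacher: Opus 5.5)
Your argument is correct and is essentially the paper's proof. The paper splits the eigenvalue error into Hoffman--Wielandt plus Cauchy--Schwarz on the $p\times p$ part and Lidskii--Mirsky for $\Omega_{p,N}^{\cH}$ versus $\Omega$, whereas you apply the trace-norm triangle inequality at the operator level (via \eqref{eq:positive-trace-rate} and \eqref{eq:lrc-projection-error}), invoke Lidskii--Mirsky once, and then use the same Lipschitz bound on $x/(x+as)$. One harmless slip: your Step~3 bound is missing a factor $1/a$ (the correct bound is $a^{-1}\{|x-y|/s+y|t-s|/(st)\}$, and the $a$-free version fails for $a<1$, e.g.\ $x=\epsilon$, $y=0$, $s=t$), so the constant is not free of $a$, but it is still uniform over $\cG$ because $a\ge a_->0$.
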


\begin{lemma}[Weighted-bridge regularity]
\label{lem:bridge-series-regularity}
For every $a\in\cG$, the series in
\eqref{eq:weighted-bridge-process} converges uniformly almost surely and in
$L^1$ on $[\epsilon,1-\epsilon]$.  Its limit has continuous sample paths.
\end{lemma}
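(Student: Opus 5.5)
The proof works term by term. Write $\xi_j(t)=U_j(t)^2-1$ with $U_j(t)=B_j(t)/\sqrt{t(1-t)}$ on $[\epsilon,1-\epsilon]$, and use the bound
\begin{equation*}
 \sup_{\epsilon\le t\le1-\epsilon}|\xi_j(t)|\le 1+\frac{\sup_{0\le t\le1}B_j(t)^2}{\epsilon(1-\epsilon)} .
\end{equation*}
The $\sup_t B_j(t)^2$ are i.i.d. with finite mean; by the Kolmogorov distribution, $\E\sup_t B_j(t)^2=\pi^2/12$, and in any case the mean is finite since $\sup|B_j|$ has Gaussian tails. So $M_j:=\sup_{t\in[\epsilon,1-\epsilon]}|\xi_j(t)|$ satisfies $\E M_j\le C_\epsilon<\infty$ uniformly in $j$.

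Next, the weights are summable: $w_j(a)\le\lambda_j/(a\mathfrak t_\Omega)$, so
\begin{equation*}
 \sum_j w_j(a)\le\sum_j\lambda_j/(a\mathfrak t_\Omega)=1/a<\infty,
\end{equation*}
using Lemma~\ref{prop:lrc-origin} (trace class). Also $A_2(a)>0$, since $\mathfrak t_\Omega>0$ forces $\lambda_1>0$ and hence $w_1(a)>0$. The series therefore has a finite deterministic normalising constant.

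For convergence, $\E\sum_j w_j(a)M_j\le C_\epsilon/a<\infty$, so $\sum_j w_j(a)M_j<\infty$ almost surely. The Weierstrass M-test, applied pathwise, then gives uniform absolute convergence on $[\epsilon,1-\epsilon]$ almost surely. For $L^1$, the tail $\E\sup_t\bigl|\sum_{j>J}w_j(a)\xi_j(t)\bigr|$ is at most $C_\epsilon\sum_{j>J}w_j(a)\to0$. Hence the partial sums converge in $L^1\bigl(\ell^\infty[\epsilon,1-\epsilon]\bigr)$, and the $L^1$ limit coincides with the almost-sure limit.

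Continuity follows because each summand $w_j(a)\xi_j(t)$ has continuous paths on the trimmed interval: $B_j$ is continuous and $t(1-t)\ge\epsilon(1-\epsilon)/2>0$ there. A uniform limit of continuous functions is continuous. Dividing by the constant $\{2A_2(a)\}^{1/2}$ preserves all of these properties, and since $\cG$ is finite the conclusion holds simultaneously for every $a\in\cG$ on a common probability-one event.

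The only step that needs care is the uniform first-moment bound on $\sup_t B_j(t)^2$. This is standard through Doob's inequality applied to the Brownian-motion representation $B(t)=W(t)-tW(1)$, which gives $\E\sup_t B(t)^2\le 2\,\E\sup_t W(t)^2+2\,\E W(1)^2\le 10$. No use of the centering $-1$ or of second moments is needed, because summability of the $w_j$ (rather than of the $w_j^2$) already suffices.
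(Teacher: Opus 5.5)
Your proposal is correct and follows essentially the same route as the paper: a uniform bound on $\E\sup_t|B_j^2(t)/\{t(1-t)\}-1|$ over the trimmed interval via Doob's inequality, the weight bound $\sum_j w_j(a)\le 1/a$, the Weierstrass criterion for almost-sure uniform convergence, the tail bound for uniform $L^1$ convergence, and continuity of the partial sums for continuity of the limit. The explicit constants you add (the Kolmogorov value $\pi^2/12$ and the Doob bound $10$) are correct but not needed.
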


\begin{lemma}[Finite conditional reference-law regularity]
\label{lem:finite-reference-regularity}
On $\mathcal E_n$, $\widehat F_{n,a}$ is continuous and strictly increasing
on its support for every $a\in\cG$.  For a fresh reference draw,
\begin{equation*}
 1-\widehat F_{n,a}(\widehat{\mathcal T}_{n,a}^{\circ})
 \mid\widehat{\mathcal W}_n\sim\operatorname{Uniform}(0,1).
\end{equation*}
The conditional distribution of $\mathfrak C_n^{\circ}$ is continuous.
\end{lemma}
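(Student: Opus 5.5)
On $\mathcal E_n$ we work conditionally on the data, so $\widehat{\mathcal W}_n$ is a fixed finite collection of nonnegative weight vectors. For each $a\in\cG$ the weights $\widehat w_{j,n}(a)$, $1\le j\le p$, are not all zero. Indeed, $\widehat{\mathfrak t}_n>0$ forces $\widehat\lambda_{1,n}>0$, and hence $\widehat w_{1,n}(a)>0$.

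The conditional reference process $\widehat{\mathcal D}^*_{n,a}(t_k)$, $k\in\cK_{n,\epsilon}$, is a finite collection of quadratic forms in the Gaussian vector $\{B_j(t_k)\}_{j\le p,k}$. Thus $\widehat{\mathcal T}^*_{n,a}$ is a maximum of finitely many random variables of the form $\sum_j\widehat w_{j,n}(a)U_j(t_k)^2$, each affinely rescaled.

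\textbf{Continuity.} Fix $k$. The variable $Q_k=\sum_j\widehat w_{j,n}(a)U_j(t_k)^2$ is a positive combination of independent $\chi^2_1$ variables with at least one positive weight, so it has a density. A maximum of finitely many absolutely continuous variables has no atoms, since $\Pp(\max_kQ_k=x)\le\sum_k\Pp(Q_k=x)=0$. Hence $\widehat F_{n,a}$ is continuous.

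\textbf{Strict monotonicity.} Write $\underline x$ for the lower endpoint of the support, which equals the minimum of $-\widehat A_{1,n}(a)/\{2\widehat A_{2,n}(a)\}^{1/2}$ attained near $U\equiv 0$. We must show that $\Pp(x<\widehat{\mathcal T}^*_{n,a}<y)>0$ for every $\underline x\le x<y$. The map $z\mapsto\max_k\sum_j\widehat w_{j,n}(a)z_{jk}^2$ is continuous on $\R^{p|\cK_{n,\epsilon}|}$ and takes every value in $[0,\infty)$ along rays $z=s z_0$, $s\ge0$. The Gaussian vector $\{U_j(t_k)\}$ has a nondegenerate covariance, because the bridge values at distinct interior times have positive definite covariance. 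Its law therefore charges every open set, including the open preimage of $(x,y)$. This gives strict increase on the support, which is $[\underline x,\infty)$ and unbounded above.

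\textbf{Uniformity of the transformed draw.} For a fresh draw $\widehat{\mathcal T}^\circ_{n,a}$, independent of the data and with the same conditional law, the probability integral transform applies. By continuity of $\widehat F_{n,a}$, $\widehat F_{n,a}(\widehat{\mathcal T}^\circ_{n,a})\sim$ Uniform$(0,1)$, and so does one minus it. In particular $P^\circ_{n,v}\in(0,1)$ almost surely, so every Cauchy transform is finite.

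\textbf{Continuity of the combined statistic.} This step is the main obstacle, because the $P^\circ_{n,v}$ are dependent through the common bridges. I would show that $\mathfrak C^\circ_n=\Psi(\{U_j(t_k)\})$, where $\Psi$ is a composition of continuous strictly monotone maps of the quadratic maxima. For each level $c$, $\{\Psi=c\}$ has Lebesgue measure zero. To see this, decompose the Gaussian vector radially as $z=s\theta$. Along each ray, every $\widehat{\mathcal T}^\circ_{n,a_v}$ is, after affine rescaling, $s^2$ times a positive quantity. Hence each $\widehat F_{n,a_v}$ is strictly increasing there, and $\mathfrak c_{\rm C}(1-\cdot)$ is strictly increasing as well. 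It follows that $\Psi$ is strictly increasing in $s$ on $(0,\infty)$ for almost every direction $\theta$, since $\max_k Q_k>0$ off a null set. So each ray meets $\{\Psi=c\}$ in at most one point, and Fubini in polar coordinates gives $\Pp(\mathfrak C^\circ_n=c)=0$.
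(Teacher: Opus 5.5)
Your proposal is correct and follows essentially the same route as the paper. Each scan-point quadratic form is atomless. Strict increase comes from homogeneity along rays together with full support of the nondegenerate Gaussian vector. Uniformity follows from the probability integral transform, and continuity of $\mathfrak C_n^{\circ}$ from a radial decomposition in which the combined statistic is strictly increasing along almost every ray. The only differences are cosmetic: you get atomlessness from the density of a positive combination of independent $\chi^2_1$ variables and apply polar-coordinate Fubini directly to the density of $\{U_j(t_k)\}$. The paper instead uses zero sets of nonzero polynomials and the independence of radius and direction after standardizing by $\mathsf L_n$.
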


\begin{lemma}[Monte Carlo approximation of the bridge reference]
\label{lem:Monte-Carlo-reference}
Conditional on the data and on $\mathcal E_n$, for every $n$, every
$N_{\rm sim}\ge1$, and every $y>0$, the empirical conditional marginal cdfs based on
$N_{\rm sim}$ independent common-bridge paths satisfy
\begin{equation}
 \Pp_{\bm B}\left\{
 \max_{a\in\cG}\sup_x
 |\widehat F_{n,a,N_{\rm sim}}(x)-\widehat F_{n,a}(x)|>y
 \mid\widehat{\mathcal W}_n\right\}
 \le 2L_{\cG}\exp(-2N_{\rm sim}y^2).
 \label{eq:DKW-marginal-bridge}
\end{equation}
Thus the conditional uniform error is $O_{\Pp_{\bm B}}(N_{\rm sim}^{-1/2})$,
uniformly in $n$.  Conditional additionally on the first bridge batch, let
\begin{equation*}
 \widehat G_{n,N_{{\rm sim},1}}^{\circ}(x)
 =\Pp_{\bm B}\{\mathfrak C_{n,N_{{\rm sim},1}}^{\circ}\le x
 \mid\widehat{\mathcal W}_n,\text{ first batch}\}
\end{equation*}
and let $\widehat G_{n,N_{{\rm sim},1},N_{{\rm sim},2}}^{\circ}$ be the empirical cdf of the
$N_{{\rm sim},2}$ second-batch copies
$\mathfrak C_{n,r_{\rm sim}}^{\circ,(N_{{\rm sim},1})}$.  Then
\begin{equation}
 \Pp_{\bm B}\left\{
 \sup_x|\widehat G_{n,N_{{\rm sim},1},N_{{\rm sim},2}}^{\circ}(x)
          -\widehat G_{n,N_{{\rm sim},1}}^{\circ}(x)|>y
 \mid\widehat{\mathcal W}_n,\text{ first batch}\right\}
 \le 2\exp(-2N_{{\rm sim},2}y^2).
 \label{eq:DKW-joint-bridge}
\end{equation}

Under the assumptions of Theorem~\ref{thm:critical-values}, if jointly
$n\to\infty$, $N_{{\rm sim},1}=N_{{\rm sim},1}(n)\to\infty$, and
$N_{{\rm sim},2}=N_{{\rm sim},2}(n)\to\infty$, and if
$\mathsf{CR}_{\alpha}$ holds, then the two-batch construction in
\eqref{eq:midrank-bridge-pvalue}--\eqref{eq:two-batch-critical-value} obeys
\begin{equation*}
 \widetilde c_{n,C,1-\alpha}^{(N_{{\rm sim},1},N_{{\rm sim},2})}\pto c_{C,1-\alpha},
 \qquad
 \Pp\{\mathfrak C_{n,N_{{\rm sim},1}}>
 \widetilde c_{n,C,1-\alpha}^{(N_{{\rm sim},1},N_{{\rm sim},2})}\}\longrightarrow\alpha.
\end{equation*}
No universal $N_{{\rm sim},1}^{-1/2}$ rate is asserted after the unbounded Cauchy
transform.
\end{lemma}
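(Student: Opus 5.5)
The proof has two parts. The finite-sample bounds \eqref{eq:DKW-marginal-bridge}--\eqref{eq:DKW-joint-bridge} follow from the Dvoretzky--Kiefer--Wolfowitz inequality. The asymptotic statement then follows from a truncation argument for the Cauchy transform combined with Theorem~\ref{thm:critical-values}.

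\emph{Finite-sample bounds.} Conditional on the data, for each fixed $a\in\cG$ the simulated values $\widehat{\mathcal T}_{n,a,r_{\rm sim}}^{*}$, $1\le r_{\rm sim}\le N_{\rm sim}$, are i.i.d.\ with cdf $\widehat F_{n,a}$. This holds even though the bridge paths are shared across ridges, because independence is only needed across replicates. The DKW inequality with Massart's constant gives the bound $2\exp(-2N_{\rm sim}y^2)$ for each $a$, and a union bound over the $L_{\cG}$ ridges gives \eqref{eq:DKW-marginal-bridge}. The constant does not depend on $n$ or on $\widehat{\mathcal W}_n$, so the $O_{\Pp_{\bm B}}(N_{\rm sim}^{-1/2})$ statement is uniform in $n$. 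For \eqref{eq:DKW-joint-bridge} we condition on the first batch. The functions $\widetilde P_{n,v,N_{{\rm sim},1}}(\cdot)$ are then fixed, so the second-batch copies $\mathfrak C_{n,r_{\rm sim}}^{\circ,(N_{{\rm sim},1})}$ are i.i.d.\ with cdf $\widehat G_{n,N_{{\rm sim},1}}^{\circ}$. One application of DKW completes this part.

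\emph{Asymptotic statement.} The first step compares the mid-rank values with the exact conditional upper tails. From \eqref{eq:midrank-bridge-pvalue},
\begin{equation*}
 \sup_x\bigl|\widetilde P_{n,v,N_{\rm sim}}(x)-\{1-\widehat F_{n,a_v}(x-)\}\bigr|
 \le \frac{2}{N_{\rm sim}+1}+\sup_x|\widehat F_{n,a_v,N_{\rm sim}}(x)-\widehat F_{n,a_v}(x)|.
\end{equation*}
Lemma~\ref{lem:finite-reference-regularity} gives continuity of $\widehat F_{n,a_v}$, so $\widehat F_{n,a_v}(x-)=\widehat F_{n,a_v}(x)$. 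Hence $\max_v\sup_x|\widetilde P_{n,v,N_{{\rm sim},1}}(x)-(1-\widehat F_{n,a_v}(x))|\pto0$ as $N_{{\rm sim},1}\to\infty$.

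The main obstacle is that $\mathfrak c_{\rm C}$ is unbounded at $0$ and $1$, so this uniform closeness of $p$-values does not transfer directly to the Cauchy statistics. I will handle it by truncation. Fix $\eta\in(0,1/4)$. On $[\eta/2,1-\eta/2]$ the map $\cot(\pi\cdot)$ is uniformly continuous. Therefore, on the event that all $P_{n,v}\in[\eta,1-\eta]$, we get $|\mathfrak C_{n,N_{{\rm sim},1}}-\mathfrak C_n|\pto0$. The complementary event has probability at most $\sum_v\Pp\{P_{n,v}\notin[\eta,1-\eta]\}$. Under $H_0$, Theorem~\ref{thm:null-limit} and \eqref{eq:conditional-cdf-consistency} give $P_{n,v}\dto1-F_{a_v}(\mathcal T_{a_v})$. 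This limit is uniform because $F_{a_v}$ is continuous, so the bound tends to $2L_{\cG}\eta$. Since the truncation changes the statistic only on this small event, the portmanteau theorem yields $\mathfrak C_{n,N_{{\rm sim},1}}\dto\mathfrak C_\infty$, given that Theorem~\ref{thm:critical-values} shows $\mathfrak C_n\dto\mathfrak C_\infty$.

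The same truncation applies to a fresh draw. Lemma~\ref{lem:finite-reference-regularity} says the exact conditional $p$-values $1-\widehat F_{n,a_v}(\widehat{\mathcal T}^{\circ,2}_{n,a_v})$ are exactly uniform, so the tail events again have probability at most $2L_{\cG}\eta$. Theorem~\ref{thm:critical-values} gives $d_{\mathrm{BL}}(\widehat F_n^{\rm joint},F^{\rm joint})\pto0$, and together with the uniform marginal cdf consistency this shows that $\widehat G^{\circ}_{n,N_{{\rm sim},1}}(x)\pto G_C(x)$ at every continuity point $x$ of $G_C$. In particular this holds at the points $c_{C,1-\alpha}\pm\eta'$ for all small $\eta'$ outside a countable set.

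For the quantile, combine this pointwise convergence with \eqref{eq:DKW-joint-bridge} and $N_{{\rm sim},2}\to\infty$. The empirical cdf of the second batch then converges in probability to $G_C$ at those points. Under $\mathsf{CR}_{\alpha}$ we have $G_C(c_{C,1-\alpha}-\eta')<1-\alpha<G_C(c_{C,1-\alpha}+\eta')$, so with probability tending to one the generalized quantile \eqref{eq:two-batch-critical-value} lies in $(c_{C,1-\alpha}-\eta',c_{C,1-\alpha}+\eta']$. This gives $\widetilde c^{(N_{{\rm sim},1},N_{{\rm sim},2})}_{n,C,1-\alpha}\pto c_{C,1-\alpha}$. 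Finally, Slutsky's lemma applied to $\mathfrak C_{n,N_{{\rm sim},1}}-\widetilde c$, together with the continuity of $G_C$ at $c_{C,1-\alpha}$, gives the rejection probability $1-G_C(c_{C,1-\alpha})=\alpha$. Because of the unbounded transform, the argument produces only convergence in probability, with no $N_{{\rm sim},1}^{-1/2}$ rate; this matches the lemma's closing remark.
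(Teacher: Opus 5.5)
Your proposal is correct and follows essentially the same route as the paper. It uses DKW with Massart's constant plus a union bound for the two finite-sample inequalities, and compares the mid-rank values with the exact tails via continuity of $\widehat F_{n,a_v}$. It then truncates the Cauchy transform using exact uniformity of the fresh-draw conditional $p$-values and the observed $p$-value limit, and finishes with second-batch DKW and quantile separation under $\mathsf{CR}_{\alpha}$. The differences are cosmetic: the paper uses the explicit Lipschitz constant $\pi\csc^2(\pi c_{\rm cut})$ and an explicit coupling $\mathfrak C_{n,N_{{\rm sim},1}}^{\circ}-\mathfrak C_n^{\circ}\to0$ before invoking the already-established law limit of $\mathfrak C_n^{\circ}$, whereas you use uniform continuity and pass directly from the joint bridge-law convergence.
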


\subsection{Auxiliary results for multiple changes}

For $I=(s,e]$ and $k\in\cK_I$, put
$n_{I,L}(k)=k-s$ and $n_{I,R}(k)=e-k$, and define
\begin{equation*}
 \mathsf Z_I(k)
 =\sqrt{\frac{n_{I,L}(k)n_{I,R}(k)}{n_I}}
 \left\{
 \frac1{n_{I,R}(k)}\sum_{i=k+1}^{e}\beps_{i,p,N}
 -\frac1{n_{I,L}(k)}\sum_{i=s+1}^{k}\beps_{i,p,N}
 \right\}.
\end{equation*}
Let
\begin{equation*}
 \mathfrak Z_n^{\rm WBS}
 =\max_{0\le s<k<e\le n}\norm{\mathsf Z_{(s,e]}(k)}.
\end{equation*}

For $1\le b\le J_{\rm cp}$, write
$\Delta_{b,n}^{-}=\tau_{b,n}-\tau_{b-1,n}$ and
$\Delta_{b,n}^{+}=\tau_{b+1,n}-\tau_{b,n}$, and define
\begin{align*}
 \mathcal L_{b,n}^{\rm WBS}
 &=\left\{\tau_{b-1,n}+\left\lceil\frac{\Delta_{b,n}^{-}}3\right\rceil,
 \ldots,
 \tau_{b-1,n}+\left\lfloor\frac{2\Delta_{b,n}^{-}}3\right\rfloor\right\},\\
 \mathcal R_{b,n}^{\rm WBS}
 &=\left\{\tau_{b,n}+\left\lceil\frac{\Delta_{b,n}^{+}}3\right\rceil,
 \ldots,
 \tau_{b,n}+\left\lfloor\frac{2\Delta_{b,n}^{+}}3\right\rfloor\right\}.
\end{align*}
Let $\mathcal C_n^{\rm WBS}$ be the event that, for every $b$, at least one
sampled interval has its left endpoint in $\mathcal L_{b,n}^{\rm WBS}$ and
its right endpoint in $\mathcal R_{b,n}^{\rm WBS}$.

\begin{lemma}[Random-interval coverage]
\label{lem:wbs-coverage}
Under Assumption~\ref{ass:multiple-configuration}, there is a universal
constant $c_{\rm W}>0$ such that, for all sufficiently large $n$,
\begin{equation}
 \Pp\{(\mathcal C_n^{\rm WBS})^c\}
 \le J_{\rm cp}\left\{1-c_{\rm W}
 \left(\frac{\Delta_n}{n}\right)^2\right\}^{M_n^{\rm WBS}}.
 \label{eq:wbs-coverage-probability}
\end{equation}
On $\mathcal C_n^{\rm WBS}$, each $\tau_{b,n}$ has an interval
$I_{b,n}^{\rm iso}=(s_{b,n}^{\rm iso},e_{b,n}^{\rm iso}]$ in
$\mathcal F_n^{\rm WBS}$ whose interior contains no other change and
\begin{equation}
 \begin{gathered}
 s_{b,n}^{\rm iso}-\tau_{b-1,n}\ge\frac{\Delta_n}{3}-1,
 \qquad
 \tau_{b,n}-s_{b,n}^{\rm iso}\ge\frac{\Delta_n}{3}-1,\\
 e_{b,n}^{\rm iso}-\tau_{b,n}\ge\frac{\Delta_n}{3}-1,
 \qquad
 \tau_{b+1,n}-e_{b,n}^{\rm iso}\ge\frac{\Delta_n}{3}-1.
 \end{gathered}
 \label{eq:wbs-isolation-margins}
\end{equation}
\end{lemma}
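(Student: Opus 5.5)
We must prove Lemma~\ref{lem:wbs-coverage}. It has two parts: the probability bound \eqref{eq:wbs-coverage-probability} for missing coverage, and the isolation margins \eqref{eq:wbs-isolation-margins} on the coverage event.

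\textbf{Probability bound.} Fix $b$. Because the draws are independent and uniform on $\mathbb I_n^{\rm WBS}$, one draw lands in the target set $\mathcal L_{b,n}^{\rm WBS}\times\mathcal R_{b,n}^{\rm WBS}$ with probability
\begin{equation*}
 \pi_b=\frac{|\mathcal L_{b,n}^{\rm WBS}|\,|\mathcal R_{b,n}^{\rm WBS}|}{|\mathbb I_n^{\rm WBS}|}.
\end{equation*}
Every pair in this product is admissible. Indeed, any $s\in\mathcal L_{b,n}^{\rm WBS}$ satisfies $s<\tau_{b,n}$, any $e\in\mathcal R_{b,n}^{\rm WBS}$ satisfies $e>\tau_{b,n}$, and $e-s\ge2$ holds for large $n$, since both sets sit at distance at least $\Delta_n/3-1$ from $\tau_{b,n}$.

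Each set contains at least $\lfloor 2\Delta/3\rfloor-\lceil\Delta/3\rceil+1\ge\Delta/3-1$ integers, where $\Delta\ge\Delta_n\ge c_\Delta n\to\infty$. So for large $n$ each cardinality is at least $\Delta_n/4$. Also $|\mathbb I_n^{\rm WBS}|\le (n+1)^2/2\le n^2$ for $n\ge3$. Hence $\pi_b\ge c_{\rm W}(\Delta_n/n)^2$ with $c_{\rm W}=1/16$.

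For fixed $b$, the $M_n^{\rm WBS}$ draws are i.i.d., so the probability that none of them hits the target is $(1-\pi_b)^{M_n^{\rm WBS}}\le\{1-c_{\rm W}(\Delta_n/n)^2\}^{M_n^{\rm WBS}}$. A union bound over the $J_{\rm cp}$ changes gives \eqref{eq:wbs-coverage-probability}.

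\textbf{Isolation margins.} On $\mathcal C_n^{\rm WBS}$, pick for each $b$ a drawn interval with $s^{\rm iso}_{b,n}\in\mathcal L_{b,n}^{\rm WBS}$ and $e^{\rm iso}_{b,n}\in\mathcal R_{b,n}^{\rm WBS}$. The left endpoint satisfies
\begin{equation*}
 s^{\rm iso}_{b,n}-\tau_{b-1,n}\ge\lceil\Delta_{b,n}^-/3\rceil\ge\Delta_n/3,
 \qquad
 \tau_{b,n}-s^{\rm iso}_{b,n}\ge\Delta_{b,n}^--\lfloor2\Delta_{b,n}^-/3\rfloor\ge\Delta_{b,n}^-/3\ge\Delta_n/3.
\end{equation*}
The same computation on the right gives $e^{\rm iso}_{b,n}-\tau_{b,n}\ge\Delta_n/3$ and $\tau_{b+1,n}-e^{\rm iso}_{b,n}\ge\Delta_n/3$. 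This is stronger than \eqref{eq:wbs-isolation-margins}, and the slack of $-1$ absorbs the integer rounding.

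From these margins, $\tau_{b-1,n}<s^{\rm iso}_{b,n}<\tau_{b,n}<e^{\rm iso}_{b,n}<\tau_{b+1,n}$. Since the change points are ordered, the only change in $(s^{\rm iso}_{b,n},e^{\rm iso}_{b,n})$ is $\tau_{b,n}$, so the interior contains no other change.

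\textbf{Main obstacle.} The only delicate step is the integer counting: we need $|\mathcal L_{b,n}^{\rm WBS}|\gtrsim\Delta_n$ uniformly in $b$, which requires $\Delta_n\to\infty$. This is supplied by $\Delta_n\ge c_\Delta n$, and it is why the bound is stated only for sufficiently large $n$. The constant $c_{\rm W}=1/16$ is universal: it depends neither on $J_{\rm cp}$ nor on $c_\Delta$.
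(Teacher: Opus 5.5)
Your proposal is correct and follows the paper's proof essentially step by step. You count $|\mathcal L_{b,n}^{\rm WBS}|,|\mathcal R_{b,n}^{\rm WBS}|\ge\Delta_n/4$, divide by $|\mathbb I_n^{\rm WBS}|$, raise the single-draw miss probability to the power $M_n^{\rm WBS}$ with a union bound over $b$, and read the margins off the definitions of the target sets. The differences are cosmetic: the paper uses the exact count $|\mathbb I_n^{\rm WBS}|=n(n-1)/2$ to get $c_{\rm W}=1/8$ where your cruder bound gives $1/16$, and your margin $\tau_{b,n}-s_{b,n}^{\rm iso}\ge\Delta_n/3$ is slightly sharper than the recorded $\Delta_n/3-1$.
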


\begin{lemma}[Uniform local-score envelope]
\label{lem:local-score-envelope}
Under Assumptions~\ref{ass:ARS-dependence} and~\ref{ass:grid}, there is a
constant $C<\infty$ such that, for all $x>0$ and all sufficiently large $n$,
\begin{equation}
 \Pp\{\mathfrak Z_n^{\rm WBS}>x\}
 \le Cn(1+\log n)x^{-\nu}.
 \label{eq:local-score-tail}
\end{equation}
Consequently,
\begin{equation}
 \mathfrak Z_n^{\rm WBS}
 =O_{\Pp}\{(n\log n)^{1/\nu}\}
 =O_{\Pp}(\mathfrak v_n^{\rm WBS}).
 \label{eq:local-score-order}
\end{equation}
\end{lemma}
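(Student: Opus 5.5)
The proof reduces $\mathfrak Z_n^{\rm WBS}$ to normalized increments of a single partial-sum process, and then combines a moment bound with a dyadic union bound. Write $\bS^{\varepsilon}(j)=\sum_{i=1}^{j}\beps_{i,p,N}$, with $\bS^{\varepsilon}(0)=0$. For $I=(s,e]$ and $k\in\cK_I$, put $n_L=k-s$ and $n_R=e-k$. Then
\begin{equation*}
 \mathsf Z_I(k)=\sqrt{\frac{n_L}{n_In_R}}\{\bS^{\varepsilon}(e)-\bS^{\varepsilon}(k)\}
 -\sqrt{\frac{n_R}{n_In_L}}\{\bS^{\varepsilon}(k)-\bS^{\varepsilon}(s)\}.
\end{equation*}
Since $n_L/n_I\le1$ and $n_R/n_I\le1$, we get
\begin{equation*}
 \mathfrak Z_n^{\rm WBS}\le 2\max_{0\le a<b\le n}\frac{\norm{\bS^{\varepsilon}(b)-\bS^{\varepsilon}(a)}}{\sqrt{b-a}}.
\end{equation*}
It therefore suffices to bound the tail of this maximal normalized increment by $Cn(1+\log n)x^{-\nu}$.

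\textbf{Moment bound.} The core estimate is a maximal moment inequality, uniform in $n$:
\begin{equation*}
 \E\max_{1\le u\le m}\norm{\bS^{\varepsilon}(j+u)-\bS^{\varepsilon}(j)}^{\nu}\le C m^{\nu/2}
 \qquad\text{for all } j,m.
\end{equation*}
By stationarity we may take $j=0$. I will first show $\E\norm{\bS^{\varepsilon}(m)}^{\nu}\le Cm^{\nu/2}$, and then obtain the maximum from Móricz's maximal inequality. That inequality applies because the bound is of the form $g(m)^{\nu/2}$ with $g$ superadditive and $\nu/2>1$.

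For the non-maximal moment, note first that $\beps_{i,p,N}=\iota_p^*\mathcal J_{p,N}e_i$ and that the numerical projection is uniformly bounded, $\norm{\mathcal J_{p,N}f}\le C\norm f_{C^2}\le C\norm f_{\Hs}$. This follows from Lemma~\ref{prop:quadrature} and the Sobolev embedding, since $\norm{\Pi_p f}\le\norm f$. Hence the projected sequence inherits $\E\norm{\beps_{0,p,N}}^{\nu}<\infty$ and the approximation distances $C\mathfrak d_\nu(M)$, uniformly in $n$. I then decompose $\bS^{\varepsilon}(m)$ through the telescoping martingale-difference (projection) decomposition over innovations, $P_{i-l}(\cdot)=\E(\cdot\mid\mathcal F_{i-l})-\E(\cdot\mid\mathcal F_{i-l-1})$.

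Burkholder's inequality in the Hilbert space $\R^p$, whose constants do not depend on $p$, bounds each lag-$l$ martingale by $Cm^{1/2}\norm{P_0\beps_{l,p,N}}_{L^\nu}$. The independent-copy construction \eqref{eq:ARS-M-approximation} gives $\norm{P_0\beps_{l}}_{L^\nu}\le C\mathfrak d_\nu(l)$. Finally, $\sum_l\mathfrak d_\nu(l)<\infty$ by \eqref{eq:quantitative-ARS-condition}. This step is the main obstacle. One must check carefully that the coupling in Assumption~\ref{ass:ARS-dependence} indeed controls the physical-dependence projections $P_0$, and that no constant grows with $p$. If the direct bound is awkward, I would instead use the $M$-dependent approximation $e_{i,M}$ with $M=\lceil m^{1/2}\rceil$: block it into independent sums, apply Rosenthal's inequality, and bound the remainder by $m\,\mathfrak d_\nu(M)$, which is small thanks to the polynomial weight $\beta_{\rm dep}\ge2$.

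\textbf{Dyadic chaining.} For each scale $r=0,1,\ldots,\lceil\log_2 n\rceil$ and each pair with $2^r\le b-a<2^{r+1}$, let $j=2^r\lfloor a/2^r\rfloor$. Then both $a$ and $b$ lie in $[j,j+2^{r+2}]$, so
\begin{equation*}
 \norm{\bS^{\varepsilon}(b)-\bS^{\varepsilon}(a)}\le 2\max_{u\le 2^{r+2}}\norm{\bS^{\varepsilon}(j+u)-\bS^{\varepsilon}(j)}.
\end{equation*}
Consequently the event that $\max\norm{\bS^{\varepsilon}(b)-\bS^{\varepsilon}(a)}/\sqrt{b-a}>x$ at scale $r$ is contained in a union of at most $n/2^r+1\le 2n$ events $\{\max_{u\le 2^{r+2}}\norm{\cdot}>x2^{r/2}/2\}$. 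By Markov's inequality and the moment bound, each of these has probability at most $C2^{r\nu/2}(x2^{r/2})^{-\nu}=Cx^{-\nu}$. Summing over the at most $n$ starting points and the $O(1+\log n)$ scales yields \eqref{eq:local-score-tail} for $\mathfrak Z_n^{\rm WBS}$, with the factor $2$ absorbed into $C$.

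\textbf{Conclusion.} The order statement \eqref{eq:local-score-order} follows by choosing $x=K(n\log n)^{1/\nu}$, which makes the tail bound at most $C(1+\log n)/(K^{\nu}\log n)$. This is small uniformly in $n$ once $K$ is large.
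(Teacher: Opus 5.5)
Your proposal is correct and is essentially the paper's argument. It uses the same reduction of $\mathsf Z_I(k)$ to two normalized block sums, a uniform $L^\nu$ maximal bound from the martingale-projection decomposition (the paper applies Pinelis's maximal inequality lag by lag in Lemma~\ref{lem:Bernoulli-maximal} rather than Burkholder plus M\'oricz), and a dyadic Markov--union bound. The step you flag as the main obstacle closes at once: for $l\ge1$, $e_{i,l}$ is centered and independent of $\mathcal F_{i-l}$, so $P_{i-l}\beps_{i,p,N}=P_{i-l}\{\beps_{i,p,N}-\iota_p^*\mathcal J_{p,N}e_{i,l}\}$ has $L^\nu$ norm at most $2\norm{\mathcal J_{p,N}(e_i-e_{i,l})}_{L^\nu}\le C\mathfrak d_\nu(l)$ uniformly in $n$, so the blocking fallback is not needed.

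One small bonus is that your starting points are aligned to the dyadic grid, so scale $r$ contributes only about $n/2^r+1$ events. Your union bound therefore actually gives $C\,n\,x^{-\nu}$ and $\mathfrak Z_n^{\rm WBS}=O_{\Pp}(n^{1/\nu})$, whereas the paper's count of $n(2+\log_2 n)$ events comes from using every starting point at every scale.
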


For a deterministic retained jump
$d\in\operatorname{span}(\phi_1,\ldots,\phi_p)$, define
\begin{equation*}
 \widehat{\mathfrak q}_{n,a}(d)
 =\ip{d}{\widehat{\mathcal R}_{n,a}d},
 \qquad
 \widehat{\mathfrak q}_{n}^{\max}(d)
 =\max_{a\in\cG}
 \frac{\widehat{\mathfrak q}_{n,a}(d)}
 {\{2\widehat A_{2,n}(a)\}^{1/2}}.
\end{equation*}
Let
\begin{equation*}
 \mu_{i,n}^{\rm ret}
 =\mathcal J_{p,N}\left\{\mu_0+
 \sum_{b=1}^{J_{\rm cp}}\delta_{b,n}\ind(i>\tau_{b,n})\right\}
\end{equation*}
and, for $I=(s,e]$ and $k\in\cK_I$, put
\begin{equation*}
 \mathsf M_I(k)
 =\sqrt{N_I(k)}\left\{
 \frac1{e-k}\sum_{i=k+1}^{e}\mu_{i,n}^{\rm ret}
 -\frac1{k-s}\sum_{i=s+1}^{k}\mu_{i,n}^{\rm ret}\right\}.
\end{equation*}
Define
\begin{equation*}
 \cS_I^{\rm sig}(k)
 =\max_{a\in\cG}
 \frac{\ip{\mathsf M_I(k)}
 {\widehat{\mathcal R}_{n,a}\mathsf M_I(k)}}
 {\{2\widehat A_{2,n}(a)\}^{1/2}},
 \qquad
 \cS_I^{\rm sig,max}=\max_{k\in\cK_I}\cS_I^{\rm sig}(k).
\end{equation*}
Write
\begin{equation*}
 \mathcal T_I^{\rm cp}=\{\tau_{b,n}:s<\tau_{b,n}<e\},
 \qquad
 d_I^{\max}=\max_{\tau_{b,n}\in\mathcal T_I^{\rm cp}}\norm{d_{b,n}},
\end{equation*}
with $d_I^{\max}=0$ when $\mathcal T_I^{\rm cp}$ is empty.

\begin{lemma}[Uniform local ridge approximation]
\label{lem:local-ridge-decomposition}
Suppose model~\eqref{eq:multiple-change-model} and
Assumptions~\ref{ass:ARS-dependence}--\ref{ass:spectrum-growth} and
\ref{ass:multiple-configuration} hold, and
$\mathfrak e_{{\rm cp},n}\to0$.  There is an event
$\mathcal E_n^{\rm loc}\subseteq\mathcal E_n$ with
$\Pp(\mathcal E_n^{\rm loc})\to1$ on which the following bounds hold
simultaneously for every interval $I=(s,e]$, every $k\in\cK_I$, and every
$v\in\operatorname{span}(\phi_1,\ldots,\phi_p)$:
\begin{equation}
 c_{\rm R}\norm v^2
 \le
 \max_{a\in\cG}
 \frac{\ip v{\widehat{\mathcal R}_{n,a}v}}
 {\{2\widehat A_{2,n}(a)\}^{1/2}}
 \le C_{\rm R}\norm v^2,
 \label{eq:local-signal-norm-equivalence}
\end{equation}
and
\begin{align}
 |\cS_I(k)-\cS_I^{\rm sig}(k)|
 &\le C\left\{1+\norm{\mathsf Z_I(k)}^2
 +\norm{\mathsf M_I(k)}\norm{\mathsf Z_I(k)}\right\}
 \label{eq:wbs-uniform-score-exact}\\
 &\le C\left\{1+\norm{\mathsf Z_I(k)}^2
 +\sqrt{n_I}\,d_I^{\max}\norm{\mathsf Z_I(k)}\right\}.
 \label{eq:wbs-uniform-score-approximation}
\end{align}
If $I$ contains exactly one change $\tau_I$ with retained jump $d_I$, put
$x_{I,k}=(k-s)/n_I$ and $\theta_I=(\tau_I-s)/n_I$.  Then, uniformly over
$a\in\cG$ and $k\in\cK_I$,
\begin{equation}
 D_{I,a}(k)
 =n_I\mathfrak h_{\rm cp}(x_{I,k},\theta_I)
 \frac{\widehat{\mathfrak q}_{n,a}(d_I)}
 {\{2\widehat A_{2,n}(a)\}^{1/2}}
 +\mathrm{Rem}_{I,a}(k),
 \label{eq:local-ridge-decomposition}
\end{equation}
where
\begin{equation}
 |\mathrm{Rem}_{I,a}(k)|
 \le C\left\{1+\norm{\mathsf Z_I(k)}^2
 +\sqrt{n_I}\norm{d_I}\norm{\mathsf Z_I(k)}\right\}.
 \label{eq:local-ridge-remainder}
\end{equation}
Moreover,
\begin{equation}
 c_{\rm sig}\norm{d_I}^2
 \le\widehat{\mathfrak q}_{n}^{\max}(d_I)
 \le C_{\rm sig}\norm{d_I}^2,
 \label{eq:local-ridge-signal-equivalence}
\end{equation}
and, for every $\varepsilon_{\rm gap}\in(0,1/2)$,
\begin{equation}
 \mathfrak h_{\rm cp}(\theta,\theta)
 -\mathfrak h_{\rm cp}(x,\theta)
 \ge\varepsilon_{\rm gap}|x-\theta|,
 \quad
 \theta\in[\varepsilon_{\rm gap},1-\varepsilon_{\rm gap}],\quad x\in(0,1).
 \label{eq:local-shape-linear-gap}
\end{equation}
\end{lemma}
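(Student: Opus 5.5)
The argument will work on the high-probability event $\mathcal E_n^{\rm loc}=\mathcal E_n\cap\{\norm{\widehat\Omega_{n,+}^{\cH}-\Omega}_1\le\eta_n\}\cap\{\max_{a\in\cG}|\widehat A_{2,n}(a)-A_2(a)|\le\eta_n\}$ for a deterministic $\eta_n\to0$ dominating $\mathfrak e_{{\rm cp},n}$. Lemma~\ref{prop:alternative-DB} (multiple-change part), Lemma~\ref{prop:quadrature} and Lemma~\ref{prop:weight-consistency} give $\Pp(\mathcal E_n^{\rm loc})\to1$. On this event $\widehat{\mathfrak t}_n\in[\mathfrak t_\Omega/2,2\mathfrak t_\Omega]$, $\norm{\widehat\Omega_{n,+}^{\cH}}_{\op}\le2\mathfrak t_\Omega$, and $\widehat A_{2,n}(a)\in[A_2(a)/2,2A_2(a)]$ uniformly over the finite grid. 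The trace bound is also used in the form $\widehat{\mathfrak t}_n\ge\mathfrak t_\Omega/2$.

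\emph{Norm equivalence.} For $v$ in the span, $\ip v{\widehat{\mathcal R}_{n,a}v}$ lies between $\norm v^2/(\norm{\widehat\Omega^{\cH}_{n,+}}_{\op}+a\widehat{\mathfrak t}_n)$ and $\norm v^2/(a\widehat{\mathfrak t}_n)$. Since $a\in[a_-,a_+]$ and $\widehat{\mathfrak t}_n$, $\widehat A_{2,n}(a)$ are bounded above and below, both ratios in \eqref{eq:local-signal-norm-equivalence} are comparable to $\norm v^2$. This gives \eqref{eq:local-signal-norm-equivalence} with deterministic constants. Taking $v=d_I$ gives \eqref{eq:local-ridge-signal-equivalence} at once. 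The same bound $\norm{\widehat{\mathcal R}_{n,a}}_{\op}\le C$ also yields $\widehat A_{1,n}(a)\le p\wedge(\widehat{\mathfrak t}_n/(a\widehat{\mathfrak t}_n))$, so $\widehat A_{1,n}(a)\le 1/a_-$ and the centering is $O(1)$.

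\emph{Decomposition.} Writing $\widehat\bxi_i=\bmu_{i}^{\rm ret}+\beps_{i,p,N}$ (the numerical model is exact and $\iota_p$ of the retained mean is $\mu_{i,n}^{\rm ret}$), one gets $\sqrt{N_I(k)}\widehat\bDelta_I(k)=\mathsf M_I(k)+\mathsf Z_I(k)$. Expanding the quadratic form gives
\begin{equation*}
V_{I,a}(k)=\ip{\mathsf M_I}{\widehat{\mathcal R}_{n,a}\mathsf M_I}+2\ip{\mathsf M_I}{\widehat{\mathcal R}_{n,a}\mathsf Z_I}+\ip{\mathsf Z_I}{\widehat{\mathcal R}_{n,a}\mathsf Z_I}.
\end{equation*}
Dividing by $\{2\widehat A_{2,n}(a)\}^{1/2}$ and subtracting the bounded centering gives $|D_{I,a}(k)-\text{signal}_a|\le C(1+\norm{\mathsf Z_I}^2+\norm{\mathsf M_I}\norm{\mathsf Z_I})$ uniformly in $a$. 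Because $|\max_a x_a-\max_a y_a|\le\max_a|x_a-y_a|$, this gives \eqref{eq:wbs-uniform-score-exact}. For \eqref{eq:wbs-uniform-score-approximation}, note that the constant part $\mathcal J_{p,N}\mu_0$ cancels in the contrast. Then $\mathsf M_I(k)$ is $\sqrt{N_I(k)}$ times a combination of $d_{b,n}$ for $\tau_{b,n}\in\mathcal T_I^{\rm cp}$, with coefficients in $[-1,1]$ that are fractions of the segment lengths. Hence $\norm{\mathsf M_I(k)}\le J_{\rm cp}\sqrt{N_I(k)}\,d_I^{\max}\le J_{\rm cp}\sqrt{n_I}\,d_I^{\max}$, since $N_I(k)\le n_I/4$.

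\emph{Single-change case and shape gap.} With one change, $\mathsf M_I(k)=\sqrt{N_I(k)}\,c_k d_I$. Here $c_k=(e-\tau_I)/(e-k)$ for $k\le\tau_I$ and $c_k=(\tau_I-s)/(k-s)$ otherwise. A direct computation gives $N_I(k)c_k^2=n_I g(x,\theta)^2/\{x(1-x)\}=n_I\mathfrak h_{\rm cp}(x_{I,k},\theta_I)$, which together with the remainder bound above gives \eqref{eq:local-ridge-decomposition}--\eqref{eq:local-ridge-remainder}. For \eqref{eq:local-shape-linear-gap}, $\mathfrak h_{\rm cp}(\theta,\theta)=\theta(1-\theta)$. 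For $x\le\theta$ one has $\mathfrak h_{\rm cp}(x,\theta)=x(1-\theta)^2/(1-x)$, so
\begin{equation*}
\theta(1-\theta)-\mathfrak h_{\rm cp}(x,\theta)=(1-\theta)(\theta-x)/(1-x)\ge(1-\theta)(\theta-x)\ge\varepsilon_{\rm gap}(\theta-x).
\end{equation*}
The case $x>\theta$ follows by symmetry with the factor $\theta\ge\varepsilon_{\rm gap}$.

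The main obstacle is the first step. The constants must be uniform over all intervals and all $v$, and the centering must be uniformly bounded. This needs the trace-norm control of $\widehat\Omega_{n,+}$ under multiple-change contamination. I expect that control to be exactly what Lemma~\ref{prop:alternative-DB} and the condition $\mathfrak e_{{\rm cp},n}\to0$ provide.
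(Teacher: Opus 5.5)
Your proposal is correct and follows essentially the same route as the paper: a high-probability event on which $\widehat{\mathfrak t}_n$ and $\widehat A_{2,n}(a)$ are bounded above and below, the resulting two-sided bounds on $\ip v{\widehat{\mathcal R}_{n,a}v}$, the exact split $\sqrt{N_I(k)}\widehat\bDelta_I(k)=\mathsf M_I(k)+\mathsf Z_I(k)$ combined with the max-over-$a$ Lipschitz inequality, the explicit one-change formula for $\mathsf M_I(k)$, and the same closed-form shape gap. The only differences are cosmetic. You define the event through trace-norm closeness of $\widehat\Omega_{n,+}^{\cH}$ to $\Omega$ rather than directly through bounds on $\widehat{\mathfrak t}_n$, $\widehat A_{1,n}$ and $\widehat A_{2,n}$, and you note that the centering bound $\widehat A_{1,n}(a)\le 1/a$ holds deterministically on $\mathcal E_n$ instead of building it into the event.
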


\begin{lemma}[Deterministic geometry of the WBS signal]
\label{lem:wbs-deterministic-geometry}
On $\mathcal E_n^{\rm loc}$, uniformly over all intervals $I=(s,e]$, the map
$k\mapsto\cS_I^{\rm sig}(k)$ is convex between successive points of
$\mathcal T_I^{\rm cp}\cup\{s,e\}$.  If
$\cS_I^{\rm sig,max}>0$, a maximizer belongs to
$\mathcal T_I^{\rm cp}$, and
\begin{equation}
 \cS_I^{\rm sig,max}\le Cn_I(d_I^{\max})^2.
 \label{eq:wbs-deterministic-upper}
\end{equation}
For every fixed $c_0>0$, there are constants $c_1,c_2>0$ such that
\begin{equation*}
 \cS_I^{\rm sig,max}\ge c_0\Delta_n(d_I^{\max})^2
\end{equation*}
implies
\begin{equation}
 \cS_I^{\rm sig,max}-\cS_I^{\rm sig}(k)
 \ge c_1(d_I^{\max})^2
 \left\{\operatorname{dist}(k,\mathcal T_I^{\rm cp})
 \wedge c_2\Delta_n\right\},
 \qquad k\in\cK_I.
 \label{eq:wbs-deterministic-gap}
\end{equation}
Finally, if $k-s\le r_{\rm bd}$ or $e-k\le r_{\rm bd}$, then
\begin{equation}
 \cS_I^{\rm sig}(k)+\norm{\mathsf M_I(k)}^2\le Cr_{\rm bd}.
 \label{eq:wbs-boundary-score}
\end{equation}
\end{lemma}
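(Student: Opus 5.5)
The plan is to write every $\cS_I^{\rm sig}(k)$ as a maximum of explicit rational functions of $k$ and read each claim off that representation. All statements are deterministic on $\mathcal E_n^{\rm loc}$.

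\emph{Representation and convexity.} Fix $I=(s,e]$ and put $P(k)=\sum_{i=s+1}^k\mu_{i,n}^{\rm ret}$. A direct computation gives
\begin{equation*}
 \mathsf M_I(k)=\Bigl\{\frac{n_I}{(k-s)(e-k)}\Bigr\}^{1/2}
 \Bigl\{\frac{k-s}{n_I}P(e)-P(k)\Bigr\}.
\end{equation*}
Between successive points of $\mathcal T_I^{\rm cp}\cup\{s,e\}$ the mean $\mu_{i,n}^{\rm ret}$ is constant, so $P$ is affine there. Hence on each such segment the bracket equals $u+kw$ for fixed vectors $u,w\in\operatorname{span}(\phi_1,\ldots,\phi_p)$.

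On $\mathcal E_n^{\rm loc}\subseteq\mathcal E_n$ each $\widehat{\mathcal R}_{n,a}$ is positive definite. Diagonalize it, with eigenvalues $\rho_{a,j}>0$ and eigenvectors $o_{a,j}$. Then
\begin{equation*}
 \frac{\ip{\mathsf M_I(k)}{\widehat{\mathcal R}_{n,a}\mathsf M_I(k)}}{\{2\widehat A_{2,n}(a)\}^{1/2}}
 =\frac{n_I}{\{2\widehat A_{2,n}(a)\}^{1/2}}\sum_j\rho_{a,j}\frac{\ip{u+kw}{o_{a,j}}^2}{(k-s)(e-k)}.
\end{equation*}
The map $(x,y)\mapsto x^2/y$ is jointly convex on $y>0$ and nonincreasing in $y$. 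Here $x$ is affine in $k$ and $y=(k-s)(e-k)$ is concave and positive, so each summand is convex in $k$. Sums and the finite maximum over $\cG$ preserve convexity, so $k\mapsto\cS_I^{\rm sig}(k)$ is convex on each segment, uniformly over $I$.

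At $k=s$ and $k=e$ the bracket vanishes, so $\mathsf M_I=0$ by continuous extension. A convex function on a segment attains its maximum at an endpoint. Therefore, if $\cS_I^{\rm sig,max}>0$, the maximizer lies in $\mathcal T_I^{\rm cp}$.

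\emph{Upper and boundary bounds.} Since $J_{\rm cp}$ is fixed, $\norm{\mathsf M_I(k)}\le\sqrt{N_I(k)}\sum_{\tau_b\in\mathcal T_I^{\rm cp}}\norm{d_{b,n}}\le C\sqrt{n_I}\,d_I^{\max}$. Combining this with the upper half of \eqref{eq:local-signal-norm-equivalence} gives \eqref{eq:wbs-deterministic-upper}.

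For \eqref{eq:wbs-boundary-score}, use $N_I(k)\le\min(k-s,e-k)\le r_{\rm bd}$. The two segment averages defining $\mathsf M_I(k)$ are bounded by $\sup_i\norm{\mu_{i,n}^{\rm ret}}$. This supremum is bounded uniformly in $n$ by Assumption~\ref{ass:multiple-configuration} together with Lemma~\ref{prop:quadrature}, since $\norm{\mathcal J_{p,N}f}\le\norm f+\mathfrak a_{p,N}(B)$. Hence $\norm{\mathsf M_I(k)}^2\le Cr_{\rm bd}$, and \eqref{eq:local-signal-norm-equivalence} bounds $\cS_I^{\rm sig}(k)$ by the same quantity.

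\emph{Gap inequality (main obstacle).} Let $\tau^*\in\mathcal T_I^{\rm cp}$ attain the maximum, and assume $\cS_I^{\rm sig,max}\ge c_0\Delta_n(d_I^{\max})^2$. Take $k$ in a segment $[\tau',\tau'']$ of $\mathcal T_I^{\rm cp}\cup\{s,e\}$.

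Convexity of each ridge path $f_a$ gives the chord bound: $f_a(k)$ lies below the linear interpolation between its values at $\tau'$ and $\tau''$. This alone does not give \eqref{eq:wbs-deterministic-gap}, because a neighbouring change point may tie with the maximum.

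So I would instead obtain a strict one-sided decrease at every segment endpoint $\tau_b$ where $f_a$ is within $c\Delta_n(d_I^{\max})^2$ of its maximum. The argument uses the kink at $\tau_b$: the slope $w$ of the affine bracket changes by exactly $-d_{b,n}$ across $\tau_b$. I would then use the explicit form $\ip{u+kw}{\widehat{\mathcal R}_{n,a}(u+kw)}/\{(k-s)(e-k)\}$ and the spacing $\Delta_n\ge c_\Delta n$.

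The first thing to try is to compare $f_a$ near $\tau_b$ with the single-change shape $n_I\mathfrak h_{\rm cp}(x,\theta)\widehat{\mathfrak q}_{n,a}(\cdot)$, and apply the linear gap \eqref{eq:local-shape-linear-gap}. The near-maximality assumption should force the inward derivative on each side of $\tau_b$ to be at most $-c_1(d_I^{\max})^2$.

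Convexity then propagates this slope along the segment for a length $c_2\Delta_n$. Beyond that length, the chord bound together with the fact that segment endpoints are separated by order $\Delta_n$ caps the drop, which produces the truncation $\wedge\,c_2\Delta_n$. Taking the maximum over $a\in\cG$ is harmless, since each $f_a$ obeys the bound with constants uniform in $a$ by \eqref{eq:local-signal-norm-equivalence}.

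I expect this derivative-at-the-kink computation, made uniform over all intervals, ridge values, and possibly tied change points, to be the delicate step.
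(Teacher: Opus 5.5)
Your arguments for convexity on change-free segments, for the location of a maximizer, for \eqref{eq:wbs-deterministic-upper}, and for \eqref{eq:wbs-boundary-score} are sound. Deriving convexity from the jointly convex map $(x,y)\mapsto x^2/y$, with an affine numerator and a concave positive denominator, is a legitimate alternative to the paper's explicit identity. The gap inequality \eqref{eq:wbs-deterministic-gap}, however, is not established, and two steps of your sketch fail as written.

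\emph{Slope propagation runs the wrong way.} On a segment beginning at $\tau_b$, the derivative of $f_a$ is nondecreasing. So $f_a'(\tau_b+)\le -c_1(d_I^{\max})^2$ combined with convexity yields only the supporting-line lower bound $f_a(k)\ge f_a(\tau_b)+f_a'(\tau_b+)(k-\tau_b)$. It gives no upper bound on $f_a(k)$, and as far as convexity is concerned the slope may rise to zero immediately. Keeping the slope negative over a length $c_2\Delta_n$ would need an upper bound on $f_a''$ near the kink, not convexity.

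\emph{The kink-slope claim is unsupported.} When $I$ contains several changes, the normalized signal is the quadratic form \eqref{eq:wbs-finite-gram-representation} in $\sum_j\gamma_{j,I}(x)r_{j,I}$. It has cross terms between different jumps (for instance $\delta$ followed by $-\delta$). So it is not locally the single-change shape $n_I\mathfrak h_{\rm cp}(x,\theta)\widehat{\mathfrak q}_{n,a}(\cdot)$, and \eqref{eq:local-shape-linear-gap} cannot be invoked.

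What is missing is quantitative strict convexity, which your composition argument does not provide. On a change-free segment the paper writes $Q_{I,a}(t)=\norm{u}_{n,a}^2/t+\norm{u+Lv}_{n,a}^2/(L-t)-L\norm{v}_{n,a}^2$. Then $Q_{I,a}''(t)=2\norm{u}_{n,a}^2/t^3+2\norm{u+Lv}_{n,a}^2/(L-t)^3$, which vanishes on an interval only if $u=v=0$.

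The paper then proceeds as follows:
\begin{itemize}
\item It normalizes by $\Delta_n(d_I^{\max})^2$.
\item It notes that the criteria depend on the growing-dimensional directions only through the finite Gram matrices $\bm H_{I,a}^{\rm gram}$.
\item It argues by compactness. A sequence violating the gap would have a limit whose maximum $\mathfrak F$ is constant at level $\mathfrak F^{\max}\ge c_0$ on an open interval. At a non-knot limit this follows from the convex slope inequalities; at a limiting knot with zero one-sided derivative it follows from the supporting line.
\item Flatness of some $\mathfrak F_{a_0}$ then forces $\mathfrak F_{a_0}\equiv0$, a contradiction.
\end{itemize}

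A direct variant also works. If $\mathfrak F_{I,a}(x)>\mathfrak F_I^{\max}/2$, the envelope \eqref{eq:wbs-normalized-boundary-envelope} keeps $x$ a fixed distance from $0$ and $\mathfrak l_I$. This bounds the coefficients of the segment representation from below, and hence gives $\mathfrak F_{I,a}''\ge m>0$ on the segment $[\alpha,\beta]$ containing $x$. Comparing with the chord, whose endpoint values are at most $\mathfrak F_I^{\max}$, gives $\mathfrak F_I^{\max}-\mathfrak F_{I,a}(x)\ge (m/2)(x-\alpha)(\beta-x)$. The unit spacing of the normalized change locations converts this into the truncated linear gap.

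Either way, this strict-convexity step is the substance of the lemma and is absent from your proposal. It is also what would justify your claimed negative slope at near-maximal kinks.
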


\section{Preliminary operator and probability results}

For a compact operator $A$, write $\norm{A}_1$, $\norm{A}_{\HS}$ and
$\norm{A}_{\op}$ for its trace, Hilbert--Schmidt and operator norms.  For
$x,y\in\cH$, $(x\otimes y)z=\ip{y}{z}x$.  We repeatedly use
\begin{equation*}
 \norm{x\otimes y}_{\HS}=\norm{x}\norm{y},
 \qquad
 \norm{ABC}_1\le \norm{A}_{\op}\norm{B}_1\norm{C}_{\op},
 \qquad
 |\tr(AB)|\le\norm{A}_1\norm{B}_{\op}.
\end{equation*}
If $A$ has rank at most $p$, then
\begin{equation*}
 \norm{A}_1\le \sqrt p\,\norm{A}_{\HS}.
\end{equation*}
For a self-adjoint finite-rank operator $A$, let $A_+$ denote the operator
obtained by replacing every negative eigenvalue by zero.  The map
$A\mapsto A_+$ is the metric projection, in Hilbert--Schmidt norm, onto the
closed convex cone of positive semidefinite operators.  Hence, for every
$B\succeq0$,
\begin{equation}
 \norm{A_+-B}_{\HS}\le \norm{A-B}_{\HS}.
 \label{eq:PSD-projection-audited}
\end{equation}
The finite-dimensional statement is proved in \citet{Higham1988}; the same
identity applies to a finite-rank operator after restriction to its range.

\begin{lemma}[Trace-norm approximation by strongly convergent projections]
\label{lem:trace-projection}
Let $A$ be a trace-class operator on a separable Hilbert space and let
$\mathsf P_p$ be orthogonal projections satisfying $\mathsf P_pf\to f$ for every $f$.
Then
\begin{equation}
 \norm{A-\mathsf P_pA\mathsf P_p}_1\longrightarrow0.
 \label{eq:trace-projection-convergence}
\end{equation}
\end{lemma}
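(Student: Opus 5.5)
We prove \eqref{eq:trace-projection-convergence} by an $\varepsilon/3$ argument: first approximate $A$ in trace norm by a finite-rank operator, then use strong convergence of $\mathsf P_p$ on the finitely many vectors that define it. Write the singular value decomposition $A=\sum_{k\ge1}s_k\,x_k\otimes y_k$, where $s_k\ge0$, $\sum_k s_k=\norm{A}_1<\infty$, and $\{x_k\}$, $\{y_k\}$ are orthonormal. Fix $\varepsilon>0$ and choose $K$ with $\sum_{k>K}s_k<\varepsilon$. Put $A_K=\sum_{k\le K}s_k\,x_k\otimes y_k$, so that $\norm{A-A_K}_1<\varepsilon$.

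Next we split
\begin{equation*}
 A-\mathsf P_pA\mathsf P_p
 =(A-A_K)-\mathsf P_p(A-A_K)\mathsf P_p+(A_K-\mathsf P_pA_K\mathsf P_p).
\end{equation*}
By the ideal inequality $\norm{ABC}_1\le\norm{A}_{\op}\norm{B}_1\norm{C}_{\op}$ and $\norm{\mathsf P_p}_{\op}\le1$, the first two terms together have trace norm at most $2\varepsilon$, uniformly in $p$. For the finite-rank part we use $\mathsf P_p(x\otimes y)\mathsf P_p=(\mathsf P_px)\otimes(\mathsf P_py)$, since $\mathsf P_p$ is self-adjoint. Combined with $\norm{u\otimes v}_1=\norm u\norm v$, this gives
\begin{equation*}
 \norm{x\otimes y-\mathsf P_px\otimes\mathsf P_py}_1
 \le\norm{x-\mathsf P_px}\norm{y}+\norm{\mathsf P_px}\norm{y-\mathsf P_py}
 \le\norm{x-\mathsf P_px}+\norm{y-\mathsf P_py}
\end{equation*}
for unit vectors. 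Summing over $k\le K$ with weights $s_k$ yields
\begin{equation*}
 \norm{A_K-\mathsf P_pA_K\mathsf P_p}_1
 \le\sum_{k\le K}s_k\{\norm{x_k-\mathsf P_px_k}+\norm{y_k-\mathsf P_py_k}\}.
\end{equation*}
This is a finite sum, so it tends to zero by strong convergence of $\mathsf P_p$.

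Hence $\limsup_p\norm{A-\mathsf P_pA\mathsf P_p}_1\le2\varepsilon$ for every $\varepsilon>0$, which proves the claim. No step is a real obstacle. The only point needing care is to use the trace norm of rank-one differences, via the two-term telescoping above, rather than the Hilbert--Schmidt norm; this keeps the bound finite-rank and explicit. Self-adjointness of $A$ is not needed, since the singular value decomposition covers general trace-class operators.
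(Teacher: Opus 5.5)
Your proof is correct and follows essentially the same route as the paper: a finite-rank approximation in trace norm, the three-term split controlled by the ideal inequality, and the rank-one telescoping bound combined with strong convergence. The only difference is cosmetic. You construct the finite-rank approximant explicitly by truncating the singular value decomposition, whereas the paper cites the density of finite-rank operators in the trace class.
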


\begin{proof}
For $\varepsilon_{\rm app}>0$, choose a finite-rank operator
\[
 A_{\rm fin}=\sum_{j_{\rm fin}=1}^{J_{\rm fin}}
 x_{j_{\rm fin}}\otimes y_{j_{\rm fin}},
 \qquad
 \norm{A-A_{\rm fin}}_1<\varepsilon_{\rm app},
\]
using the density of finite-rank operators in the trace class
\citep[Chapter~2]{Simon2005}.  Since $\norm{\mathsf P_p}_{\op}\le1$,
\begin{align*}
 \norm{A-\mathsf P_pA\mathsf P_p}_1
 &\le \norm{A-A_{\rm fin}}_1
      +\norm{A_{\rm fin}-\mathsf P_pA_{\rm fin}\mathsf P_p}_1
      +\norm{\mathsf P_p(A_{\rm fin}-A)\mathsf P_p}_1\\
 &\le2\varepsilon_{\rm app}
      +\norm{A_{\rm fin}-\mathsf P_pA_{\rm fin}\mathsf P_p}_1.
\end{align*}
For $1\le j_{\rm fin}\le J_{\rm fin}$,
\begin{align*}
 &x_{j_{\rm fin}}\otimes y_{j_{\rm fin}}
 -\mathsf P_px_{j_{\rm fin}}\otimes \mathsf P_py_{j_{\rm fin}}\\
 &\quad=(\Id-\mathsf P_p)x_{j_{\rm fin}}\otimes y_{j_{\rm fin}}
 +\mathsf P_px_{j_{\rm fin}}\otimes(\Id-\mathsf P_p)y_{j_{\rm fin}},\\
 &\norm{x_{j_{\rm fin}}\otimes y_{j_{\rm fin}}
 -\mathsf P_px_{j_{\rm fin}}\otimes \mathsf P_py_{j_{\rm fin}}}_1\\
 &\quad\le
 \norm{(\Id-\mathsf P_p)x_{j_{\rm fin}}}\norm{y_{j_{\rm fin}}}
 +\norm{\mathsf P_px_{j_{\rm fin}}}\norm{(\Id-\mathsf P_p)y_{j_{\rm fin}}}.
\end{align*}
The strong convergence $\mathsf P_pf\to f$ therefore gives
\[
 \norm{A_{\rm fin}-\mathsf P_pA_{\rm fin}\mathsf P_p}_1\longrightarrow0,
 \qquad
 \limsup_{p\to\infty}\norm{A-\mathsf P_pA\mathsf P_p}_1
 \le2\varepsilon_{\rm app}.
\]
Letting $\varepsilon_{\rm app}\downarrow0$ proves
\eqref{eq:trace-projection-convergence}.
\end{proof}

\begin{lemma}[Centered rank-one covariance decomposition]
\label{lem:rank-one-cumulant}
Let $x_1,x_2,x_3,x_4$ be centered random elements of a real separable
Hilbert space with finite fourth moments.  Write
$\mathsf C_{ab}=\E(x_a\otimes x_b)$ for $1\le a,b\le4$, and let
$\kappa_4(x_1,x_2,x_3,x_4)$ be defined by
\eqref{eq:contracted-cumulant-definition}.  Then
\begin{align}
 &\E\ip{x_1\otimes x_2-\mathsf C_{12}}
 {x_3\otimes x_4-\mathsf C_{34}}_{\HS}
 \notag\\
 &\qquad=\tr(\mathsf C_{13})\tr(\mathsf C_{24})
       +\tr(\mathsf C_{14}\mathsf C_{23})+\kappa_4(x_1,x_2,x_3,x_4).
 \label{eq:rank-one-cumulant-identity}
\end{align}
Consequently,
\begin{align}
 &\left|\E\ip{x_1\otimes x_2-\mathsf C_{12}}
 {x_3\otimes x_4-\mathsf C_{34}}_{\HS}\right|
 \notag\\
 &\qquad\le \norm{\mathsf C_{13}}_1\norm{\mathsf C_{24}}_1
       +\norm{\mathsf C_{14}}_1\norm{\mathsf C_{23}}_1
       +|\kappa_4(x_1,x_2,x_3,x_4)|.
 \label{eq:rank-one-cumulant-bound}
\end{align}
\end{lemma}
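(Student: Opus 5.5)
The identity \eqref{eq:rank-one-cumulant-identity} is a direct expansion of the centered Hilbert--Schmidt inner product. The plan is to compute the left-hand side in coordinates of an orthonormal basis $\{f_k\}$ of the underlying separable Hilbert space and then regroup the terms according to the definition \eqref{eq:contracted-cumulant-definition}.

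First, bilinearity of $\ip{\cdot}{\cdot}_{\HS}$ and the equalities $\E(x_1\otimes x_2)=\mathsf C_{12}$ and $\E(x_3\otimes x_4)=\mathsf C_{34}$ give
\begin{equation*}
 \E\ip{x_1\otimes x_2-\mathsf C_{12}}{x_3\otimes x_4-\mathsf C_{34}}_{\HS}
 =\E\ip{x_1\otimes x_2}{x_3\otimes x_4}_{\HS}-\ip{\mathsf C_{12}}{\mathsf C_{34}}_{\HS}.
\end{equation*}
The interchange of expectation and inner product is justified by Bochner integrability in the Hilbert--Schmidt class. Indeed, $\norm{x_a\otimes x_b}_{\HS}=\norm{x_a}\norm{x_b}$ has finite second moment by Cauchy--Schwarz and the fourth-moment assumption.

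Next, the rank-one identity $\ip{x\otimes y}{u\otimes v}_{\HS}=\ip{x}{u}\ip{y}{v}$ turns the first expectation into $\E\{\ip{x_1}{x_3}\ip{x_2}{x_4}\}$. Substituting the definition of $\kappa_4$ gives
\begin{equation*}
 \E\{\ip{x_1}{x_3}\ip{x_2}{x_4}\}-\ip{\mathsf C_{12}}{\mathsf C_{34}}_{\HS}
 =\tr(\mathsf C_{13})\tr(\mathsf C_{24})+\tr(\mathsf C_{14}\mathsf C_{23})+\kappa_4(x_1,x_2,x_3,x_4).
\end{equation*}
This is exactly \eqref{eq:rank-one-cumulant-identity}. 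The one point that needs care is that every trace appearing here is well defined. Each $\mathsf C_{ab}$ is trace class, since $\norm{\mathsf C_{ab}}_1\le\E\norm{x_a}\norm{x_b}<\infty$. The operator $\mathsf C_{14}\mathsf C_{23}$ is then a product of two trace-class operators, hence trace class.

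For \eqref{eq:rank-one-cumulant-bound}, apply the triangle inequality to the identity and bound each covariance term separately. For the first term use $|\tr(A)|\le\norm{A}_1$. For the second use $|\tr(AB)|\le\norm{A}_1\norm{B}_{\op}\le\norm{A}_1\norm{B}_1$, which was recorded at the start of this section.

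No step is a genuine obstacle; the proof is essentially bookkeeping. The only place needing attention is the adjoint/order convention for $x\otimes y$, which determines whether the mixed term comes out as $\tr(\mathsf C_{14}\mathsf C_{23})$ or as a transposed variant. I would fix this by checking the term coordinatewise with $(x\otimes y)z=\ip{y}{z}x$. Since the identity holds by the definition of $\kappa_4$ whatever grouping is chosen, the convention only has to match \eqref{eq:contracted-cumulant-definition}.
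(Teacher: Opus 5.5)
Your proposal is correct and follows the paper's proof essentially line for line. You expand by bilinearity and the rank-one identity $\ip{x_1\otimes x_2}{x_3\otimes x_4}_{\HS}=\ip{x_1}{x_3}\ip{x_2}{x_4}$, substitute the definition of $\kappa_4$ so that $\ip{\mathsf C_{12}}{\mathsf C_{34}}_{\HS}$ cancels, and bound the remaining traces via $|\tr(AB)|\le\norm{A}_1\norm{B}_{\op}\le\norm{A}_1\norm{B}_1$; your added remarks on Bochner integrability and on trace-class membership of the $\mathsf C_{ab}$ are sound details that the paper leaves implicit.
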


\begin{proof}
The rank-one identity
\[
 \ip{x_1\otimes x_2}{x_3\otimes x_4}_{\HS}
 =\ip{x_1}{x_3}\ip{x_2}{x_4}
\]
implies
\begin{align*}
 &\E\ip{x_1\otimes x_2-\mathsf C_{12}}
 {x_3\otimes x_4-\mathsf C_{34}}_{\HS}\\
 &\quad=
 \E\{\ip{x_1}{x_3}\ip{x_2}{x_4}\}
 -\ip{\mathsf C_{12}}{\mathsf C_{34}}_{\HS}.
\end{align*}
By \eqref{eq:contracted-cumulant-definition},
\begin{align*}
 \E\{\ip{x_1}{x_3}\ip{x_2}{x_4}\}
 &=\tr(\mathsf C_{13})\tr(\mathsf C_{24})
   +\ip{\mathsf C_{12}}{\mathsf C_{34}}_{\HS}\\
 &\quad+\tr(\mathsf C_{14}\mathsf C_{23})
   +\kappa_4(x_1,x_2,x_3,x_4).
\end{align*}
Cancellation of $\ip{\mathsf C_{12}}{\mathsf C_{34}}_{\HS}$ gives
\eqref{eq:rank-one-cumulant-identity}.  Moreover,
\[
 |\tr(AB)|
 \le\norm A_1\norm B_{\op}
 \le\norm A_1\norm B_1,
\]
which yields \eqref{eq:rank-one-cumulant-bound}.
\end{proof}

\begin{lemma}[Replacement bound for the contracted cumulant]
\label{lem:cumulant-replacement}
Let $x_j,x_j'$, $1\le j\le4$, be centered Hilbert-space random elements
with finite fourth moments and
\begin{equation*}
 \max_{j\le4}\{\norm{x_j}_{L^4(\cH)},
                    \norm{x_j'}_{L^4(\cH)}\}\le C_4^{\rm mom}.
\end{equation*}
Then
\begin{equation}
 \left|\kappa_4(x_1,x_2,x_3,x_4)
       -\kappa_4(x_1',x_2',x_3',x_4')\right|
 \le C (C_4^{\rm mom})^3\sum_{j=1}^4\norm{x_j-x_j'}_{L^4(\cH)}.
 \label{eq:cumulant-replacement-bound}
\end{equation}
If the four variables can be divided into two nonempty independent
subcollections, then their contracted fourth cumulant is zero.
\end{lemma}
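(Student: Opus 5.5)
The replacement bound will follow from telescoping the four-linear form $\kappa_4$ one argument at a time. The independence claim will follow from factoring expectations and checking that the three Gaussian pairings reproduce exactly the one surviving product.

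For the replacement bound, I would first note that each of the four terms in \eqref{eq:contracted-cumulant-definition} is a multilinear form in $(x_1,x_2,x_3,x_4)$. This is obvious for the moment term $\E\{\ip{x_1}{x_3}\ip{x_2}{x_4}\}$. For the three covariance terms, $\kappa_4$ is not literally multilinear in the random elements, but it is multilinear with respect to joint laws: replacing $x_1$ by $x_1'$ while keeping $x_2,x_3,x_4$ fixed is meaningful on the common probability space. So I write
\[
\kappa_4(x_1,\dots,x_4)-\kappa_4(x_1',\dots,x_4')=\sum_{j=1}^4\Big[\kappa_4(x_1',\dots,x_{j-1}',x_j,\dots,x_4)-\kappa_4(x_1',\dots,x_j',x_{j+1},\dots,x_4)\Big].
\]
Each bracket is $\kappa_4$ evaluated with the $j$th slot equal to $x_j-x_j'$, because each term is linear in that slot.

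I then bound each bracket term by term. By Cauchy--Schwarz and H\"older,
\[
|\E\{\ip{y_1}{y_3}\ip{y_2}{y_4}\}|\le\prod_j\norm{y_j}_{L^4(\cH)}.
\]
Next, $|\tr(\mathsf C_{13})|\le\E|\ip{y_1}{y_3}|\le\norm{y_1}_{L^2}\norm{y_3}_{L^2}$, and $L^2\le L^4$ norms. For the Hilbert--Schmidt term, $\norm{\mathsf C_{ab}}_{\HS}\le\E\norm{y_a}\norm{y_b}$. For the last term, $|\tr(\mathsf C_{14}\mathsf C_{23})|\le\norm{\mathsf C_{14}}_{\HS}\norm{\mathsf C_{23}}_{\HS}$. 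Each of the four terms is therefore at most $\prod_j\norm{y_j}_{L^4}$. With one slot equal to $x_j-x_j'$ and the others bounded by $C_4^{\rm mom}$, this gives \eqref{eq:cumulant-replacement-bound} with $C=16$. This part is routine.

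For the vanishing claim, I would split into cases according to the partition. Suppose first that the partition is $\{x_1\}$ versus the rest, or any single variable versus the rest. The moment term factors, and each covariance with the isolated centered variable vanishes. So $\kappa_4=0-0-0-0$, once I check that the moment term is also zero: it is $\E\ip{x_1}{\E[\ldots]}$-type, and by independence and centering, $\E\{\ip{x_1}{x_3}\ip{x_2}{x_4}\}=\E\ip{x_1}{\E(x_3\ip{x_2}{x_4})}$ after conditioning. More cleanly, $\E\{x_1\otimes(\cdot)\}$ factors with $\E x_1=0$.

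Two-versus-two partitions need three subcases. For $\{1,3\}|\{2,4\}$, the moment term equals $\tr\mathsf C_{13}\tr\mathsf C_{24}$, and $\mathsf C_{12}=\mathsf C_{34}=\mathsf C_{14}=0$, so the difference vanishes. For $\{1,2\}|\{3,4\}$, the moment term is $\E\ip{x_1\otimes x_2}{x_3\otimes x_4}_{\HS}=\ip{\mathsf C_{12}}{\mathsf C_{34}}_{\HS}$ by independence, and the other covariances vanish. For $\{1,4\}|\{2,3\}$, the moment term is $\sum_{k,l}\E[x_{1k}x_{4l}]\E[x_{2l}x_{3k}]=\tr(\mathsf C_{14}\mathsf C_{23})$ under the convention $(x\otimes y)z=\ip{y}{z}x$, and the rest vanish.

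The only real care needed is this index bookkeeping in the $\{1,4\}|\{2,3\}$ case. There I verify in coordinates that $\tr(\mathsf C_{14}\mathsf C_{23})=\sum_{k,l}\E[x_{1k}x_{4l}]\E[x_{2l}x_{3k}]$, with the series converging absolutely by the fourth moments. This identification is where an orientation slip (transpose) could occur.
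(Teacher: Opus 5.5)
Your proposal is correct and follows essentially the same route as the paper. Both arguments telescope one slot at a time, bounding each of the four terms by H\"older and Hilbert--Schmidt inequalities, and both factor case by case over the two-block partitions, including the coordinate identification $\tr(\mathsf C_{14}\mathsf C_{23})=\sum_{k,l}\E[x_{1k}x_{4l}]\E[x_{2l}x_{3k}]$, which you orient correctly. (Minor points: every term, including the three covariance pairings, is genuinely multilinear in the random elements themselves, so no appeal to joint laws is needed; your bookkeeping actually yields $C=4$; and in the one-versus-three case the factorization should read $\ip{\E x_1}{\E(x_3\ip{x_2}{x_4})}=0$.)
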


\begin{proof}
For
\[
 \bm x^{(j_{\rm rep})}
 =(x_1',\ldots,x_{j_{\rm rep}}',x_{j_{\rm rep}+1},\ldots,x_4),
 \qquad 0\le j_{\rm rep}\le4,
\]
the telescoping identity is
\[
 \kappa_4(\bm x^{(0)})-\kappa_4(\bm x^{(4)})
 =\sum_{j_{\rm rep}=1}^{4}
 \{\kappa_4(\bm x^{(j_{\rm rep}-1)})
 -\kappa_4(\bm x^{(j_{\rm rep})})\}.
\]
It therefore suffices to replace one argument.  For the raw fourth moment,
\begin{align*}
 &\left|
 \E\{\ip{x_1}{x_3}\ip{x_2}{x_4}\}
 -\E\{\ip{x_1'}{x_3}\ip{x_2}{x_4}\}
 \right|\\
 &\quad\le
 \norm{x_1-x_1'}_{L^4}
 \norm{x_2}_{L^4}\norm{x_3}_{L^4}\norm{x_4}_{L^4}
 \le (C_4^{\rm mom})^3\norm{x_1-x_1'}_{L^4}.
\end{align*}
For covariance operators,
\begin{align*}
 \norm{\mathsf C_{xy}}_{\HS}
 &\le\norm{x}_{L^2}\norm{y}_{L^2},\\
 |\tr(\mathsf C_{xy})|
 &\le\norm{x}_{L^2}\norm{y}_{L^2},\\
 \norm{\mathsf C_{xy}-\mathsf C_{x'y}}_{\HS}
 &\le\norm{x-x'}_{L^2}\norm{y}_{L^2}.
\end{align*}
Each covariance pairing in
\eqref{eq:contracted-cumulant-definition} consequently changes by at most
\[
 C (C_4^{\rm mom})^3\norm{x_1-x_1'}_{L^4}.
\]
Summation over the four telescoping replacements proves
\eqref{eq:cumulant-replacement-bound}.

For the independence assertion, expand with an orthonormal basis
$\{v_j\}$:
\[
 \E\{\ip{x_1}{x_3}\ip{x_2}{x_4}\}
 =
 \sum_{j,k}
 \E\!\left[
 \ip{x_1}{v_j}\ip{x_3}{v_j}
 \ip{x_2}{v_k}\ip{x_4}{v_k}
 \right].
\]
Fubini is justified by
\[
 \sum_j|\ip{x}{v_j}\ip{y}{v_j}|
 \le\norm{x}\norm{y}.
\]
For the three two-block partitions, independence gives respectively
\begin{align*}
 (x_1,x_2)\perp(x_3,x_4)
 &\Longrightarrow
 \E\{\ip{x_1}{x_3}\ip{x_2}{x_4}\}
 =\ip{\mathsf C_{12}}{\mathsf C_{34}}_{\HS},\\
 (x_1,x_3)\perp(x_2,x_4)
 &\Longrightarrow
 \E\{\ip{x_1}{x_3}\ip{x_2}{x_4}\}
 =\tr(\mathsf C_{13})\tr(\mathsf C_{24}),\\
 (x_1,x_4)\perp(x_2,x_3)
 &\Longrightarrow
 \E\{\ip{x_1}{x_3}\ip{x_2}{x_4}\}
 =\tr(\mathsf C_{14}\mathsf C_{23}).
\end{align*}
If one centered variable is independent of the other three, the raw moment
and every pairing involving that variable vanish.  Hence
$\kappa_4=0$ for every nontrivial independent partition; see
\citet[Section~2.3]{Brillinger2001}.
\end{proof}

\begin{lemma}[Moment maximal inequality for Bernoulli shifts]
\label{lem:Bernoulli-maximal}
Let $\nu_0\ge2$ and let
$\mathsf Y_i=\mathfrak G_{\mathsf Y}(\eta_i,\eta_{i-1},\ldots)$ be centered
random elements of a real separable Hilbert space $\mathbb H$.  Define
$\mathsf Y_{i,M}$ by the independent-copy construction in
\eqref{eq:ARS-M-approximation}, and put
\begin{equation*}
 \mathfrak d_{\mathsf Y,\nu_0}(M)
 =\norm{\mathsf Y_0-\mathsf Y_{0,M}}_{L^{\nu_0}(\mathbb H)}.
\end{equation*}
If $\norm{\mathsf Y_0}_{L^{\nu_0}(\mathbb H)}<\infty$ and
$\sum_{M\ge1}\mathfrak d_{\mathsf Y,\nu_0}(M)<\infty$, then
\begin{equation}
 \left\|\max_{1\le k\le n}
 \left\|\sum_{i=1}^{k}\mathsf Y_i\right\|_{\mathbb H}\right\|_{L^{\nu_0}}
 \le C_{\nu_0}\sqrt n\left\{\norm{\mathsf Y_0}_{L^{\nu_0}(\mathbb H)}
       +\sum_{M=1}^{\infty}\mathfrak d_{\mathsf Y,\nu_0}(M)\right\}.
 \label{eq:Bernoulli-maximal}
\end{equation}
The constant $C_{\nu_0}$ depends only on $\nu_0$.
\end{lemma}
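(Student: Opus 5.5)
We will prove \eqref{eq:Bernoulli-maximal} by a martingale-difference (projective) decomposition combined with Doob's and Burkholder's inequalities in the Hilbert space $\mathbb H$. Let $\mathcal F_i=\sigma(\eta_i,\eta_{i-1},\ldots)$ and define the projection operators $\mathcal P_{i-j}\mathsf Y_i=\E(\mathsf Y_i\mid\mathcal F_{i-j})-\E(\mathsf Y_i\mid\mathcal F_{i-j-1})$ for $j\ge0$. Since $\mathsf Y_i$ is centered and the innovations are i.i.d., the left tail $\E(\mathsf Y_i\mid\mathcal F_{i-j})\to0$ in $L^{\nu_0}$ as $j\to\infty$. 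This holds because $\E(\mathsf Y_{i,j}\mid\mathcal F_{i-j})=\E\mathsf Y_i=0$, as $\mathsf Y_{i,j}$ depends on $\mathcal F_{i-j}$ only through independent copies. Hence
\begin{equation*}
 \norm{\E(\mathsf Y_i\mid\mathcal F_{i-j})}_{L^{\nu_0}}\le\mathfrak d_{\mathsf Y,\nu_0}(j)\to0,
\end{equation*}
and we obtain the $L^{\nu_0}$-convergent representation $\mathsf Y_i=\sum_{j\ge0}\mathcal P_{i-j}\mathsf Y_i$.

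\textbf{Step 1: bound each projection.} The key bound is
\begin{equation*}
 \norm{\mathcal P_{i-j}\mathsf Y_i}_{L^{\nu_0}}\le 2\,\mathfrak d_{\mathsf Y,\nu_0}(j)
 \quad (j\ge1),
 \qquad
 \norm{\mathcal P_{i}\mathsf Y_i}_{L^{\nu_0}}\le2\norm{\mathsf Y_0}_{L^{\nu_0}}.
\end{equation*}
For $j\ge1$, write $\mathsf Y_i=\mathsf Y_{i,j}+(\mathsf Y_i-\mathsf Y_{i,j})$. The term $\mathsf Y_{i,j}$ is a function of $(\eta_i,\ldots,\eta_{i-j+1})$ and of independent copies, so it is independent of $\mathcal F_{i-j}$ given nothing else. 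Its conditional expectations given $\mathcal F_{i-j}$ and given $\mathcal F_{i-j-1}$ therefore both equal $0$, and $\mathcal P_{i-j}\mathsf Y_{i,j}=0$. Conditional Jensen (valid in Hilbert space) applied to the remainder then gives the $2\mathfrak d$ bound. By stationarity the bounds do not depend on $i$.

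\textbf{Step 2: martingale maximal inequality.} For fixed $j$, the sequence $\{\mathcal P_{i-j}\mathsf Y_i\}_i$ is a martingale difference sequence with respect to $\{\mathcal F_{i-j}\}_i$. Set $S_k^{(j)}=\sum_{i=1}^k\mathcal P_{i-j}\mathsf Y_i$. Since $\mathbb H$ is a Hilbert space, and hence $2$-smooth, Doob's inequality applied to the submartingale $\norm{S_k^{(j)}}$ combined with the Burkholder--Rosenthal inequality for $\mathbb H$-valued martingales gives
\begin{equation*}
 \norm{\max_{k\le n}\norm{S_k^{(j)}}}_{L^{\nu_0}}
 \le C_{\nu_0}\norm{\Big(\sum_{i=1}^n\norm{\mathcal P_{i-j}\mathsf Y_i}^2\Big)^{1/2}}_{L^{\nu_0}}.
\end{equation*}
Because $\nu_0\ge2$, Minkowski's inequality in $L^{\nu_0/2}$ bounds the right-hand side by $C_{\nu_0}\sqrt n\,\max_i\norm{\mathcal P_{i-j}\mathsf Y_i}_{L^{\nu_0}}$.

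\textbf{Step 3: sum over $j$.} The maximum of the sum is at most the sum of the maxima, so by the triangle inequality in $L^{\nu_0}$,
\begin{equation*}
 \norm{\max_k\norm{\textstyle\sum_{i\le k}\mathsf Y_i}}_{L^{\nu_0}}
 \le\sum_{j\ge0}\norm{\max_k\norm{S_k^{(j)}}}_{L^{\nu_0}}
 \le 2C_{\nu_0}\sqrt n\Big\{\norm{\mathsf Y_0}_{L^{\nu_0}}+\sum_{M\ge1}\mathfrak d_{\mathsf Y,\nu_0}(M)\Big\},
\end{equation*}
which is \eqref{eq:Bernoulli-maximal}.

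\textbf{Main obstacle.} The delicate points are justifying the interchange of the infinite $j$-sum with the maximum over $k$, and confirming that $\mathcal P_{i-j}\mathsf Y_{i,j}=0$. For the interchange, first truncate the representation to $j\le J$, bound the remainder $\sum_{i\le k}\E(\mathsf Y_i\mid\mathcal F_{i-J-1})$ crudely by $n\,\mathfrak d_{\mathsf Y,\nu_0}(J+1)$, and let $J\to\infty$ for fixed $n$. For the vanishing projection, the independence claim must be checked carefully against the indexing of the starred copies in \eqref{eq:ARS-M-approximation}: we need that the copies entering $\mathsf Y_{i,j}$ are independent of the entire original innovation sequence.
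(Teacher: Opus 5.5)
Your proposal is correct and follows essentially the same route as the paper: the projective decomposition $\mathsf Y_i=\sum_{j\ge0}\mathcal P_{i-j}\mathsf Y_i$, the bound $\norm{\mathcal P_{i-j}\mathsf Y_i}_{L^{\nu_0}}\le2\mathfrak d_{\mathsf Y,\nu_0}(j)$ obtained from $\mathcal P_{i-j}\mathsf Y_{i,j}=0$, a Hilbert-space martingale maximal inequality followed by Minkowski in $L^{\nu_0/2}$, and summation over lags. The differences are minor. You use Doob plus Burkholder's square-function inequality where the paper cites Pinelis's Rosenthal-type maximal inequality. You also close the truncation with the explicit remainder bound $n\,\mathfrak d_{\mathsf Y,\nu_0}(J+1)$ for fixed $n$, which makes the tail-field and martingale-convergence step unnecessary, instead of the paper's Cauchy argument in $L^{\nu_0}(\ell^\infty)$.
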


\begin{proof}
Let
\[
 \mathcal F_i=\sigma(\eta_i,\eta_{i-1},\ldots),
 \qquad
 \mathcal P_i Z=\E(Z\mid\mathcal F_i)-\E(Z\mid\mathcal F_{i-1}).
\]
The tail field of the independent innovations is trivial.  Martingale
convergence therefore gives
\[
 \mathsf Y_i=\sum_{\jmath=0}^{\infty}
 \mathcal P_{i-\jmath}\mathsf Y_i
 \quad\text{in }L^{\nu_0}(\mathbb H).
\]
For $\jmath\ge1$, $\mathsf Y_{i,\jmath}$ is independent of
$\mathcal F_{i-\jmath}$ and has mean zero.  Conditional expectation is a
contraction in $L^{\nu_0}$, so
\begin{align*}
 \mathcal P_{i-\jmath}\mathsf Y_i
 &=\mathcal P_{i-\jmath}(\mathsf Y_i-\mathsf Y_{i,\jmath}),\\
 \norm{\mathcal P_{i-\jmath}\mathsf Y_i}_{L^{\nu_0}}
 &\le2\mathfrak d_{\mathsf Y,\nu_0}(\jmath).
\end{align*}
For $\jmath=0$,
$\norm{\mathcal P_i\mathsf Y_i}_{L^{\nu_0}}
 \le2\norm{\mathsf Y_0}_{L^{\nu_0}}$.

For fixed $\jmath$, put
$\mathsf D_{i,\jmath}=\mathcal P_i\mathsf Y_{i+\jmath}$.  This is a
stationary Hilbert-valued martingale-difference sequence.  By the maximal
inequality of \citet{Pinelis1994},
\begin{align*}
 &\left\|\max_{1\le k\le n}
 \left\|\sum_{i=1}^{k}\mathsf D_{i,\jmath}\right\|\right\|_{L^{\nu_0}}\\
 &\quad\le C_{\nu_0}\left[
 \left\|\left\{\sum_{i=1}^{n}
 \E(\norm{\mathsf D_{i,\jmath}}^2\mid\mathcal F_{i-1})\right\}^{1/2}
 \right\|_{L^{\nu_0}}
 +\left\{\sum_{i=1}^{n}
 \norm{\mathsf D_{i,\jmath}}_{L^{\nu_0}}^{\nu_0}\right\}^{1/\nu_0}
 \right].
\end{align*}
Minkowski's inequality in $L^{\nu_0/2}$ gives
\begin{equation*}
 \left\|\left\{\sum_{i=1}^{n}
 \E(\norm{\mathsf D_{i,\jmath}}^2\mid\mathcal F_{i-1})\right\}^{1/2}
 \right\|_{L^{\nu_0}}
 \le\sqrt n\,\norm{\mathsf D_{0,\jmath}}_{L^{\nu_0}},
\end{equation*}
and the second term is at most
$n^{1/\nu_0}\norm{\mathsf D_{0,\jmath}}_{L^{\nu_0}}
\le\sqrt n\norm{\mathsf D_{0,\jmath}}_{L^{\nu_0}}$.  For every $k$,
\begin{equation*}
 \sum_{i=1}^{k}\mathcal P_{i-\jmath}\mathsf Y_i
 =\sum_{i_0=1-\jmath}^{k-\jmath}\mathsf D_{i_0,\jmath}.
\end{equation*}
Stationarity of the martingale-difference sequence implies that the process on
the right, indexed by $1\le k\le n$, has the same law as
$\{\sum_{i_0=1}^{k}\mathsf D_{i_0,\jmath}:1\le k\le n\}$.  Consequently,
\begin{equation*}
 \left\|\max_{1\le k\le n}
 \left\|\sum_{i=1}^{k}
 \mathcal P_{i-\jmath}\mathsf Y_i\right\|\right\|_{L^{\nu_0}}
 \le
 \begin{cases}
 C_{\nu_0}\sqrt n\norm{\mathsf Y_0}_{L^{\nu_0}},&\jmath=0,\\
 C_{\nu_0}\sqrt n\mathfrak d_{\mathsf Y,\nu_0}(\jmath),&\jmath\ge1.
 \end{cases}
\end{equation*}

For $R_{\rm proj}<\infty$, Minkowski's inequality over $\jmath$ now gives
\begin{align*}
 &\left\|\max_{1\le k\le n}
 \left\|\sum_{i=1}^{k}\sum_{\jmath=0}^{R_{\rm proj}}
 \mathcal P_{i-\jmath}\mathsf Y_i\right\|\right\|_{L^{\nu_0}}\\
 &\quad\le C_{\nu_0}\sqrt n\left\{
 \norm{\mathsf Y_0}_{L^{\nu_0}}
 +\sum_{\jmath=1}^{R_{\rm proj}}
 \mathfrak d_{\mathsf Y,\nu_0}(\jmath)\right\}.
\end{align*}
The same bound applied to a projection tail
$R_1<\jmath\le R_2$ is at most
\[
 C_{\nu_0}\sqrt n\sum_{\jmath>R_1}
 \mathfrak d_{\mathsf Y,\nu_0}(\jmath)
 \longrightarrow0
 \qquad (R_1\to\infty).
\]
Thus the truncated partial-sum processes are Cauchy in
$L^{\nu_0}\{\ell^\infty(\{1,\ldots,n\},\mathbb H)\}$, and their
coordinatewise limit is the partial-sum process of $\mathsf Y_i$.  Letting
$R_{\rm proj}\to\infty$ proves \eqref{eq:Bernoulli-maximal}.
\end{proof}

\begin{lemma}[Gaussian coupling from ARS approximability]
\label{lem:ARS-FCLT}
Let $\mathsf Y_i=\mathfrak G_{\mathsf Y}(\eta_i,\eta_{i-1},\ldots)$ be centered random elements of
$\cH=L^2[0,1]$, and define $\mathsf Y_{i,M}$ by the independent-copy construction
in \eqref{eq:ARS-M-approximation}.  Suppose that, for some $\nu_0>2$,
\begin{align}
 &\E\norm{\mathsf Y_0}_{\cH}^{\nu_0}<\infty,
 \notag\\
 &\sum_{M=1}^{\infty}
 \left\{\E\norm{\mathsf Y_0-\mathsf Y_{0,M}}_{\cH}^{\nu_0}\right\}^{1/\nu_0}<\infty.
 \label{eq:Jirak-condition}
\end{align}
Let
\begin{equation*}
 \Omega_{\mathsf Y}=\sum_{u\in\mathbb Z}\E(\mathsf Y_u\otimes \mathsf Y_0),
\end{equation*}
where the series exists in trace norm.  On an extension of the probability
space there are centered Gaussian processes $\mathbb G_n(t)$,
$0\le t\le1$, such that
\begin{equation*}
 \E\{\mathbb G_n(t)\otimes\mathbb G_n(t')\}
 =(t\wedge t')\Omega_{\mathsf Y}
\end{equation*}
and
\begin{equation*}
 \sup_{0\le t\le1}
 \left\|\frac1{\sqrt n}\sum_{i=1}^{\lfloor nt\rfloor}\mathsf Y_i
       -\mathbb G_n(t)\right\|_{\cH}^{2}=o_{\Pp}(1).
\end{equation*}
Consequently,
\begin{equation}
 \frac1{\sqrt n}\sum_{i=1}^{\lfloor nt\rfloor}\mathsf Y_i
 \dto W_{\Omega_{\mathsf Y}}(t)
 \quad\text{in }D([0,1],\cH),
 \label{eq:general-ARS-FCLT}
\end{equation}
where $W_{\Omega_{\mathsf Y}}$ is Brownian motion with covariance operator
$\Omega_{\mathsf Y}$.
\end{lemma}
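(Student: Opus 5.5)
The strategy is to combine the martingale (projective) decomposition from the proof of Lemma~\ref{lem:Bernoulli-maximal} with a finite-dimensional strong approximation and a tail truncation in $\cH$.

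\textbf{Step 1: martingale approximation.} Write $\mathsf Y_i=\sum_{\jmath\ge0}\mathcal P_{i-\jmath}\mathsf Y_i$ and set
\[
 \mathsf D_i=\sum_{\jmath\ge0}\mathcal P_i\mathsf Y_{i+\jmath}.
\]
By the bound $\norm{\mathcal P_{i-\jmath}\mathsf Y_i}_{L^{\nu_0}}\le2\mathfrak d_{\mathsf Y,\nu_0}(\jmath)$ and \eqref{eq:Jirak-condition}, this series converges in $L^{\nu_0}(\cH)$. The sequence $\{\mathsf D_i\}$ is then a stationary, ergodic, $\cH$-valued martingale-difference sequence with $\E(\mathsf D_0\otimes\mathsf D_0)=\Omega_{\mathsf Y}$; this is the usual identity that the long-run covariance equals the covariance of the projective martingale.

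The coboundary remainder is
\[
 \sum_{i\le k}(\mathsf Y_i-\mathsf D_i)
 =\sum_{\jmath}\Bigl\{\sum_{i\le k}\mathcal P_{i-\jmath}\mathsf Y_i-\sum_{i\le k}\mathcal P_i\mathsf Y_{i+\jmath}\Bigr\}.
\]
For each $\jmath$ the inner difference telescopes to boundary terms: at most $2\jmath$ summands, each of $L^{\nu_0}$ norm at most $2\mathfrak d(\jmath)$. To control the maximum over $k$ I will use a truncation at level $R$. For $\jmath\le R$, the maximum over $k$ of $\jmath$ consecutive terms is $O_{\Pp}(\jmath\, n^{1/\nu_0})$ by a union bound with $\nu_0$-th moments, so it is $o(\sqrt n)$ because $\nu_0>2$. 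For $\jmath>R$, the tail is handled by Lemma~\ref{lem:Bernoulli-maximal} applied separately to both pieces, which gives $\sqrt n\sum_{\jmath>R}\mathfrak d(\jmath)$. Letting $n\to\infty$ and then $R\to\infty$ yields
\[
 \max_k\Bigl\|\sum_{i\le k}(\mathsf Y_i-\mathsf D_i)\Bigr\|=o_{\Pp}(\sqrt n).
\]

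\textbf{Step 2: Gaussian coupling for the martingale.} Let $\{v_j\}$ be the eigenbasis of $\Omega_{\mathsf Y}$ and $P_J$ the projection onto its first $J$ vectors.

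First, for fixed $J$, apply a multivariate martingale strong invariance principle, or simply the martingale FCLT combined with Skorokhod--Dudley representation on an extended space, to $P_J\mathsf D_i\in\R^J$. This produces Gaussian processes $\mathbb G_n^{(J)}$ with covariance $(t\wedge t')P_J\Omega_{\mathsf Y}P_J$ and sup-distance $o_{\Pp}(1)$. The conditional-variance condition $n^{-1}\sum\E(\mathsf D_i\mathsf D_i^\top\mid\mathcal F_{i-1})\to\Omega$ follows from ergodicity, and Lindeberg follows from the $\nu_0$ moments.

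Second, the tail is controlled by Doob's inequality:
\[
 \E\max_k\Bigl\|n^{-1/2}\sum_{i\le k}(\Id-P_J)\mathsf D_i\Bigr\|^2\le4\,\tr\{(\Id-P_J)\Omega_{\mathsf Y}\},
\]
and the Gaussian tail obeys the same bound. Both vanish as $J\to\infty$ since $\Omega_{\mathsf Y}$ is trace class.

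Third, choose $J=J_n\to\infty$ slowly by a diagonal argument, and complete $\mathbb G_n^{(J_n)}$ with an independent Gaussian process on the orthogonal complement so that the full covariance equals $(t\wedge t')\Omega_{\mathsf Y}$.

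Combining Step 1 with these three pieces gives the coupling
\[
 \sup_t\Bigl\|n^{-1/2}\sum_{i\le\lfloor nt\rfloor}\mathsf Y_i-\mathbb G_n(t)\Bigr\|^2=o_{\Pp}(1).
\]
Because each $\mathbb G_n\stackrel{d}{=}W_{\Omega_{\mathsf Y}}$, the weak convergence \eqref{eq:general-ARS-FCLT} in $D([0,1],\cH)$ follows immediately.

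\textbf{Main obstacle.} The main difficulty is the uniform-in-$k$ control of the coboundary in Step 1, and in particular the interchange of the limits in $R$ and $n$. An alternative, if this proves awkward, is to cite \citet{AueRiceSonmez2018}, Theorem~A.1 or the Jirak-type coupling, which is proved under exactly condition \eqref{eq:Jirak-condition}.
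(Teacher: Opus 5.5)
Your proof is correct, but it takes a genuinely different route from the paper. The paper does not build the coupling itself. It notes that the hypotheses (Bernoulli shift, $\nu_0>2$ moments, and the unweighted summability \eqref{eq:Jirak-condition}) are exactly Assumption~2.1 of Aue, Rice and S\"onmez. It then invokes Theorem~A.1 of their supplement, a restatement of Jirak's Theorem~1.2, to obtain $\mathbb G_n$. Finally it adds only that $\mathbb G_n\stackrel{d}{=}W_{\Omega_{\mathsf Y}}$ for every $n$, so Slutsky gives \eqref{eq:general-ARS-FCLT}; this is precisely your fallback.

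You instead reconstruct the coupling from material already in the paper:
\begin{itemize}
\item The projective decomposition in the proof of Lemma~\ref{lem:Bernoulli-maximal} gives the martingale approximation $\mathsf D_i=\sum_{\jmath\ge0}\mathcal P_i\mathsf Y_{i+\jmath}$ with $\E(\mathsf D_0\otimes\mathsf D_0)=\Omega_{\mathsf Y}$. Your $R$-truncation correctly handles the interchange of limits and makes the coboundary $o_{\Pp}(\sqrt n)$ uniformly in $k$.
\item Infinite dimensionality is then handled by three pieces: a finite-dimensional martingale FCLT in the eigenbasis, Doob's inequality with $\tr\{(\Id-P_J)\Omega_{\mathsf Y}\}\to0$, and an independent Gaussian completion. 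Using the eigenbasis of $\Omega_{\mathsf Y}$ is what makes the completed covariance exactly $(t\wedge t')\Omega_{\mathsf Y}$: then $P_J$ commutes with $\Omega_{\mathsf Y}$, so no cross terms appear.
\end{itemize}

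The citation buys brevity and rests on an established theorem stated under exactly these hypotheses. Your route buys self-containedness and shows where each hypothesis enters: summability for the coboundary, and trace class for infinite-dimensional tightness. It also uses less than assumed. The boundary maxima are already $o_{\Pp}(\sqrt n)$ for stationary $L^2$ summands by uniform integrability, and the stationary ergodic martingale FCLT needs only second moments. So the $o_{\Pp}(1)$ coupling really only uses second moments plus Hannan-type summability.

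Two technical points should be tightened when you write it out.
\begin{itemize}
\item The Skorokhod--Dudley representation lives on a new probability space. To realize $\mathbb G_n^{(J)}$ on an extension of the original space, jointly with the actual $\mathsf D_i$, you need a gluing lemma of Berkes--Philipp or Strassen--Dudley type. It is most cleanly applied to the polygonal interpolation in the Polish space $C([0,1],\R^J)$ with the uniform norm, which differs from the step process by $n^{-1/2}\max_{i\le n}\norm{P_J\mathsf D_i}=o_{\Pp}(1)$.
\item A Strassen-type martingale strong invariance principle with only an $o(\sqrt{n\log\log n})$ rate would not suffice here. The FCLT-plus-coupling option is therefore the one to keep.
\end{itemize}
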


\begin{proof}
Assumption~2.1 of \citet{AueRiceSonmez2018} requires a Bernoulli-shift
representation and the unweighted independent-copy summability in
\eqref{eq:Jirak-condition}.  Theorem~A.1 of
\citet{AueRiceSonmez2018Supplement}, which restates Theorem~1.2 of
\citet{Jirak2013}, therefore applies directly to $\{\mathsf Y_i\}$.  It
provides, on an extension of the probability space, centered Gaussian
processes $\mathbb G_n$ satisfying
\begin{align*}
 \E\{\mathbb G_n(t)\otimes\mathbb G_n(t')\}
 &=(t\wedge t')\Omega_{\mathsf Y},\\
 \sup_{0\le t\le1}
 \left\|
 n^{-1/2}\sum_{i=1}^{\lfloor nt\rfloor}\mathsf Y_i-\mathbb G_n(t)
 \right\|^2
 &=o_{\Pp}(1).
\end{align*}
The trace-norm existence of $\Omega_{\mathsf Y}$ assumed in the statement
ensures that the displayed covariance defines an $\cH$-valued Brownian
motion with continuous paths.  Since this covariance does not depend on
$n$, every $\mathbb G_n$ has the law of $W_{\Omega_{\mathsf Y}}$.
The uniform coupling error then implies
\eqref{eq:general-ARS-FCLT} in the Skorokhod space by Slutsky's theorem.
The cited construction uses $M$-dependent approximation, blocking and
Gaussian coupling; it does not impose Gaussianity or a linear
representation on $\{\mathsf Y_i\}$.
\end{proof}

\begin{lemma}[Resolvent and quadratic-form replacement]
\label{lem:resolvent-replacement}
Let $\mathcal V_n$ and $\mathcal V$ be self-adjoint nonnegative operators, and let
$\mathfrak s_n,\mathfrak s>0$.  For $a\in[a_-,a_+]$, define
\[
 \mathsf R_{n,a}^{\rm gen}=(\mathcal V_n+a\mathfrak s_n\Id)^{-1},
 \qquad
 \mathsf R_a^{\rm gen}=(\mathcal V+a\mathfrak s\Id)^{-1}.
\]
On the event $\mathfrak s_n\ge\mathfrak s/2$,
\begin{equation}
 \sup_{a\in[a_-,a_+]}\norm{\mathsf R_{n,a}^{\rm gen}-\mathsf R_a^{\rm gen}}_{\op}
 \le C\{\norm{\mathcal V_n-\mathcal V}_{\op}
          +|\mathfrak s_n-\mathfrak s|\}.
 \label{eq:uniform-resolvent-bound}
\end{equation}
If $Z_n^{\rm qf}$ is a random element of $\ell^\infty(\mathbb T_{\rm qf},\cH)$ satisfying
$\sup_{t\in\mathbb T_{\rm qf}}\norm{Z_n^{\rm qf}(t)}=O_{\Pp}(1)$, then
\begin{align}
 &\sup_{a\in[a_-,a_+]}
 \sup_{t\in\mathbb T_{\rm qf}}
 \left|\ip{Z_n^{\rm qf}(t)}
 {\mathsf R_{n,a}^{\rm gen}Z_n^{\rm qf}(t)}\right. \notag\\
 &\hspace{8em}\left.
 -\ip{Z_n^{\rm qf}(t)}
 {\mathsf R_a^{\rm gen}Z_n^{\rm qf}(t)}\right| \notag\\
 &\quad=O_{\Pp}\{\norm{\mathcal V_n-\mathcal V}_{\op}
             +|\mathfrak s_n-\mathfrak s|\}.
 \label{eq:quadratic-resolvent-bound}
\end{align}
\end{lemma}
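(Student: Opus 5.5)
The plan is to prove \eqref{eq:uniform-resolvent-bound} with the second resolvent identity and then obtain \eqref{eq:quadratic-resolvent-bound} from Cauchy--Schwarz.

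\textbf{Resolvent bound.} For fixed $a$, write
\[
 \mathsf R_{n,a}^{\rm gen}-\mathsf R_a^{\rm gen}
 =\mathsf R_{n,a}^{\rm gen}\bigl\{(\mathcal V+a\mathfrak s\Id)-(\mathcal V_n+a\mathfrak s_n\Id)\bigr\}\mathsf R_a^{\rm gen}
 =\mathsf R_{n,a}^{\rm gen}\bigl\{\mathcal V-\mathcal V_n+a(\mathfrak s-\mathfrak s_n)\Id\bigr\}\mathsf R_a^{\rm gen}.
\]
Both operators are nonnegative and self-adjoint. Spectral calculus therefore gives
\[
 \norm{\mathsf R_{a}^{\rm gen}}_{\op}\le (a\mathfrak s)^{-1}\le(a_-\mathfrak s)^{-1}.
\]
On the event $\mathfrak s_n\ge\mathfrak s/2$, it likewise gives
\[
 \norm{\mathsf R_{n,a}^{\rm gen}}_{\op}\le (a\mathfrak s_n)^{-1}\le 2(a_-\mathfrak s)^{-1}.
\]
Submultiplicativity of the operator norm then yields
\[
 \norm{\mathsf R_{n,a}^{\rm gen}-\mathsf R_a^{\rm gen}}_{\op}
 \le 2(a_-\mathfrak s)^{-2}\bigl\{\norm{\mathcal V_n-\mathcal V}_{\op}+a_+|\mathfrak s_n-\mathfrak s|\bigr\}.
\]
The right-hand side does not depend on $a\in[a_-,a_+]$, so taking the supremum gives \eqref{eq:uniform-resolvent-bound} with $C=2(1\vee a_+)(a_-\mathfrak s)^{-2}$. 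This constant depends only on $a_\pm$ and $\mathfrak s$. The resolvents are well defined because $a\mathfrak s_n>0$ on the event, so no invertibility issue arises.

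\textbf{Quadratic form.} For every $t$ and $a$,
\[
 \bigl|\ip{Z}{(\mathsf R_{n,a}^{\rm gen}-\mathsf R_a^{\rm gen})Z}\bigr|\le \norm{\mathsf R_{n,a}^{\rm gen}-\mathsf R_a^{\rm gen}}_{\op}\norm{Z}^2,
\]
with $Z=Z_n^{\rm qf}(t)$. Taking the supremum over $t$ and $a$ bounds the left side of \eqref{eq:quadratic-resolvent-bound} on $\{\mathfrak s_n\ge\mathfrak s/2\}$ by
\[
 C\bigl\{\norm{\mathcal V_n-\mathcal V}_{\op}+|\mathfrak s_n-\mathfrak s|\bigr\}\,\sup_t\norm{Z_n^{\rm qf}(t)}^2 .
\]
The last factor is $O_{\Pp}(1)$ by hypothesis. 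This gives the $O_{\Pp}$ statement provided $\Pp(\mathfrak s_n\ge\mathfrak s/2)\to1$.

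\textbf{The only delicate point} is the complement of the event $\{\mathfrak s_n\ge\mathfrak s/2\}$. The statement is asymptotic, so I will read it as applying when $\mathfrak s_n\pto\mathfrak s$; this is how the lemma is used, via Lemma~\ref{prop:weight-consistency} with $\mathfrak s_n=\widehat{\mathfrak t}_n$. Under that reading the event has probability tending to one, and an $O_{\Pp}$ bound is insensitive to events of vanishing probability. I will state this explicitly, since without it the resolvent need not even be bounded off the event.
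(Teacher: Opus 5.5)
Your proof is correct and follows the paper's own argument: the same second resolvent identity, the bounds $\norm{\mathsf R_{n,a}^{\rm gen}}_{\op}\le 2/(a_-\mathfrak s)$ and $\norm{\mathsf R_a^{\rm gen}}_{\op}\le 1/(a_-\mathfrak s)$ on the event, and then the operator-norm bound on the quadratic form. Your explicit remark that the $O_{\Pp}$ conclusion also needs $\Pp(\mathfrak s_n\ge\mathfrak s/2)\to1$ is correct. The paper's proof leaves this implicit, and Lemma~\ref{prop:weight-consistency} supplies it in every application.
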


\begin{proof}
The resolvent identity yields
\[
 \mathsf R_{n,a}^{\rm gen}-\mathsf R_a^{\rm gen}
 =\mathsf R_{n,a}^{\rm gen}\{\mathcal V-\mathcal V_n
       +a(\mathfrak s-\mathfrak s_n)\Id\}\mathsf R_a^{\rm gen}.
\]
On $\{\mathfrak s_n\ge\mathfrak s/2\}$,
\[
 \sup_{a\in[a_-,a_+]}\norm{\mathsf R_{n,a}^{\rm gen}}_{\op}
 \le\frac{2}{a_-\mathfrak s},
 \qquad
 \sup_{a\in[a_-,a_+]}\norm{\mathsf R_a^{\rm gen}}_{\op}
 \le\frac{1}{a_-\mathfrak s}.
\]
Consequently,
\begin{align*}
 \sup_{a\in[a_-,a_+]}\norm{\mathsf R_{n,a}^{\rm gen}-\mathsf R_a^{\rm gen}}_{\op}
 &\le
 \frac{2}{a_-^2\mathfrak s^2}
 \{\norm{\mathcal V_n-\mathcal V}_{\op}
    +a_+|\mathfrak s_n-\mathfrak s|\}\\
 &\le C\{\norm{\mathcal V_n-\mathcal V}_{\op}
           +|\mathfrak s_n-\mathfrak s|\},
\end{align*}
which proves \eqref{eq:uniform-resolvent-bound}.  Moreover,
\begin{align*}
 &\sup_{a,t}
 \left|\ip{Z_n^{\rm qf}(t)}
              {(\mathsf R_{n,a}^{\rm gen}-\mathsf R_a^{\rm gen})Z_n^{\rm qf}(t)}\right|\\
 &\quad\le
 \sup_a\norm{\mathsf R_{n,a}^{\rm gen}-\mathsf R_a^{\rm gen}}_{\op}
 \sup_t\norm{Z_n^{\rm qf}(t)}^2,
\end{align*}
and \eqref{eq:quadratic-resolvent-bound} follows.
\end{proof}

\section{Proofs for the functional model and numerical projection}
\label{app:functional-proofs}

\subsection{Proofs of Lemmas~\ref{prop:lrc-origin} and~\ref{lem:ARS-main-FCLT}}

For $u\ge1$, $e_{u,u}$ is independent of $e_0$, has the same
marginal law as $e_u$, and satisfies $\E e_{u,u}=0$.  Hence
\begin{equation*}
 \mathcal C_u
 =\E\{(e_u-e_{u,u})\otimes e_0\}.
\end{equation*}
For $X,Y\in L^2(\cH)$,
\begin{align}
 \norm{\E(X\otimes Y)}_1
 &=\sup_{\norm A_{\op}\le1}
   |\tr\{A\E(X\otimes Y)\}|
 \notag\\
 &=\sup_{\norm A_{\op}\le1}
   |\E\ip{Y}{AX}|
 \le\norm X_{L^2(\cH)}\norm Y_{L^2(\cH)}.
 \label{eq:rank-one-expectation-trace}
\end{align}
The embedding $\Hs\hookrightarrow\cH$ gives
$\norm{e_0-e_{0,u}}_{L^2(\cH)}\le C\mathfrak d_\nu(u)$, and therefore
\begin{equation*}
 \norm{\mathcal C_u}_1
 \le C\mathfrak d_\nu(u),\qquad u\ge1.
\end{equation*}
Since $\mathcal C_{-u}=\mathcal C_u^*$,
\begin{align*}
 \sum_{u\in\mathbb Z}(1+|u|)^{\beta_{\rm dep}}\norm{\mathcal C_u}_1
 &\le C+C\sum_{u\ge1}(1+u)^{\beta_{\rm dep}}\mathfrak d_\nu(u)
 <\infty.
\end{align*}
This proves \eqref{eq:weighted-cov-summability}.

For $H_{\rm B}\ge1$, define the Bartlett sum
\[
 \Omega_{H_{\rm B}}
 =\sum_{|u|<H_{\rm B}}\left(1-\frac{|u|}{H_{\rm B}}\right)\mathcal C_u
 =\frac1{H_{\rm B}}\E\left[
 \left(\sum_{i=1}^{H_{\rm B}}e_i\right)\otimes
 \left(\sum_{i=1}^{H_{\rm B}}e_i\right)\right]\succeq0.
\]
Moreover,
\begin{align*}
 \norm{\Omega_{H_{\rm B}}-\Omega}_1
 &\le
 \frac1{H_{\rm B}}\sum_{|u|<H_{\rm B}}|u|\norm{\mathcal C_u}_1
 +\sum_{|u|\ge H_{\rm B}}\norm{\mathcal C_u}_1
 \longrightarrow0.
\end{align*}
Thus $\Omega$ exists in trace norm, is trace class, and
$\Omega\succeq0$.

We next verify \eqref{eq:contracted-cumulant-summability}.  For labelled
times $t_1,\ldots,t_4$, let $\pi_{\rm ord}$ satisfy
\[
 t_{\pi_{\rm ord}(1)}\le t_{\pi_{\rm ord}(2)}
 \le t_{\pi_{\rm ord}(3)}\le t_{\pi_{\rm ord}(4)},
 \qquad
 \mathsf g_{j_{\rm gap}}^{\rm gap}
 =t_{\pi_{\rm ord}(j_{\rm gap}+1)}-t_{\pi_{\rm ord}(j_{\rm gap})},
 \quad1\le j_{\rm gap}\le3.
\]
Set
\[
 j_{\rm gap}^{\star}
 =\min\arg\max_{1\le j_{\rm gap}\le3}\mathsf g_{j_{\rm gap}}^{\rm gap},
 \qquad
 \mathsf g_{\star}^{\rm gap}=\mathsf g_{j_{\rm gap}^{\star}}^{\rm gap},
 \qquad
 t_{\star}=t_{\pi_{\rm ord}(j_{\rm gap}^{\star})}.
\]
Let $\{\eta_{i'}^\dagger:i'\le t_{\star}\}$ be an independent copy of
$\{\eta_{i'}:i'\le t_{\star}\}$ and put
\[
 \bar\eta_{i'}=\eta_{i'}\ind(i'>t_{\star})
 +\eta_{i'}^\dagger\ind(i'\le t_{\star}).
\]
For each original label $j_{\rm lab}\in\{1,\ldots,4\}$ define
\[
 \widetilde e_{t_{j_{\rm lab}}}=
 \begin{cases}
  e_{t_{j_{\rm lab}}},&t_{j_{\rm lab}}\le t_{\star},\\
  \mathfrak G(\bar\eta_{t_{j_{\rm lab}}},
              \bar\eta_{t_{j_{\rm lab}}-1},\ldots),
              &t_{j_{\rm lab}}>t_{\star}.
 \end{cases}
\]
The two labelled subcollections separated by the largest gap are
independent, while the joint law inside each subcollection is unchanged.
Consequently,
\[
 \kappa_4(\widetilde e_{t_1},\widetilde e_{t_2},
          \widetilde e_{t_3},\widetilde e_{t_4})=0.
\]
For $t_{j_{\rm lab}}>t_{\star}$,
\begin{equation}
 \norm{e_{t_{j_{\rm lab}}}-\widetilde e_{t_{j_{\rm lab}}}}_{L^4(\cH)}
 \le\mathfrak d_\nu(t_{j_{\rm lab}}-t_{\star}).
 \label{eq:largest-gap-ARS-coupling}
\end{equation}
Lemma~\ref{lem:cumulant-replacement} yields
\begin{equation}
 |\kappa_4(e_{t_1},e_{t_2},e_{t_3},e_{t_4})|
 \le C\sum_{j_{\rm ord}=j_{\rm gap}^{\star}+1}^{4}
 \mathfrak d_\nu\{t_{\pi_{\rm ord}(j_{\rm ord})}-t_{\star}\}.
 \label{eq:cumulant-largest-gap-ARS}
\end{equation}

For $\kappa_e(u_1,u_2,u_3)=\kappa_4(e_0,e_{u_1},e_{u_2},e_{u_3})$, the position of zero
and the assignments of the other labels contribute only a fixed factor.
Let $\mathsf g_{\max}^{\rm gap}$ be the selected largest gap, and let
$\mathsf g_{2}^{\rm rem}$ and $\mathsf g_{3}^{\rm rem}$ denote the remaining gaps. Then
$0\le \mathsf g_{2}^{\rm rem},\mathsf g_{3}^{\rm rem}\le \mathsf g_{\max}^{\rm gap}$ and
\begin{align*}
 &\sum_{\mathsf g_{\max}^{\rm gap}\ge1}
   \sum_{\mathsf g_{2}^{\rm rem}=0}^{\mathsf g_{\max}^{\rm gap}}
   \sum_{\mathsf g_{3}^{\rm rem}=0}^{\mathsf g_{\max}^{\rm gap}}
 \Big\{\mathfrak d_\nu(\mathsf g_{\max}^{\rm gap})
       +\mathfrak d_\nu(\mathsf g_{\max}^{\rm gap}+\mathsf g_{2}^{\rm rem})
       +\mathfrak d_\nu(\mathsf g_{\max}^{\rm gap}+\mathsf g_{2}^{\rm rem}+\mathsf g_{3}^{\rm rem})\Big\}\\
 &\quad\le
 C\sum_{M\ge1}(1+M)^2\mathfrak d_\nu(M).
\end{align*}
Indeed, the coefficient of $\mathfrak d_\nu(M)$ is bounded by
\[
 C\{1+M+(1+M)^2\}\le C(1+M)^2.
\]
The cases in which either remaining gap attains the maximum give the same bound.
Therefore
\[
 \sum_{u_1,u_2,u_3\in\mathbb Z}|\kappa_e(u_1,u_2,u_3)|
 \le C+C\sum_{M\ge1}(1+M)^2\mathfrak d_\nu(M)<\infty,
\]
which proves \eqref{eq:contracted-cumulant-summability}.

Finally,
\[
 \sum_{M\ge1}
 \{\E\norm{e_0-e_{0,M}}^{\nu}\}^{1/\nu}
 \le C\sum_{M\ge1}\mathfrak d_\nu(M)<\infty.
\]
Lemma~\ref{lem:ARS-FCLT}, with $\mathsf Y_i=e_i$ and $\Omega_{\mathsf Y}=\Omega$,
gives \eqref{eq:ARS-partial-sum-FCLT}.  For
\[
 S_n(t)=n^{-1/2}\sum_{i=1}^{\lfloor nt\rfloor}e_i,
\]
the tied-down map satisfies
\begin{align*}
 &\sup_{0\le t\le1}
 \left\|S_n(t)-\frac{\lfloor nt\rfloor}{n}S_n(1)
       -\{S_n(t)-tS_n(1)\}\right\|\\
 &\quad\le n^{-1}\norm{S_n(1)}
 =O_{\Pp}(n^{-1}).
\end{align*}
The continuous mapping theorem proves
\eqref{eq:ARS-bridge-FCLT}.

\subsection{Proofs of Lemmas~\ref{prop:quadrature} and~\ref{lem:projected-dependence}}

For $f\in C^2[0,1]$, the composite trapezoidal rule gives
\begin{equation*}
 \left|Q_N(f)-\int_0^1f(t)\,dt\right|
 \le\frac{\norm{f''}_\infty}{12(N-1)^2}.
\end{equation*}
For $f\in C^2[0,1]$,
\[
 (f\phi_j)''=f''\phi_j+2f'\phi_j'+f\phi_j'',
\]
so Assumption~\ref{ass:grid} gives
\[
 \max_{1\le j\le p}\norm{(f\phi_j)''}_\infty
 \le Cp^2\norm{f}_{C^2}.
\]
Summing the coordinatewise squared quadrature errors yields
\[
 \norm{(\mathcal J_{p,N}-\Pi_p)f}
 \le C\chi_{p,N}\norm{f}_{C^2},
\]
which proves \eqref{eq:numerical-projection-operator}.

For a fixed $B<\infty$, let
\[
 \mathcal B_B=\{f\in C^2[0,1]:\norm f_{C^2}\le B\}.
\]
This set is relatively compact in $L^2[0,1]$ by the Arzel\`a--Ascoli
theorem.  Since $\Pi_p\to\Id_{\cH}$ strongly and
$\sup_p\norm{\Id_{\cH}-\Pi_p}_{\op}\le1$, the convergence is uniform on the
$L^2$-closure of $\mathcal B_B$.  Consequently,
\begin{align*}
 \mathfrak a_{p,N}(B)
 &\le \sup_{f\in\mathcal B_B}\norm{(\Id_{\cH}-\Pi_p)f}
 +\sup_{f\in\mathcal B_B}
   \norm{(\mathcal J_{p,N}-\Pi_p)f}\\
 &\le \sup_{f\in\mathcal B_B}\norm{(\Id_{\cH}-\Pi_p)f}
 +CB\chi_{p,N}\longrightarrow0,
\end{align*}
which proves \eqref{eq:uniform-functional-resolution-limit}.

Let $\mathcal J_n=\mathcal J_{p,N}$.  For every fixed $M$,
\[
 \norm{(\Id_{\cH}-\Pi_p)(e_0-e_{0,M})}_{L^2(\cH)}
 \longrightarrow0,
\]
because $\Pi_p\to\Id_{\cH}$ strongly.  Furthermore,
\[
 \norm{(\Id_{\cH}-\Pi_p)(e_0-e_{0,M})}_{L^2(\cH)}
 \le\norm{e_0-e_{0,M}}_{L^2(\cH)}
 \le C\mathfrak d_\nu(M).
\]
Since $\sum_M\mathfrak d_\nu(M)<\infty$,
\begin{align*}
 \varrho_p
 &=\norm{(\Id_{\cH}-\Pi_p)e_0}_{L^2(\cH)}
 +\sum_{M\ge1}
 \norm{(\Id_{\cH}-\Pi_p)(e_0-e_{0,M})}_{L^2(\cH)}
 \longrightarrow0.
\end{align*}
The Sobolev embedding and
\eqref{eq:numerical-projection-operator} also give
\begin{equation*}
 \norm{(\mathcal J_n-\Pi_p)f}
 \le C\chi_{p,N}\norm f_{\Hs},
 \qquad f\in\Hs.
\end{equation*}

Apply Lemma~\ref{lem:Bernoulli-maximal} to
\[
 \mathsf Y_{i,p}^{(1)}=(\Pi_p-\Id_{\cH})e_i.
\]
Its maximal-inequality coefficient is
\begin{align*}
 &\norm{\mathsf Y_{0,p}^{(1)}}_{L^2}
 +\sum_{M\ge1}
 \norm{\mathsf Y_{0,p}^{(1)}-(\Pi_p-\Id_{\cH})e_{0,M}}_{L^2}\\
 &\quad=
 \norm{(\Id_{\cH}-\Pi_p)e_0}_{L^2}
 +\sum_{M\ge1}
 \norm{(\Id_{\cH}-\Pi_p)(e_0-e_{0,M})}_{L^2}
 =\varrho_p.
\end{align*}
Apply the same lemma to
\[
 \mathsf Y_{i,p,N}^{(2)}=(\mathcal J_n-\Pi_p)e_i.
\]
By \eqref{eq:numerical-projection-operator} and the Sobolev embedding,
\begin{align*}
 \norm{\mathsf Y_{0,p,N}^{(2)}}_{L^2}
 &\le C\chi_{p,N},\\
 \norm{\mathsf Y_{0,p,N}^{(2)}-(\mathcal J_n-\Pi_p)e_{0,M}}_{L^2}
 &\le C\chi_{p,N}\mathfrak d_\nu(M).
\end{align*}
Therefore
\begin{align*}
 &\left[
 \E\sup_{0\le t\le1}
 \left\|
 n^{-1/2}\sum_{i=1}^{\lfloor nt\rfloor}(\mathcal J_n-\Id_{\cH})e_i
 \right\|^2
 \right]^{1/2}\\
 &\quad\le C(\varrho_p+\chi_{p,N}),
\end{align*}
which proves \eqref{eq:numerical-partial-sum-equivalence}.

For all sufficiently large $n$,
\begin{align}
 \norm{\mathcal J_nf}
 &\le\norm{\Pi_pf}+\norm{(\mathcal J_n-\Pi_p)f}
 \le C\norm f_{\Hs}.
 \label{eq:numerical-map-uniform-bound}
\end{align}
For $u\ge1$, $\mathcal J_ne_{u,u}$ is independent of $\mathcal J_ne_0$, so
\[
 \Gamma_{p,N}^{\cH}(u)
 =\E\{\mathcal J_n(e_u-e_{u,u})\otimes \mathcal J_ne_0\}.
\]
Equations \eqref{eq:rank-one-expectation-trace} and
\eqref{eq:numerical-map-uniform-bound} imply
\begin{equation*}
 \norm{\Gamma_{p,N}^{\cH}(u)}_1
 =\norm{\bGamma_{p,N}(u)}_1\le C\mathfrak d_\nu(u).
\end{equation*}
Hence
\[
 \sup_n\sum_{u\in\mathbb Z}(1+|u|)^{\beta_{\rm dep}}
 \norm{\bGamma_{p,N}(u)}_1<\infty,
\]
which proves \eqref{eq:projected-weighted-cov-sum}; the Bartlett-sum
identity then gives $\bOmega_{p,N}\succeq0$.

For $u\ge1$,
\begin{align*}
 \Gamma_{p,N}^{\cH}(u)-\Pi_p\mathcal C_u\Pi_p
 &=\E\{(\mathcal J_n-\Pi_p)(e_u-e_{u,u})\otimes \mathcal J_ne_0\}\\
 &\quad+\E\{\Pi_p(e_u-e_{u,u})\otimes(\mathcal J_n-\Pi_p)e_0\},
\end{align*}
and therefore
\[
 \norm{\Gamma_{p,N}^{\cH}(u)-\Pi_p\mathcal C_u\Pi_p}_1
 \le C\chi_{p,N}\mathfrak d_\nu(u).
\]
At $u=0$,
\[
 \norm{\Gamma_{p,N}^{\cH}(0)-\Pi_p\mathcal C_0\Pi_p}_1
 \le C\chi_{p,N}.
\]
Summation over $u\in\mathbb Z$ gives
\[
 \norm{\Omega_{p,N}^{\cH}-\Pi_p\Omega\Pi_p}_1
 \le C\chi_{p,N}.
\]
Thus
\begin{align*}
 \norm{\Omega_{p,N}^{\cH}-\Omega}_1
 &\le C\chi_{p,N}+\norm{\Pi_p\Omega\Pi_p-\Omega}_1\\
 &\longrightarrow0
\end{align*}
by Lemma~\ref{lem:trace-projection}, proving
\eqref{eq:lrc-projection-error}.

Finally, $\{\mathcal J_ne_i\}$ is a Bernoulli shift with approximations
$\{\mathcal J_ne_{i,M}\}$, and
\[
 \norm{\mathcal J_n(e_0-e_{0,M})}_{L^\nu(\cH)}
 \le C\mathfrak d_\nu(M).
\]
The largest-gap replacement used in
\eqref{eq:largest-gap-ARS-coupling} therefore yields, uniformly in
$n,p,N$,
\[
 |\kappa_{p,N}(u_1,u_2,u_3)|
 \le C\sum_{\text{right of the largest gap}}\mathfrak d_\nu(\text{distance}).
\]
The same three-gap count as in
\eqref{eq:cumulant-largest-gap-ARS} gives
\[
 \sup_n\sum_{u_1,u_2,u_3\in\mathbb Z}|\kappa_{p,N}(u_1,u_2,u_3)|
 \le C+C\sum_{M\ge1}(1+M)^2\mathfrak d_\nu(M)<\infty,
\]
which proves \eqref{eq:projected-cumulant-summability}.

\section{Proofs for the edge-corrected difference-based estimator}

\subsection{Proof of Lemma~\ref{prop:edge-correction}}

Under $H_0$, for $i_{\rm D}=1,\ldots,N_D$,
\[
 \bY_{i_{\rm D}}
 =\sum_{q=0}^{m}c_q
 \{\bmu_{p,N}+\beps_{mh+i_{\rm D}-qh,p,N}\}
 =\beps_{i_{\rm D}}^{\rm DB},
\]
because $\sum_{q=0}^{m}c_q=0$.  Stationarity gives, for
$0\le r<\ell$,
\begin{align*}
 \E\widehat\bGamma_{\rm DB}(r)
 &=\frac1{N_D-r}\sum_{i_{\rm D}=r+1}^{N_D}
   \E\{\beps_{i_{\rm D}}^{\rm DB}
   (\beps_{i_{\rm D}-r}^{\rm DB})^\top\}\\
 &=\bGamma_{\rm DB}(r),
\end{align*}
which proves \eqref{eq:edge-unbiased}.  With the common divisor $n$,
\[
 \E\widehat\bGamma_{\rm DB}^{\rm com}(r)
 =\frac{N_D-r}{n}\bGamma_{\rm DB}(r),
\]
which is \eqref{eq:paper-denominator-bias}.

The filtered covariance satisfies
\[
 \bGamma_{\rm DB}(r)
 =\sum_{q_1,q_2=0}^{m}c_{q_1}c_{q_2}
 \bGamma_{p,N}\{r+(q_2-q_1)h\}.
\]
With $\mathfrak b_r=\norm{\bGamma_{\rm DB}(r)}_1$ and fixed $m$,
\begin{align}
 \sum_{r\in\mathbb Z}\mathfrak b_r
 &\le
 \sum_{q_1,q_2=0}^{m}|c_{q_1}c_{q_2}|
 \sum_{u\in\mathbb Z}\norm{\bGamma_{p,N}(u)}_1
 \le C.
 \label{eq:filtered-cov-summability-audited}
\end{align}
Therefore
\begin{align*}
 \norm{\E\widehat\bOmega_n^{\rm com}-\bOmega_{{\rm DB},\ell}}_1
 &\le
 \sum_{|r|<\ell}|K(r/\ell)|
 \left|1-\frac{N_D-|r|}{n}\right|\mathfrak b_r\\
 &\le
 \frac{mh+\ell}{n}\sum_{r\in\mathbb Z}\mathfrak b_r
 \le C\frac{mh+\ell}{n},
\end{align*}
which proves \eqref{eq:endpoint-bias-bound}.

\subsection{Proof of Lemma~\ref{prop:population-DB-bias}}

The identity
\begin{equation*}
 \bGamma_{\rm DB}(r)
 =\sum_{q_1,q_2=0}^{m}c_{q_1}c_{q_2}
 \bGamma_{p,N}\{r+(q_2-q_1)h\}
\end{equation*}
splits into diagonal and off-diagonal filter terms.  Since
$\sum_{q=0}^{m}c_q^2=1$,
\begin{align*}
 &\left\|
 \sum_{|r|<\ell}K(r/\ell)\bGamma_{p,N}(r)
 -\bOmega_{p,N}
 \right\|_1
 \notag\\
 &\quad\le
 C_K\ell^{-\beta_K}
 \sum_{|r|<\ell}|r|^{\beta_K}\norm{\bGamma_{p,N}(r)}_1
 +\sum_{|r|\ge\ell}\norm{\bGamma_{p,N}(r)}_1\\
 &\quad\le C\ell^{-\beta_K}.
\end{align*}
For $q_1\ne q_2$ and $|r|<\ell$,
\[
 |r+(q_2-q_1)h|
 \ge h-|r|\ge\ell.
\]
For each fixed $(q_1,q_2)$, the map $r\mapsto r+(q_2-q_1)h$ is injective.
Consequently,
\begin{align*}
 &\left\|
 \sum_{|r|<\ell}K(r/\ell)
 \sum_{q_1\ne q_2}c_{q_1}c_{q_2}
 \bGamma_{p,N}\{r+(q_2-q_1)h\}
 \right\|_1\\
 &\quad\le
 C\sum_{|u|\ge\ell}\norm{\bGamma_{p,N}(u)}_1
 \le C\ell^{-\beta_K}.
\end{align*}
Combining the two displays proves
\eqref{eq:population-DB-rate}.

\subsection{Proofs of Lemmas~\ref{thm:DB-rate} and~\ref{lem:positive-part-rate}}

For $1\le i_{{\rm D},1},i_{{\rm D},2}\le N_D$, define
\[
 \mathsf w_{i_{{\rm D},1}i_{{\rm D},2}}^{\rm DB}
 =\frac{K\{(i_{{\rm D},1}-i_{{\rm D},2})/\ell\}}
 {N_D-|i_{{\rm D},1}-i_{{\rm D},2}|}
 \ind(|i_{{\rm D},1}-i_{{\rm D},2}|<\ell).
\]
Since $K$ is even,
\begin{equation*}
 \widehat\bOmega_n^{\rm DB}
 =\sum_{i_{{\rm D},1},i_{{\rm D},2}=1}^{N_D}
 \mathsf w_{i_{{\rm D},1}i_{{\rm D},2}}^{\rm DB}
 \beps_{i_{{\rm D},1}}^{\rm DB}\otimes
 \beps_{i_{{\rm D},2}}^{\rm DB}.
\end{equation*}
Let
\[
 \mathsf W_{\rm DB}
 =\big(|\mathsf w_{i_{{\rm D},1}i_{{\rm D},2}}^{\rm DB}|\big)
 _{i_{{\rm D},1},i_{{\rm D},2}=1}^{N_D},
 \qquad
 \mathsf C_{\rm DB}
 =\big(\mathfrak b_{i_{{\rm D},1}-i_{{\rm D},2}}\big)
 _{i_{{\rm D},1},i_{{\rm D},2}=1}^{N_D}.
\]
By \eqref{eq:filtered-cov-summability-audited},
\begin{equation*}
 \norm{\mathsf C_{\rm DB}}_{\op}
 \le\max_{i_{{\rm D},1}}\sum_{i_{{\rm D},2}}
 \mathfrak b_{i_{{\rm D},1}-i_{{\rm D},2}}\le C.
\end{equation*}
Moreover,
\begin{align*}
 \norm{\mathsf W_{\rm DB}}_{\HS}^2
 &=\sum_{|r|<\ell}
 (N_D-|r|)\frac{K(r/\ell)^2}{(N_D-|r|)^2}\\
 &\le\frac{C\ell}{N_D-\ell}
 \le C\frac{\ell}{N_D}.
\end{align*}

For
\[
 \kappa_{\rm DB}(i_{{\rm D},1},i_{{\rm D},2},i_{{\rm D},3},i_{{\rm D},4})
 =\kappa_4(\beps_{i_{{\rm D},1}}^{\rm DB},
           \beps_{i_{{\rm D},2}}^{\rm DB},
           \beps_{i_{{\rm D},3}}^{\rm DB},
           \beps_{i_{{\rm D},4}}^{\rm DB}),
\]
finite expansion of the difference filter gives
\begin{align}
 \sup_{i_{{\rm D},1}\in\mathbb Z}
 \sum_{i_{{\rm D},2},i_{{\rm D},3},i_{{\rm D},4}\in\mathbb Z}
 |\kappa_{\rm DB}(i_{{\rm D},1},i_{{\rm D},2},
                   i_{{\rm D},3},i_{{\rm D},4})|
 &\le C_c^4
 \sum_{u_1,u_2,u_3\in\mathbb Z}|\kappa_{p,N}(u_1,u_2,u_3)|
 \le C.
 \label{eq:difference-cumulant-sum}
\end{align}
Also,
\[
 \sup_{n,p,N,h}\E\norm{\beps_0^{\rm DB}}^{4}
 \le C_c^4\sup_n\E\norm{\mathcal J_ne_0}^{4}<\infty.
\]

Applying Lemma~\ref{lem:rank-one-cumulant} yields
\begin{align*}
 &\left|
 \E\ip{\beps_{i_{{\rm D},1}}^{\rm DB}\otimes
          \beps_{i_{{\rm D},2}}^{\rm DB}-
 \E(\beps_{i_{{\rm D},1}}^{\rm DB}\otimes
    \beps_{i_{{\rm D},2}}^{\rm DB})}
 {\beps_{i_{{\rm D},3}}^{\rm DB}\otimes
  \beps_{i_{{\rm D},4}}^{\rm DB}-
 \E(\beps_{i_{{\rm D},3}}^{\rm DB}\otimes
    \beps_{i_{{\rm D},4}}^{\rm DB})}_{\HS}
 \right|\\
 &\quad\le
 \mathfrak b_{i_{{\rm D},1}-i_{{\rm D},3}}
 \mathfrak b_{i_{{\rm D},2}-i_{{\rm D},4}}
 +\mathfrak b_{i_{{\rm D},1}-i_{{\rm D},4}}
 \mathfrak b_{i_{{\rm D},2}-i_{{\rm D},3}}\\
 &\qquad
 +|\kappa_{\rm DB}(i_{{\rm D},1},i_{{\rm D},2},
                    i_{{\rm D},3},i_{{\rm D},4})|.
\end{align*}
Hence
\begin{align*}
 &\E\norm{\widehat\bOmega_n^{\rm DB}
 -\E\widehat\bOmega_n^{\rm DB}}_{\HS}^2\\
 &\quad\le
 \sum_{i_{{\rm D},1},i_{{\rm D},2},i_{{\rm D},3},i_{{\rm D},4}}
 |\mathsf w_{i_{{\rm D},1}i_{{\rm D},2}}^{\rm DB}|
 |\mathsf w_{i_{{\rm D},3}i_{{\rm D},4}}^{\rm DB}|\\
 &\qquad\times
 \Big\{\mathfrak b_{i_{{\rm D},1}-i_{{\rm D},3}}
          \mathfrak b_{i_{{\rm D},2}-i_{{\rm D},4}}
       +\mathfrak b_{i_{{\rm D},1}-i_{{\rm D},4}}
          \mathfrak b_{i_{{\rm D},2}-i_{{\rm D},3}}\\
 &\hspace{8em}
       +|\kappa_{\rm DB}(i_{{\rm D},1},i_{{\rm D},2},
                          i_{{\rm D},3},i_{{\rm D},4})|\Big\}.
\end{align*}
The first pairing satisfies
\begin{align*}
 &\sum_{i_{{\rm D},1},i_{{\rm D},2},i_{{\rm D},3},i_{{\rm D},4}}
 |\mathsf w_{i_{{\rm D},1}i_{{\rm D},2}}^{\rm DB}|
 |\mathsf w_{i_{{\rm D},3}i_{{\rm D},4}}^{\rm DB}|
 \mathfrak b_{i_{{\rm D},1}-i_{{\rm D},3}}
 \mathfrak b_{i_{{\rm D},2}-i_{{\rm D},4}}\\
 &\qquad=\tr(\mathsf W_{\rm DB}^{\top}\mathsf C_{\rm DB}
              \mathsf W_{\rm DB}\mathsf C_{\rm DB}^{\top})
 \le\norm{\mathsf C_{\rm DB}}_{\op}^2
     \norm{\mathsf W_{\rm DB}}_{\HS}^2
 \le C\frac{\ell}{N_D}.
\end{align*}
The second pairing has the same bound. Since
\[
 \max_{i_{{\rm D},1},i_{{\rm D},2}}
 |\mathsf w_{i_{{\rm D},1}i_{{\rm D},2}}^{\rm DB}|
 \le\frac{\norm K_\infty}{N_D-\ell},
\]
stationarity and \eqref{eq:difference-cumulant-sum} give
\begin{align*}
 &\sum_{i_{{\rm D},1},i_{{\rm D},2},i_{{\rm D},3},i_{{\rm D},4}}
 |\mathsf w_{i_{{\rm D},1}i_{{\rm D},2}}^{\rm DB}|
 |\mathsf w_{i_{{\rm D},3}i_{{\rm D},4}}^{\rm DB}|
 |\kappa_{\rm DB}(i_{{\rm D},1},i_{{\rm D},2},
                    i_{{\rm D},3},i_{{\rm D},4})|\\
 &\quad\le
 \frac{C}{(N_D-\ell)^2}
 \sum_{i_{{\rm D},1}=1}^{N_D}
 \sum_{i_{{\rm D},2},i_{{\rm D},3},i_{{\rm D},4}\in\mathbb Z}
 |\kappa_{\rm DB}(i_{{\rm D},1},i_{{\rm D},2},
                    i_{{\rm D},3},i_{{\rm D},4})|\\
 &\quad\le\frac{CN_D}{(N_D-\ell)^2}
 \le C\frac{\ell}{N_D}.
\end{align*}
Thus
\[
 \E\norm{\widehat\bOmega_n^{\rm DB}
 -\E\widehat\bOmega_n^{\rm DB}}_{\HS}^2
 \le C\frac{\ell}{N_D},
\]
which is \eqref{eq:DB-variance-rate}.

By Lemma~\ref{prop:edge-correction},
\[
 \E\widehat\bOmega_n^{\rm DB}=\bOmega_{{\rm DB},\ell}.
\]
Let
\[
 \mathfrak r_{n,\mathrm{st}}=\left(\frac{\ell}{N_D}\right)^{1/2},
 \qquad
 \mathfrak r_{n,\mathrm{bias}}=\ell^{-\beta_K}.
\]
Lemma~\ref{prop:population-DB-bias} gives
\[
 \norm{\bOmega_{{\rm DB},\ell}-\bOmega_{p,N}}_{\HS}\le C\mathfrak r_{n,\mathrm{bias}}.
\]
For $C_0\ge2C$,
\begin{align*}
 &\Pp\left\{
 \norm{\widehat\bOmega_n^{\rm DB}-\bOmega_{p,N}}_{\HS}
 >C_0(\mathfrak r_{n,\mathrm{st}}+\mathfrak r_{n,\mathrm{bias}})\right\}\\
 &\quad\le
 \frac{4\E\norm{\widehat\bOmega_n^{\rm DB}
 -\E\widehat\bOmega_n^{\rm DB}}_{\HS}^2}
 {C_0^2(\mathfrak r_{n,\mathrm{st}}+\mathfrak r_{n,\mathrm{bias}})^2}
 \le\frac{C}{C_0^2}.
\end{align*}
Therefore
\[
 \norm{\widehat\bOmega_n^{\rm DB}-\bOmega_{p,N}}_{\HS}
 =O_{\Pp}(\mathfrak r_{n,\mathrm{st}}+\mathfrak r_{n,\mathrm{bias}}),
\]
which proves \eqref{eq:DB-HS-rate}.  Since
$\bOmega_{p,N}\succeq0$,
\[
 \norm{\widehat\bOmega_n^+-\bOmega_{p,N}}_{\HS}
 \le
 \norm{\widehat\bOmega_n^{\rm DB}-\bOmega_{p,N}}_{\HS}
 =O_{\Pp}(\mathfrak r_n),
\]
by \eqref{eq:PSD-projection-audited}.  Both operators act on the same $p$-dimensional Fourier subspace, so their
difference has rank at most $p$ and
\[
 \norm{\widehat\bOmega_n^+-\bOmega_{p,N}}_1
 \le\sqrt p\,
 \norm{\widehat\bOmega_n^+-\bOmega_{p,N}}_{\HS}
 =O_{\Pp}(\sqrt p\,\mathfrak r_n).
\]
This proves \eqref{eq:positive-HS-rate} and
\eqref{eq:positive-trace-rate}.

\subsection{Proof of Lemma~\ref{prop:alternative-DB}}

The same algebra covers the two assertions, but their spacing assumptions are
kept separate.  In the single-change case, use the notational convention
$J_{\rm cp}=1$, $\tau_{1,n}=\tau_n$, $\delta_{1,n}=\delta_n$, and
$d_{1,n}=d_n$ only within this proof; no part of
Assumption~\ref{ass:multiple-configuration} is then invoked.  Let
\[
 \bm d_{b,n}=\iota_p^*d_{b,n}
 =\{Q_N(\delta_{b,n}\phi_j)\}_{j=1}^p
\]
and decompose, for $i_{\rm D}=1,\ldots,N_D$,
\begin{equation*}
 \bY_{i_{\rm D}}
 =\bm m_{i_{\rm D}}^{\rm DB}+\beps_{i_{\rm D}}^{\rm DB},
 \qquad
 \bm m_{i_{\rm D}}^{\rm DB}
 =\sum_{b=1}^{J_{\rm cp}}\sum_{q=0}^{m}c_q\bm d_{b,n}
 \ind(mh+i_{\rm D}-qh>\tau_{b,n}).
\end{equation*}
For the $b$th change, write $\bm m_{b,i_{\rm D}}^{\rm DB}$ for the
corresponding summand.  The zero-sum identity in
\eqref{eq:difference-constraints} gives
\begin{equation}
 \bm m_{b,i_{\rm D}}^{\rm DB}\ne0
 \quad\Longrightarrow\quad
 \tau_{b,n}-mh<i_{\rm D}\le\tau_{b,n}.
 \label{eq:multiple-DB-support}
\end{equation}
For one change, \eqref{eq:multiple-DB-support} alone shows that the support
has at most $mh$ indices.  For several changes,
Assumption~\ref{ass:multiple-configuration} and $mh=o(\Delta_n)$ imply that
the support sets are disjoint for all sufficiently large $n$.  Thus, in the
respective single- and multiple-change cases,
\begin{align}
 \max_{1\le i_{\rm D}\le N_D}\norm{\bm m_{i_{\rm D}}^{\rm DB}}
 &\le C_c\max_b\norm{d_{b,n}},
 \notag\\
 \sum_{i_{\rm D}=1}^{N_D}\norm{\bm m_{i_{\rm D}}^{\rm DB}}^2
 &=\sum_{b=1}^{J_{\rm cp}}\sum_{i_{\rm D}=1}^{N_D}
   \norm{\bm m_{b,i_{\rm D}}^{\rm DB}}^2
 \le Cmh\sum_{b=1}^{J_{\rm cp}}\norm{d_{b,n}}^2.
 \label{eq:multiple-DB-mean-energy}
\end{align}
For $J_{\rm cp}=1$, the equality in the second line is tautological and does
not require a spacing condition.

Expand each lag product in \eqref{eq:edge-autocov}.  Relative to the
noise-only estimator, the additional terms are a mean-only contribution
$\mathcal M_{{\rm cp},n}$ and two mean--noise contributions.  By
Cauchy--Schwarz and \eqref{eq:multiple-DB-mean-energy},
\begin{align}
 \norm{\mathcal M_{{\rm cp},n}}_{\HS}
 &\le C\sum_{|r|<\ell}\frac1{N_D-|r|}
 \sum_{i_{\rm D}=|r|+1}^{N_D}
 \norm{\bm m_{i_{\rm D}}^{\rm DB}}
 \norm{\bm m_{i_{\rm D}-|r|}^{\rm DB}}
 \notag\\
 &\le C\frac{\ell}{N_D}
 \sum_{i_{\rm D}=1}^{N_D}\norm{\bm m_{i_{\rm D}}^{\rm DB}}^2
 \notag\\
 &\le C\frac{\ell mh}{N_D}
 \sum_{b=1}^{J_{\rm cp}}\norm{d_{b,n}}^2.
 \label{eq:multiple-DB-mean-only-bound}
\end{align}

For one orientation of the mean--noise term, define, for $0\le r<\ell$,
\[
 \mathcal X_r
 =\frac1{N_D-r}\sum_{i_{\rm D}=r+1}^{N_D}
 \bm m_{i_{\rm D}}^{\rm DB}\otimes
 \beps_{i_{\rm D}-r}^{\rm DB}.
\]
Let $\mathfrak m_{i_{\rm D}}=\norm{\bm m_{i_{\rm D}}^{\rm DB}}$ and let
$\mathsf C_{\rm DB}
=(\mathfrak b_{i_{\rm D}-j_{\rm D}})_{i_{\rm D},j_{\rm D}=1}^{N_D}$,
where $\mathfrak b_u=\norm{\bGamma_{\rm DB}(u)}_1$.
The uniform summability in
\eqref{eq:filtered-cov-summability-audited} implies
$\norm{\mathsf C_{\rm DB}}_{\op}\le C$.  Hence, using
$|\tr\{\bGamma_{\rm DB}(u)\}|\le
\norm{\bGamma_{\rm DB}(u)}_1=\mathfrak b_u$,
\begin{align*}
 \E\norm{\mathcal X_r}_{\HS}^2
 &=\frac1{(N_D-r)^2}
 \sum_{i_{\rm D},j_{\rm D}=r+1}^{N_D}
 \ip{\bm m_{i_{\rm D}}^{\rm DB}}{\bm m_{j_{\rm D}}^{\rm DB}}
 \tr\{\bGamma_{\rm DB}(j_{\rm D}-i_{\rm D})\}\\
 &\le\frac1{(N_D-r)^2}
 \sum_{i_{\rm D},j_{\rm D}=r+1}^{N_D}
 \mathfrak m_{i_{\rm D}}\mathfrak m_{j_{\rm D}}
 \mathfrak b_{j_{\rm D}-i_{\rm D}}\\
 &\le\frac{\norm{\mathsf C_{\rm DB}}_{\op}}{(N_D-r)^2}
 \sum_{i_{\rm D}=1}^{N_D}\mathfrak m_{i_{\rm D}}^2\\
 &\le C\frac{mh}{N_D^2}
 \sum_{b=1}^{J_{\rm cp}}\norm{d_{b,n}}^2.
\end{align*}
The adjoint and shifted orientations obey the same bound.  Minkowski's
inequality over the $2\ell-1$ signed lags therefore gives
\begin{equation}
 \norm{\mathcal X_{{\rm cp},n}}_{L^2(\HS)}
 \le C\frac{\ell\sqrt{mh}}{N_D}
 \left\{\sum_{b=1}^{J_{\rm cp}}\norm{d_{b,n}}^2\right\}^{1/2}.
 \label{eq:multiple-DB-mean-noise-bound}
\end{equation}
Combining \eqref{eq:multiple-DB-mean-only-bound},
\eqref{eq:multiple-DB-mean-noise-bound}, and the noise-only rate
\eqref{eq:DB-HS-rate} proves \eqref{eq:multiple-DB-HS-rate}.  The
single-change expression is exactly $\mathfrak r_{\delta,n}$.  Finally,
\eqref{eq:PSD-projection-audited} and the rank-$p$ inequality yield
\eqref{eq:alternative-positive-trace-rate} and
\eqref{eq:multiple-positive-trace-rate}.

If $m$ is fixed, $h\asymp\ell$, and the jumps are uniformly bounded, both
contamination rates are $O(\ell^2/n+\ell^{3/2}/n)$ for fixed
$J_{\rm cp}$, and hence vanish when $\ell^2/n\to0$.

\section{Proofs for spectral weights and the null limit}

\subsection{Proof of Lemma~\ref{prop:weight-consistency}}

Let
\[
 \lambda_{1,p,N}\ge\cdots\ge\lambda_{p,p,N}\ge0
\]
be the eigenvalues of $\bOmega_{p,N}$, extended by zeros for $j>p$.
When $\Omega$ has finite rank, extend its positive-eigenvalue list by zeros
as well for the eigenvalue comparison in this proof.
The Hoffman--Wielandt inequality
\citep[Chapter~VI]{Bhatia1997} gives
\begin{align*}
 \sum_{j=1}^{p}
 (\widehat\lambda_{j,n}-\lambda_{j,p,N})^2
 &\le
 \norm{\widehat\bOmega_n^+-\bOmega_{p,N}}_{\HS}^2\\
 &=O_{\Pp}(\mathfrak r_n^2).
\end{align*}
Hence
\begin{equation}
 \sum_{j=1}^{p}
 |\widehat\lambda_{j,n}-\lambda_{j,p,N}|
 \le\sqrt p\,
 \norm{\widehat\bOmega_n^+-\bOmega_{p,N}}_{\HS}
 =O_{\Pp}(\sqrt p\,\mathfrak r_n).
 \label{eq:eigen-l1-audited}
\end{equation}
The Lidskii--Mirsky--Wielandt inequality for self-adjoint trace-class
operators \citep[Corollary~III.4.2]{Bhatia1997}
\citep[Chapter~3]{Simon2005} yields
\begin{equation}
 \sum_{j\ge1}|\lambda_{j,p,N}-\lambda_j|
 \le\norm{\Omega_{p,N}^{\cH}-\Omega}_1
 \le C\zeta_{p,N}.
 \label{eq:population-eigen-tail-audited}
\end{equation}
Therefore
\begin{equation*}
 \sum_{j\ge1}|\widehat\lambda_{j,n}-\lambda_j|
 =O_{\Pp}(\mathfrak e_n),
 \qquad
 \mathfrak e_n=\sqrt p\,\mathfrak r_n+\zeta_{p,N}.
\end{equation*}
Nonnegativity implies
\[
 |\widehat{\mathfrak t}_n-\mathfrak t_\Omega|
 =\left|\sum_{j\ge1}
   (\widehat\lambda_{j,n}-\lambda_j)\right|
 \le\sum_{j\ge1}|\widehat\lambda_{j,n}-\lambda_j|
 =O_{\Pp}(\mathfrak e_n).
\]

On
\[
 \mathcal E_{n,\mathrm{tr}}
 =\{|\widehat{\mathfrak t}_n-\mathfrak t_\Omega|\le\mathfrak t_\Omega/2\},
 \qquad
 \Pp(\mathcal E_{n,\mathrm{tr}})\to1,
\]
set $f_a(x,s)=x/(x+as)$.  Uniformly in $a\in\cG$,
\begin{align*}
 |f_a(\widehat\lambda,\widehat{\mathfrak t}_n)
  -f_a(\lambda,\widehat{\mathfrak t}_n)|
 &=
 \frac{a\widehat{\mathfrak t}_n|\widehat\lambda-\lambda|}
 {(\widehat\lambda+a\widehat{\mathfrak t}_n)
  (\lambda+a\widehat{\mathfrak t}_n)}
 \le\frac{2}{a_-\mathfrak t_\Omega}|\widehat\lambda-\lambda|,\\
 |f_a(\lambda,\widehat{\mathfrak t}_n)-f_a(\lambda,\mathfrak t_\Omega)|
 &=
 \frac{a\lambda|\widehat{\mathfrak t}_n-\mathfrak t_\Omega|}
 {(\lambda+a\widehat{\mathfrak t}_n)(\lambda+a\mathfrak t_\Omega)}
 \le C\lambda|\widehat{\mathfrak t}_n-\mathfrak t_\Omega|.
\end{align*}
Summation gives
\begin{align*}
 \sum_{j\ge1}
 |\widehat w_{j,n}(a)-w_j(a)|
 &\le
 C\sum_{j\ge1}
 |\widehat\lambda_{j,n}-\lambda_j|
 +C|\widehat{\mathfrak t}_n-\mathfrak t_\Omega|\sum_{j\ge1}\lambda_j\\
 &=O_{\Pp}(\mathfrak e_n),
\end{align*}
which proves \eqref{eq:weight-l1-rate}.  Since
$0\le \widehat w_{j,n}(a),w_j(a)\le1$,
\begin{align*}
 |\widehat A_{1,n}(a)-A_1(a)|
 &\le\sum_{j\ge1}|\widehat w_{j,n}(a)-w_j(a)|,\\
 |\widehat A_{2,n}(a)-A_2(a)|
 &\le\sum_{j\ge1}|\widehat w_{j,n}^2(a)-w_j^2(a)|\\
 &\le2\sum_{j\ge1}|\widehat w_{j,n}(a)-w_j(a)|.
\end{align*}
Taking the maximum over the finite ridge grid proves
\eqref{eq:Ar-consistency}.

Under one change, Lemma~\ref{prop:alternative-DB} replaces
\eqref{eq:eigen-l1-audited} by
\[
 \sum_{j=1}^{p}
 |\widehat\lambda_{j,n}-\lambda_{j,p,N}|
 =O_{\Pp}\!\left[\sqrt p\{\mathfrak r_n+\mathfrak r_{\delta,n}\}\right].
\]
Under model~\eqref{eq:multiple-change-model}, the same inequality holds
with $\mathfrak r_{\delta,n}$ replaced by
$\mathfrak r_{\delta,{\rm cp},n}$.  Combining the corresponding bound with
\eqref{eq:population-eigen-tail-audited} and repeating the preceding
Lipschitz calculation proves both alternative statements.

\subsection{Proof of Lemma~\ref{lem:bridge-series-regularity}}

Let
\[
 \mathcal Z_j(t)=\frac{B_j^2(t)}{t(1-t)}-1,
 \qquad
 t\in[\epsilon,1-\epsilon],
\]
and put
\[
 \mathfrak u_{\epsilon}^{\rm br}
 =\E\sup_{t\in[\epsilon,1-\epsilon]}|\mathcal Z_1(t)|.
\]
Since $B_j(t)=W_j(t)-tW_j(1)$ and
$t(1-t)\ge\epsilon(1-\epsilon)$,
\begin{align*}
 \mathfrak u_{\epsilon}^{\rm br}
 &\le
 1+\{\epsilon(1-\epsilon)\}^{-1}
 \E\sup_{0\le t\le1}B_j(t)^2\\
 &<\infty
\end{align*}
by Doob's $L^2$ inequality.  Moreover,
\[
 \sum_{j\ge1}w_j(a)
 =\sum_{j\ge1}\frac{\lambda_j}{\lambda_j+a\mathfrak t_\Omega}
 \le\frac1{a\mathfrak t_\Omega}\sum_{j\ge1}\lambda_j
 =\frac1a.
\]
Therefore
\begin{align*}
 \E\sum_{j\ge1}w_j(a)
 \sup_{t\in[\epsilon,1-\epsilon]}|\mathcal Z_j(t)|
 &=
 \sum_{j\ge1}w_j(a)
 \mathfrak u_{\epsilon}^{\rm br}\\
 &<\infty.
\end{align*}
Hence
\[
 \sum_{j\ge1}w_j(a)\sup_t|\mathcal Z_j(t)|<\infty
 \quad\text{a.s.}
\]
The Weierstrass criterion gives almost-sure uniform convergence, and
\[
 \E\sup_t
 \left|\sum_{j>J_{\rm br}}w_j(a)\mathcal Z_j(t)\right|
 \le \mathfrak u_{\epsilon}^{\rm br}\sum_{j>J_{\rm br}}w_j(a)\longrightarrow0
\]
gives uniform $L^1$ convergence.  The limit is continuous because every
partial sum is continuous.

\subsection{Proof of Theorem~\ref{thm:null-limit}}

Let $\mathcal J_n=\mathcal J_{p,N}$.  Since
\[
 \mathcal J_ne_i=\iota_p\beps_{i,p,N},
\]
the embedded score CUSUM equals
\begin{equation*}
 \iota_p\widehat\bS_n(t)
 =
 \frac1{\sqrt n}\left\{
 \sum_{i=1}^{\lfloor nt\rfloor}\mathcal J_ne_i
 -\frac{\lfloor nt\rfloor}{n}\sum_{i=1}^{n}\mathcal J_ne_i
 \right\}.
\end{equation*}
Lemmas~\ref{lem:projected-dependence} and
\ref{lem:ARS-main-FCLT} give
\begin{align*}
 &\sup_{0\le t\le1}
 \left\|
 n^{-1/2}\sum_{i=1}^{\lfloor nt\rfloor}(\mathcal J_ne_i-e_i)
 \right\|=o_{\Pp}(1),
 \notag\\
 &n^{-1/2}\left\{
 \sum_{i=1}^{\lfloor nt\rfloor}e_i
 -\frac{\lfloor nt\rfloor}{n}\sum_{i=1}^{n}e_i
 \right\}
 \dto\mathbb B_\Omega(t).
 \notag
\end{align*}
The uniform norm of the tied-down numerical-projection error is at most
twice the first supremum in the display.  Consequently,
\begin{equation}
 \iota_p\widehat\bS_n
 \dto\mathbb B_\Omega
 \quad\text{in }D([0,1],\cH).
 \label{eq:functional-bridge-FCLT-audited}
\end{equation}
Because the limit has continuous paths, this convergence also holds under
the uniform metric on every compact subinterval of $(0,1)$.  In particular,
\[
 \sup_{t\in[\epsilon,1-\epsilon]}
 \norm{\widehat\bS_n(t)}=O_{\Pp}(1).
\]

By Lemmas~\ref{lem:positive-part-rate} and
\ref{prop:quadrature},
\begin{align*}
 \norm{\widehat\Omega_{n,+}^{\cH}-\Omega}_1
 &\le
 \norm{\widehat\Omega_{n,+}^{\cH}-\Omega_{p,N}^{\cH}}_1
 +\norm{\Omega_{p,N}^{\cH}-\Omega}_1\\
 &=O_{\Pp}(\mathfrak e_n).
\end{align*}
Hence
\[
 |\widehat{\mathfrak t}_n-\mathfrak t_\Omega|
 \le\norm{\widehat\Omega_{n,+}^{\cH}-\Omega}_1
 =O_{\Pp}(\mathfrak e_n).
\]
Lemma~\ref{lem:resolvent-replacement} therefore yields
\begin{equation*}
 \max_{a\in\cG}
 \norm{\widehat{\mathcal R}_{n,a}-\mathcal R_a}_{\op}
 =O_{\Pp}(\mathfrak e_n),
 \qquad
 \mathcal R_a=(\Omega+a\mathfrak t_\Omega \Id_{\cH})^{-1}.
\end{equation*}
Uniformly in $a\in\cG$ and
$t\in[\epsilon,1-\epsilon]$,
\begin{align*}
 &\left|
 \ip{\widehat\bS_n(t)}
 {\widehat{\mathcal R}_{n,a}\widehat\bS_n(t)}
 -
 \ip{\widehat\bS_n(t)}
 {\mathcal R_a\widehat\bS_n(t)}
 \right|\\
 &\quad\le
 \norm{\widehat{\mathcal R}_{n,a}-\mathcal R_a}_{\op}
 \norm{\widehat\bS_n(t)}^2
 =O_{\Pp}(\mathfrak e_n).
\end{align*}

Put $t_n=\lfloor nt\rfloor/n$.  Since
\[
 \sup_{t\in[\epsilon,1-\epsilon]}
 |t_n(1-t_n)-t(1-t)|\le Cn^{-1},
\]
\eqref{eq:functional-bridge-FCLT-audited}, the continuous mapping theorem,
and the preceding resolvent bound imply
\begin{equation}
 \left\{V_{n,a}(\lfloor nt\rfloor):
 a\in\cG,\ t\in[\epsilon,1-\epsilon]\right\}
 \dto
 \left\{
 \frac{\ip{\mathbb B_\Omega(t)}
 {\mathcal R_a\mathbb B_\Omega(t)}}{t(1-t)}
 \right\}_{a,t}.
 \label{eq:oracle-quadratic-limit-audited}
\end{equation}

Let
\[
 \Omega=\sum_{j\ge1}\lambda_j\psi_j\otimes\psi_j.
\]
On a common probability space,
\[
 \mathbb B_\Omega(t)
 =\sum_{j\ge1}\sqrt{\lambda_j}B_j(t)\psi_j.
\]
Indeed,
\begin{align*}
 \E\sup_{t\in[\epsilon,1-\epsilon]}
 \left\|\sum_{j>J_{\rm eig}}\sqrt{\lambda_j}B_j(t)\psi_j\right\|^2
 &\le
 \E\sup_{0\le t\le1}B_1(t)^2
 \sum_{j>J_{\rm eig}}\lambda_j\\
 &\longrightarrow0.
\end{align*}
Since
\[
 \mathcal R_a\psi_j=(\lambda_j+a\mathfrak t_\Omega)^{-1}\psi_j,
\]
the limit in \eqref{eq:oracle-quadratic-limit-audited} equals
\begin{align*}
 \frac{\ip{\mathbb B_\Omega(t)}
 {\mathcal R_a\mathbb B_\Omega(t)}}{t(1-t)}
 &=
 \sum_{j\ge1}
 \frac{\lambda_j}{\lambda_j+a\mathfrak t_\Omega}
 \frac{B_j(t)^2}{t(1-t)}\\
 &=
 A_1(a)+\{2A_2(a)\}^{1/2}\mathcal D_a(t).
\end{align*}
Lemma~\ref{lem:bridge-series-regularity} justifies the uniform series
operations.  Lemma~\ref{prop:weight-consistency} gives
\[
 \max_{a\in\cG}
 \left\{
 |\widehat A_{1,n}(a)-A_1(a)|
 +|\widehat A_{2,n}(a)-A_2(a)|
 \right\}=o_{\Pp}(1).
\]
Slutsky's theorem proves \eqref{eq:joint-null-limit}.  Because the limit has
continuous paths, the sequence is asymptotically uniformly equicontinuous on
the trimmed interval.  The grid $\{k/n:k\in\cK_{n,\epsilon}\}$ has mesh at
most $1/n$ and its endpoints are within $1/n$ of
$[\epsilon,1-\epsilon]$.  Hence the difference between the supremum of the
step process in \eqref{eq:joint-null-limit} and the maximum over
$\cK_{n,\epsilon}$ is $o_{\Pp}(1)$.  The continuous mapping theorem therefore
proves \eqref{eq:joint-scan-limit}.

Finally, for centered jointly Gaussian $Z_1,Z_2$,
\[
 \Cov(Z_1^2,Z_2^2)=2\Cov(Z_1,Z_2)^2.
\]
Since
\[
 \Cov\{B_j(t),B_j(t')\}=(t\wedge t')-tt'
\]
and the bridges are independent across $j$,
\begin{align*}
 \Cov\{\mathcal D_a(t),\mathcal D_{a'}(t')\}
 &=
 \frac{\{(t\wedge t')-tt'\}^2}{t(1-t)t'(1-t')}
 \frac{\sum_{j\ge1}w_j(a)w_j(a')}
 {\{A_2(a)A_2(a')\}^{1/2}},
\end{align*}
which is \eqref{eq:cross-ridge-covariance}.

\subsection{Proof of Lemma~\ref{lem:finite-reference-regularity}}

Condition on the data and on $\mathcal E_n$.  Let
\[
 N_{n,\epsilon}^{\rm scan}=|\cK_{n,\epsilon}|.
\]
A nonsingular Gaussian factorization of the standardized bridge vector gives
\[
 \bm U=\mathsf L_n\bm Z^{\rm ref},
 \qquad
 \bm Z^{\rm ref}\sim N(\bm0,\bI_{pN_{n,\epsilon}^{\rm scan}}),
\]
where $\mathsf L_n$ is block diagonal with nonsingular lower triangular
blocks.  For $a\in\cG$ and $k\in\cK_{n,\epsilon}$, define
\[
 \mathcal Q_{n,a,k}^{\rm ref}(z)
 =\sum_{j=1}^{p}\widehat w_{j,n}(a)
 U_{j,k}(\mathsf L_nz)^2,
 \qquad
 \mathcal Q_{n,a}^{\rm ref}(z)=\max_{k}\mathcal Q_{n,a,k}^{\rm ref}(z).
\]
Then
\[
 \widehat{\mathcal T}_{n,a}^{*}
 =
 \frac{\mathcal Q_{n,a}^{\rm ref}(\bm Z^{\rm ref})-\widehat A_{1,n}(a)}
 {\{2\widehat A_{2,n}(a)\}^{1/2}}.
\]
On $\mathcal E_n$,
\[
 \widehat A_{2,n}(a)>0
\]
and every $\mathcal Q_{n,a,k}^{\rm ref}$ is a nonzero quadratic polynomial:
there is an index with positive ridge weight, and its Gaussian coordinate at
each scan point is nondegenerate.  For $x\in\mathbb R$, with
\[
 c_x^{\rm ref}=\widehat A_{1,n}(a)
 +x\{2\widehat A_{2,n}(a)\}^{1/2},
\]
\begin{align*}
 \{\mathcal Q_{n,a}^{\rm ref}(\bm Z^{\rm ref})=c_x^{\rm ref}\}
 &\subseteq
 \bigcup_{k\in\cK_{n,\epsilon}}
 \{\mathcal Q_{n,a,k}^{\rm ref}(\bm Z^{\rm ref})=c_x^{\rm ref}\}.
\end{align*}
Each set on the right is the zero set of a nonzero polynomial and therefore
has Lebesgue measure zero.  Since $\bm Z^{\rm ref}$ has a Lebesgue density,
\[
 \Pp_{\bm B}\{\widehat{\mathcal T}_{n,a}^{*}=x
 \mid\widehat{\mathcal W}_n\}=0.
\]
Thus $\widehat F_{n,a}$ is continuous.

The map $\mathcal Q_{n,a}^{\rm ref}$ is continuous, nonnegative, and homogeneous:
\[
 \mathcal Q_{n,a}^{\rm ref}(xz)=x^2\mathcal Q_{n,a}^{\rm ref}(z),
 \qquad x\ge0.
\]
Choose $z_0$ with $\mathcal Q_{n,a}^{\rm ref}(z_0)>0$.  For
$0\le y_1<y_2<\infty$, choose $x_0>0$ such that
\[
 y_1<\mathcal Q_{n,a}^{\rm ref}(x_0z_0)<y_2.
\]
By continuity, an open ball around $x_0z_0$ is contained in
$\{y_1<\mathcal Q_{n,a}^{\rm ref}<y_2\}$.  The driving vector $\bm Z^{\rm ref}$ is nondegenerate standard Gaussian
and therefore has a strictly positive Lebesgue density on its full Euclidean
state space; the nonsingularity of $\mathsf L_n$ ensures that no reference
coordinate has been collapsed.  Hence this ball has positive Gaussian
probability.  Hence $\widehat F_{n,a}$ is strictly increasing on its
support.  The conditional
probability integral transform gives
\[
 1-\widehat F_{n,a}(\widehat{\mathcal T}_{n,a}^{\circ})
 \mid\widehat{\mathcal W}_n
 \sim\operatorname{Uniform}(0,1).
\]

For the Cauchy statistic, write
\[
 \bm Z^{\rm ref}=R_{\rm rad}\Theta_{\rm dir},
\]
where $R_{\rm rad}>0$ has a continuous density and is independent of the uniform
direction $\Theta_{\rm dir}$.  On $\mathcal E_n$, let
\[
 \mathbb I_n^{\rm eig,+}=\{j:\widehat\lambda_{j,n}>0\}.
\]
This set is nonempty, and, for every $a\in\cG$,
\[
 \widehat w_{j,n}(a)>0
 \quad\Longleftrightarrow\quad j\in\mathbb I_n^{\rm eig,+}.
\]
Define the common nullspace
\[
 \mathcal N_n
 =\left\{z:
 U_{j,k}(\mathsf L_nz)=0
 \text{ for all }j\in\mathbb I_n^{\rm eig,+},
 k\in\cK_{n,\epsilon}\right\}.
\]
Because $\mathsf L_n$ is nonsingular and $\mathbb I_n^{\rm eig,+}\ne\varnothing$,
$\mathcal N_n$ is a proper linear subspace and
$\{\Theta_{\rm dir}\in\mathcal N_n\}$ has spherical measure zero.  If
$\Theta_{\rm dir}\notin\mathcal N_n$, then at least one active coordinate is
nonzero at at least one scan point.  Since every ridge value assigns a strictly
positive weight to every coordinate in $\mathbb I_n^{\rm eig,+}$,
\[
 \mathcal Q_{n,a}^{\rm ref}(\Theta_{\rm dir})>0,
 \qquad a\in\cG.
\]
Consequently, for every $a\in\cG$,
\begin{align*}
 \widehat{\mathcal T}_{n,a}^{*}(x\Theta_{\rm dir})
 &=
 \frac{x^2\mathcal Q_{n,a}^{\rm ref}(\Theta_{\rm dir})-\widehat A_{1,n}(a)}
 {\{2\widehat A_{2,n}(a)\}^{1/2}}
\end{align*}
is strictly increasing in $x>0$.  Since $\widehat F_{n,a}$ is continuous and
strictly increasing on its support,
\[
 x\longmapsto
 \mathfrak c_{\rm C}\!\left[
 1-\widehat F_{n,a}
 \{\widehat{\mathcal T}_{n,a}^{*}(x\Theta_{\rm dir})\}
 \right]
\]
is strictly increasing.  Positive ridge-combination weights preserve strict
monotonicity, so, conditionally on
$\Theta_{\rm dir}\notin\mathcal N_n$, the inverse image of a singleton under
$\mathfrak C_n^{\circ}$ contains at most one value of $R_{\rm rad}$.  Therefore
\[
 \Pp_{\bm B}(\mathfrak C_n^{\circ}=x
 \mid\widehat{\mathcal W}_n,\Theta_{\rm dir})=0,
 \qquad x\in\mathbb R,
\]
and integration over $\Theta_{\rm dir}$ proves continuity of the conditional law.

\subsection{Proofs of Theorem~\ref{thm:critical-values} and Lemma~\ref{lem:Monte-Carlo-reference}}

For $a\in\cG$, define
\[
 \mathbb X_a=C([\epsilon,1-\epsilon],\ell^2),
 \qquad
 \mathbb Z_a(t)=\{w_j(a)^{1/2}U_j(t):j\ge1\},
\]
with
\[
 \norm z_{\mathbb X_a}
 =\sup_{t\in[\epsilon,1-\epsilon]}\norm{z(t)}_{\ell^2}.
\]
For $J_{\rm eig}\ge1$, let
\[
 \mathbb Z_a^{(J_{\rm eig})}(t)
 =\{w_j(a)^{1/2}U_j(t)\ind(j\le J_{\rm eig}):j\ge1\}.
\]
Since $\sum_jw_j(a)<\infty$,
\begin{align*}
 \E\norm{\mathbb Z_a-\mathbb Z_a^{(J_{\rm eig})}}_{\mathbb X_a}^2
 &\le
 \sum_{j>J_{\rm eig}}w_j(a)\E\sup_tU_j(t)^2\\
 &=C_\epsilon\sum_{j>J_{\rm eig}}w_j(a)
 \longrightarrow0.
\end{align*}
Thus $\mathbb Z_a$ is a nonzero centered Gaussian element of the separable Banach
space $\mathbb X_a$, and
\begin{equation*}
 \mathcal T_a
 =\frac{\norm{\mathbb Z_a}_{\mathbb X_a}^2-A_1(a)}
 {\{2A_2(a)\}^{1/2}}.
\end{equation*}
For every $c_{\rm rad}>0$, the norm of a Gaussian measure has a density on
$[c_{\rm rad},\infty)$ \citep[pp.~277--278]{Rhee1984}.  Moreover, choosing
$j_0$ with $w_{j_0}(a)>0$ and
$t_0\in(\epsilon,1-\epsilon)$ gives
\[
 \{\norm{\mathbb Z_a}_{\mathbb X_a}=0\}
 \subseteq
 \{w_{j_0}(a)^{1/2}U_{j_0}(t_0)=0\},
\]
whose probability is zero.  The support of the Gaussian law is the closure
of its reproducing-kernel Hilbert space
\citep[Theorem~3.6.1]{Bogachev1998}
\citep[Section~5]{VanDerVaartVanZanten2008}, hence a nonzero closed linear
subspace.  For $0\le x_1<x_2$, choose a support point $z_{\rm sup}\ne0$ and
$c_{\rm sup}>0$ such that
\[
 \norm{c_{\rm sup}z_{\rm sup}}_{\mathbb X_a}=\frac{x_1+x_2}{2}.
\]
A sufficiently small open ball around $c_{\rm sup}z_{\rm sup}$ lies in
$\{z:x_1<\norm z_{\mathbb X_a}<x_2\}$ and has positive Gaussian probability.
Thus the distribution function of $\norm{\mathbb Z_a}_{\mathbb X_a}$ is
continuous and strictly increasing on $[0,\infty)$.  Since
\[
 \mathcal T_a
 =\frac{\norm{\mathbb Z_a}_{\mathbb X_a}^2-A_1(a)}
       {\{2A_2(a)\}^{1/2}},
\]
its distribution function $F_a$ is continuous and strictly increasing on its
support
\[
 \left[-\frac{A_1(a)}{\{2A_2(a)\}^{1/2}},\infty\right).
\]

Let
\[
 \mathcal Z_j(t)=\frac{B_j^2(t)}{t(1-t)}-1.
\]
For a nonnegative summable sequence $w=(w_j)$ with
$S_w=\sum_jw_j^2>0$, define
\begin{align*}
 \Phi(w,\bm B)
 &=\sup_{t\in[\epsilon,1-\epsilon]}
 \frac{\sum_jw_j\mathcal Z_j(t)}{\sqrt{2S_w}},\\
 \Phi_n(w,\bm B)
 &=\max_{k\in\cK_{n,\epsilon}}
 \frac{\sum_jw_j\mathcal Z_j(t_k)}{\sqrt{2S_w}}.
\end{align*}
Doob's inequality gives
\[
 \mathfrak u_{\epsilon}^{\rm br}
 :=\E\sup_{t\in[\epsilon,1-\epsilon]}|\mathcal Z_1(t)|
 <\infty.
\]
Consequently,
\begin{equation*}
 \E_{\bm B}\sup_t
 \left|\sum_j(w_j-\widetilde w_j)\mathcal Z_j(t)\right|
 \le \mathfrak u_{\epsilon}^{\rm br}\sum_j|w_j-\widetilde w_j|.
\end{equation*}
If
\[
 S_w\wedge S_{\widetilde w}\ge c_{\rm wt}>0,
 \qquad
 \norm{w}_1+\norm{\widetilde w}_1\le C,
\]
then
\begin{align*}
 &\sup_t
 \left|
 \frac{\sum_jw_j\mathcal Z_j(t)}{\sqrt{2S_w}}
 -\frac{\sum_j\widetilde w_j\mathcal Z_j(t)}
       {\sqrt{2S_{\widetilde w}}}
 \right|\\
 &\quad\le
 \frac{\sup_t|\sum_j(w_j-\widetilde w_j)\mathcal Z_j(t)|}{\sqrt{2c_{\rm wt}}}\\
 &\qquad+
 \frac{\sup_t|\sum_j\widetilde w_j\mathcal Z_j(t)|
 |S_w-S_{\widetilde w}|}
 {\sqrt2\,\sqrt{S_wS_{\widetilde w}}
  (\sqrt{S_w}+\sqrt{S_{\widetilde w}})}.
\end{align*}
Since $0\le w_j,\widetilde w_j\le1$,
\[
 |S_w-S_{\widetilde w}|
 \le2\sum_j|w_j-\widetilde w_j|.
\]
Taking bridge expectations in the preceding inequality yields
\begin{equation*}
 \E_{\bm B}\sup_t
 \left|
 \frac{\sum_jw_j\mathcal Z_j(t)}{\{2\sum_jw_j^2\}^{1/2}}
 -\frac{\sum_j\widetilde w_j\mathcal Z_j(t)}
       {\{2\sum_j\widetilde w_j^2\}^{1/2}}
 \right|
 \le C\sum_j|w_j-\widetilde w_j|.
\end{equation*}
Write $\bm w(a)=\{w_j(a):j\ge1\}$.  Lemma~\ref{prop:weight-consistency} therefore gives, uniformly over
$a\in\cG$,
\[
 \Phi\{\widehat{\bm w}_n(a),\bm B\}
 -\Phi\{\bm w(a),\bm B\}
 \longrightarrow0
\]
in conditional probability.

For the time-grid error, let
\[
 \operatorname{osc}_{\mathcal Z}(c_{\rm mod})
 =\sup_{\substack{s,t\in[\epsilon,1-\epsilon]\\|s-t|\le c_{\rm mod}}}
 |\mathcal Z_1(s)-\mathcal Z_1(t)|.
\]
Almost-sure continuity and
$\operatorname{osc}_{\mathcal Z}(c_{\rm mod})\le2\sup_t|\mathcal Z_1(t)|$ imply
\[
 \E\operatorname{osc}_{\mathcal Z}(c_{\rm mod})\longrightarrow0.
\]
On
\[
 \min_{a\in\cG}\widehat A_{2,n}(a)
 \ge\frac12\min_{a\in\cG}A_2(a),
\]
whose probability tends to one,
\begin{align*}
 &\max_{a\in\cG}
 \E_{\bm B}\left|
 \Phi_n\{\widehat{\bm w}_n(a),\bm B\}
 -\Phi\{\widehat{\bm w}_n(a),\bm B\}\right|\\
 &\quad\le
 C\max_{a\in\cG}
 \frac{\widehat A_{1,n}(a)}
      {\widehat A_{2,n}(a)^{1/2}}
 \E\operatorname{osc}_{\mathcal Z}(1/n)
 \longrightarrow0.
\end{align*}
Thus
\[
 d_{\mathrm{BL}}(\widehat F_n^{\rm joint},F^{\rm joint})
 \pto0.
\]
The marginal version and the continuity of $F_a$ imply, by P\'olya's
theorem \citep[Section~2]{Billingsley1999},
\begin{equation*}
 \max_{a\in\cG}\sup_x
 |\widehat F_{n,a}(x)-F_a(x)|
 \pto0.
\end{equation*}

Let $c_{a,1-\alpha}=F_a^{-1}(1-\alpha)$.  For every $c_{\rm fr}>0$,
strict increase gives
\[
 F_a(c_{a,1-\alpha}-c_{\rm fr})
 <1-\alpha
 <F_a(c_{a,1-\alpha}+c_{\rm fr}).
\]
On the event
\[
 \sup_x|\widehat F_{n,a}(x)-F_a(x)|
 <
 \frac12\min\left\{
 1-\alpha-F_a(c_{a,1-\alpha}-c_{\rm fr}),\
 F_a(c_{a,1-\alpha}+c_{\rm fr})-(1-\alpha)
 \right\},
\]
the generalized inverse satisfies
\[
 |\widehat c_{n,a,1-\alpha}-c_{a,1-\alpha}|<c_{\rm fr}.
\]
Hence
\[
 \widehat c_{n,a,1-\alpha}\pto c_{a,1-\alpha},
 \qquad
 \Pp\{T_{n,a}>\widehat c_{n,a,1-\alpha}\}\to\alpha,
\]
by Theorem~\ref{thm:null-limit}.

For the joint Cauchy statistic, put
\[
 \bm T_n=(T_{n,a_1},\ldots,T_{n,a_{L_{\cG}}}),
 \qquad
 \bm T_\infty=(\mathcal T_{a_1},\ldots,\mathcal T_{a_{L_{\cG}}}),
\]
and
\[
 \bm P_\infty
 =\{1-F_{a_v}(\mathcal T_{a_v}):1\le v\le L_{\cG}\}.
\]
Equation \eqref{eq:conditional-cdf-consistency} gives
\begin{equation*}
 \max_{1\le v\le L_{\cG}}
 \left|P_{n,v}-\{1-F_{a_v}(T_{n,a_v})\}\right|
 \le
 \max_{1\le v\le L_{\cG}}\sup_x
 |\widehat F_{n,a_v}(x)-F_{a_v}(x)|
 \pto0.
\end{equation*}
Theorem~\ref{thm:null-limit} and the continuous mapping theorem imply
\begin{equation}
 (P_{n,1},\ldots,P_{n,L_{\cG}})\dto\bm P_\infty.
 \label{eq:observed-pvalue-limit}
\end{equation}
Each marginal of $\bm P_\infty$ is uniform on $(0,1)$.

For a fresh common-bridge draw, the conditional joint-law convergence and
\eqref{eq:conditional-cdf-consistency} yield
\begin{equation}
 d_{\mathrm{BL}}\!\left[
 \cL\{(P_{n,1}^{\circ},\ldots,P_{n,L_{\cG}}^{\circ})
 \mid\widehat{\mathcal W}_n\},
 \cL(\bm P_\infty)\right]\pto0.
 \label{eq:reference-pvalue-limit}
\end{equation}
For $c_{\rm cut}\in(0,1/2)$, define the bounded continuous map
\[
 \mathfrak C_{c_{\rm cut}}^{\rm tr}(\bm p)
 =\sum_{v=1}^{L_{\cG}}\varpi_v
 \mathfrak c_{\rm C}\{\min(1-c_{\rm cut},\max(c_{\rm cut},p_v))\}.
\]
By \eqref{eq:observed-pvalue-limit} and
\eqref{eq:reference-pvalue-limit}, the observed and conditional reference
laws of $\mathfrak C_{c_{\rm cut}}^{\rm tr}$ converge to the law of $\mathfrak C_{c_{\rm cut}}^{\rm tr}(\bm P_\infty)$.
Furthermore,
\[
 \Pp\left(
 \min_vP_{\infty,v}\le c_{\rm cut}
 \ \text{or}\
 \max_vP_{\infty,v}\ge1-c_{\rm cut}
 \right)
 \le2L_{\cG}c_{\rm cut}.
\]
The same bound holds asymptotically for the observed vector and in
probability for the conditional reference vector.  Letting
$c_{\rm cut}\downarrow0$ gives
\begin{equation}
 \mathfrak C_n\dto \mathfrak C_\infty,
 \qquad
 d_{\mathrm{BL}}\!\left\{
 \cL(\mathfrak C_n^{\circ}\mid\widehat{\mathcal W}_n),
 \cL(\mathfrak C_\infty)\right\}\pto0.
 \label{eq:Cauchy-joint-law-limit}
\end{equation}

Assume $\mathsf{CR}_{\alpha}$ and fix $c_{\rm qnt}>0$.  A distribution
function has at most countably many discontinuities, so one may choose
continuity points $x_-$ and $x_+$ of $G_C$ such that
\begin{equation*}
 c_{C,1-\alpha}-c_{\rm qnt}<x_-<c_{C,1-\alpha}<x_+
 <c_{C,1-\alpha}+c_{\rm qnt}.
\end{equation*}
The strict crossing in $\mathsf{CR}_{\alpha}$ gives
\begin{equation*}
 G_C(x_-)<1-\alpha<G_C(x_+).
\end{equation*}
The conditional weak convergence in \eqref{eq:Cauchy-joint-law-limit},
evaluated at these two continuity points, transfers the inequalities to the
conditional law of $\mathfrak C_n^{\circ}$ with probability tending to one.
The definition of the generalized inverse then yields
\begin{equation*}
 x_-<\widehat c_{n,C,1-\alpha}\le x_+,
\end{equation*}
and hence
\begin{equation*}
 |\widehat c_{n,C,1-\alpha}-c_{C,1-\alpha}|<c_{\rm qnt}
\end{equation*}
with probability tending to one.  Since $G_C$ is continuous at
$c_{C,1-\alpha}$ and $G_C(c_{C,1-\alpha})=1-\alpha$,
\[
 \widehat c_{n,C,1-\alpha}\pto c_{C,1-\alpha},
 \qquad
 \Pp\{\mathfrak C_n>\widehat c_{n,C,1-\alpha}\}\to\alpha.
\]

It remains to control finite bridge simulation.  Conditional on the data,
the bridge replicates in a batch of size $N_{\rm sim}$ are independent and
identically distributed.
The sharp Dvoretzky--Kiefer--Wolfowitz inequality
\citep{Massart1990}, followed by a union bound over the fixed ridge grid,
gives \eqref{eq:DKW-marginal-bridge}.  Applied to an independent second
batch, the same inequality gives \eqref{eq:DKW-joint-bridge}.

For the practical first batch, define
\[
 \mathfrak E_{n,N_{{\rm sim},1}}^{\rm sim}
 =\max_{1\le v\le L_{\cG}}\sup_x
 \left|\widetilde P_{n,v,N_{{\rm sim},1}}(x)
 -\{1-\widehat F_{n,a_v}(x)\}\right|.
\]
Let $\widehat F_{n,a_v,N_{{\rm sim},1}}(x-)$ denote the left limit of the empirical
cdf.  The definition in \eqref{eq:midrank-bridge-pvalue} gives the exact
identity
\begin{equation*}
 \widetilde P_{n,v,N_{{\rm sim},1}}(x)
 =\frac{N_{{\rm sim},1}}{N_{{\rm sim},1}+1}
   \{1-\widehat F_{n,a_v,N_{{\rm sim},1}}(x-)\}
  +\frac{1}{2(N_{{\rm sim},1}+1)}.
\end{equation*}
Because $\widehat F_{n,a_v}$ is continuous on $\mathcal E_n$, for every
$x$ the left-hand difference below is the limit of the corresponding
right-continuous empirical-cdf differences along $y\uparrow x$.  Therefore
\begin{equation*}
 \sup_x|\widehat F_{n,a_v,N_{{\rm sim},1}}(x-)-\widehat F_{n,a_v}(x)|
 \le
 \sup_x|\widehat F_{n,a_v,N_{{\rm sim},1}}(x)-\widehat F_{n,a_v}(x)|.
\end{equation*}
It follows that
\begin{equation*}
 \mathfrak E_{n,N_{{\rm sim},1}}^{\rm sim}
 \le
 \max_{1\le v\le L_{\cG}}\sup_x
 |\widehat F_{n,a_v,N_{{\rm sim},1}}(x)-\widehat F_{n,a_v}(x)|
 +\frac{1}{2(N_{{\rm sim},1}+1)}.
\end{equation*}
The DKW inequality and a union bound over the fixed ridge grid therefore
imply, for every $y>0$,
\begin{equation}
 \Pp_{\bm B}\left\{
 \mathfrak E_{n,N_{{\rm sim},1}}^{\rm sim}>y+\frac{1}{2(N_{{\rm sim},1}+1)}
 \mid\widehat{\mathcal W}_n\right\}
 \le2L_{\cG}\exp(-2N_{{\rm sim},1}y^2).
 \label{eq:midrank-DKW-bound}
\end{equation}
Fix $0<c_{\rm cut}<1/4$.  On $[c_{\rm cut},1-c_{\rm cut}]$,
\[
 |\mathfrak c_{\rm C}'(z)|
 =\pi\csc^2(\pi z)
 \le\pi\csc^2(\pi c_{\rm cut}).
\]
By Lemma~\ref{lem:finite-reference-regularity} and a union bound,
\[
 \Pp_{\bm B}\left(
 \min_vP_{n,v}^{\circ}\le2c_{\rm cut}
 \ \text{or}\
 \max_vP_{n,v}^{\circ}\ge1-2c_{\rm cut}
 \mid\widehat{\mathcal W}_n\right)
 \le4L_{\cG}c_{\rm cut}.
\]
On the complement of this event and on
$\{\mathfrak E_{n,N_{{\rm sim},1}}^{\rm sim}\le c_{\rm cut}\}$, both the exact and first-batch
marginal tail values belong to $[c_{\rm cut},1-c_{\rm cut}]$.  Therefore
\[
 |\mathfrak C_{n,N_{{\rm sim},1}}^{\circ}-\mathfrak C_n^{\circ}|
 \le
 \pi\csc^2(\pi c_{\rm cut})\mathfrak E_{n,N_{{\rm sim},1}}^{\rm sim}.
\]
Write $C_{\rm Cau}(c_{\rm cut})=\pi\csc^2(\pi c_{\rm cut})$.  Conditional on the
data and the first batch, for every $\rho_{\rm MC}>0$,
\begin{align}
 &\Pp_{\bm B}\left\{
 |\mathfrak C_{n,N_{{\rm sim},1}}^{\circ}-\mathfrak C_n^{\circ}|>\rho_{\rm MC}
 \mid\widehat{\mathcal W}_n,\text{ first batch}\right\}
 \notag\\
 &\qquad\le4L_{\cG}c_{\rm cut}
 +\ind\!\left\{\mathfrak E_{n,N_{{\rm sim},1}}^{\rm sim}>
 \min(c_{\rm cut},\rho_{\rm MC}/C_{\rm Cau}(c_{\rm cut}))\right\}.
 \label{eq:MC-coupling-conditional-bound}
\end{align}
For any joint sequence $n\to\infty$ and $N_{{\rm sim},1}(n)\to\infty$,
\eqref{eq:midrank-DKW-bound} makes the indicator on the right converge to
zero in probability, uniformly in $n$.  Letting first $n,N_{{\rm sim},1}\to\infty$ and
then $c_{\rm cut}\downarrow0$ proves
\begin{equation}
 \mathfrak C_{n,N_{{\rm sim},1}}^{\circ}-\mathfrak C_n^{\circ}
 \longrightarrow0.
 \label{eq:MC-reference-comparison}
\end{equation}
Here the convergence is in conditional probability given the data and first
batch, in outer probability.  Moreover,
\eqref{eq:MC-coupling-conditional-bound} and
$\E\min(2,|X|)\le\rho_{\rm MC}+2\Pp(|X|>\rho_{\rm MC})$ give the corresponding conditional
$L^1$ convergence.
The same truncation argument, now using
\eqref{eq:observed-pvalue-limit}, gives
\[
 \mathfrak C_{n,N_{{\rm sim},1}}-\mathfrak C_n\pto0.
\]
Let $\widehat G_{n,N_{{\rm sim},1}}^{\circ}$ be the practical conditional reference cdf
defined in Lemma~\ref{lem:Monte-Carlo-reference}.  The coupling in
\eqref{eq:MC-reference-comparison} implies
\begin{align*}
 &d_{\mathrm{BL}}\{\widehat G_{n,N_{{\rm sim},1}}^{\circ},
 \cL(\mathfrak C_n^{\circ}\mid\widehat{\mathcal W}_n)\}\\
 &\quad\le
 \E_{\bm B}\!\left[
 \min\{2,|\mathfrak C_{n,N_{{\rm sim},1}}^{\circ}-\mathfrak C_n^{\circ}|\}
 \mid\widehat{\mathcal W}_n,\text{ first batch}\right]\pto0.
\end{align*}
Together with \eqref{eq:Cauchy-joint-law-limit}, this gives conditional weak
convergence of the practical first-batch reference law to $G_C$.  Conditional
DKW in \eqref{eq:DKW-joint-bridge} for the independent second batch and the
preceding quantile-separation argument therefore yield
\[
 \widetilde c_{n,C,1-\alpha}^{(N_{{\rm sim},1},N_{{\rm sim},2})}
 \pto c_{C,1-\alpha},
 \qquad
 \Pp\{\mathfrak C_{n,N_{{\rm sim},1}}>
 \widetilde c_{n,C,1-\alpha}^{(N_{{\rm sim},1},N_{{\rm sim},2})}\}\to\alpha.
\]
Because $\csc^2(\pi c_{\rm cut})\to\infty$ as $c_{\rm cut}\downarrow0$, this argument
establishes consistency but no universal $N_{{\rm sim},1}^{-1/2}$ rate after the
Cauchy transform.

\section{Proofs under alternatives}

\subsection{Proof of Theorem~\ref{thm:local-alternative}}

Recall that
\[
 t_k=\frac{k}{n},
 \qquad
 \vartheta_n=\frac{\tau_n}{n},
 \qquad
 d_n=\mathcal J_{p,N}\delta_n=\iota_p\bdelta_{p,N}.
\]
Define the pure-noise score CUSUM by
\begin{equation*}
 \widehat\bS_{n,0}(t)
 =\frac1{\sqrt n}\left\{
 \sum_{i=1}^{\lfloor nt\rfloor}\beps_{i,p,N}
 -\frac{\lfloor nt\rfloor}{n}\sum_{i=1}^{n}\beps_{i,p,N}
 \right\},
 \qquad 0\le t\le1.
\end{equation*}
Direct summation of the post-change mean gives
\begin{equation*}
 \E\widehat\bS_n(t_k)
 =-\sqrt n\,g(t_k,\vartheta_n)\bdelta_{p,N}.
\end{equation*}
For $t\in[0,1]$, put $t_n=\lfloor nt\rfloor/n$.  Since
$g(\cdot,\vartheta_n)$ is one-Lipschitz,
\begin{align*}
 &\sup_{0\le t\le1}
 \sqrt n\,|g(t_n,\vartheta_n)-g(t,\vartheta_n)|
 \norm{d_n}\\
 &\quad\le
 n^{-1}\norm{\sqrt n\,d_n}
 \longrightarrow0.
\end{align*}
Hence
\begin{equation*}
 \iota_p\widehat\bS_n(t)
 =
 \iota_p\widehat\bS_{n,0}(t)
 -\sqrt n\,g(t,\vartheta_n)d_n
 +\mathsf r_n^{\mu}(t),
 \qquad
 \sup_t\norm{\mathsf r_n^{\mu}(t)}\longrightarrow0.
\end{equation*}
By the null functional limit,
\[
 \iota_p\widehat\bS_{n,0}\dto\mathbb B_\Omega.
\]
Furthermore,
\begin{align*}
 &\sup_{t\in[\epsilon,1-\epsilon]}
 \norm{\sqrt n\,g(t,\vartheta_n)d_n-g(t,\vartheta)d}\\
 &\quad\le
 \sup_t|g(t,\vartheta_n)|
 \norm{\sqrt n\,d_n-d}
 +\sup_t|g(t,\vartheta_n)-g(t,\vartheta)|
 \norm d\\
 &\longrightarrow0.
\end{align*}
Since the null limit is continuous, the Skorokhod convergence and
the preceding uniform deterministic approximation imply
\begin{equation}
 \iota_p\widehat\bS_n
 \dto
 \mathbb B_\Omega-g(\,\cdot\,,\vartheta)d
 \quad\text{in }
 \ell^\infty([\epsilon,1-\epsilon],\cH).
 \label{eq:local-CUSUM-limit}
\end{equation}

Set
\[
 \mathfrak e_{{\rm alt},n}
 =\sqrt p\{\mathfrak r_n+\mathfrak r_{\delta,n}\}+\zeta_{p,N}.
\]
Lemmas~\ref{prop:alternative-DB},
\ref{prop:quadrature}, and
\ref{prop:weight-consistency} imply
\begin{align*}
 \norm{\widehat\Omega_{n,+}^{\cH}-\Omega}_1
 +|\widehat{\mathfrak t}_n-\mathfrak t_\Omega|
 &=O_{\Pp}(\mathfrak e_{{\rm alt},n}),\\
 \max_{a\in\cG}
 \norm{\widehat{\mathcal R}_{n,a}-\mathcal R_a}_{\op}
 &=O_{\Pp}(\mathfrak e_{{\rm alt},n}),\\
 \max_{a\in\cG}
 \left\{
 |\widehat A_{1,n}(a)-A_1(a)|
 +|\widehat A_{2,n}(a)-A_2(a)|
 \right\}
 &=O_{\Pp}(\mathfrak e_{{\rm alt},n}).
\end{align*}
Because $\mathfrak e_{{\rm alt},n}\to0$ and
\[
 \sup_t\norm{\widehat\bS_n(t)}=O_{\Pp}(1),
\]
uniformly over $a\in\cG$,
\begin{align*}
 &\sup_t
 \left|
 \ip{\widehat\bS_n(t)}
 {\widehat{\mathcal R}_{n,a}\widehat\bS_n(t)}
 -
 \ip{\widehat\bS_n(t)}
 {\mathcal R_a\widehat\bS_n(t)}
 \right|\\
 &\quad\le
 \norm{\widehat{\mathcal R}_{n,a}-\mathcal R_a}_{\op}
 \sup_t\norm{\widehat\bS_n(t)}^2
 =O_{\Pp}(\mathfrak e_{{\rm alt},n}).
\end{align*}
The continuous mapping theorem applied to
\eqref{eq:local-CUSUM-limit} gives
\begin{align*}
 V_{n,a}(\lfloor nt\rfloor)
 &\dto
 \frac1{t(1-t)}
 \ip{\mathbb B_\Omega(t)-g(t,\vartheta)d}
 {\mathcal R_a\{\mathbb B_\Omega(t)-g(t,\vartheta)d\}}
\end{align*}
jointly over $a\in\cG$.

Write
\[
 d=d_0+\sum_{j\ge1}d_j\psi_j,
 \qquad d_0\in\ker(\Omega),
\]
and
\[
 \mathbb B_\Omega(t)
 =\sum_{j\ge1}\sqrt{\lambda_j}B_j(t)\psi_j.
\]
Since
\[
 \mathcal R_a d_0=(a\mathfrak t_\Omega)^{-1}d_0,
 \qquad
 \mathcal R_a\psi_j=(\lambda_j+a\mathfrak t_\Omega)^{-1}\psi_j,
\]
the limiting quadratic form equals
\begin{align*}
 &\sum_{j\ge1}w_j(a)\frac{B_j(t)^2}{t(1-t)}
 -\frac{2g(t,\vartheta)}{t(1-t)}
 \sum_{j\ge1}
 \frac{\sqrt{\lambda_j}d_j}{\lambda_j+a\mathfrak t_\Omega}B_j(t)\\
 &\quad+
 \frac{g(t,\vartheta)^2}{t(1-t)}
 \left\{
 \sum_{j\ge1}\frac{d_j^2}{\lambda_j+a\mathfrak t_\Omega}
 +\frac{\norm{d_0}^2}{a\mathfrak t_\Omega}
 \right\}.
\end{align*}
For
\[
 \beta_j^{\rm nc}(a)=\frac{\sqrt{\lambda_j}d_j}
              {\lambda_j+a\mathfrak t_\Omega},
\]
we have, uniformly over the finite ridge grid,
\[
 \sum_{j\ge1}\beta_j^{\rm nc}(a)^2
 \le
 \sup_{x\ge0}\frac{x}{(x+a_-\mathfrak t_\Omega)^2}\norm{d}^2
 =\frac{\norm{d}^2}{4a_-\mathfrak t_\Omega}.
\]
For $J_2^{\rm ser}>J_1^{\rm ser}$, the finite tail
$\sum_{j=J_1^{\rm ser}+1}^{J_2^{\rm ser}}\beta_j^{\rm nc}(a)B_j(t)$ is a scalar Brownian bridge with variance
factor $\sum_{j=J_1^{\rm ser}+1}^{J_2^{\rm ser}}\beta_j^{\rm nc}(a)^2$.  Hence Doob's inequality gives
\begin{align*}
 &\E\sup_{t\in[\epsilon,1-\epsilon]}
 \left|
 \sum_{j=J_1^{\rm ser}+1}^{J_2^{\rm ser}}\beta_j^{\rm nc}(a)U_j(t)
 \right|^2\\
 &\quad\le
 \{\epsilon(1-\epsilon)\}^{-1}
 \E\sup_{0\le t\le1}B_1(t)^2
 \sum_{j=J_1^{\rm ser}+1}^{J_2^{\rm ser}}\beta_j^{\rm nc}(a)^2
 \longrightarrow0.
\end{align*}
The convergence holds as $J_1^{\rm ser}\to\infty$, uniformly over
$J_2^{\rm ser}>J_1^{\rm ser}$.  Thus the random linear series in
\eqref{eq:noncentral-weighted-bridge} converges uniformly in $L^2$ and has a
continuous version.  The deterministic series satisfies
\[
 \sum_{j\ge1}\frac{d_j^2}{\lambda_j+a\mathfrak t_\Omega}
 +\frac{\norm{d_0}^2}{a\mathfrak t_\Omega}
 \le\frac{\norm{d}^2}{a_-\mathfrak t_\Omega}.
\]
After centering by $A_1(a)$ and scaling by
$\{2A_2(a)\}^{1/2}$, the limit is exactly
\eqref{eq:noncentral-weighted-bridge}.  Slutsky's theorem proves
\eqref{eq:local-process-limit}.

\subsection{Proof of Theorem~\ref{thm:local-power} and the power calculations}

The one-change version of Lemma~\ref{prop:weight-consistency} and the local
alternative rate in Theorem~\ref{thm:local-alternative} give, uniformly over
the finite ridge grid,
\begin{equation*}
 \sum_{j\ge1}|\widehat w_{j,n}(a)-w_j(a)|
 +|\widehat A_{1,n}(a)-A_1(a)|
 +|\widehat A_{2,n}(a)-A_2(a)|=o_{\Pp}(1).
\end{equation*}
The proof of Theorem~\ref{thm:critical-values}, which uses only this spectral
weight convergence for the reference process, therefore remains valid under
the local alternative.  In particular,
\begin{equation}
 \max_{a\in\cG}\sup_x|\widehat F_{n,a}(x)-F_a(x)|\pto0,
 \qquad
 \max_{a\in\cG}|\widehat c_{n,a,1-\alpha}-c_{a,1-\alpha}|\pto0,
 \label{eq:alternative-reference-convergence}
\end{equation}
and the joint Cauchy critical value converges to $c_{C,1-\alpha}$ under the
condition $\mathsf{CR}_{\alpha}$.

Theorem~\ref{thm:local-alternative} and the continuous mapping theorem yield
\begin{equation}
 (T_{n,a}:a\in\cG)\dto(\mathcal T_{a,d}:a\in\cG).
 \label{eq:local-scan-vector-limit}
\end{equation}
We next verify that every $\mathcal T_{a,d}$ has a continuous distribution.
For fixed $a$, define
\begin{align*}
 Q_{a,d}
 =\sup_{\epsilon\le t\le1-\epsilon}
 \Bigg[&\sum_{j\ge1}
 \left\{
 \sqrt{w_j(a)}U_j(t)
 -\frac{g(t,\vartheta)d_j}
 {\sqrt{t(1-t)}\sqrt{\lambda_j+a\mathfrak t_\Omega}}
 \right\}^2\\
 &+\frac{g(t,\vartheta)^2\norm{d_0}^2}
 {t(1-t)a\mathfrak t_\Omega}
 \Bigg].
\end{align*}
Then
\begin{equation*}
 \mathcal T_{a,d}
 =\frac{Q_{a,d}-A_1(a)}{\{2A_2(a)\}^{1/2}}.
\end{equation*}
Choose an index $j_0$ with $\lambda_{j_0}>0$.  The Karhunen--Lo\`eve expansion of
$B_{j_0}$ permits the decomposition
\begin{equation*}
 B_{j_0}(t)=Z\varphi_1(t)+\widetilde B_{j_0}(t),
 \qquad
 \varphi_1(t)=\frac{\sqrt2\sin(\pi t)}{\pi},
\end{equation*}
where $Z\sim N(0,1)$ is independent of $\widetilde B_{j_0}$ and of all other
bridges.  Conditional on these remaining Gaussian variables, $Q_{a,d}$ is a
function of $Z$ of the form
\begin{equation*}
 q_a(z)=\sup_{\epsilon\le t\le1-\epsilon}
 \{[r_a(t)z+s_a(t)]^2+R_a(t)\},
\end{equation*}
where $R_a(t)\ge0$ and
\begin{equation*}
 r_a(t)=\frac{\sqrt{w_{j_0}(a)}\varphi_1(t)}{\sqrt{t(1-t)}}.
\end{equation*}
Because the scan interval is trimmed,
\begin{equation*}
 m_{a,\epsilon}:=\inf_{\epsilon\le t\le1-\epsilon}r_a(t)^2>0.
\end{equation*}
For every $t$, the map $z\mapsto[r_a(t)z+s_a(t)]^2+R_a(t)$ is
$m_{a,\epsilon}$-strongly convex.  The pointwise supremum preserves strong
convexity, so $q_a$ is coercive, has a unique minimizer, and every level set
$\{z:q_a(z)=x\}$ contains at most two points.  Since the conditional law of
$Z$ has a density,
\begin{equation*}
 \Pp(Q_{a,d}=x\mid\widetilde B_{j_0},\{B_j:j\ne j_0\})=0,
 \qquad x\in\R.
\end{equation*}
Integration proves continuity of the distribution of $\mathcal T_{a,d}$.
Equations~\eqref{eq:alternative-reference-convergence} and
\eqref{eq:local-scan-vector-limit} now imply
\eqref{eq:fixed-ridge-local-power}.

The same two equations and continuity of every $F_a$ give
\begin{equation}
 (P_{n,1},\ldots,P_{n,L_{\cG}})
 \dto(P_{a_1,d},\ldots,P_{a_{L_{\cG}},d}).
 \label{eq:alternative-pvalue-vector-limit}
\end{equation}
Every limiting tail value is strictly inside the unit interval.  To see this,
write
\begin{equation*}
 \ell_a=-\frac{A_1(a)}{\{2A_2(a)\}^{1/2}},
\end{equation*}
which is the lower endpoint of the support of $F_a$.  Since $Q_{a,d}\ge0$
and its distribution is continuous by the preceding argument,
$\Pp(Q_{a,d}=0)=0$.  Hence $\mathcal T_{a,d}>\ell_a$ almost surely, and strict
increase of $F_a$ on its support gives $F_a(\mathcal T_{a,d})>0$.  Moreover,
$\mathcal T_{a,d}$ is finite almost surely and the null reference law has
unbounded upper support, so $F_a(\mathcal T_{a,d})<1$.  Therefore
\begin{equation}
 0<P_{a,d}<1\qquad\text{almost surely for every }a\in\cG.
 \label{eq:alternative-pvalues-interior}
\end{equation}

For $0<c_{\rm cut}<1/4$, let
$\mathfrak C_{c_{\rm cut}}^{\rm tr}$ be the bounded truncated transform from
the proof of Theorem~\ref{thm:critical-values}.  Equations
\eqref{eq:alternative-pvalue-vector-limit} and
\eqref{eq:alternative-pvalues-interior} imply
\begin{align*}
 &\lim_{c_{\rm cut}\downarrow0}\limsup_{n\to\infty}
 \Pp\left\{
 \min_vP_{n,v}\le2c_{\rm cut}
 \ \text{or}\
 \max_vP_{n,v}\ge1-2c_{\rm cut}
 \right\}=0,\\
 &\lim_{c_{\rm cut}\downarrow0}
 \Pp\left\{
 \min_vP_{a_v,d}\le2c_{\rm cut}
 \ \text{or}\
 \max_vP_{a_v,d}\ge1-2c_{\rm cut}
 \right\}=0.
\end{align*}
For each fixed $c_{\rm cut}$, the truncated map is bounded and continuous.
The converging-together argument obtained by first letting $n\to\infty$ and
then $c_{\rm cut}\downarrow0$ therefore yields
\begin{equation}
 \mathfrak C_n\dto\mathfrak C_d.
 \label{eq:alternative-Cauchy-limit}
\end{equation}

Assume now $\mathsf{CR}_{\alpha}$.  The joint critical value converges to
$c_{C,1-\alpha}$ by \eqref{eq:alternative-reference-convergence} and the
reference-law part of Theorem~\ref{thm:critical-values}.  From
\eqref{eq:alternative-Cauchy-limit}, convergence of the critical value, and
the Portmanteau theorem,
\begin{align*}
 \Pp\{\mathfrak C_d>c_{C,1-\alpha}\}
 &\le\liminf_{n\to\infty}
 \Pp\{\mathfrak C_n>\widehat c_{n,C,1-\alpha}\}\\
 &\le\limsup_{n\to\infty}
 \Pp\{\mathfrak C_n>\widehat c_{n,C,1-\alpha}\}
 \le\Pp\{\mathfrak C_d\ge c_{C,1-\alpha}\}.
\end{align*}
This proves \eqref{eq:Cauchy-local-power-bounds}, and it proves
\eqref{eq:Cauchy-local-power} when the boundary atom is absent.

It remains to justify the two-batch statement under the local alternative.
For the first simulation batch, define
\begin{equation*}
 \mathfrak E_{n,N_{{\rm sim},1}}^{\rm sim}
 =\max_{1\le v\le L_{\cG}}\sup_x
 \left|\widetilde P_{n,v,N_{{\rm sim},1}}(x)
 -\{1-\widehat F_{n,a_v}(x)\}\right|.
\end{equation*}
The conditional DKW bound \eqref{eq:midrank-DKW-bound} is distribution-free
and remains valid.  On the event
\begin{equation*}
 \min_vP_{n,v}\ge2c_{\rm cut},\qquad
 \max_vP_{n,v}\le1-2c_{\rm cut},\qquad
 \mathfrak E_{n,N_{{\rm sim},1}}^{\rm sim}\le c_{\rm cut},
\end{equation*}
both the exact and mid-rank tail values lie in
$[c_{\rm cut},1-c_{\rm cut}]$.  The mean-value theorem then gives
\begin{equation*}
 |\mathfrak C_{n,N_{{\rm sim},1}}-\mathfrak C_n|
 \le\pi\csc^2(\pi c_{\rm cut})
 \mathfrak E_{n,N_{{\rm sim},1}}^{\rm sim}.
\end{equation*}
Using the endpoint bound above, then
\eqref{eq:midrank-DKW-bound}, and finally letting
$c_{\rm cut}\downarrow0$, we obtain, for every joint sequence
$n,N_{{\rm sim},1}\to\infty$,
\begin{equation}
 \mathfrak C_{n,N_{{\rm sim},1}}-\mathfrak C_n\pto0.
 \label{eq:alternative-MC-observed-coupling}
\end{equation}
The simulated reference statistics depend on the observations only through
the estimated ridge weights.  Therefore the weight convergence in
\eqref{eq:alternative-reference-convergence}, the reference coupling
\eqref{eq:MC-reference-comparison}, and the second-batch DKW bound
\eqref{eq:DKW-joint-bridge} give, under $\mathsf{CR}_{\alpha}$,
\begin{equation*}
 \widetilde c_{n,C,1-\alpha}^{(N_{{\rm sim},1},N_{{\rm sim},2})}
 \pto c_{C,1-\alpha}
\end{equation*}
for every joint sequence $n,N_{{\rm sim},1},N_{{\rm sim},2}\to\infty$.
Combining this convergence with
\eqref{eq:alternative-Cauchy-limit} and
\eqref{eq:alternative-MC-observed-coupling} proves the same Portmanteau bounds,
and hence the same equality under boundary atomlessness, for the two-batch
implementation.

It remains to verify the displayed power calculations.  At $t=\vartheta$,
\begin{equation*}
 g(\vartheta,\vartheta)=\vartheta(1-\vartheta),
 \qquad
 U_j(\vartheta)\stackrel{\mathrm{iid}}{\sim}N(0,1).
\end{equation*}
For $\eta_j=\{\vartheta(1-\vartheta)\}^{1/2}d_j/\sqrt{\lambda_j}$,
\begin{align*}
 &w_j(a)\{Z_j^2-1\}
 -2\{\vartheta(1-\vartheta)\}^{1/2}
 \frac{\sqrt{\lambda_j}d_j}{\lambda_j+a\mathfrak t_\Omega}Z_j
 +\vartheta(1-\vartheta)
 \frac{d_j^2}{\lambda_j+a\mathfrak t_\Omega}\\
 &\qquad=w_j(a)\{(Z_j-\eta_j)^2-1\}.
\end{align*}
Moreover,
\begin{align*}
 \sum_{j\ge1}\E\left[w_j(a)(Z_j-\eta_j)^2\right]
 &=A_1(a)+\vartheta(1-\vartheta)
 \sum_{j\ge1}\frac{d_j^2}{\lambda_j+a\mathfrak t_\Omega}\\
 &\le A_1(a)+
 \frac{\vartheta(1-\vartheta)\norm{d}^2}{a\mathfrak t_\Omega}<\infty.
\end{align*}
The nonnegative partial sums therefore converge almost surely and in $L^1$.
Applying the preceding identity to finite partial sums and passing to the
$L^1$ limit gives \eqref{eq:true-change-noncentral-form}.  More generally, the
centered quadratic and linear Gaussian terms in
\eqref{eq:noncentral-weighted-bridge} have mean zero, so
\begin{equation*}
 \E\{\mathcal D_{a,d}(t)\}
 =\frac{\mathfrak h_{\rm cp}(t,\vartheta)}{\{2A_2(a)\}^{1/2}}
 \left\{
 \sum_{j\ge1}\frac{d_j^2}{\lambda_j+a\mathfrak t_\Omega}
 +\frac{\norm{d_0}^2}{a\mathfrak t_\Omega}
 \right\},
\end{equation*}
which is \eqref{eq:local-power-mean-drift}.  At the true fraction,
\begin{equation*}
 \sum_{j\ge1}w_j(a)^2\eta_j^2
 =\vartheta(1-\vartheta)
 \sum_{j\ge1}\frac{\lambda_jd_j^2}
 {(\lambda_j+a\mathfrak t_\Omega)^2}
 \le\frac{\vartheta(1-\vartheta)\norm{d}^2}
 {4a\mathfrak t_\Omega}<\infty,
\end{equation*}
where $\sup_{x\ge0}x/(x+c)^2=1/(4c)$.  Thus the centered series is square
integrable.  Independence and
\begin{equation*}
 \Var(Z_j^2-1)=2,
 \qquad
 \Cov(Z_j^2-1,Z_j)=0
\end{equation*}
then give
\begin{align*}
 \Var\{\mathcal D_{a,d}(\vartheta)\}
 &=\frac{1}{2A_2(a)}
 \sum_{j\ge1}w_j(a)^2\{2+4\eta_j^2\}\\
 &=1+\frac{2\vartheta(1-\vartheta)}{A_2(a)}
 \sum_{j\ge1}\frac{\lambda_jd_j^2}
 {(\lambda_j+a\mathfrak t_\Omega)^2},
\end{align*}
proving \eqref{eq:local-power-pointwise-variance}.  For $t\le\vartheta$ and
$t\ge\vartheta$, respectively,
\begin{equation*}
 \mathfrak h_{\rm cp}(t,\vartheta)
 =\frac{t(1-\vartheta)^2}{1-t},
 \qquad
 \mathfrak h_{\rm cp}(t,\vartheta)
 =\frac{\vartheta^2(1-t)}{t}.
\end{equation*}
The first expression is strictly increasing and the second strictly decreasing,
so the unique maximum is
$\mathfrak h_{\rm cp}(\vartheta,\vartheta)=\vartheta(1-\vartheta)$.
This proves \eqref{eq:ridge-standardized-signal}.  Since
$\mathcal T_{a,d}\ge\mathcal D_{a,d}(\vartheta)$, rewriting
\eqref{eq:true-change-noncentral-form} in terms of independent
$\chi_1^2(\eta_j^2)$ variables gives
\eqref{eq:pointwise-power-lower-bound}.

For completeness, suppose $d\ne0$ and define
\begin{equation*}
 \mathsf S_a(d)=\frac{\mathfrak q_a(d)}{\{A_2(a)\}^{1/2}},
 \qquad
 B_3(a)=\sum_{j\ge1}
 \frac{\lambda_j^2}{(\lambda_j+a\mathfrak t_\Omega)^3}.
\end{equation*}
Resolvent differentiation and termwise differentiation of $A_2(a)$ give
\begin{equation*}
 \frac{d}{da}\mathfrak q_a(d)
 =-\mathfrak t_\Omega\ip{d}{\mathcal R_a^2d},
 \qquad
 A_2'(a)=-2\mathfrak t_\Omega B_3(a).
\end{equation*}
Consequently,
\begin{equation}
 \frac{d}{da}\log\mathsf S_a(d)
 =\mathfrak t_\Omega
 \left\{
 \frac{B_3(a)}{A_2(a)}
 -\frac{\ip{d}{\mathcal R_a^2d}}{\mathfrak q_a(d)}
 \right\}.
 \label{eq:ridge-oracle-first-order}
\end{equation}
Any interior maximizer over a continuous ridge interval therefore solves the
zero-derivative equation in \eqref{eq:ridge-oracle-first-order}, whose second
term depends explicitly on the unknown alternative direction $d$.  Finally,
if $\Omega$ has finite rank $r$ and $d_0=0$, direct convergence of the finite
sums gives the first limit in \eqref{eq:ridge-signal-limits}.  If $d_0\ne0$,
$\mathfrak q_a(d)\ge\norm{d_0}^2/(a\mathfrak t_\Omega)$ while
$A_2(a)\to r$, so the limit is infinite.  As $a\to\infty$,
\begin{equation*}
 a\mathfrak t_\Omega\mathfrak q_a(d)\longrightarrow\norm{d}^2,
 \qquad
 a^2\mathfrak t_\Omega^2A_2(a)
 \longrightarrow\sum_{j\ge1}\lambda_j^2=\norm{\Omega}_{\rm HS}^2,
\end{equation*}
which proves the second limit in \eqref{eq:ridge-signal-limits}.

\subsection{Proofs of Theorems~\ref{thm:consistency} and~\ref{thm:consistency-location}}

Let
\[
 \widehat{\mathfrak q}_{n,a}(d_n)
 =\ip{d_n}{\widehat{\mathcal R}_{n,a}d_n}.
\]
The signal condition and ridge positivity imply that $d_n\ne0$ and
$\mathfrak q_a(d_n)>0$ for every $a\in\cG$ and all sufficiently large $n$.
By Lemma~\ref{prop:alternative-DB},
Lemma~\ref{prop:quadrature}, and
Lemma~\ref{lem:resolvent-replacement},
\begin{equation*}
 \max_{a\in\cG}
 \norm{\widehat{\mathcal R}_{n,a}-\mathcal R_a}_{\op}
 =O_{\Pp}\!\left[
 \sqrt p\{\mathfrak r_n+\mathfrak r_{\delta,n}\}+\zeta_{p,N}
 \right].
\end{equation*}
Since $0\preceq\Omega\preceq\mathfrak t_\Omega\Id_{\cH}$,
\begin{align*}
 |\widehat{\mathfrak q}_{n,a}(d_n)-\mathfrak q_a(d_n)|
 &\le
 \norm{\widehat{\mathcal R}_{n,a}-\mathcal R_a}_{\op}\norm{d_n}^2,
 \notag\\
 \frac{\norm{d_n}^2}{(1+a_+)\mathfrak t_\Omega}
 &\le \mathfrak q_a(d_n)
 \le\frac{\norm{d_n}^2}{a_-\mathfrak t_\Omega}.
\end{align*}
Thus
\begin{equation}
 \max_{a\in\cG}
 \left|\frac{\widehat{\mathfrak q}_{n,a}(d_n)}{\mathfrak q_a(d_n)}-1\right|
 =
 O_{\Pp}\!\left[
 \sqrt p\{\mathfrak r_n+\mathfrak r_{\delta,n}\}+\zeta_{p,N}
 \right].
 \label{eq:qhat-relative}
\end{equation}

On an event with probability tending to one,
\[
 \max_{a\in\cG}\norm{\widehat{\mathcal R}_{n,a}}_{\op}
 \le\frac{2}{a_-\mathfrak t_\Omega}.
\]
The centered CUSUM is tight, so
\begin{equation}
 \max_{a\in\cG}
 \max_{k\in\cK_{n,\epsilon}}
 \frac{
 \ip{\widehat\bS_{n,0}(t_k)}
 {\widehat{\mathcal R}_{n,a}\widehat\bS_{n,0}(t_k)}
 }{t_k(1-t_k)}
 =O_{\Pp}(1).
 \label{eq:null-quadratic-tight-audited}
\end{equation}
Cauchy--Schwarz in the inner product
$\ip{x}{y}_{\widehat{\mathcal R}_{n,a}}=\ip{x}{\widehat{\mathcal R}_{n,a}y}$ gives
\begin{align*}
 |\ip{d_n}{\widehat{\mathcal R}_{n,a}\widehat\bS_{n,0}(t_k)}|
 &\le
 \widehat{\mathfrak q}_{n,a}(d_n)^{1/2}
 \ip{\widehat\bS_{n,0}(t_k)}
 {\widehat{\mathcal R}_{n,a}\widehat\bS_{n,0}(t_k)}^{1/2}.
\end{align*}
Because
\[
 \sup_{t\in[\epsilon,1-\epsilon]}
 \frac{|g(t,\vartheta_n)|}{\sqrt{t(1-t)}}\le C_\epsilon,
\]
\eqref{eq:null-quadratic-tight-audited} and
\eqref{eq:qhat-relative} imply
\begin{equation}
 \max_{a\in\cG}
 \max_{k\in\cK_{n,\epsilon}}
 \frac{1}{\sqrt{n\mathfrak q_a(d_n)}}
 \left|
 \frac{2\sqrt n\,g(t_k,\vartheta_n)}{t_k(1-t_k)}
 \ip{d_n}{\widehat{\mathcal R}_{n,a}\widehat\bS_{n,0}(t_k)}
 \right|
 =O_{\Pp}(1).
 \label{eq:cross-term-order-audited}
\end{equation}

The exact CUSUM decomposition gives
\begin{align*}
 V_{n,a}(k)
 &=
 \frac{
 \ip{\widehat\bS_{n,0}(t_k)}
 {\widehat{\mathcal R}_{n,a}\widehat\bS_{n,0}(t_k)}
 }{t_k(1-t_k)}
 \notag\\
 &\quad-
 \frac{2\sqrt n\,g(t_k,\vartheta_n)}{t_k(1-t_k)}
 \ip{d_n}{\widehat{\mathcal R}_{n,a}\widehat\bS_{n,0}(t_k)}
 \notag\\
 &\quad+
 n\frac{g(t_k,\vartheta_n)^2}{t_k(1-t_k)}
 \widehat{\mathfrak q}_{n,a}(d_n).
\end{align*}
Divide by $n\mathfrak q_a(d_n)$.  Equations
\eqref{eq:null-quadratic-tight-audited},
\eqref{eq:cross-term-order-audited}, and
\eqref{eq:qhat-relative} yield, uniformly over the finite ridge grid,
\begin{align}
 &\max_{a\in\cG}
 \max_{k\in\cK_{n,\epsilon}}
 \left|
 \frac{V_{n,a}(k)}{n\mathfrak q_a(d_n)}
 -\frac{g(t_k,\vartheta_n)^2}{t_k(1-t_k)}
 \right|
 \notag\\
 &\quad=
 O_{\Pp}\!\left[
 \max_{a\in\cG}(n\mathfrak q_a(d_n))^{-1}
 +\max_{a\in\cG}(n\mathfrak q_a(d_n))^{-1/2}
 +\sqrt p\{\mathfrak r_n+\mathfrak r_{\delta,n}\}
 +\zeta_{p,N}
 \right].
 \label{eq:uniform-drift-normalization-audited}
\end{align}
For $a,a'\in\cG$ and $x\in[0,\mathfrak t_\Omega]$,
\[
 \frac1{x+a'\mathfrak t_\Omega}
 \ge
 \frac{a_-}{a_+}\frac1{x+a\mathfrak t_\Omega}.
\]
Hence, by the spectral theorem,
\[
 \mathfrak q_{a'}(d_n)\ge c_\cG \mathfrak q_a(d_n),
 \qquad
 c_\cG=\frac{a_-}{a_+}.
\]
Thus $n\mathfrak q_a(d_n)\to\infty$ for one ridge implies
\[
 \min_{a'\in\cG}n\mathfrak q_{a'}(d_n)\to\infty.
\]
Under \eqref{eq:alt-estimation-condition}, the right-hand side of
\eqref{eq:uniform-drift-normalization-audited} converges to zero, proving
\eqref{eq:uniform-drift-main}.

Because $\vartheta_n\to\vartheta\in(\epsilon,1-\epsilon)$,
$\tau_n\in\cK_{n,\epsilon}$ for all sufficiently large $n$.  At
$k=\tau_n$,
\[
 \frac{g(\vartheta_n,\vartheta_n)^2}
 {\vartheta_n(1-\vartheta_n)}
 =\vartheta_n(1-\vartheta_n)
 \ge c_\epsilon>0.
\]
Therefore
\[
 \min_{a'\in\cG}V_{n,a'}(\tau_n)\pto\infty.
\]
Lemma~\ref{prop:weight-consistency} gives
\[
 \max_{a'\in\cG}\widehat A_{1,n}(a')=O_{\Pp}(1),
 \qquad
 \min_{a'\in\cG}\widehat A_{2,n}(a')
 \pto\min_{a'\in\cG}A_2(a')>0,
\]
and consequently
\[
 \min_{a'\in\cG}T_{n,a'}\pto\infty.
\]

Fix $v\in\{1,\ldots,L_{\cG}\}$. Conditional on the estimated weights,
\begin{align*}
 \E_{\bm B}
 \sup_{t\in[\epsilon,1-\epsilon]}
 |\widehat{\mathcal D}_{n,a_v}^{*}(t)|
 &\le
 \frac{\mathfrak u_{\epsilon}^{\rm br}\widehat A_{1,n}(a_v)}
 {\{2\widehat A_{2,n}(a_v)\}^{1/2}}
 =O_{\Pp}(1).
\end{align*}
Since
$|\widehat{\mathcal T}_{n,a_v}^{*}|\le
\sup_t|\widehat{\mathcal D}_{n,a_v}^{*}(t)|$, Markov's inequality gives,
for every fixed $\alpha_0\in(0,1)$,
\[
 \widehat c_{n,a_v,1-\alpha_0}
 \le\alpha_0^{-1}\E_{\bm B}
 |\widehat{\mathcal T}_{n,a_v}^{*}|
 =O_{\Pp}(1).
\]
Thus every fixed-ridge bridge-calibrated critical value is stochastically
bounded.  On $\{T_{n,a_v}>0\}$, the same inequality gives
\[
 P_{n,v}
 \le
 \frac{
 \E_{\bm B}|\widehat{\mathcal T}_{n,a_v}^{*}|
 }{T_{n,a_v}}
 \pto0.
\]
Hence
\[
 \min_{1\le v\le L_{\cG}}\mathfrak c_{\rm C}(P_{n,v})\pto\infty,
 \qquad
 \mathfrak C_n\pto\infty,
\]
which proves consistency of the analytic Cauchy statistic.

For the joint calibration, each conditional reference
$p$-value is uniform by
Lemma~\ref{lem:finite-reference-regularity}.  Therefore
$\mathfrak c_{\rm C}(P_{n,v}^{\circ})$ is standard Cauchy and, for $x_{\rm C}>0$,
\begin{align*}
 \Pp_{\bm B}(\mathfrak C_n^{\circ}>x_{\rm C}\mid\widehat{\mathcal W}_n)
 &\le
 \sum_{v=1}^{L_{\cG}}
 \Pp_{\bm B}\{\mathfrak c_{\rm C}(P_{n,v}^{\circ})>x_{\rm C}
 \mid\widehat{\mathcal W}_n\}\\
 &=
 L_{\cG}\left\{\frac12-\frac1\pi\arctan(x_{\rm C})\right\}.
\end{align*}
With
\[
 c_{\alpha,L_{\cG}}^{\rm C}=\cot\left(\frac{\pi\alpha}{2L_{\cG}}\right),
\]
the right-hand side is at most $\alpha/2$.  Hence the conditional
$(1-\alpha)$ joint critical value is bounded by $c_{\alpha,L_{\cG}}^{\rm C}$, and the joint
test is consistent.

It remains to verify the simulated two-batch implementation under the
alternative.  Let $N_{{\rm sim},1}=N_{{\rm sim},1}(n)\to\infty$ and $N_{{\rm sim},2}=N_{{\rm sim},2}(n)\to\infty$, and let
$\mathfrak E_{n,N_{{\rm sim},1}}^{\rm sim}$ be defined as in the proof of
Lemma~\ref{lem:Monte-Carlo-reference}.  Its conditional DKW bound is uniform
in $n$, so
\begin{equation*}
 \mathfrak E_{n,N_{{\rm sim},1}}^{\rm sim}=o_{\Pp}(1).
\end{equation*}
Since $P_{n,v}\pto0$ for every $v$, uniformly over the finite ridge grid,
\begin{equation*}
 \widetilde P_{n,v,N_{{\rm sim},1}}(T_{n,a_v})\pto0,
 \qquad
 \mathfrak C_{n,N_{{\rm sim},1}}\pto\infty.
\end{equation*}
For a fresh reference draw, the comparison argument in
\eqref{eq:MC-reference-comparison} uses only the conditional probability
integral transform and therefore remains valid under the alternative.  In
particular, under the coupling using the same fresh bridge draw,
$\mathfrak C_{n,N_{{\rm sim},1}}^{\circ}-\mathfrak C_n^{\circ}$ converges to zero in
conditional probability.  Moreover, uniformly in the estimated weights,
\begin{equation*}
 \Pp_{\bm B}(\mathfrak C_n^{\circ}>x\mid\widehat{\mathcal W}_n)
 \le L_{\cG}\left\{\frac12-\frac1\pi\arctan(x)\right\},
 \qquad x>0.
\end{equation*}
Choose a finite $x_\alpha^{\rm ref}>0$ such that the right-hand side at $x_\alpha^{\rm ref}$ is
smaller than $\alpha/4$.  Under the coupling using the same fresh draw,
\begin{align*}
 &\Pp_{\bm B}\{\mathfrak C_{n,N_{{\rm sim},1}}^{\circ}>x_\alpha^{\rm ref}+1
 \mid\widehat{\mathcal W}_n,\text{ first batch}\}\\
 &\quad\le
 \Pp_{\bm B}(\mathfrak C_n^{\circ}>x_\alpha^{\rm ref}
 \mid\widehat{\mathcal W}_n)
 +\Pp_{\bm B}\{|\mathfrak C_{n,N_{{\rm sim},1}}^{\circ}-\mathfrak C_n^{\circ}|>1
 \mid\widehat{\mathcal W}_n,\text{ first batch}\}.
\end{align*}
The first term is below $\alpha/4$, and the second converges to zero in
probability.  Hence, with probability tending to one over the data and first
batch, the left-hand side is below $\alpha/2$.
On the event in \eqref{eq:DKW-joint-bridge} with $y=\alpha/4$, the empirical
second-batch cdf at $x_\alpha^{\rm ref}+1$ is then at least $1-3\alpha/4>1-\alpha$.
Since the probability of the complementary DKW event tends to zero as
$N_{{\rm sim},2}\to\infty$, the definition in \eqref{eq:two-batch-critical-value} gives
\begin{equation*}
 \Pp\{\widetilde c_{n,C,1-\alpha}^{(N_{{\rm sim},1},N_{{\rm sim},2})}\le x_\alpha^{\rm ref}+1\}
 \longrightarrow1.
\end{equation*}
Thus
$\widetilde c_{n,C,1-\alpha}^{(N_{{\rm sim},1},N_{{\rm sim},2})}=O_{\Pp}(1)$.  Consequently,
\begin{equation*}
 \Pp\{\mathfrak C_{n,N_{{\rm sim},1}}>
 \widetilde c_{n,C,1-\alpha}^{(N_{{\rm sim},1},N_{{\rm sim},2})}\}\longrightarrow1,
\end{equation*}
which proves consistency of direct Monte Carlo bridge calibration without
invoking a null-only level statement.

For localization, recall the common shape $\mathfrak h_{\rm cp}$ in
\eqref{eq:common-change-shape}.  For $t\le\vartheta$ and
$t\ge\vartheta$, respectively,
\[
 \mathfrak h_{\rm cp}(t,\vartheta)=\frac{t(1-\vartheta)^2}{1-t},
 \qquad
 \mathfrak h_{\rm cp}(t,\vartheta)=\frac{\vartheta^2(1-t)}{t}.
\]
Hence
\[
 \partial_t\mathfrak h_{\rm cp}(t,\vartheta)
 =
 \begin{cases}
 (1-\vartheta)^2(1-t)^{-2}>0,&t<\vartheta,\\
 -\vartheta^2t^{-2}<0,&t>\vartheta.
 \end{cases}
\]
Thus $\mathfrak h_{\rm cp}(\cdot,\vartheta)$ has the unique maximum at $t=\vartheta$.  For every
$\rho_{\rm pos}>0$, compactness and $\vartheta_n\to\vartheta$ imply that there are
$c_{\rm pos}>0$ and $n_{\rm pos}$ such that
\[
 \mathfrak h_{\rm cp}(\vartheta_n,\vartheta_n)
 -
 \sup_{\substack{t\in[\epsilon,1-\epsilon]\\
 |t-\vartheta_n|\ge\rho_{\rm pos}/2}}
 \mathfrak h_{\rm cp}(t,\vartheta_n)
 \ge c_{\rm pos},
 \qquad n\ge n_{\rm pos}.
\]
By \eqref{eq:uniform-drift-main},
\[
 \Pp\left[
 \max_{a\in\cG}\max_{k\in\cK_{n,\epsilon}}
 \left|
 \frac{V_{n,a}(k)}{n\mathfrak q_a(d_n)}
 -\mathfrak h_{\rm cp}(t_k,\vartheta_n)
 \right|<\frac{c_{\rm pos}}{3}
 \right]\longrightarrow1.
\]
On this event,
\[
 \max_{a\in\cG}
 \left|\frac{\widehat\tau_{n,a}}{n}-\vartheta_n\right|
 <\frac{\rho_{\rm pos}}{2}.
\]
Since $|\vartheta_n-\vartheta|<\rho_{\rm pos}/2$ eventually,
\[
 \Pp\left
 \{\max_{a\in\cG}
 \left|\widehat\tau_{n,a}/n-\vartheta\right|
 \ge\rho_{\rm pos}\right\}\longrightarrow0,
\]
which proves \eqref{eq:location-consistency}.

\section{Proofs for multiple-change estimation}
\label{app:multiple-proofs}

\subsection{Proof of Lemma~\ref{lem:wbs-coverage}}

The number of admissible endpoint pairs is
\begin{equation*}
 |\mathbb I_n^{\rm WBS}|
 =\sum_{s=0}^{n-2}(n-s-1)=\frac{n(n-1)}2.
\end{equation*}
For $D\ge\Delta_n$,
\begin{equation*}
 \left\lfloor\frac{2D}{3}\right\rfloor
 -\left\lceil\frac{D}{3}\right\rceil+1
 \ge\frac D3-1\ge\frac D4
\end{equation*}
for all sufficiently large $n$.  Hence, uniformly in $b$,
\begin{equation*}
 |\mathcal L_{b,n}^{\rm WBS}|\ge\frac{\Delta_n}{4},
 \qquad
 |\mathcal R_{b,n}^{\rm WBS}|\ge\frac{\Delta_n}{4}.
\end{equation*}
For one random interval,
\begin{align*}
 p_{b,n}^{\rm iso}
 &=\Pp\left\{
 S_{r_{\rm W}}^{\rm WBS}\in\mathcal L_{b,n}^{\rm WBS},\ 
 E_{r_{\rm W}}^{\rm WBS}\in\mathcal R_{b,n}^{\rm WBS}
 \right\}\\
 &=\frac{|\mathcal L_{b,n}^{\rm WBS}|
 |\mathcal R_{b,n}^{\rm WBS}|}{|\mathbb I_n^{\rm WBS}|}
 \ge\frac18\left(\frac{\Delta_n}{n}\right)^2.
\end{align*}
Independence of the $M_n^{\rm WBS}$ draws and a union bound give
\begin{align*}
 \Pp\{(\mathcal C_n^{\rm WBS})^c\}
 &\le\sum_{b=1}^{J_{\rm cp}}(1-p_{b,n}^{\rm iso})^{M_n^{\rm WBS}}\\
 &\le J_{\rm cp}\left\{1-\frac18
 \left(\frac{\Delta_n}{n}\right)^2\right\}^{M_n^{\rm WBS}},
\end{align*}
which is \eqref{eq:wbs-coverage-probability} with $c_{\rm W}=1/8$.

On $\mathcal C_n^{\rm WBS}$, select
\begin{equation*}
 s_{b,n}^{\rm iso}\in\mathcal L_{b,n}^{\rm WBS},
 \qquad
 e_{b,n}^{\rm iso}\in\mathcal R_{b,n}^{\rm WBS}.
\end{equation*}
Then
\begin{align*}
 s_{b,n}^{\rm iso}-\tau_{b-1,n}
 &\ge\left\lceil\frac{\Delta_{b,n}^{-}}3\right\rceil
 \ge\frac{\Delta_n}{3},\\
 \tau_{b,n}-s_{b,n}^{\rm iso}
 &\ge\Delta_{b,n}^{-}
 -\left\lfloor\frac{2\Delta_{b,n}^{-}}3\right\rfloor
 \ge\frac{\Delta_n}{3}-1,\\
 e_{b,n}^{\rm iso}-\tau_{b,n}
 &\ge\left\lceil\frac{\Delta_{b,n}^{+}}3\right\rceil
 \ge\frac{\Delta_n}{3},\\
 \tau_{b+1,n}-e_{b,n}^{\rm iso}
 &\ge\Delta_{b,n}^{+}
 -\left\lfloor\frac{2\Delta_{b,n}^{+}}3\right\rfloor
 \ge\frac{\Delta_n}{3}-1.
\end{align*}
Thus $I_{b,n}^{\rm iso}=(s_{b,n}^{\rm iso},e_{b,n}^{\rm iso}]$
contains only $\tau_{b,n}$ and satisfies
\eqref{eq:wbs-isolation-margins}.

\subsection{Proof of Lemma~\ref{lem:local-score-envelope}}

Write $\mathcal J_n=\mathcal J_{p,N}$.  From
\eqref{eq:numerical-projection-operator}, Sobolev embedding, and
$\chi_{p,N}\to0$,
\begin{equation*}
 \norm{\mathcal J_nf}
 \le\norm{\Pi_pf}+\norm{(\mathcal J_n-\Pi_p)f}
 \le C\norm f_{\Hs},
 \qquad f\in\Hs.
\end{equation*}
Consequently,
\begin{equation*}
 \norm{\mathcal J_ne_0}_{L^\nu(\cH)}\le C,
 \qquad
 \sum_{M=1}^\infty
 \norm{\mathcal J_n(e_0-e_{0,M})}_{L^\nu(\cH)}
 \le C\sum_{M=1}^\infty\mathfrak d_\nu(M)<\infty.
\end{equation*}
Lemma~\ref{lem:Bernoulli-maximal}, with $\nu_0=\nu$, therefore yields
\begin{equation}
 \sup_{s_0\in\mathbb Z}
 \left\|\max_{1\le u\le r}
 \norm{\sum_{i=s_0+1}^{s_0+u}\beps_{i,p,N}}\right\|_{L^\nu}
 \le C\sqrt r,
 \qquad r\ge1.
 \label{eq:projected-partial-max-nu}
\end{equation}
For $0\le s<n$ and
$0\le j_{\rm dyad}\le\lceil\log_2n\rceil$, define
\begin{equation*}
 \mathfrak P_{s,j_{\rm dyad}}
 =2^{-j_{\rm dyad}/2}
 \max_{1\le u\le2^{j_{\rm dyad}}\wedge(n-s)}
 \norm{\sum_{i=s+1}^{s+u}\beps_{i,p,N}}.
\end{equation*}
Equation~\eqref{eq:projected-partial-max-nu} gives
\begin{equation*}
 \sup_{s,j_{\rm dyad}}\E\mathfrak P_{s,j_{\rm dyad}}^\nu\le C.
\end{equation*}
For $1\le r\le n-s$, choose $j_{\rm dyad}$ so that
$2^{j_{\rm dyad}-1}<r\le2^{j_{\rm dyad}}$.  Then
\begin{equation*}
 \frac1{\sqrt r}
 \norm{\sum_{i=s+1}^{s+r}\beps_{i,p,N}}
 \le\sqrt2\,\mathfrak P_{s,j_{\rm dyad}}.
\end{equation*}
Since the number of pairs $(s,j_{\rm dyad})$ is at most
$n\{2+\log_2n\}$, Markov's inequality and a union bound imply
\begin{equation}
 \Pp\left\{
 \max_{0\le s<e\le n}
 \frac{\norm{\sum_{i=s+1}^{e}\beps_{i,p,N}}}{\sqrt{e-s}}>x
 \right\}
 \le Cn(1+\log n)x^{-\nu}.
 \label{eq:all-normalized-blocks}
\end{equation}
Moreover,
\begin{align*}
 \norm{\mathsf Z_I(k)}
 &\le
 \sqrt{\frac{e-k}{n_I}}
 \frac{\norm{\sum_{i=s+1}^{k}\beps_{i,p,N}}}{\sqrt{k-s}}
 +\sqrt{\frac{k-s}{n_I}}
 \frac{\norm{\sum_{i=k+1}^{e}\beps_{i,p,N}}}{\sqrt{e-k}}\\
 &\le
 \frac{\norm{\sum_{i=s+1}^{k}\beps_{i,p,N}}}{\sqrt{k-s}}
 +\frac{\norm{\sum_{i=k+1}^{e}\beps_{i,p,N}}}{\sqrt{e-k}}.
\end{align*}
Combining this display with \eqref{eq:all-normalized-blocks} proves
\eqref{eq:local-score-tail}.  Taking
$x=y(n\log n)^{1/\nu}$ gives
\begin{equation*}
 \limsup_{n\to\infty}
 \Pp\left\{
 \frac{\mathfrak Z_n^{\rm WBS}}{(n\log n)^{1/\nu}}>2y
 \right\}
 \le Cy^{-\nu},
\end{equation*}
which proves \eqref{eq:local-score-order}.

\subsection{Proof of Lemma~\ref{lem:local-ridge-decomposition}}

Lemmas~\ref{prop:alternative-DB} and~\ref{prop:weight-consistency}, together
with $\mathfrak e_{{\rm cp},n}\to0$, imply
\begin{equation*}
 \Pp(\mathcal E_n^{\rm loc})\longrightarrow1,
\end{equation*}
where, for a sufficiently large constant $C_A$,
\begin{equation*}
 \mathcal E_n^{\rm loc}
 =\left\{
 \begin{array}{c}
 \mathfrak t_\Omega/2\le\widehat{\mathfrak t}_n\le2\mathfrak t_\Omega,\\[2pt]
 \min_{a\in\cG}\widehat A_{2,n}(a)
 \ge\frac12\min_{a\in\cG}A_2(a),\\[2pt]
 \max_{a\in\cG}\{\widehat A_{1,n}(a)+\widehat A_{2,n}(a)\}
 \le C_A
 \end{array}
 \right\}.
\end{equation*}
On $\mathcal E_n^{\rm loc}$,
\begin{equation}
 \frac{1}{2(1+a_+)\mathfrak t_\Omega}\norm v^2
 \le\ip v{\widehat{\mathcal R}_{n,a}v}
 \le\frac{2}{a_-\mathfrak t_\Omega}\norm v^2,
 \qquad a\in\cG.
 \label{eq:multiple-ridge-quadratic-bounds}
\end{equation}
The bounds on $\widehat A_{2,n}(a)$ and
\eqref{eq:multiple-ridge-quadratic-bounds} yield constants
$0<c_{\rm R}<C_{\rm R}<\infty$ for which
\eqref{eq:local-signal-norm-equivalence} holds.

The local contrast decomposes exactly as
\begin{equation}
 \iota_p\{\sqrt{N_I(k)}\widehat\bDelta_I(k)\}
 =\mathsf M_I(k)+\iota_p\mathsf Z_I(k).
 \label{eq:wbs-exact-local-contrast}
\end{equation}
Because $J_{\rm cp}$ is fixed,
\begin{align}
 \norm{\mathsf M_I(k)}
 &\le\sqrt{N_I(k)}
 \sum_{\tau_{b,n}\in\mathcal T_I^{\rm cp}}\norm{d_{b,n}}
 \le C\sqrt{n_I}\,d_I^{\max}.
 \label{eq:wbs-mean-contrast-bound}
\end{align}
For $a\in\cG$, write
\begin{equation*}
 \mathcal Q_{I,a}(k)
 =\frac{\ip{\mathsf M_I(k)}
 {\widehat{\mathcal R}_{n,a}\mathsf M_I(k)}}
 {\{2\widehat A_{2,n}(a)\}^{1/2}}.
\end{equation*}
Equations \eqref{eq:wbs-exact-local-contrast} and
\eqref{eq:multiple-ridge-quadratic-bounds} give
\begin{align*}
 D_{I,a}(k)-\mathcal Q_{I,a}(k)
 &=\frac{2\ip{\mathsf M_I(k)}
 {\widehat{\mathcal R}_{n,a}\mathsf Z_I(k)}
 +\ip{\mathsf Z_I(k)}
 {\widehat{\mathcal R}_{n,a}\mathsf Z_I(k)}
 -\widehat A_{1,n}(a)}
 {\{2\widehat A_{2,n}(a)\}^{1/2}},\\
 |D_{I,a}(k)-\mathcal Q_{I,a}(k)|
 &\le C\left\{1+\norm{\mathsf Z_I(k)}^2
 +\norm{\mathsf M_I(k)}\norm{\mathsf Z_I(k)}\right\}.
\end{align*}
Using
\begin{equation*}
 \left|\max_{a\in\cG}x_a-\max_{a\in\cG}y_a\right|
 \le\max_{a\in\cG}|x_a-y_a|
\end{equation*}
and then \eqref{eq:wbs-mean-contrast-bound} proves
\eqref{eq:wbs-uniform-score-exact} and
\eqref{eq:wbs-uniform-score-approximation}.

Suppose that $I$ contains only $\tau_I$, with retained jump $d_I$.  Direct
summation gives
\begin{equation*}
 \mathsf M_I(k)
 =\sqrt{n_I}\,
 \frac{g(x_{I,k},\theta_I)}
 {\{x_{I,k}(1-x_{I,k})\}^{1/2}}d_I.
\end{equation*}
Therefore
\begin{align*}
 \mathcal Q_{I,a}(k)
 &=n_I\mathfrak h_{\rm cp}(x_{I,k},\theta_I)
 \frac{\widehat{\mathfrak q}_{n,a}(d_I)}
 {\{2\widehat A_{2,n}(a)\}^{1/2}},\\
 |\mathrm{Rem}_{I,a}(k)|
 &\le C\left\{1+\norm{\mathsf Z_I(k)}^2
 +\sqrt{n_I}\norm{d_I}\norm{\mathsf Z_I(k)}\right\}.
\end{align*}
This proves \eqref{eq:local-ridge-decomposition}--
\eqref{eq:local-ridge-remainder}.  Equation
\eqref{eq:local-ridge-signal-equivalence} follows from
\eqref{eq:local-signal-norm-equivalence} with $v=d_I$.

Finally,
\begin{equation*}
 \mathfrak h_{\rm cp}(\theta,\theta)
 -\mathfrak h_{\rm cp}(x,\theta)
 =\begin{cases}
 (1-\theta)(\theta-x)/(1-x),&x\le\theta,\\
 \theta(x-\theta)/x,&x\ge\theta.
 \end{cases}
\end{equation*}
Hence, for
$\theta\in[\varepsilon_{\rm gap},1-\varepsilon_{\rm gap}]$,
\begin{equation*}
 \mathfrak h_{\rm cp}(\theta,\theta)
 -\mathfrak h_{\rm cp}(x,\theta)
 \ge\varepsilon_{\rm gap}|x-\theta|,
\end{equation*}
which is \eqref{eq:local-shape-linear-gap}.

\subsection{Proof of Lemma~\ref{lem:wbs-deterministic-geometry}}

For $a\in\cG$, define
\begin{equation*}
 \norm{v}_{n,a}^2
 =\frac{\ip v{\widehat{\mathcal R}_{n,a}v}}
 {\{2\widehat A_{2,n}(a)\}^{1/2}}.
\end{equation*}
Fix $I=(s,e]$, write $L=n_I$ and $t=k-s$, and set
\begin{equation*}
 \overline\mu_I^{\rm ret}
 =L^{-1}\sum_{i=s+1}^{e}\mu_{i,n}^{\rm ret},
 \qquad
 \mathsf C_I(t)=\sum_{i=s+1}^{s+t}
 (\mu_{i,n}^{\rm ret}-\overline\mu_I^{\rm ret}).
\end{equation*}
Then
\begin{equation}
 \mathsf M_I(s+t)
 =-\left\{\frac{L}{t(L-t)}\right\}^{1/2}\mathsf C_I(t).
 \label{eq:wbs-cumulative-signal}
\end{equation}
Between two successive changes, $\mathsf C_I(t)=u+tv$.  Hence
\begin{align}
 Q_{I,a}(t)
 &:=\norm{\mathsf M_I(s+t)}_{n,a}^2\notag\\
 &=\frac{L\norm{u+tv}_{n,a}^2}{t(L-t)}\notag\\
 &=\frac{\norm u_{n,a}^2}{t}
 +\frac{\norm{u+Lv}_{n,a}^2}{L-t}
 -L\norm v_{n,a}^2,\notag\\
 Q_{I,a}''(t)
 &=\frac{2\norm u_{n,a}^2}{t^3}
 +\frac{2\norm{u+Lv}_{n,a}^2}{(L-t)^3}\ge0.
 \label{eq:wbs-segment-second-derivative}
\end{align}
Therefore, by \eqref{eq:wbs-cumulative-signal},
\begin{equation*}
 \cS_I^{\rm sig}(s+t)=\max_{a\in\cG}Q_{I,a}(t)
\end{equation*}
is convex on every change-free segment.  Its continuous extension satisfies
\begin{equation*}
 \cS_I^{\rm sig}(s)=\cS_I^{\rm sig}(e)=0.
\end{equation*}
We next rule out a positive flat segment before locating the maximizers.  For a
fixed $a$, if $Q_{I,a}$ is constant on a nonempty open subinterval of a
change-free segment, then \eqref{eq:wbs-segment-second-derivative} gives
\begin{equation*}
 \norm u_{n,a}=\norm{u+Lv}_{n,a}=0.
\end{equation*}
On $\mathcal E_n^{\rm loc}$, $\widehat{\mathcal R}_{n,a}$ is strictly positive
definite and $\widehat A_{2,n}(a)>0$, so $u=0$, $u+Lv=0$, and hence $v=0$.
Consequently, that constant must be zero.  If, on an open subinterval $J$ of
the same change-free segment, $\max_{a\in\cG}Q_{I,a}$ were equal to a
positive constant $M$, then the finitely many relatively closed sets
\begin{equation*}
 \{t\in J:Q_{I,a}(t)=M\},\qquad a\in\cG,
\end{equation*}
would cover $J$.  By the Baire category theorem (equivalently, the finite
closed-cover property for a real interval), at least one of them contains a
nonempty relatively open subinterval, forcing the corresponding $Q_{I,a}$ to
be the positive constant
$M$, a contradiction.  Thus $\cS_I^{\rm sig}$ has no positive flat interval.

A convex function on a closed change-free segment attains its maximum at an
endpoint; if an interior point also attained a positive global maximum, the
convex slope inequalities would force a positive flat interval between that
point and an endpoint.  Since interval endpoints are either true changes or
$s,e$, and the latter have zero signal, it follows that
\begin{equation}
 \cS_I^{\rm sig,max}>0
 \quad\Longrightarrow\quad
 \argmax_{k\in\cK_I}\cS_I^{\rm sig}(k)
 \subseteq\mathcal T_I^{\rm cp}.
 \label{eq:wbs-signal-max-at-change}
\end{equation}
Moreover,
\begin{align*}
 \norm{\mathsf M_I(k)}
 &\le \sqrt{N_I(k)}
 \sum_{\tau_{b,n}\in\mathcal T_I^{\rm cp}}\norm{d_{b,n}}\\
 &\le C\sqrt{n_I}\,d_I^{\max}.
\end{align*}
Combining this bound with \eqref{eq:local-signal-norm-equivalence} proves
\eqref{eq:wbs-deterministic-upper}.

It remains to prove the uniform gap.  Fix $c_0>0$ and consider intervals for
which
\begin{equation}
 \cS_I^{\rm sig,max}\ge c_0\Delta_n(d_I^{\max})^2.
 \label{eq:wbs-gap-signal-scale}
\end{equation}
By \eqref{eq:wbs-deterministic-upper} and $\Delta_n\ge c_\Delta n$,
\begin{equation*}
 0<c_L:=c_0/C
 \le \mathfrak l_I:=\frac{n_I}{\Delta_n}
 \le c_\Delta^{-1}=:C_L.
\end{equation*}
Enumerate the changes in $I$ as
\begin{equation*}
 \mathcal T_I^{\rm cp}
 =\{t_{1,I}<\cdots<t_{J_I^{\rm loc},I}\},
 \qquad
 t_{j,I}=\tau_{b_j,n},
 \qquad
 d_{j,I}=d_{b_j,n}.
\end{equation*}
Since $J_I^{\rm loc}\le J_{\rm cp}$, a contradiction sequence has a
subsequence on which $J_I^{\rm loc}=J_{\rm loc}$ is fixed.  Define
\begin{equation*}
 \theta_{j,I}=\frac{t_{j,I}-s}{\Delta_n},
 \qquad
 r_{j,I}=\frac{d_{j,I}}{d_I^{\max}}.
\end{equation*}
Assumption~\ref{ass:multiple-configuration} and
\eqref{eq:uniform-functional-resolution} imply
\begin{equation}
 \begin{gathered}
 c_L\le\mathfrak l_I\le C_L,
 \qquad
 0\le\theta_{1,I}<\cdots<\theta_{J_{\rm loc},I}\le\mathfrak l_I,\\
 \theta_{j+1,I}-\theta_{j,I}\ge1,
 \qquad
 0<c_d\le\norm{r_{j,I}}\le1.
 \end{gathered}
 \label{eq:wbs-normalized-parameter-bounds}
\end{equation}
For $a\in\cG$, let
\begin{equation*}
 \bm H_{I,a}^{\rm gram}
 =\left[
 \frac{\ip{r_{j,I}}{\widehat{\mathcal R}_{n,a}r_{l,I}}}
 {\{2\widehat A_{2,n}(a)\}^{1/2}}
 \right]_{j,l=1}^{J_{\rm loc}}.
\end{equation*}
On $\mathcal E_n^{\rm loc}$,
\begin{equation}
 \max_{a\in\cG}\norm{\bm H_{I,a}^{\rm gram}}_{\op}\le C_G,
 \qquad
 \min_{a\in\cG}\min_{1\le j\le J_{\rm loc}}
 (\bm H_{I,a}^{\rm gram})_{jj}\ge c_G>0.
 \label{eq:wbs-normalized-gram-bounds}
\end{equation}
The normalized directions $r_{j,I}$ live in growing subspaces and need not
possess strongly convergent subsequences.  Such convergence is unnecessary:
the deterministic criteria depend on these directions only through the finite
Gram matrices above.  To make this explicit, define
\begin{equation*}
 \gamma_{j,I}(x)
 =(x-\theta_{j,I})_+-
 \frac{x}{\mathfrak l_I}(\mathfrak l_I-\theta_{j,I}),
 \qquad
 \bm\gamma_I(x)=\{\gamma_{j,I}(x):1\le j\le J_{\rm loc}\}^{\top}.
\end{equation*}
At the normalized grid points $x=t/\Delta_n$, direct summation of the retained
step mean gives
\begin{equation*}
 \mathsf C_I(t)
 =\Delta_n d_I^{\max}
 \sum_{j=1}^{J_{\rm loc}}\gamma_{j,I}(x)r_{j,I}.
\end{equation*}
Consequently, for $0<x<\mathfrak l_I$,
\begin{equation}
 \mathfrak F_{I,a}(x)
 :=\frac{Q_{I,a}(\Delta_nx)}
 {\Delta_n(d_I^{\max})^2}
 =\frac{\mathfrak l_I}{x(\mathfrak l_I-x)}
 \bm\gamma_I(x)^{\top}\bm H_{I,a}^{\rm gram}\bm\gamma_I(x),
 \label{eq:wbs-finite-gram-representation}
\end{equation}
where the right-hand side defines the continuous extension between grid points
and is set equal to zero at $x=0,\mathfrak l_I$.  Put
\begin{equation*}
 \mathfrak F_I(x)=\max_{a\in\cG}\mathfrak F_{I,a}(x).
\end{equation*}

The finite-dimensional tuple
\begin{equation*}
 \left(\mathfrak l_I,\theta_{1,I},\ldots,\theta_{J_{\rm loc},I},
 \{\bm H_{I,a}^{\rm gram}:a\in\cG\}\right)
\end{equation*}
lies in a closed and bounded subset of a Euclidean space: the location and
length constraints are given by
\eqref{eq:wbs-normalized-parameter-bounds}, while every Gram matrix is
symmetric positive semidefinite and satisfies
\eqref{eq:wbs-normalized-gram-bounds}.  Hence every contradiction sequence has
a subsequence on which this tuple converges.  Each limiting Gram matrix is
positive semidefinite and induces the quadratic seminorm
$\norm{z}_a^2=z^{\top}\bm H_a^{\rm gram}z$ on $\mathbb R^{J_{\rm loc}}$.
By \eqref{eq:wbs-signal-max-at-change},
\begin{equation*}
 \mathfrak F_I^{\max}:=\max_{0\le x\le\mathfrak l_I}\mathfrak F_I(x)
 =\frac{\cS_I^{\rm sig,max}}
 {\Delta_n(d_I^{\max})^2}\ge c_0.
\end{equation*}
Formula \eqref{eq:wbs-finite-gram-representation} shows that, on every
compact subinterval of a knot-free segment, the ridge-specific criteria and
their derivatives depend continuously on this finite-dimensional tuple.  Their
finite maximum $\mathfrak F_I$ is therefore jointly continuous and uniformly
Lipschitz there.  No continuity of its derivative at a ridge-switching point is
used.

Set $\bar c=1/4$.  If the desired gap failed, there would be admissible
intervals $I$ and points $x_I$ satisfying
\begin{equation*}
 x_I\notin\{\theta_{1,I},\ldots,\theta_{J_{\rm loc},I}\}
\end{equation*}
such that
\begin{equation}
 \frac{\mathfrak F_I^{\max}-\mathfrak F_I(x_I)}
 {\operatorname{dist}(x_I,
 \{\theta_{1,I},\ldots,\theta_{J_{\rm loc},I}\})\wedge\bar c}
 \longrightarrow0.
 \label{eq:wbs-gap-contradiction-ratio}
\end{equation}
Pass to a subsequence on which the finite-dimensional tuples above and $x_I$
converge.  Denote the limiting ridge-specific criteria by $\mathfrak F_a$ and
their maximum by $\mathfrak F$.

Suppose first that the limit $x_0$ is not a knot.  Then
\begin{equation*}
 \mathfrak F(x_0)=\mathfrak F^{\max}\ge c_0,
 \qquad
 \mathfrak F(0)=\mathfrak F(\mathfrak l)=0.
\end{equation*}
Thus $x_0$ lies in the interior of a knot-free segment $(\alpha,\beta)$.
For $\alpha<x<x_0<y<\beta$, convexity gives
\begin{equation*}
 0\le
 \frac{\mathfrak F(x_0)-\mathfrak F(x)}{x_0-x}
 \le
 \frac{\mathfrak F(y)-\mathfrak F(x_0)}{y-x_0}
 \le0.
\end{equation*}
Hence $\mathfrak F\equiv\mathfrak F^{\max}$ on a nonempty open interval.

Suppose next that $x_I$ approaches a knot $\theta_{j,I}$.  Along one side of
that knot, put
\begin{equation*}
 h_I=|x_I-\theta_{j,I}|<\bar c.
\end{equation*}
Equations \eqref{eq:wbs-gap-signal-scale} and
\eqref{eq:wbs-gap-contradiction-ratio} give
$\mathfrak F_I(x_I)\ge c_0/2$ eventually.  Since
\begin{equation*}
 N_I(s+\Delta_nx)
 =\Delta_n\frac{x(\mathfrak l_I-x)}{\mathfrak l_I}
 \le\Delta_n\{x\wedge(\mathfrak l_I-x)\},
\end{equation*}
\eqref{eq:local-signal-norm-equivalence} and the bounded jump range yield
\begin{equation}
 \mathfrak F_I(x)\le C\{x\wedge(\mathfrak l_I-x)\}.
 \label{eq:wbs-normalized-boundary-envelope}
\end{equation}
Equation~\eqref{eq:wbs-normalized-boundary-envelope} implies
\begin{equation*}
 \min\{x_I,\mathfrak l_I-x_I,\theta_{j,I},
 \mathfrak l_I-\theta_{j,I}\}\ge c_{\rm bd}>0.
\end{equation*}
The one-sided derivatives are therefore uniformly bounded near $x_I$ and
$\theta_{j,I}$.  It follows that
\begin{align*}
 0\le \mathfrak F_I^{\max}-\mathfrak F_I(\theta_{j,I})
 &\le \mathfrak F_I^{\max}-\mathfrak F_I(x_I)+Ch_I\longrightarrow0,\\
 \frac{\mathfrak F_I(x_I)-\mathfrak F_I(\theta_{j,I})}{h_I}
 &\ge-\frac{\mathfrak F_I^{\max}-\mathfrak F_I(x_I)}{h_I}
 \longrightarrow0.
\end{align*}
The limiting knot is a global maximizer.  If $\sigma\in\{-1,1\}$ points
into the adjacent segment, then
\begin{equation*}
 0\ge
 \lim_{h\downarrow0}
 \frac{\mathfrak F(\theta+\sigma h)-\mathfrak F(\theta)}{h}
 \ge0.
\end{equation*}
For every $y$ in that segment, the supporting-line inequality gives
\begin{equation*}
 \mathfrak F(y)
 \ge \mathfrak F(\theta)
 +0\,|y-\theta|
 =\mathfrak F^{\max}.
\end{equation*}
Thus $\mathfrak F\equiv\mathfrak F^{\max}$ on a nonempty open interval in
this case as well.

Let $\mathcal O$ be an open interval on which
$\mathfrak F\equiv\mathfrak F^{\max}$.  Since
\begin{equation*}
 \mathcal O=\bigcup_{a\in\cG}
 \{x\in\mathcal O:\mathfrak F_a(x)=\mathfrak F^{\max}\},
\end{equation*}
at least one of these finitely many closed sets has nonempty relative
interior.  Shrinking that interior to a knot-free interval if necessary, there
is an $a_0\in\cG$ for which $\mathfrak F_{a_0}$ is constant there.  On that
segment, $\bm\gamma(x)=\bm u+x\bm v$ for fixed vectors
$\bm u,\bm v\in\mathbb R^{J_{\rm loc}}$.  The limiting form of
\eqref{eq:wbs-segment-second-derivative}, equivalently the second derivative of
\eqref{eq:wbs-finite-gram-representation}, gives
\begin{equation*}
 0=\mathfrak F_{a_0}''(x)
 =\frac{2\norm{\bm u}_{a_0}^2}{x^3}
 +\frac{2\norm{\bm u+\mathfrak l\bm v}_{a_0}^2}
 {(\mathfrak l-x)^3}.
\end{equation*}
Therefore
\begin{equation*}
 \norm{\bm u}_{a_0}
 =\norm{\bm u+\mathfrak l\bm v}_{a_0}
 =\norm{\bm v}_{a_0}=0,
 \qquad
 \mathfrak F_{a_0}\equiv0,
\end{equation*}
contradicting $\mathfrak F^{\max}\ge c_0$.  Hence there is a constant
$c_1>0$ such that
\begin{equation*}
 \mathfrak F_I^{\max}-\mathfrak F_I(x)
 \ge c_1\left\{
 \operatorname{dist}(x,
 \{\theta_{1,I},\ldots,\theta_{J_{\rm loc},I}\})\wedge\bar c
 \right\}.
\end{equation*}
Taking $x=(k-s)/\Delta_n$ gives
\begin{align*}
 \cS_I^{\rm sig,max}-\cS_I^{\rm sig}(k)
 &\ge c_1(d_I^{\max})^2
 \left\{\operatorname{dist}(k,\mathcal T_I^{\rm cp})
 \wedge\bar c\Delta_n\right\},
\end{align*}
which proves \eqref{eq:wbs-deterministic-gap} with $c_2=\bar c$.

Finally, if $k-s\le r_{\rm bd}$ or $e-k\le r_{\rm bd}$, then
\begin{align*}
 N_I(k)&\le r_{\rm bd},\\
 \left\|
 \frac1{e-k}\sum_{i=k+1}^{e}\mu_{i,n}^{\rm ret}
 -\frac1{k-s}\sum_{i=s+1}^{k}\mu_{i,n}^{\rm ret}
 \right\|
 &\le\sum_{b=1}^{J_{\rm cp}}\norm{d_{b,n}}\le C.
\end{align*}
Consequently,
\begin{equation*}
 \norm{\mathsf M_I(k)}^2\le Cr_{\rm bd},
 \qquad
 \cS_I^{\rm sig}(k)\le C\norm{\mathsf M_I(k)}^2\le Cr_{\rm bd},
\end{equation*}
which is \eqref{eq:wbs-boundary-score}.

\subsection{Proof of Theorem~\ref{thm:multiple-consistency}}

By \eqref{eq:uniform-functional-resolution},
\begin{equation*}
 \max_{1\le b\le J_{\rm cp}}
 \norm{d_{b,n}-\delta_{b,n}}
 \le\mathfrak a_{p,N}(B_\delta)=o(\delta_{\min,n}).
\end{equation*}
Assumption~\ref{ass:multiple-configuration} therefore yields constants
$0<d_-<d_+<\infty$ such that, for all sufficiently large $n$,
\begin{equation}
 d_-\le\min_{1\le b\le J_{\rm cp}}\norm{d_{b,n}}
 \le\max_{1\le b\le J_{\rm cp}}\norm{d_{b,n}}\le d_+,
 \qquad
 \min_b\norm{d_{b,n}}\ge\frac{\delta_{\min,n}}2.
 \label{eq:retained-jump-lower-bound}
\end{equation}

Choose $C_{\rm loc}>0$ sufficiently
large in
\begin{equation*}
 \mathfrak R_n^{\rm WBS}
 =C_{\rm loc}\left\{
 \frac{1+(\mathfrak Z_n^{\rm WBS})^2}{\delta_{\min,n}^2}
 +\frac{\mathfrak Z_n^{\rm WBS}\sqrt n}{\delta_{\min,n}}\right\}.
\end{equation*}
Since $\mathfrak Z_n^{\rm WBS}=O_{\Pp}(\mathfrak v_n^{\rm WBS})$ and
$\mathfrak v_n^{\rm WBS}=o(\sqrt n)$,
\begin{align}
 (\mathfrak Z_n^{\rm WBS})^2&=o_{\Pp}(\Lambda_n),
 &\mathfrak Z_n^{\rm WBS}\sqrt n&=o_{\Pp}(\Lambda_n),
 \label{eq:wbs-random-noise-orders}\\
 \mathfrak R_n^{\rm WBS}
 &=O_{\Pp}\left\{
 \frac{(\mathfrak v_n^{\rm WBS})^2}{\delta_{\min,n}^2}
 +\frac{\mathfrak v_n^{\rm WBS}\sqrt n}{\delta_{\min,n}}
 \right\}
 =o_{\Pp}(\Lambda_n\wedge\Delta_n),
 \label{eq:wbs-random-radius-orders}\\
 \mathfrak Z_n^{\rm WBS}\sqrt{\mathfrak R_n^{\rm WBS}}
 &\le\frac12\{(\mathfrak Z_n^{\rm WBS})^2+\mathfrak R_n^{\rm WBS}\}
 =o_{\Pp}(\Lambda_n).\notag
\end{align}
Here
\begin{equation*}
 \frac{(\mathfrak v_n^{\rm WBS})^2}
 {\mathfrak v_n^{\rm WBS}\sqrt n}
 =\frac{\mathfrak v_n^{\rm WBS}}{\sqrt n}\longrightarrow0,
 \qquad
 \Lambda_n=o(\Delta_n\delta_{\min,n}^2)=o(\Delta_n),
\end{equation*}
and $0<c\le\delta_{\min,n}\le C$ by
Assumption~\ref{ass:multiple-configuration}.  In particular,
$\Lambda_n\to\infty$.

Choose a deterministic sequence $\eta_n\downarrow0$ such that
\begin{equation*}
 \Pp\left\{
 \begin{array}{c}
 1+(\mathfrak Z_n^{\rm WBS})^2+\mathfrak Z_n^{\rm WBS}\sqrt n\le\eta_n\Lambda_n,\\
 \mathfrak R_n^{\rm WBS}
 \le\eta_n(\Lambda_n\wedge\Delta_n),\\
 1+(\mathfrak Z_n^{\rm WBS})^2+\mathfrak Z_n^{\rm WBS}\sqrt{\mathfrak R_n^{\rm WBS}}
 \le\eta_n\Lambda_n
 \end{array}
 \right\}\longrightarrow1.
\end{equation*}
Define
\begin{equation*}
 \mathcal G_n^{\rm WBS}
 =\mathcal C_n^{\rm WBS}\cap\mathcal E_n^{\rm loc}
 \cap\left\{
 \begin{array}{c}
 1+(\mathfrak Z_n^{\rm WBS})^2+\mathfrak Z_n^{\rm WBS}\sqrt n\le\eta_n\Lambda_n,\\
 \mathfrak R_n^{\rm WBS}
 \le\eta_n(\Lambda_n\wedge\Delta_n),\\
 1+(\mathfrak Z_n^{\rm WBS})^2+\mathfrak Z_n^{\rm WBS}\sqrt{\mathfrak R_n^{\rm WBS}}
 \le\eta_n\Lambda_n
 \end{array}
 \right\}.
\end{equation*}
The coverage condition, Lemma~\ref{lem:local-ridge-decomposition}, and the
preceding displays give
\begin{equation}
 \Pp(\mathcal G_n^{\rm WBS})\longrightarrow1.
 \label{eq:wbs-main-high-probability-event}
\end{equation}
We henceforth work on $\mathcal G_n^{\rm WBS}$.

For the isolating interval
$I_{b,n}^{\rm iso}=(s_{b,n}^{\rm iso},e_{b,n}^{\rm iso}]$, write
\begin{equation*}
 L_{b,n}=|I_{b,n}^{\rm iso}|,
 \qquad
 \theta_{b,n}
 =\frac{\tau_{b,n}-s_{b,n}^{\rm iso}}{L_{b,n}}.
\end{equation*}
By \eqref{eq:wbs-isolation-margins},
\begin{align*}
 L_{b,n}\mathfrak h_{\rm cp}(\theta_{b,n},\theta_{b,n})
 &=\frac{(\tau_{b,n}-s_{b,n}^{\rm iso})
 (e_{b,n}^{\rm iso}-\tau_{b,n})}{L_{b,n}}\\
 &\ge\frac12\min\{\tau_{b,n}-s_{b,n}^{\rm iso},
 e_{b,n}^{\rm iso}-\tau_{b,n}\}\\
 &\ge c\Delta_n.
\end{align*}
Equations \eqref{eq:local-ridge-decomposition},
\eqref{eq:local-ridge-signal-equivalence}, and
\eqref{eq:retained-jump-lower-bound} imply, uniformly in $b$,
\begin{align}
 \cS_{I_{b,n}^{\rm iso}}^{\max}
 &\ge\cS_{I_{b,n}^{\rm iso}}(\tau_{b,n})\notag\\
 &\ge c\Delta_n\norm{d_{b,n}}^2
 -C\{1+(\mathfrak Z_n^{\rm WBS})^2+\sqrt n\norm{d_{b,n}}\mathfrak Z_n^{\rm WBS}\}\notag\\
 &\ge c_0\Delta_n\delta_{\min,n}^2>\Lambda_n.
 \label{eq:wbs-isolating-signal}
\end{align}
If $\mathcal T_I^{\rm cp}=\varnothing$, then $\mathsf M_I(k)=0$ and
\begin{equation}
 \cS_I^{\max}\le C(1+(\mathfrak Z_n^{\rm WBS})^2)<\Lambda_n.
 \label{eq:wbs-no-change-below-threshold}
\end{equation}

Consider a recursive call on $(s,e]$ and define the unselected changes in
that node by
\begin{equation*}
 \mathcal U_n(s,e)
 =\{b:s<\tau_{b,n}<e,\ \tau_{b,n}
 \text{ has not been selected}\}.
\end{equation*}
We prove recursively that
\begin{equation}
 \begin{gathered}
 I_{b,n}^{\rm iso}\subseteq(s,e],
 \qquad b\in\mathcal U_n(s,e),\\
 \operatorname{dist}(\tau_{j,n},\{s,e\})
 \le\mathfrak R_n^{\rm WBS}
 \quad\text{for every selected $\tau_{j,n}$ in $(s,e)$.}
 \end{gathered}
 \label{eq:wbs-recursion-invariant}
\end{equation}
Both statements hold at $(s,e]=(0,n]$.

Suppose first that $\mathcal U_n(s,e)=\varnothing$.  Every change in
$(s,e)$ is then previously selected, and the second invariant gives
\begin{equation*}
 \operatorname{dist}(\tau_{b,n},\{s,e\})
 \le\mathfrak R_n^{\rm WBS},
 \qquad \tau_{b,n}\in(s,e).
\end{equation*}
For a sampled interval $I=(S,E]\subseteq(s,e]$ with
$\cS_I^{\rm sig,max}>0$, choose
$\tau_{b,n}\in\mathcal T_I^{\rm cp}$ at which the signal criterion is
maximized.  Then
\begin{equation*}
 \min\{\tau_{b,n}-S,E-\tau_{b,n}\}
 \le\min\{\tau_{b,n}-s,e-\tau_{b,n}\}
 \le\mathfrak R_n^{\rm WBS}.
\end{equation*}
Equations~\eqref{eq:wbs-boundary-score} and
\eqref{eq:local-signal-norm-equivalence} yield
\begin{equation*}
 \cS_I^{\rm sig,max}\le C\mathfrak R_n^{\rm WBS},
 \qquad
 \sup_{k\in\cK_I}\norm{\mathsf M_I(k)}^2
 \le C\cS_I^{\rm sig,max}
 \le C\mathfrak R_n^{\rm WBS}.
\end{equation*}
The same bounds hold trivially when $\cS_I^{\rm sig,max}=0$.  Hence,
by \eqref{eq:wbs-uniform-score-exact},
\begin{align}
 \cS_I^{\max}
 &\le C\mathfrak R_n^{\rm WBS}
 +C\left\{1+(\mathfrak Z_n^{\rm WBS})^2
 +\mathfrak Z_n^{\rm WBS}\sqrt{\mathfrak R_n^{\rm WBS}}\right\}\\
 &\le C\eta_n\Lambda_n<\Lambda_n.
 \label{eq:wbs-empty-node-stops}
\end{align}
Thus a recursive node containing no unselected change stops.

Suppose now that $\mathcal U_n(s,e)\ne\varnothing$.  The first invariant and
\eqref{eq:wbs-isolating-signal} yield
\begin{equation}
 \cS_{I^\star}(k^\star)
 =\max_{r_{\rm W}\in\mathcal M_n^{\rm WBS}(s,e)}
 \cS_{I_{r_{\rm W}}^{\rm WBS}}^{\max}
 \ge c_0\Delta_n\delta_{\min,n}^2.
 \label{eq:wbs-selected-score-lower}
\end{equation}
Thus $I^\star$ contains at least one change by
\eqref{eq:wbs-no-change-below-threshold}.  Put
\begin{equation*}
 L_\star=|I^\star|,
 \qquad
 d_\star=d_{I^\star}^{\max}.
\end{equation*}
The uniform approximation gives
\begin{align*}
 \cS_{I^\star}^{\rm sig,max}
 &\ge c_0\Delta_n\delta_{\min,n}^2
 -C\{1+(\mathfrak Z_n^{\rm WBS})^2+\sqrt{L_\star}d_\star \mathfrak Z_n^{\rm WBS}\}\\
 &\ge c_1\Delta_n\delta_{\min,n}^2.
\end{align*}
Since \eqref{eq:wbs-deterministic-upper} gives
\begin{equation*}
 Cn d_\star^2
 \ge c_1\Delta_n\delta_{\min,n}^2,
\end{equation*}
we have $d_\star\ge c\delta_{\min,n}$.  Moreover,
\begin{equation*}
 \frac{\delta_{\min,n}^2}{d_\star^2}
 \ge\frac{\inf_n\delta_{\min,n}^2}{d_+^2}>0,
\end{equation*}
so
\begin{equation}
 \cS_{I^\star}^{\rm sig,max}
 \ge c_2\Delta_n d_\star^2.
 \label{eq:wbs-selected-signal-scale}
\end{equation}

Let $k_{\rm sig}^\star$ maximize $\cS_{I^\star}^{\rm sig}$.  Since $k^\star$
maximizes $\cS_{I^\star}$,
\begin{align*}
 \cS_{I^\star}^{\rm sig,max}
 -\cS_{I^\star}^{\rm sig}(k^\star)
 &\le
 |\cS_{I^\star}(k_{\rm sig}^\star)
 -\cS_{I^\star}^{\rm sig}(k_{\rm sig}^\star)|\\
 &\quad+
 |\cS_{I^\star}(k^\star)
 -\cS_{I^\star}^{\rm sig}(k^\star)|\\
 &\le C\{1+(\mathfrak Z_n^{\rm WBS})^2+\sqrt{L_\star}d_\star \mathfrak Z_n^{\rm WBS}\}.
\end{align*}
The right-hand side is $o(\Delta_nd_\star^2)$ by
\eqref{eq:wbs-random-noise-orders} and
\eqref{eq:wbs-selected-signal-scale}.  Therefore the truncated term in
\eqref{eq:wbs-deterministic-gap} cannot be active, and
\begin{align}
 \operatorname{dist}(k^\star,\mathcal T_{I^\star}^{\rm cp})
 &\le C\left\{
 \frac{1+(\mathfrak Z_n^{\rm WBS})^2}{d_\star^2}
 +\frac{\mathfrak Z_n^{\rm WBS}\sqrt{L_\star}}{d_\star}\right\}
 \le\mathfrak R_n^{\rm WBS}.
 \label{eq:wbs-selected-localization}
\end{align}
Choose $\tau_{b^\star,n}\in\mathcal T_{I^\star}^{\rm cp}$ attaining this
distance.

The change $\tau_{b^\star,n}$ has not been selected previously.  Otherwise,
the second invariant gives
\begin{equation*}
 \operatorname{dist}(\tau_{b^\star,n},\{s,e\})
 \le\mathfrak R_n^{\rm WBS}.
\end{equation*}
For $I^\star=(S^\star,E^\star]\subseteq(s,e]$,
\begin{align*}
 \min\{\tau_{b^\star,n}-S^\star,
 E^\star-\tau_{b^\star,n}\}
 &\le\mathfrak R_n^{\rm WBS},\\
 \min\{k^\star-S^\star,E^\star-k^\star\}
 &\le2\mathfrak R_n^{\rm WBS}.
\end{align*}
Hence \eqref{eq:wbs-boundary-score} and
\eqref{eq:wbs-uniform-score-exact} give
\begin{align*}
 \cS_{I^\star}(k^\star)
 &\le C\mathfrak R_n^{\rm WBS}
 +C\{1+(\mathfrak Z_n^{\rm WBS})^2+\mathfrak Z_n^{\rm WBS}\sqrt{\mathfrak R_n^{\rm WBS}}\}\\
 &\le C\eta_n\Lambda_n<\Lambda_n,
\end{align*}
contradicting \eqref{eq:wbs-selected-score-lower}.  Thus every successful
split is matched to a previously unselected change.

We next verify that \eqref{eq:wbs-recursion-invariant} is preserved.  For all
large $n$,
\begin{equation*}
 |k^\star-\tau_{b^\star,n}|
 \le\mathfrak R_n^{\rm WBS}<\Delta_n/12.
\end{equation*}
If $b<b^\star$, then \eqref{eq:wbs-isolation-margins} gives
\begin{equation*}
 e_{b,n}^{\rm iso}
 \le\tau_{b+1,n}-\Delta_n/3+1
 \le\tau_{b^\star,n}-\Delta_n/3+1<k^\star.
\end{equation*}
If $b>b^\star$, similarly,
\begin{equation*}
 s_{b,n}^{\rm iso}
 \ge\tau_{b-1,n}+\Delta_n/3-1
 \ge\tau_{b^\star,n}+\Delta_n/3-1>k^\star.
\end{equation*}
Thus each unselected isolating interval remains in the correct child.

The new change $\tau_{b^\star,n}$ lies within
$\mathfrak R_n^{\rm WBS}$ of the new boundary $k^\star$.  Let
$\tau_{j,n}$ be a previously selected change that remains in one of the two
children.  If
$\tau_{j,n}-s\le\mathfrak R_n^{\rm WBS}$ and $k^\star<\tau_{j,n}$, then
\begin{equation*}
 |k^\star-\tau_{b^\star,n}|<\Delta_n/2
 \quad\Longrightarrow\quad
 \tau_{b^\star,n}<\tau_{j,n}.
\end{equation*}
The spacing and \eqref{eq:wbs-selected-localization} then imply
\begin{equation*}
 k^\star
 \le\tau_{b^\star,n}+\mathfrak R_n^{\rm WBS}
 \le\tau_{j,n}-\Delta_n+\mathfrak R_n^{\rm WBS}
 <\tau_{j,n}-\mathfrak R_n^{\rm WBS}\le s,
\end{equation*}
a contradiction.  Hence
\begin{equation*}
 \tau_{j,n}-s\le\mathfrak R_n^{\rm WBS}
 \quad\Longrightarrow\quad k^\star>\tau_{j,n},
\end{equation*}
and the child containing $\tau_{j,n}$ retains the boundary $s$.  The symmetric
argument gives
\begin{equation*}
 e-\tau_{j,n}\le\mathfrak R_n^{\rm WBS}
 \quad\Longrightarrow\quad k^\star<\tau_{j,n},
\end{equation*}
so that child retains $e$.  Both invariants are preserved.

The recursion now has the deterministic implication
\begin{equation*}
 \mathcal U_n(s,e)\ne\varnothing
 \quad\Longrightarrow\quad
 \max_{r_{\rm W}\in\mathcal M_n^{\rm WBS}(s,e)}
 \cS_{I_{r_{\rm W}}^{\rm WBS}}^{\max}>\Lambda_n.
\end{equation*}
Every successful split selects exactly one previously unselected change, and
\begin{equation*}
 |\mathcal U_n(s,k^\star)|+|\mathcal U_n(k^\star,e)|
 =|\mathcal U_n(s,e)|-1.
\end{equation*}
The first invariant preserves the isolating interval of every change counted
on the left.  Recursion from $(0,n]$ therefore produces exactly
$J_{\rm cp}$ successful splits.

After the $J_{\rm cp}$th successful split, every recursive node has
$\mathcal U_n(s,e)=\varnothing$.  Equation~\eqref{eq:wbs-empty-node-stops}
therefore excludes any additional split.  Hence, on
$\mathcal G_n^{\rm WBS}$,
\begin{equation}
 \widehat J_{\rm cp}=\widetilde J_{\rm cp}=J_{\rm cp},
 \qquad
 \max_{1\le b\le J_{\rm cp}}
 |\widetilde\tau_{b,n}-\tau_{b,n}|
 \le\mathfrak R_n^{\rm WBS}.
 \label{eq:wbs-preliminary-bound-on-good-event}
\end{equation}
Furthermore,
\begin{equation*}
 \widetilde\tau_{b+1,n}-\widetilde\tau_{b,n}
 \ge\Delta_n-2\mathfrak R_n^{\rm WBS}>0,
\end{equation*}
so the ordered preliminary estimates are matched to the ordered true changes.
Combining \eqref{eq:wbs-main-high-probability-event},
\eqref{eq:wbs-random-radius-orders}, and
\eqref{eq:wbs-preliminary-bound-on-good-event} proves
\eqref{eq:preliminary-exact-recovery}.

Set $\widetilde\tau_{0,n}=\tau_{0,n}=0$ and
$\widetilde\tau_{J_{\rm cp}+1,n}=\tau_{J_{\rm cp}+1,n}=n$.  On the event in
\eqref{eq:wbs-preliminary-bound-on-good-event},
\begin{equation*}
 \max_{0\le b\le J_{\rm cp}+1}
 |\widetilde\tau_{b,n}-\tau_{b,n}|
 \le\mathfrak R_n^{\rm WBS}<\Delta_n/8.
\end{equation*}
Using $\lfloor x\rfloor\in[x-1,x]$ in the midpoint definitions yields
\begin{align*}
 s_{b,n}^{\rm ref}-\tau_{b-1,n}
 &\ge\frac{\tau_{b,n}-\tau_{b-1,n}}2
 -\mathfrak R_n^{\rm WBS}-1,\\
 \tau_{b,n}-s_{b,n}^{\rm ref}
 &\ge\frac{\tau_{b,n}-\tau_{b-1,n}}2
 -\mathfrak R_n^{\rm WBS}-1,\\
 e_{b,n}^{\rm ref}-\tau_{b,n}
 &\ge\frac{\tau_{b+1,n}-\tau_{b,n}}2
 -\mathfrak R_n^{\rm WBS}-1,\\
 \tau_{b+1,n}-e_{b,n}^{\rm ref}
 &\ge\frac{\tau_{b+1,n}-\tau_{b,n}}2
 -\mathfrak R_n^{\rm WBS}-1.
\end{align*}
All four quantities exceed $\Delta_n/3$ for large $n$.  Therefore
$I_{b,n}^{\rm ref}$ contains exactly $\tau_{b,n}$ and, for constants
$c,C,\varepsilon_{\rm ref}>0$,
\begin{equation}
 c\Delta_n\le|I_{b,n}^{\rm ref}|\le C\Delta_n,
 \qquad
 \theta_{b,n}^{\rm ref}
 =\frac{\tau_{b,n}-s_{b,n}^{\rm ref}}
 {|I_{b,n}^{\rm ref}|}
 \in[\varepsilon_{\rm ref},1-\varepsilon_{\rm ref}].
 \label{eq:wbs-refinement-interior-fraction}
\end{equation}

For $I=I_{b,n}^{\rm ref}$, the one-change formula gives
\begin{equation*}
 \cS_I^{\rm sig}(k)
 =|I|\mathfrak h_{\rm cp}(x_{I,k},\theta_{b,n}^{\rm ref})
 \widehat{\mathfrak q}_{n}^{\max}(d_{b,n}).
\end{equation*}
Since $\widehat\tau_{b,n}$ maximizes $\cS_I$,
\begin{align*}
 &|I|\left\{
 \mathfrak h_{\rm cp}(\theta_{b,n}^{\rm ref},
 \theta_{b,n}^{\rm ref})
 -\mathfrak h_{\rm cp}(x_{I,\widehat\tau_{b,n}},
 \theta_{b,n}^{\rm ref})\right\}
 \widehat{\mathfrak q}_{n}^{\max}(d_{b,n})\\
 &\qquad\le
 C\{1+(\mathfrak Z_n^{\rm WBS})^2+\sqrt{|I|}\norm{d_{b,n}}\mathfrak Z_n^{\rm WBS}\}.
\end{align*}
Using \eqref{eq:local-ridge-signal-equivalence},
\eqref{eq:local-shape-linear-gap}, and
\eqref{eq:wbs-refinement-interior-fraction},
\begin{equation*}
 \norm{d_{b,n}}^2
 |\widehat\tau_{b,n}-\tau_{b,n}|
 \le C\{1+(\mathfrak Z_n^{\rm WBS})^2
 +\sqrt{\Delta_n}\norm{d_{b,n}}\mathfrak Z_n^{\rm WBS}\}.
\end{equation*}
The number of changes is fixed.  Hence
\eqref{eq:local-score-order} and
\eqref{eq:retained-jump-lower-bound} imply
\eqref{eq:multiple-localization-rate}.  Finally,
\begin{equation*}
 \frac{(\mathfrak v_n^{\rm WBS})^2}{\Delta_n}
 +\frac{\mathfrak v_n^{\rm WBS}}{\sqrt{\Delta_n}}
 \le C\left\{
 n^{2/\nu-1}(\log n)^{2/\nu}
 +n^{1/\nu-1/2}(\log n)^{1/\nu}\right\}
 \longrightarrow0,
\end{equation*}
so the refined error is $o_{\Pp}(\Delta_n)$.  Therefore
\begin{equation*}
 \widehat\tau_{b+1,n}-\widehat\tau_{b,n}
 \ge\Delta_n-2\max_j
 |\widehat\tau_{j,n}-\tau_{j,n}|>0
\end{equation*}
with probability tending to one, which proves the final assertion.

\subsection{Proof of Corollary~\ref{cor:multiple-fixed-jumps}}

Lemma~\ref{prop:quadrature} gives
$\mathfrak a_{p,N}(B_\delta)\to0$.  Since
$\Delta_n/n\ge c_\Delta$,
\begin{equation*}
 \pi_n^{\rm WBS}
 \le J_{\rm cp}\exp\{-c_{\rm W}c_\Delta^2M_n^{\rm WBS}\}
 \longrightarrow0
\end{equation*}
when $M_n^{\rm WBS}=\lceil C_M\log n\rceil$.  Moreover,
\begin{equation*}
 \mathfrak v_n^{\rm WBS}\sqrt n
 =n^{1/2+1/\nu}(\log n)^{1/\nu}
 =o(n^{\gamma_{\rm W}}),
 \qquad
 n^{\gamma_{\rm W}}
 =o(\Delta_n\delta_{\min,n}^2)
\end{equation*}
for $\gamma_{\rm W}\in(1/2+1/\nu,1)$.  The conditions of
Theorem~\ref{thm:multiple-consistency} follow.

\bibliographystyle{chicago}
{\footnotesize\bibliography{references}}

\end{document}